\newtheorem{theorem}{Theorem}
\newtheorem{remark}{Remark}
\newtheorem{definition}{Definition}
\newtheorem{lemma}{Lemma}
\newtheorem{proposition}{Proposition}
\newtheorem{example}{Example}
\newenvironment{proof}[1][Proof]{\noindent\textbf{#1. }\it}{\mbox{}\\\\}
\newcommand{\twlrm}{\fontsize{8.5}{8.3pt}\normalfont\rmfamily}
\begin{document}
	%%%%%%%%%%%%%%%%%% %%\input{pdgFR}
	\begin{center}
		\thispagestyle{empty}
		\vspace{-3cm}
		\center{\large{\textsc{Universit\'e de Tunis El-Manar}}}
		\vspace{-0.2cm}
		\center{\large{\textsc{Facult\'e des Sciences de Tunis}}}
		\begin{figure} [H]
			\begin{center}
				\includegraphics [width=2.5cm]{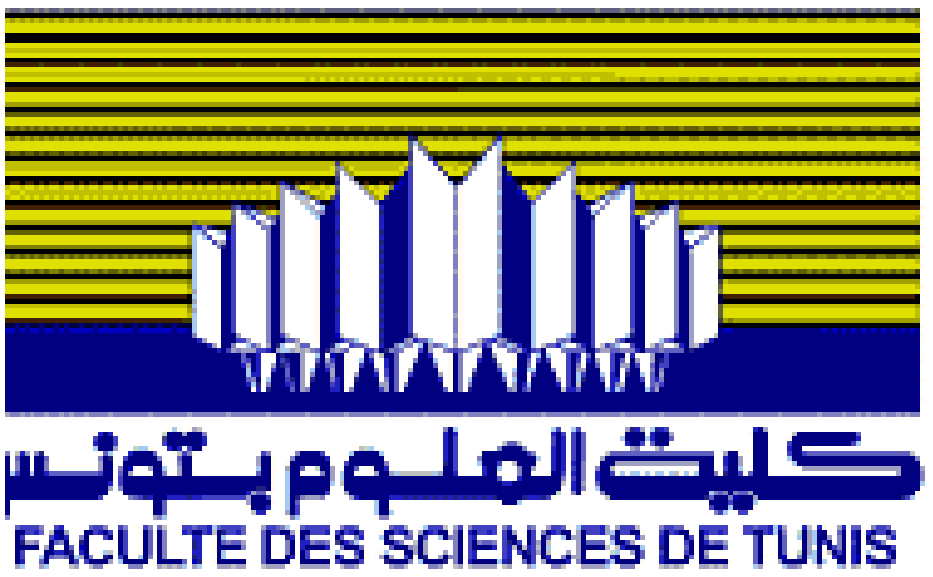}
				
			\end{center}
		\end{figure}
		\vspace{-0.5cm}
		
		{\huge{\textsc{\textbf{TH\`ESE}}}}
		
		\center{Pr\'esent\'ee en vue de l'Obtention du Dipl\^{o}me de}
		\vspace{-0.3cm}
		\center{\textsc{Doctorat en Informatique}} 
		\vspace{0.5cm}
		\center{Par:}
		\center{\large{\textsc{\textbf{Souad BOUASKER}}}}
		
		\vspace{1cm}
		%%\toprule
		\center{\textbf{\large{\textbf{Caract\'erisation et Extraction des repr\'esentations concises des Motifs Corr\'el\'es bas\'ee sur l'Analyse Formelle de Concepts}}}}  
		\vspace{1cm}
		\\ \par
		\hrule height 4pt
		\par
		\vspace{1cm}

		\vspace{-1.0cm}
		\begin{center}
			\textsc{\textbf{Comit\'e de Th\`ese}}
		\end{center}
		\begin{table} [H]
			\renewcommand{\footnoterule}{} 
			\renewcommand{\arraystretch}{1}
			\setlength\tabcolsep{5pt}
			\vspace{-1.0cm}
			\hspace{-1.0cm}
			\begin{tabular}{lll}
				\\ \textsc{Faouzi MOUSSA}  & \textsc{Professeur, Facult\'e des Sciences de Tunis} & \textsc{Pr\'esident}
				\\ \textsc{Nadia ESSOUSSI}  & \textsc{Maitre de Conf\'erences, F.S.E.G de Nabeul} & \textsc{Rapporteur}
				\\ \textsc{Philippe LENCA}  & \textsc{Professeur, Telecom Bretagne} & \textsc{Rapporteur}
				\\ \textsc{Amel TOUZI GRISSA}  & \textsc{Professeur, ESIG de Kairouan} & \textsc{Examinateur}
				\\ \textsc{Sadok BEN YAHIA}  & \textsc{Professeur, Facult\'e des Sciences de Tunis} & \textsc{Directeur}\\
			\end{tabular}
		\end{table}
		%%%%%%%%%%
		
		\vspace{1cm}
		\par
		\hrule height 4pt
		\par
		\vspace{1cm}
		%%%%%%%%%%%
		\center{\footnotesize{2 Novembre 2016} \\}
		\vspace{0.5cm}
		\footnotesize{\textbf{L}aboratoire \textbf{I}nformatique de \textbf{P}rogrammation, \textbf{A}lgorithmique et \textbf{H}euristiques \textsc{LIPAH}}
	\end{center}

	%%%%%%%%%%%%%%%%%%%%%%%%%%%%%%%%%%%%%%
	\newpage
	\thispagestyle{empty}
	
	%%%%%%%%%%%%%%%%%%%%%%%%%%%%%%
	\begin{center}
		\Large{\textsc{\textbf{Acknowledgments}}}
	\end{center}
	
	I would like to thank the members of my thesis committee. I thank Professor \textsc{Faouzi MOUSSA} for agreeing to chair my thesis committee.

	\bigskip

	I would like to thank Associate-Professor \textsc{Nadia ESSOUSSI} and Professor \textsc{Philippe LENCA} for accepting to review my thesis report, and for providing me with detailed corrections and interesting comments. 
	
	\bigskip
	
	I would like to thank  Professor \textsc{Amel GRISSA TOUZI} for participating  to the thesis committee. 
	
	\bigskip
	
	I want to express my deep thanks to my thesis supervisor Professor \textsc{Sadok BEN YAHIA} for trusting me and for allowing me to grow as a research scientist. During the whole period of study, Professor \textsc{BEN YAHIA} contributes by giving me intellectual freedom in my work, engaging me in new creative ideas, supporting my participation to various conference, and requiring a high quality of work in all my efforts. 
	
	\bigskip
	
	I want to express my special thanks to Assistant-Professor \textsc{Tarek HAMROUNI} for collaborating in the realization of different phases of my thesis project. I am very grateful for all the offered efforts to ensure high-quality of this research. I greatly benefited from his scientific insight, his high-level of expertise in the field of Data-Mining and his ability to explore possible improvements in order to make a deeper development of our research. 
	
	\bigskip
	
	My sincere thanks go to all the members of the \textsc{LIPAH} Laboratory of the Faculty of Sciences of Tunis, for the friendship, for the encouraging ambiance and emotional atmosphere during the last years.
	
	\bigskip
	
	I cannot finish without thanking my family. A special dedicate goes to my precious treasure, my mother \textsc{Radhia BENFRADJ BOUASKER}, for supporting and encouraging me during my studies.   
	I also would like to thank my brothers, my sisters for providing assistance in numerous ways. I want to express my gratitude to my husband,  dear \textsc{Mohamed}, without his comprehension and encouragements, I could not have accomplished this project. A particular thought goes for my lovely baby girl \textsc{Meriam}, my angel baby, the greatest joy of my life.

	\newpage

	\begin{center}
		
		My Doctoral Graduation is dedicated to the memory of my beloved father, \textsc{Miled BOUASKER}. I am honored to have you as a father. Thank you for the high trust, for learning to me the strength and the patience, and for motivating me to always keep reaching for excellence.\\
		Thank you for the father you were.

	\end{center}

	%%%%%%%%%%%%%%\input{Ackn}
	%%%%%%%%%%%%%%%%%%%%%%%%%%%%%%%%%%%

	\begin{abstract}
		
		Correlated pattern mining has increasingly become  an important task in data mining since these patterns allow conveying knowledge about meaningful and surprising relations among data.
		Frequent correlated patterns were thoroughly  studied in the literature. 
		
		In this thesis, we propose to benefit from both frequent correlated as well as rare correlated patterns according to the \textit{bond} correlation measure. Nevertheless, a main moan addressed to correlated pattern extraction approaches is their high number which handicap their extensive utilizations. In order to overcome this limit, we propose to 
		extract a subset without information loss of the sets of frequent correlated and of rare correlated patterns, this subset is called ``Condensed Representation``. In this regard, we are based on the notions derived from the Formal Concept Analysis FCA, specifically the equivalence classes associated to a closure operator $f_{bond}$ dedicated to the \textit{bond} measure, to introduce new concise representations of both frequent correlated and rare correlated patterns. We then design 
		the new mining approach, called \textsc{Gmjp}, allowing the extraction of the sets of frequent correlated patterns, of rare correlated patterns and their associated concise representations.  
		In addition, we present the \textsc{Regenerate} algorithm allowing the query of the $\mathcal{RCPR}$ condensed representation associated to the $\mathcal{RCP}$ set as well as the  \textsc{RcpRegeneration} algorithm dedicated to the regeneration of the whole set of rare correlated patterns from the $\mathcal{RCPR}$ representation.  
		
		The carried out experimental studies highlight the 
		very encouraging compactness rates offered by the proposed concise representations and prove the good performance of the \textsc{Gmjp} algorithm. To improve the obtained performance, we introduced and evaluated the optimized version of \textsc{Gmjp}. The latter shows much better performances than do the initial version of \textsc{Gmjp}. In order to prove the usefulness of the extracted condensed representation, we conduct a classification process based on correlated association rules derived from closed correlated patterns and their associated minimal generators. The obtained rules were applied to the context of intrusion detection and 
		achieve encouraging results.
		
		\bigskip
		
		\hspace{+0.5cm}\textbf{Key Words:} {Formal Concept Analysis, Constraint Data Mining, Monotonicity, Anti-monotonicity, \textit{bond} Correlation Measure, Itemset Extraction, Condensed Representation, Classification, Associative Rule.}
	\end{abstract}

	\mbox{}
	
	% % % % % % % % % % % % % % % % % %
	\newpage
	\thispagestyle{empty}
	% % % % % % % % % % % %
	
	\begin{center}
		\large{\textbf{Résumé}}
	\end{center} 
	
	La fouille des motifs corrélés est une piste de recherche de plus en plus attractive en fouille de données 
	grâce à la qualité et à l'utilité des connaissances offertes par ces motifs. Plus précisément, les motifs fréquents corrélés ont été largement étudiés auparavant dans la littérature.
	
	Notre objectif dans cette thèse est de bénéficier à la fois des connaissances offertes par les motifs corrélés fréquents ainsi que les motifs rares corrélés selon la mesure de corrélation \textit{bond}.
	Cependant, un principal problème est lié à la fouille des motifs corrélés concerne le nombre souvent très élevé des motifs corrélés extraits. Un tel nombre handicape une exploitation optimale et aisée des connaissances encapsulées dans ces motifs. Pour pallier ce problème, nous nous intéressons dans cette thèse à l'extraction d'un sous-ensemble, sans perte d'information, de l'ensemble de tous les motifs corrélés. Ce sous-ensemble, le noyau d'itemsets, appelé ``Représentations Concises'', à partir duquel tous les motifs redondants peuvent être régénérés sans perte d'informations. Le but d'une telle représentation est de minimiser le nombre de motifs extraits tout en préservant les connaissances cachées et pertinentes. 
	
	Afin de réaliser cet objectif, nous nous sommes basés sur les notions dérivées de l'analyse formelle de concepts AFC. Plus précisément, les représentations condensées, que nous proposons, sont issues des notions de classes d'équivalence induites par l'opérateur de fermeture $f_{bond}$ associé à la mesure de corrélation \textit{bond}. Après la caractérisation des représentations condensées proposées, nous introduisons l'algorithme \textsc{Gmjp} dédié à l'extraction des motifs corrélés fréquents, des motifs corrélés rares ainsi que leurs représentations condensées associées. Nous présentons également l'algorithme \textsc{Regenerate} d'interrogation de la représentation $\mathcal{RCPR}$ associée à l'ensemble $\mathcal{RCP}$ des motifs corrélés rares et nous proposons aussi l'algorithme \textsc{RCPRegeneration} dédié à la régénération de l'ensemble total des motifs corrélés rares à partir de la représentation concise $\mathcal{RCPR}$.
	
	L'évaluation expérimentale menée met en valeur les taux de compacités très intéressants offerts par les différentes représentations concises proposées et justifie également les performances encourageantes de l'approche \textsc{Gmjp}. Afin d'améliorer les performances de l'algorithme \textsc{Gmjp}, nous proposons une version optimisée de \textsc{Gmjp}. Cette version optimisée présente des temps d'exécution beaucoup plus réduits que la version initiale. De plus, nous avons conduit un processus de classification associative basé sur les règles associatives corrélées dérivées à partir des motifs corrélés fermés et de leurs générateurs minimaux. Les résultats de classification des données de détection d'intrusions, sont très encourageants et ont prouvé une grande utilité de la fouille des motifs corrélés.
	
	\hspace{+0.5cm} \textbf{Mots Cl\'es :} {Analyse Formelle de Concept, Fouille sous Contraintes, Monotonie, Anti-monotonie, Mesure \textit{bond}, Extraction de motifs, Représentation concise, Classification, R\`egles associatives.}
	%%%%%%%%%%%%%%%%%%%%%%%%%%
	\pagenumbering{roman}
	\thispagestyle{empty}
	\newpage
	\thispagestyle{empty}
	\cleardoublepage
	
	\tableofcontents
	\cleardoublepage
	\newpage
	\listoffigures
	\cleardoublepage
	\newpage
	\listoftables
	\newpage
	\listofalgorithms
	\newpage
	\thispagestyle{empty}
	\pagenumbering{arabic}
	\cleardoublepage

	%%%%%%%%%%%%%% % % % % % % %
	\chapter{Introduction}\label{ch_introduction}
	\markboth{Introduction}{Introduction}
	%%%\input{IntroGener.tex}
	%%%%%%%%%%%%%%%%%%%%%%%%%%%%%%%%%%%%%%%%%%%%%%%%
	\section{Introduction and Motivations}
	The development of new information and communication technologies and the globalization of markets make the competition more and more increased among companies. In this sense, the need for access to an accurate information for decision-making is increasingly urgent. The actual problem is linked to lack of access to relevant information in the presence of the large amount of data. The collected data in various fields are becoming larger. This motivates the need to analyze and interpret data in order to extract useful knowledge.
	
	In this context, the process of knowledge discovery from databases \textsc{(}KDD\textsc{)} is a complete process aiming to extract useful, hidden knowledge from huge amount of data \cite{Agra94}. Data Mining is one of the main steps of this process and is dedicated to offer the necessary tools needed for an optimal exploration of data. 
	Many state of the art approaches were focused on frequent itemset extraction and association rule generation. 
	Nevertheless, two main problems handicap the good use of the returned knowledge from the set of frequent itemsets.
	The first problem is related to the quality of the offered knowledge since the degree of correlation of the extracted itemsets may be not interesting for the end user. 
	The second problem is related often to the huge quantity of the extracted knowledge.
	
	To overcome these problems, many previous works propose to integrate the correlation measures within the mining process \cite{Brin97,comine_Lee,Omie03,ccmine_Kim,Xiong06hypercliquepattern}. 
	Correlated pattern mining is then shown to be more complex but more informative than traditional frequent patterns mining. In fact, correlated patterns offer a precise information about the degree of apparition of the items composing a given itemset \cite{borgelt}. This key information specifies the simultaneous apparition frequency among items, \textit{i.e.}, their co-occurrence, as well as their apparition frequency, \textit{i.e.}, their occurrence.
	
	Other state of the art approaches deal with the extraction of a subset, without information loss, of the whole set of correlated patterns. This subset, is named, ``Condensed Representation`` and from which we are able to derive all the redundant correlated patterns. 
	The condensed representations prove their high utility in different fields such as: bioinformatics \cite{pasquier2009} and data grids \cite{tarekJSS2015}. 
	
	The main objective behind defining such a condensed representation is to reduce the number of the extracted patterns while preserving the same amount of pertinent knowledge.
	In addition to this, all of the extracted associated rules,  derived from correlated patterns fulfilling a correlation measure such as \textit{all-confidence} or \textit{bond}, are valid with respect to  minimal support  and to minimal confidence thresholds \cite{Omie03}. 
	
	Frequent correlated itemset mining was then shown to be an interesting task in data mining. Since its inception, this key task grasped the interest of many researchers since it meets the needs of experts in several application fields \cite{tarekds2010}, such as market basket study. However, the application of correlated frequent patterns is not an attractive solution for some other applications, \textit{e.g.}, intrusion detection,  analysis of the genetic confusion from biological data,  detection of rare diseases from medical data, to cite but a few \cite{livreIGIGlobal2010,mahmood_less_frequent_patterns_vs_networks,romero2010,laszloIJSI2010,haglin08}.
	As an illustration of the rare correlated patterns applications in the field of medicine, the rare combination of symptoms can provide useful insights for doctors.

	To the best of our knowledge, there is no previous work that dealt with both frequent correlated as well as rare correlated patterns according to a specified correlation metric. Thus, motivated by this issue, we propose in this thesis to benefit from the knowledges returned from both frequent correlated as well as rare correlated patterns according to the \textit{bond} correlation measure. To solve this challenging problem, we propose an efficient algorithmic framework, called \textsc{GMJP}, allowing the extraction of both frequent correlated patterns, rare correlated patterns as well as their associated concise representations.
	
	% % % % % % % % % % % % % % % % % % % %
	\section{Contributions}
	
	Our first contribution consists in defining and studying the characteristics of the condensed representations associated to frequent correlated as well as the condensed representations associated to rare correlated ones.
	In this respect, we are based on the notions derived from the Formal Concept Analysis \textsc{(FCA)} \cite{ganter99}, specifically the equivalence classes associated to a closure operator $f_{bond}$ dedicated to the \textit{bond} measure to introduce our new concise representations of both frequent correlated and rare correlated patterns. The first concise representation $\mathcal{RCPR}$ associated to the $\mathcal{RCP}$ set of rare correlated patterns,  is composed by the maximal elements of the rare correlated equivalence classes, called ``Closed Rare Correlated Patterns $\mathcal{CRCP}$ set`` union of their associated minimal generators called ``Minimal Rare Correlated Patterns $\mathcal{MRCP}$ set``.  Two other optimizations of the  $\mathcal{RCPR}$ representation are also proposed. The first optimization is composed by the whole set $\mathcal{CRCP}$ of closed rare correlated patterns union of the minimal elements of 
	the $\mathcal{MRCP}$ set. The second optimization is composed by the maximal elements of the $\mathcal{CRCP}$ of closed rare correlated patterns union of the whole $\mathcal{MRCP}$ set. We prove that both of these representations are also concise and exact. Our third optimized representation is a condensed approximate representation. The latter is composed by 
	the maximal elements of the $\mathcal{CRCP}$ set union of the minimal elements of 
	the $\mathcal{MRCP}$ set. According to the $\mathcal{FCP}$ set of frequent correlated patterns, the condensed exact representation is composed by the Closed Correlated Frequent Patterns. We prove the theoretical  properties of accuracy and compactness of all the proposed representations.

	Our second contribution is the design and the implementation of a new mining approach, called \textsc{Gmjp}, allowing the extraction of the sets of frequent correlated patterns, of rare correlated patterns and their associated concise representations. \textsc{Gmjp} is a sophisticated mining approach that allows
	a simultaneous integration of two opposite paradigms of monotonic and anti-monotonic constraints. 
	In addition, we present the \textsc{Regenerate} algorithm allowing the query of the $\mathcal{RCPR}$ condensed representation associated to the $\mathcal{RCP}$ set as well as the  \textsc{RcpRegeneration} algorithm dedicated to the regeneration of the whole set of rare correlated patterns from the $\mathcal{RCPR}$ representation.

	Our third contribution consists in proposing an optimized version of \textsc{Gmjp}. The latter shows much better performance than the initial version of \textsc{Gmjp}. In order to prove the usefulness of the extracted condensed representation, we conduct a classification process based on correlated association rules derived from closed correlated patterns and their associated minimal generators. The obtained rules are applied to the context of intrusion detection and achieve promoting results.

	The evaluation protocol of our approaches consists in experimental studies carried out over dense and sparse benchmark datasets commonly used in evaluating data mining contributions. The evaluation of the classification process is based on the \textsc{KDD 99} database of intrusion detection data. We also conduct the process of applying the $\mathcal{RCPR}$ representation on the extraction of rare correlated association rules from Micro-array gene expression data related to Breast-Cancer. The diverse obtained association-rules reveals a variety of relationship between up and down regulated gene-expressions.

	\section{Thesis Organization}
	The remainder of this thesis is organized as follows:
	
	\bigskip
	
	\textbf{Chapter 2}  introduces the basic notions related to the itemset search space and to itemset extraction. We also define two distinct categories of constraints: monotonic and anti-monotonic. We equally introduce the environment of Formal Concept Analysis \textsc{(FCA)} which offers the basis for the proposition of our approaches, specifically the notions of Closure Operator, Minimal Generator, Closed Pattern, Equivalence class and Condensed representation of a set of patterns.
	
	\bigskip
	
	\textbf{Chapter 3}  offers an overview of the state of the art approaches dealing with correlated patterns mining. We start this chapter by defining the most used correlation measures. Then, we continue with the approaches related to frequent correlated patterns,  followed by the state of the art of rare correlated patterns then the overview of the algorithms focusing on condensed representations of correlated patterns.

	\bigskip
	
	\textbf{Chapter 4}  focuses on characterizing the $\mathcal{FCP}$ set of frequent correlated patterns as well as the $\mathcal{RCP}$ set of rare correlated patterns. It introduces the condensed exact and approximate representations associated to the $\mathcal{RCP}$ set as well as the concise exact representation associated to the $\mathcal{FCP}$ set. The main content of this chapter was published in \cite{rnti2012} and in \cite{ida2015}.

	\bigskip

	\textbf{Chapter 5}  introduces the \textsc{Gmjp} approach, allowing the extraction of the sets of frequent correlated patterns, of rare correlated patterns and their associated concise representations. 
	The optimized version of \textsc{Gmjp}, named \textsc{Opt-Gmjp}, was also presented. This chapter also presents the theoretical complexity approximation of \textsc{Gmjp}. In addition, this chapter describes the \textsc{Regenerate} algorithm allowing the query of the $\mathcal{RCPR}$ condensed representation associated to the $\mathcal{RCP}$ set as well as the  \textsc{RcpRegeneration} algorithm dedicated to the regeneration of the whole set of rare correlated patterns from the $\mathcal{RCPR}$ representation.  
	The main content of this chapter was published in \cite{egc2012} and in \cite{ida2015}.
	
	\bigskip

	\textbf{Chapter 6}  focuses on the experimental validation of the proposed approaches.
	The evaluation process is based on two main axes, the first is related to the compactness rates of the condensed representations  while the second axe concerns the running time. This chapter evaluates the optimized version of \textsc{Gmjp}, which presents much better performance than do \textsc{Gmjp} over different benchmark datasets. The content related to the optimizations and evaluations was published in \cite{sac2015}.

	\bigskip

	\textbf{Chapter 7} describes the classification process based on correlated patterns.
	This chapter starts by presenting the framework of association rule extraction, it clarifies the properties of the generic bases of association rules. Then, we continue with the detailed presentation of the application of both frequent correlated and rare correlated patterns within the classification of some UCI benchmark datasets. We equally present the application of rare correlated patterns in the classification of intrusion detection data from the \textsc{KDD 99} dataset. The obtained results showed the usefulness of our proposed classification method over four different intrusion classes.  This chapter is concluded with the 
	application of  the $\mathcal{RCPR}$ representation on the extraction of rare correlated association rules from Micro-array gene expression data. These extracted rules aims to identify relations among up and down regulated gene expressions.
	The main content of this chapter was published in \cite{pakdd2012} and in \cite{dexa2013}.
	
	\bigskip

	\textbf{Chapter 8} concludes the thesis and sketches out our perspectives for future work.
	
	%%%%%%%%%%%%%%%%%%%%%%%%%%%%%%%%%%%%%%%%%%%%%%%%%%%%%%%%%%%%%%%%%%%%%%%%%%%
\part{Review of Correlated Patterns Mining} 
\chapter{Basic Notions}\label{ch2}

\section{Introduction} \label{se1}
The extraction of correlated patterns is shown to be more complex but more informative than traditional frequent patterns mining. In fact, these correlated patterns present a strong link among the items they compose and they prove their high utility in many real life applications fields.

This chapter is dedicated to the introduction of the basic notions needed for the presentation of our approaches.
The second section deals with the basic notions related to the search space as well as the itemsets's extraction. Then, we link in the third section with the presentation of the foundations of the formal concepts analysis \textsc{(FCA)} framework \cite{ganter99}. The last section concludes the chapter.
% % % % % % % % % % % % % % % % % % % % %
\section{Search Space}
We begin by presenting the key notions related to itemset extraction, that will be used thorough this thesis.  First, let us define an extraction context.
\subsection{Extraction Context}
\begin{definition} \label{definitionbasetransactions} \textbf{Extraction Context}\\
	An extraction context \textsc{(}also called Context or Dataset\textsc{)} is represented by a triplet $\mathcal{C}$ = \textsc{(}$\mathcal{T},\mathcal{I},\mathcal{R}$\textsc{)} with
	$\mathcal{T}$ and $\mathcal{I}$ are, respectively, a finite sets of transactions \textsc{(}or objects\textsc{)} and of items \textsc{(}or attributes\textsc{)}, and $\mathcal{R}$ $\subseteq$ $\mathcal{T} \times \mathcal{I}$ is a binary relation between the transactions and the items. A couple \textsc{(}$t$, $i$\textsc{)} $\in$ $\mathcal{R}$ if $t$ $\in$ $\mathcal{T}$ contains $i$ $\in$ $\mathcal{I}$.
\end{definition}
\begin{example}
	An example of an extraction context $\mathcal{C}$ $=$
	$\textsc{(}$$\mathcal{T},\mathcal{I},\mathcal{R}$$\textsc{)}$
	is given by Table \ref{Base_transactions}. In this context, the transaction set $\mathcal{T} = \{1, 2, 3, 4, 5\}$
	\textsc{(}\textit{resp.} the object set $\mathcal{O} = \{1, 2, 3, 4,
	5\}$\textsc{)} and  the items set $\mathcal{I}$ $=$
	$\{$\texttt{A}, \texttt{B}, \texttt{C}, \texttt{D}, \texttt{E},$\}$.
	The couple \textsc{(}2, B\textsc{)} $\in$ $\mathcal{R}$ since the transaction 2 $\in$ $\mathcal{T}$ contains the item B $\in$ $\mathcal{I}$.
\end{example}
\begin{table}[h]
	\begin{center}
		\footnotesize{
			\begin{tabular}{|c||c|c|c|c|c|c|}
				\hline & \texttt{A}  & \texttt{B}  & \texttt{C}  & \texttt{D} & \texttt{E}  \\
				\hline\hline
				1 & $\times$  &          &$\times$    & $\times$&         \\
				\hline
				2 &           & $\times$ &$\times$    &         & $\times$ \\
				\hline
				3 & $\times$  & $\times$ &$\times$    &         & $\times$ \\
				\hline
				4 &           & $\times$ &            &         & $\times$ \\
				\hline
				5 & $\times$  & $\times$ &$\times$    &         & $\times$  \\
				\hline
		\end{tabular}}
	\end{center}
	\caption{An example of an Extraction Context $\mathcal{C}$.}\label{Base_transactions}
\end{table}
\begin{remark}
	We note, by sake of accuracy, that the notations of transactions database and extraction context have the same
	meaning thorough this thesis.
	They are denoted as $\mathcal{D}$ $=$ $\textsc{(}\mathcal{T}, \mathcal{I},
	\mathcal{R}\textsc{)}$.
\end{remark}
\begin{definition} \label{motif} \textbf{Itemset or Pattern}\\
	A transaction $t$ $\in$ $\mathcal{T}$, having an identifier
	denoted by \textit{TID} $\textsc{(}$Tuple
	IDentifier$\textsc{)}$, contains a non-empty set of items belonging to $\mathcal{I}$.
	A subset $I$ of $\mathcal{I}$ where $k$ $=$
	$\vert I \vert $ is called  a \textit{$k$-pattern} or simply a
	\textit{pattern}, and $k$ represents the cardinality of $I$. The number of transactions $t$ of a context $\mathcal{C}$ containing a pattern $I$,
	$\vert$$\{$ $t$ $ \in $ $\mathcal{D}$ $\vert $ $I$ $\subseteq $
	$t$$\}$$\vert $, is called \textit{absolute support} of $I$ and is denoted
	$Supp\textsc{(}\wedge I\textsc{)}$.
	The \textit{relative support} of $I$ or the \textit{frequency} of $I$, denoted
	$freq\textsc{(}I\textsc{)}$, is the quotient of the absolute support
	by the total number of the transactions of $\mathcal{D}$,
	\textit{i.e.}, $freq\textsc{(}I\textsc{)}$ $=$
	$\displaystyle\frac{\displaystyle{\vert {\{}t \in \mathcal{D} | I \subseteq t
			{\}}\vert}}{\displaystyle{\vert \mathcal{T}\vert}}$.
\end{definition}
\begin{remark}
	We point that, thorough this thesis, we are mainly interested in itemsets \textit{i.e.} the set of items as a kind of patterns.
	Consequently, we use a form without separators to denote an itemset.
	For example, \texttt{BD} stands for the itemset composed by the items
	\texttt{B}  and \texttt{D}.
\end{remark}
% % % % % % % % % % % % % % % % % % % % % % % % % % % %
\subsection{Supports of a Pattern}
To evaluate an itemset, many interesting measures can be used. The most common ones are presented by Definition \ref{definitionsupportmotif}.
\begin{definition} \label{definitionsupportmotif} \textbf{Supports of a Pattern}\\
	Let $\mathcal{D}$=\textsc{(}$\mathcal{T}, \mathcal{I}, \mathcal{R}$\textsc{)} an extraction context and a non empty itemset $I$ $\subseteq$ $\mathcal{I}$. We distinguish three kinds of supports for an itemset $I$ :
	
	- \textbf{\textit{The conjunctive support:}} \textit{Supp}\textsc{(}$\wedge$$I$\textsc{)} = $\mid$$\{$$t$ $\in$ $\mathcal{T}$ $\mid$ $\forall$ $i$ $\in$ $I$ : \textsc{(}$t$, $i$\textsc{)} $\in$ $\mathcal{R}$$\}$$\mid$
	
	- \textbf{\textit{The disjunctive support:}} \textit{Supp}\textsc{(}$\vee$$I$\textsc{)} = $\mid$$\{$$t$ $\in$ $\mathcal{T}$ $\mid$ $\exists$ $i$ $\in$ $I$ : \textsc{(}$t$, $i$\textsc{)} $\in$ $\mathcal{R}$$\}$$\mid$, and,
	
	- \textbf{\textit{The negative support:}} \textit{Supp}\textsc{(}$\neg$$I$\textsc{)} = $\mid$$\{$$t$ $\in$ $\mathcal{T}$ $\mid$ $\forall$ $i$ $\in$ $I$ : \textsc{(}$t$, $i$\textsc{)} $\notin$ $\mathcal{R}$$\}$$\mid$.
\end{definition}

More explicitly, for an itemset $I$, the supports are defined as follows:

$\bullet$  \textit{Supp}\textsc{(}$\wedge$$I$\textsc{)}: is equal to the number of transactions containing all the items of $I$.

$\bullet$  \textit{Supp}\textsc{(}$\vee$$I$\textsc{)}: is equal to the number of transactions containing  at least one item of $I$.

$\bullet$  \textit{Supp}\textsc{(}$\neg$$I$\textsc{)}: is equal to the number of transactions that do not contain any item of $I$.

It is important to note that the ``De Morgan'' law ensures the transition between the
disjunctive and the negative support of an itemset $I$ as follows :
\textit{Supp}\textsc{(}$\neg$\textit{I}\textsc{)} =
$\mid\mathcal{T}\mid$ - \textit{Supp}\textsc{(}$\vee$\textit{I}\textsc{)}.
\begin{example}
	Let us consider the extraction context given by Table \ref{Base_transactions} that will be used thorough the different examples. We have \textit{Supp}\textsc{(}$\wedge$\texttt{AD}\textsc{)} = $\mid$$\{$1$\}$$\mid$ = $1$, \textit{Supp}\textsc{(}$\vee$\texttt{AD}\textsc{)} = $\mid$$\{$ 1, 3, 5$\}$$\mid$ = $3$, and, \textit{Supp}\textsc{(}$\neg$\textsc{(}\texttt{AD}\textsc{\textsc{)}\textsc{)}} = $\mid$$\{$2, 4$\}$$\mid$ = $2$ $^{\textsc{(}}$.
\end{example}
In the following, if there is no risk of confusion, the conjunctive support will be simply denoted by \textit{support}.
% % % % % % % % % % % % % % % % % % % % % % %
Note that \textit{Supp}\textsc{(}$\wedge$$\emptyset$\textsc{)} = $|\mathcal{T}|$ since the empty set is included in all transactions, while \textit{Supp}\textsc{(}$\vee$$\emptyset$\textsc{)} = $0$ since the empty set does not contain any item. Moreover, $\forall$ $i$ $\in$ $\mathcal{I}$, \textit{Supp}\textsc{(}$\wedge$$i$\textsc{)} = \textit{Supp}\textsc{(}$\vee$$i$\textsc{)}, while in the general
case, for $I$ $\subseteq$ $\mathcal{I}$ and $I$ $\neq$
$\emptyset$, \textit{Supp}\textsc{(}$\wedge$$I$\textsc{)} $\leq$
\textit{Supp}\textsc{(}$\vee$$I$\textsc{)}. A pattern $I$ is said to
be \textit{frequent} if \textit{Supp}\textsc{(}$\wedge$$I$\textsc{)}
is greater than or equal to a user-specified minimum support
threshold, denoted \textit{minsupp} \cite{Agra94}. The following
lemma shows the links that exist between the different supports of a
non-empty pattern $I$. These links are based on the
\textit{inclusion-exclusion identities} \cite{galambos}.
\begin{lemma}\label{lemmaidentitésinclusionexclusion} - \textbf{Inclusion-exclusion
		identities} - The inclusion-exclusion identities ensure the links
	between the conjunctive, disjunctive and negative supports of a non-empty pattern $I$.
	\begin{center}
		\vspace{-0.5cm}
		\begin{tabular}{lr}
			$ \textit{Supp}\textsc{(}\wedge I\textsc{)}\mbox{ }=\mbox{ }\displaystyle\sum_{\emptyset \subset I_1 \subseteq I} {\textsc{(}-1\textsc{)}^{\mbox{$\mid I_1\mid$\mbox{ - 1}}}\mbox{
				}\textit{Supp}\textsc{(}\vee I_1\textsc{)}}$
			& \textsc{(}1\textsc{)} \\
			$ \textit{Supp}\textsc{(}\vee I\textsc{)}\mbox{ }=\mbox{ }\displaystyle\sum_{\emptyset \subset I_1 \subseteq I} {\textsc{(}-1\textsc{)}^{ \mbox{$\mid I_1\mid $\mbox{ - 1}}}\mbox{
				}\textit{Supp}\textsc{(}\wedge I_1\textsc{)}}$
			& \textsc{(}2\textsc{)} \\
			$ \textit{Supp}\textsc{(}\neg I\textsc{)}\mbox{ } =\mbox{ } \mid \mathcal{T} \mid \mbox{ }-\mbox{
			} \textit{Supp}\textsc{(}\vee I\textsc{)} \mbox{ }\textsc{(}\mbox{The De Morgan's law}\textsc{)}$ & \textsc{(}3\textsc{)}
		\end{tabular}
	\end{center}
\end{lemma}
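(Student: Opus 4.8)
The plan is to derive the three identities from the classical combinatorial inclusion-exclusion principle applied to the families of transactions that witness each item. For a non-empty pattern $I = \{i_1, \dots, i_k\}$, associate to each item $i_j$ the set of transactions $T_j = \{t \in \mathcal{T} \mid (t, i_j) \in \mathcal{R}\}$. Then by the very definitions in Definition \ref{definitionsupportmotif} we have $\textit{Supp}(\wedge I) = |T_1 \cap \dots \cap T_k|$ and $\textit{Supp}(\vee I) = |T_1 \cup \dots \cup T_k|$, and more generally, for any non-empty $I_1 = \{i_{j} \mid j \in J\} \subseteq I$, $\textit{Supp}(\wedge I_1) = |\bigcap_{j \in J} T_j|$ and $\textit{Supp}(\vee I_1) = |\bigcup_{j \in J} T_j|$. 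So the whole statement is just a reformulation of set-theoretic inclusion-exclusion in the vocabulary of supports.

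First I would establish identity (2). The standard inclusion-exclusion formula for the cardinality of a union gives
\[
|T_1 \cup \dots \cup T_k| = \sum_{\emptyset \neq J \subseteq \{1,\dots,k\}} (-1)^{|J|-1} \left| \bigcap_{j \in J} T_j \right|,
\]
and translating each term back through the dictionary above yields exactly
$\textit{Supp}(\vee I) = \sum_{\emptyset \subset I_1 \subseteq I} (-1)^{|I_1|-1}\, \textit{Supp}(\wedge I_1)$, which is (2). Next I would obtain identity (1) by the dual inclusion-exclusion formula — the one expressing the cardinality of an intersection in terms of cardinalities of unions of sub-families — applied to the same sets $T_j$; translating the terms gives $\textit{Supp}(\wedge I) = \sum_{\emptyset \subset I_1 \subseteq I} (-1)^{|I_1|-1}\, \textit{Supp}(\vee I_1)$, which is (1). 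Alternatively, (1) can be deduced from (2) purely formally by Möbius inversion over the Boolean lattice of subsets of $I$, noting the symmetry of the $(-1)^{|I_1|-1}$ coefficients; I would mention this as the slicker route. Finally, identity (3) is immediate: a transaction fails to contain \emph{any} item of $I$ precisely when it lies in none of the $T_j$, i.e. in the complement of $T_1 \cup \dots \cup T_k$ within $\mathcal{T}$, so $\textit{Supp}(\neg I) = |\mathcal{T}| - |T_1 \cup \dots \cup T_k| = |\mathcal{T}| - \textit{Supp}(\vee I)$, which is De Morgan's law as already noted before the lemma.

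The argument is essentially bookkeeping, so there is no serious obstacle; the only point requiring a little care is making the correspondence between sub-patterns $I_1 \subseteq I$ and sub-families $\{T_j\}_{j \in J}$ rigorous and checking that the empty sub-pattern is correctly excluded (consistent with $\textit{Supp}(\vee \emptyset) = 0$), so that the index sets in the sums match those of the classical formula exactly. I would also remark that, since everything reduces to counting transactions, the identities hold regardless of whether $I$ is frequent, correlated, or otherwise, which is why they serve as a general tool throughout the thesis.
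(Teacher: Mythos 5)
Your proof is correct, and it is the standard derivation: the paper itself gives no proof of this lemma, merely citing the classical inclusion--exclusion identities of Galambos and Simonelli, and your argument (translating supports into cardinalities of intersections and unions of the transaction sets $T_j$, applying the union formula for (2), its dual for (1), and complementation for (3)) is exactly the justification that citation stands in for. The one step worth spelling out if you write this up is the dual formula used for (1), which is slightly less standard than the union version; it follows either by the indicator-function computation you allude to or by the Möbius-inversion symmetry you mention.
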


% % % % % % % % % % % % % % % % % % % % % % % %
\subsection{Frequent Itemset  - Rare Itemset - Correlated Itemset}
Given a minimal threshold of support \cite{Agra94}, we distinguish between two kinds of patterns, frequent patterns and infrequent patterns \textsc{(}also called Rare patterns\textsc{)}.
\begin{definition} \label{motiffréq} \textbf{Frequent Itemset  - Rare Itemset} \\
	Let an extraction context $\mathcal{C}$ = $\textsc{(}\mathcal{T}, \mathcal{I},\mathcal{R}\textsc{)}$, a minimal threshold of the conjunctive support \textit{minsupp}, an itemset $I$ $\subseteq$ $\mathcal{I}$ is said \textit{frequent} if \textit{Supp}\textsc{(}$\wedge$$I$\textsc{)} $\geq$ \textit{minsupp}. Otherwise, $I$ is said \textit{infrequent} or \textit{rare}.
\end{definition}
\begin{example} Let  \textit{minsupp} = 2. \textit{Supp}\textsc{(}$\wedge$\texttt{BCE}\textsc{)}
	= 3, the pattern \texttt{BCE} is a frequent pattern. However, the pattern \texttt{CD} is a rare pattern since \textit{Supp}\textsc{(}$\wedge$\texttt{CD}\textsc{)} = 1 $<$ 2.
\end{example}
In the following, we need to define the smallest rare patterns according to the relation of inclusion set. They correspond to rare patterns having all subsets frequent, and are defined as follows:
\begin{definition}\label{mrp} \textbf{Minimal rare patterns}\\
	The $\mathcal{M}in \mathcal{RP}$ set of minimal rare patterns  is composed of rare patterns having no rare proper subsets. This set
	is defined as: $\mathcal{M}in$$\mathcal{RP}$ = $\{I$ $\in$ $\mathcal{I} |$ $\forall$ $I_1 \subset I$: \textit{Supp}\textsc{(}$\wedge I_1$\textsc{)} $\geq$ \textit{minsupp}$\}$.
\end{definition}
\begin{example}\label{example_MRP}
	Let us consider the extraction context sketched by Table \ref{Base_transactions}. For \textit{minsupp} = 4,
	we have $\mathcal{M}in \mathcal{RP}$ = $\{$$A$, $D$, $BC$, $CE$$\}$. $A$ and $D$ are minimal rare items, $BC$ is a minimal rare itemset since it is composed by two frequent items: $B$ with \textit{Supp}\textsc{(}$\wedge$\texttt{B}\textsc{)} = 4 and $C$
	with \textit{Supp}\textsc{(}$\wedge$\texttt{C}\textsc{)} = 4.
\end{example}

% % % % % % % % % % %
In fact, in order to reduce the high number of frequent itemsets and to improve the quality of the extracted frequent itemets,  other interesting measures apart from the conjunctive support are introduced  within the mining process. These latter are called  ``Correlation Measures''.
The itemsets fulfilling a given correlation measure are called ``Correlated Itemsets''. This latter type of itemsets is defined in a generic way in what follows:
\begin{definition} \label{motifCorr} \textbf{Correlated Itemset} \\
	Let a correlation measure M, a minimal correlation threshold \textit{minCorr}, an itemset $I$ $\subseteq$ $\mathcal{I}$ is said \textit{correlated} according to the measure M, if \textit{M}\textsc{(}$I$\textsc{)} $\geq$ \textit{minCorr}. $I$ is said \textit{non correlated} otherwise.
\end{definition}
% % % % % % % % % % % % % % % % % % % % % % % % % % % %
\subsection{Categories of Constraints}
Besides the minimal frequency constraint expressed by the \textit{minsupp} threshold, other constraints can be
integrated within the itemset's extraction process. These constraints have two distinct types, ``The monotonic constraints'' and ``The anti-monotonic constraints'' \cite{luccheKIS05_MAJ_06}.
\begin{definition}\label{anti-monotone} \textbf{Anti-monotonic Constraint}\\
	A constraint $Q$ is \textit{anti-monotone} if $\forall$ $I$ $\subseteq$ $\mathcal{I}$, $\forall$ $I_1$ $\subseteq$ $I$ : $I$ fulfills $Q$ $\Rightarrow$ $I_1$ fulfills $Q$.
\end{definition}
\begin{definition}\label{monotone} \textbf{Monotone Constraint}\\
	A constraint $Q$ is \textit{monotone} if $\forall$ $I$ $\subseteq$ $\mathcal{I}$, $\forall$ $I_1$ $\supseteq$ $I$ : $I$ fulfills $Q$ $\Rightarrow$ $I_1$ fulfills $Q$.
\end{definition}
\begin{example} The \textit{frequency constraint}, i.e. having a support greater than or equal to \textit{minsupp}, is an anti-monotonic constraint. In fact, $\forall$ $I$, $I_1$ $\subseteq$ $\mathcal{I}$, if $I_1$ $\subseteq$ $I$ and \textit{Supp}\textsc{(}$\wedge$$I$\textsc{)} $\geq$ \textit{minsupp}, then \textit{Supp}\textsc{(}$\wedge$$I_1$\textsc{)} $\geq$ \textit{minsupp} since \textit{Supp}\textsc{(}$\wedge$$I_1$\textsc{)} $\geq$ \textit{Supp}\textsc{(}$\wedge$$I$\textsc{)}.
	
	Dually, the \textit{constraint of rarity}, i.e. having a support strictly lower than \textit{minsupp}, is a monotonic constraint. In fact, $\forall$ $I$, $I_1$ $\subseteq$ $\mathcal{I}$, if $I_1$ $\supseteq$ $I$ and \textit{Supp}\textsc{(}$\wedge$$I$\textsc{)} $<$ \textit{minsupp}, then \textit{Supp}\textsc{(}$\wedge$$I_1$\textsc{)} $<$ \textit{minsupp} since \textit{Supp}\textsc{(}$\wedge$$I_1$\textsc{)} $\leq$ \textit{Supp}\textsc{(}$\wedge$$I$\textsc{)}.
\end{example}
%%%%%%%%%%%%%%%%%%%%%%%%%%%%%%%%%%%%%%%%%%
A set of itemset may fulfill different constraints simultaneously. Proposition \ref{conjCs}, whose proof is in
\cite{lee2006}, clarifies the conjunction of two constraints of the same nature.
\begin{proposition} \label{conjCs}
	The conjunction of anti-monotonic constraints
	$\textsc{(}$\textit{resp.} monotonic$\textsc{)}$ is an anti-monotonic $\textsc{(}$\textit{resp.} monotonic$\textsc{)}$ constraint.
\end{proposition}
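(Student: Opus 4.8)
The plan is to argue directly from the definitions of anti-monotone and monotone constraints (Definitions \ref{anti-monotone} and \ref{monotone}), since the statement is essentially a bookkeeping exercise on quantifiers. First I would fix notation: given two constraints $Q_1$ and $Q_2$, their conjunction $Q = Q_1 \wedge Q_2$ is the constraint such that, for every $I \subseteq \mathcal{I}$, $I$ fulfills $Q$ if and only if $I$ fulfills $Q_1$ and $I$ fulfills $Q_2$. The goal is then to verify that $Q$ inherits whichever of the two monotonicity properties $Q_1$ and $Q_2$ share.

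For the anti-monotone case, I would take an arbitrary $I \subseteq \mathcal{I}$ and an arbitrary $I_1 \subseteq I$, assume $I$ fulfills $Q$, and unfold: $I$ fulfills $Q_1$ and $I$ fulfills $Q_2$. Applying Definition \ref{anti-monotone} to $Q_1$ with the pair $I_1 \subseteq I$ gives that $I_1$ fulfills $Q_1$; applying it to $Q_2$ gives that $I_1$ fulfills $Q_2$. Hence $I_1$ fulfills $Q$, which is exactly the defining property of an anti-monotone constraint. The monotone case is completely dual: one takes $I_1 \supseteq I$ instead and invokes Definition \ref{monotone} for each of $Q_1$ and $Q_2$.

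Finally, I would remark that the statement for a conjunction of finitely many constraints $Q_1, \dots, Q_n$ of the same nature follows by an immediate induction on $n$, using the two-constraint case at each step (writing $Q_1 \wedge \dots \wedge Q_n = (Q_1 \wedge \dots \wedge Q_{n-1}) \wedge Q_n$). There is no real obstacle here — the only point requiring a little care is making explicit what the conjunction of two constraints means as a constraint in its own right, after which everything reduces to substituting into the quantified definitions; the proof in \cite{lee2006} is referenced only for completeness.
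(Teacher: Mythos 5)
Your argument is correct: unfolding the conjunction and applying Definition \ref{anti-monotone} (resp.\ Definition \ref{monotone}) to each conjunct separately is exactly the standard verification, and the induction to finitely many constraints is immediate. Note that the paper itself supplies no proof of Proposition \ref{conjCs} --- it simply defers to the cited reference --- so your write-up is a valid, self-contained filling-in of that omission rather than a departure from the paper's argument.
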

Let us define now the dual notions of order-ideal and order-filter \cite{ganter99} defined on $\mathcal{P}\textsc{(}\mathcal{I}\textsc{)}$ and associated to the two kinds of constraints given by definitions
\ref{anti-monotone} et \ref{monotone}.
\begin{definition} \textbf{Order Ideal}\\
	A subset $\mathcal{S}$ of $\mathcal{P}\textsc{(}\mathcal{I}\textsc{)}$ is an order ideal if it fulfills the following properties:
	\begin{itemize}
		\item If $I$ $\in$ $\mathcal{S}$, then $\forall$ $I_1$ $\subseteq$ $I$ : $I_1$ $\in$ $\mathcal{S}$.
		\item If $I$ $\notin$ $\mathcal{S}$, then $\forall$ $I$ $\subseteq$ $I_1$ : $I_1$ $\notin$ $\mathcal{S}$.
	\end{itemize}
\end{definition}
\begin{definition} \textbf{Order Filter} \\
	A subset $\mathcal{S}$ of $\mathcal{P}\textsc{(}\mathcal{I}\textsc{)}$ is an order filter if it fulfills the following properties:
	\begin{itemize}
		\item If $I$ $\in$ $\mathcal{S}$, then  $\forall$ $I_1$ $\supseteq$ $I$ : $I_1$ $\in$ $\mathcal{S}$.
		\item If $I$ $\notin$ $\mathcal{S}$, then $\forall$ $I$ $\supseteq$ $I_1$ : $I_1$ $\notin$ $\mathcal{S}$.
	\end{itemize}
\end{definition}
An anti-monotone constraint such as the frequency constraint induces an order ideal on the itemset lattice. Dually, a monotonic constraint as the rarity constraint induces an order filter on the itemset lattice.
The set of itemsets fulfilling a given constraint is called \textit{a Theory} \cite{mannila97}. This theory
is delimited by two borders, the positive and the negative one, that are defined as follows:
\begin{definition}\label{bd} \textbf{Negative/Positive Border} \cite{luccheKIS05_MAJ_06}\\
	When considering an anti-monotonic constraint $C_{am}$,
	the border corresponds to the set of itemsets whose all subsets fulfill this constraint and whose all super-sets
	do not fulfill.
	Let a set of itemsets $\mathcal{S}$$_{am}$ fulfilling an anti-monotonic constraint $C_{am}$, the border is formally defined as:
	\begin{center}
		$\mathcal{B}d$\textsc{(}$\mathcal{S}$$_{am}$\textsc{)} =
		$\{$$X$ $|$ $\forall$ $Y$  $\subset$ $X$ : $Y$ $\in$ $\mathcal{S}$$_{am}$ and
		$\forall$ $Z$  $\supset$ $X$ : $Z$ $\notin$ $\mathcal{S}$$_{am}$$\}$
	\end{center}
	In the case of monotonic constraint  $C_{m}$,
	the border corresponds to the set of patterns whose all supersets
	fulfills this constraint and whose all subsets do not fulfill.
	
	Let a set of patterns $\mathcal{S}$$_{m}$ fulfilling a  monotonic constraint $C_{m}$, the border is formally defined as follows:
	\begin{center}
		$\mathcal{B}d$\textsc{(}$\mathcal{S}$$_{m}$\textsc{)} =
		$\{$$X$ $|$ $\forall$ $Y$  $\supset$ $X$ : $Y$ $\in$ $\mathcal{S}$$_{m}$ and
		$\forall$ $Z$ $\subset$ $X$ : $Z$ $\notin$ $\mathcal{S}$$_{m}$$\}$
	\end{center}
	However, we have to distinguish for a given constraint $C$ between positive and negative borders. Let a set of patterns  $\mathcal{S}$ fulfilling a constraint $C$. The positive border is denoted by $\mathcal{B}{d}^{+}\textsc{(}\mathcal{S}$\textsc{)} and corresponds to the patterns belonging to the border $\mathcal{B}{d}\textsc{(}\mathcal{S}$\textsc{)}
	and fulfilling the constraint $C$.
	The negative border is denoted by $\mathcal{B}{d}^{-}\textsc{(}\mathcal{S}$\textsc{)} and corresponds to the set of patterns belonging to the border $\mathcal{B}{d}\textsc{(}\mathcal{S}$\textsc{)}
	and not fulfilling the constraint $C$.
	These two borders are formally expressed as follows:
	\begin{center}
		$\mathcal{B}{d}^{+}\textsc{(}\mathcal{S}\textsc{)}$ =
		$\mathcal{B}{d}\textsc{(}\mathcal{S}$\textsc{)} $\cap$ $\mathcal{S}$, \\
		$\mathcal{B}{d}^{-}\textsc{(}\mathcal{S}\textsc{)}$ =
		$\mathcal{B}{d}\textsc{(}\mathcal{S}$\textsc{)} $\setminus$ $\mathcal{S}$.
	\end{center}
\end{definition}
%%%%%%%%%%%%%%%%%%%%%%
%%%%%%%%%%%%%%%%%%%%%%%%

In the next sub-section, we focus on the definition and the presentation of the notions related to condensed representations associated to a set of patterns.
% % % % % % % % % % % % % % % % % % % % % % % % % % % % % % % % % % % % % % %
\subsection{Condensed Representations of a set of Patterns}
The extraction of interesting patterns may be a costly operation in execution time and in memory consumption. This is due to the high number of the generated candidates.
In this regard, an interesting issue consists in extracting sets of patterns with more reduced sizes. From which it is possible to regenerate the whole sets of patterns. These reduced sets are called ```Condensed Representations''.
In the case where the regeneration is performed in an exact way without information loss then the condensed representation is said \emph{exact}. Otherwise, the condensed representation is said \emph{approximative}.
These representations are formally defined in what follows.
\begin{definition}\label{repConcise}\textbf{Condensed Representations} \cite{mannila97}\\
	A concise representation of a set of interesting itemsets is a representative set allowing the characterization of the initial set in an exact or an approximative way.
\end{definition}
\begin{example}
	Let $\mathcal{R}$ be a concise representation of a set of frequent patterns $\mathcal{E}$. $\mathcal{R}$ is said \emph{concise exact representation}, if starting from  $\mathcal{R}$, we are able to determine for a given pattern whether it is a frequent pattern or not and to determine its conjunctive support also.
	For example, the closed frequent patterns \cite{pasquier99_2005} constitute a concise exact representation of the set of frequent itemsets.
	
	Otherwise, $\mathcal{R}$ is a \emph{concise approximative representation} of a set of patterns $\mathcal{S}$ if
	it is not able to exactly determine the support values of all the itemsets belonging to the $\mathcal{S}$ set.
	The representation $\mathcal{R}$ returns approximate values of these supports. For example,
	the maximal frequent itemsets \cite{maxminer} constitute an approximative concise representation of the frequent patterns set. In fact, thanks to maximal frequent itemsets we are able to determine whether a given itemset is frequent or rare but it is not possible to exactly derive  its conjunctive support value.
\end{example}
In general, a representation $\mathcal{R}$ constitutes ``a perfect cover'' if it fulfills the conditions established by the following definition:
\begin{definition}\label{Cover} \textbf{Perfect Cover}\\
	A set $\mathcal{E}1$ is said a perfect cover of a set $\mathcal{E}$ if and only if
	$\mathcal{E}1$ allows to cover $\mathcal{E}$ without information loss and the size of $\mathcal{E}1$ never exceeds that of the set $\mathcal{E}$.
\end{definition}

Various proposals aiming to reduce the size of a set of patterns  $\mathcal{E}$ are based on the foundations of formal concepts analysis \cite{ganter99}. The next section is dedicated to the presentation of the formal concept analysis's framework.
% % % % % % % % % % % % % % % % % % % % % % % % % % % % % % % % % % % % % % % %
\section {Formal Concepts Analysis}
\subsection{Introduction}
% % % % % % % % % % % % % % % % % % % % % % % % % % % % % %
The formal concept analysis initially introduced by Wille in 1982 \cite{wille82} treats formal concepts.
A formal concept is a set of objects, \textit{The Extension}, to which we applied a set of attributes, \textit{The Intention}. The formal concept analysis provides a classification and an analysis tool whose principal element is the itemsets's lattice defined as follows:
\begin{definition} \label{deftreillis} \textbf{Itemsets's Lattice}\\
	An itemsets's lattice is a conceptual and hierarchical schema of patterns. It is also said
	lattice of set inclusion. In fact, the power set of $\mathcal{I}$ is ordered by set inclusion in the itemsets' lattice.
\end{definition}
%The itemsets's lattice associated to the context given by Table \ref{Base_transactions} is sketched by Figure
%\ref{treillis1}.
This lattice shows the frequent itemsets, the rare ones as well as the $\mathcal{M}$in$\mathcal{RP}$ set
of minimal rare patterns composing the positive border of the whole set of rare patterns.
% % % % % % % % % % % % %
\subsection{Galois Connection}
\bigskip
\textbf{2.3.2.1. Closure Operator}
\bigskip
In what follows, we present the fundamental basis of a closure operator.
\begin{definition}\label{definitionEnsembles
		ordonnés}  \textbf{Ordred Set} \\
	Let $E$ a set. A \textit{Partial Order} over the set $E$ is a binary relation $\leq$ over the elements of $E$, such as for  $x$, $y$, $z\in E$, the following properties holds \cite{davey02} :
	\\ 1. \textit{Reflexivity} : $x\leq x$
	\\ 2. \textit{Anti-symmetry} : $x\leq y$ and $y\leq x \Rightarrow x=y$
	\\ 3. \textit{Transitivity} : $x\leq y$ and $y\leq z \Rightarrow x\leq z$
	
	\bigskip
	
	A set $E$ with a partial order  $\leq$, denoted by
	$\textsc{(}$$E$,$\leq$\textsc{)}, is a \textit{partially ordered set} \cite{davey02}.
\end{definition}
% % % % % % % % % % % % %
%\begin{center}
%\begin{figure}[htbp]
%	\parbox{16cm}{
		% %\centering
%		\hspace{-1.1cm}
%		\includegraphics[scale = 0.45]{Treillis1.eps}
%		\caption{Itemsets's Lattice associated to the extraction context $\mathcal{C}$ of Table 2.1.}
%		\label{treillis1}
%	\end{figure}
	%\end{center}
	% % % % % % % % % % % % % %
	Through the following definition, we introduce the notion of closure operator.
	\begin{definition}\label{ClosOp}\textbf{Closure Operator} \cite{ganter99} \\
		Let a partially ordered set \textsc{(}$E$, $\leq$\textsc{)}. An application $f$ from
		\textsc{(}$E$, $\leq$\textsc{)} to \textsc{(}$E$, $\leq$\textsc{)} is a  \textit{closure operator}, if and only if $f$ fulfills the following properties. For all sub-sets $S, S'\subseteq E$ :
		
		1. \textit{Isotonic} : $S\leq S' \Rightarrow f\textsc{(}S\textsc{)}
		\leq f\textsc{(}S'\textsc{)}$
		
		2. \textit{Extensive} : $S\leq f\textsc{(}S\textsc{)}$
		
		3. \textit{Idempotency} :
		$f\textsc{(}f\textsc{(}S\textsc{)}\textsc{)}=f\textsc{(}S\textsc{)}$
	\end{definition}
	We now define, the closure operator related to the conjunctive search space where the conjunctive support
	characterizes the associated patterns.\\
	\bigskip
	\textbf{2.3.2.2. The Galois Connection}
	\bigskip
	\begin{definition}\label{Connexion de Galois} \textbf{Galois Connection} \cite{ganter99} \\
		Let an extraction context $\mathcal{C}$ $=$
		$\textsc{(}\mathcal{T}$, $\mathcal{I}$, $\mathcal{R}\textsc{)}$.
		Let $g_{c}$ the application from the power-set of $\mathcal{T}$ $^{\textsc{(}}$\footnote{The power-set of a set  $\mathcal{T}$, is constituted by the sub-sets
			of $\mathcal{T}$, is denoted by
			$\mathcal{P}$\textsc{(}$\mathcal{T}$\textsc{)}.}$^{\textsc{)}}$ to the power-set of items
		$\mathcal{I}$, and associate to the set of objects $T$ $\subseteq \mathcal{T}$ the set of items $i$ $\in$
		$\mathcal{I}$ that are common to all the objects $t$ $\in$ $T$ :

		\hspace{3.5cm}$g_{c} : \mathcal{P}\textsc{(}\mathcal{T}\textsc{)}{}
		\rightarrow \mathcal{P}\textsc{(}\mathcal{I}\textsc{)}{}$
		
		\hspace{5cm}$T\mapsto g_{c}\textsc{(}T\textsc{)}=\{i \in \mathcal{I}
		| \forall\mbox{ } t\in T, \textsc{(}t,i\textsc{)}\in\mathcal{R}$
		$\}$
		
		Let $h_{c}$ the application, from the power-set
		of $\mathcal{I}$ to the
		power-set of $\mathcal{T}$, which associate to each set of items
		$\textsc{(}$commonly called pattern$\textsc{)}$ $I$ $\subseteq \mathcal{I}$
		the set of objects $t$ $\subseteq \mathcal{T}$ containing all the items $i$ $\in$ $I$ :
		
		\hspace{3.5cm}$h_{c} : \mathcal{P}\textsc{(}\mathcal{I}\textsc{)}{}
		\rightarrow \mathcal{P}\textsc{(}\mathcal{T}\textsc{)}{}$
		
		\hspace{5cm}$I\mapsto h_{c}\textsc{(}I\textsc{)}=\{t\in \mathcal{T}
		| \forall\mbox{ } i\in I, \textsc{(}t,i\textsc{)}\in\mathcal{R}$
		$\}$
		
		The couple of applications $\textsc{(}$$g_{c}$,$h_{c}$$\textsc{)}$ is a \textit{Galois connection} between the power-set  of $\mathcal{T}$ and the power-set of $\mathcal{I}$.
	\end{definition}
	\begin{example}
		The images of $\{1\}$ and of $\{1, 2\}$ by $g_{c}$ as well as those of $\{\textsl{A}, \textsl{E}\}$ and of $\{\textsl{C}, \textsl{D}\}$ by the application
		$h_{c}$ are :
		
		$g_{c}$\textsc{(}$\{1\}$\textsc{)} $=$
		$\{\textsl{A},\textsl{C}, \textsl{D}\}$ ;
		$g_{c}$\textsc{(}$\{2\}\textsc{)} =
		\{\textsl{B}, \textsl{C}, \textsl{E}\}$ ;
		$g_{c}$\textsc{(}$\{1,
		2\}\textsc{)} = \{\textsl{C}\}$.
		
		$h_{c}\textsc{(}\{\textsl{A}, \textsl{E}\}\textsc{)} =
		\{3, 5\}$ ; $h_{c}\textsc{(}\{\textsl{C},\textsl{D}\textsc{)} = \{1\}$.
	\end{example}
	\begin{proposition} \cite{ganter99}\\
		Given a Galois connection, the following properties are fulfilled: $\forall$ $I$, $I_{_{1}}$, $I_{_{2}}$ $\subseteq$
		$\mathcal{I}$ and $T$, $T_{_{1}}$, $T_{_{2}}$ $\subseteq$
		$\mathcal{T}$ :
		
		1. $I_{_{1}}$ $\subseteq $ $I_{_{2}}$ $\Rightarrow $
		$h_{c}$\textsc{(}$I_{_{2}}$\textsc{)} $\subseteq $
		$h_{c}$\textsc{(}$I_{_{1}}$\textsc{)};
		
		2. $T_{_{1}}$ $\subseteq $ $T_{_{2}}$ $\Rightarrow $
		$g_{c}$\textsc{(}$T_{_{2}}$\textsc{)} $ \subseteq $
		$g_{c}$\textsc{(}$T_{_{1}}$\textsc{)};
		
		3. $T$ $ \subseteq $ $h_{c}$\textsc{(}$I$\textsc{)}
		$\Leftrightarrow$ $I$ $ \subseteq $ $g_{c}$\textsc{(}$T$\textsc{)}.
	\end{proposition}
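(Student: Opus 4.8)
The plan is to establish the three properties of a Galois connection directly from the definitions of $g_c$ and $h_c$ given in Definition \ref{Connexion de Galois}, treating each claim as an unfolding of set membership.

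First I would prove the two monotonicity (order-reversing) properties, namely items 1 and 2. To show item 1, I would suppose $I_{1} \subseteq I_{2}$ and take an arbitrary $t \in h_c(I_{2})$; by definition this means $(t,i) \in \mathcal{R}$ for all $i \in I_{2}$, and since $I_{1} \subseteq I_{2}$ this in particular holds for all $i \in I_{1}$, so $t \in h_c(I_{1})$; hence $h_c(I_{2}) \subseteq h_c(I_{1})$. Item 2 is entirely symmetric, swapping the roles of $\mathcal{T}$ and $\mathcal{I}$ and of $g_c$ and $h_c$: if $T_{1} \subseteq T_{2}$ and $i \in g_c(T_{2})$, then $(t,i) \in \mathcal{R}$ for every $t \in T_{2}$, a fortiori for every $t \in T_{1}$, so $i \in g_c(T_{1})$.

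Then I would prove item 3, the adjunction property $T \subseteq h_c(I) \Leftrightarrow I \subseteq g_c(T)$, by showing that both sides are equivalent to the single symmetric condition ``$(t,i) \in \mathcal{R}$ for all $t \in T$ and all $i \in I$''. Indeed, $T \subseteq h_c(I)$ means every $t \in T$ satisfies $(t,i) \in \mathcal{R}$ for all $i \in I$, which is literally the symmetric condition; and $I \subseteq g_c(T)$ means every $i \in I$ satisfies $(t,i) \in \mathcal{R}$ for all $t \in T$, which is the same condition with the two universal quantifiers exchanged. Since the order of two universal quantifiers is immaterial, the two membership statements coincide, giving the equivalence.

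There is no real obstacle here: the proof is a routine chase through the definitions, and the only thing to be careful about is keeping the quantifier bookkeeping straight when invoking ``a fortiori'' in the monotonicity arguments and when commuting the quantifiers in item 3. I would present it compactly, perhaps even remarking that items 1 and 2 in fact follow from item 3 together with extensivity of the composite closures, but proving them directly is cleaner and self-contained.
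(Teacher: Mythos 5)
Your proof is correct: each of the three items follows by unfolding the definitions of $g_{c}$ and $h_{c}$, and your quantifier-commutation argument for item 3 is exactly the right observation. Note that the paper itself gives no proof of this proposition — it is stated as a cited result from Ganter and Wille — so there is no in-paper argument to compare against; your direct verification is the standard one and is complete as written.
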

	\medskip
	Thanks to Definition \ref{Clos of Galois}, we introduce the closure operators associated to the Galois connection.
	% % % % % % % % % % % % % %
	\begin{definition}\label{Clos of Galois}\textbf{Closure Operators of the Galois Connection} \cite{ganter99} \\
		Lets consider the power-sets
		$\mathcal{P}\textsc{(}\mathcal{I}\textsc{)}$ and
		$\mathcal{P}\textsc{(}\mathcal{T}\textsc{)}$ provided with the inclusion set link $\subseteq$, \textit{i.e}, the partially ordered sets \textsc{(}$\mathcal{P}\textsc{(}\mathcal{I}\textsc{)},$ $\subseteq$\textsc{)}
		and
		\textsc{(}$\mathcal{P}\textsc{(}\mathcal{T}\textsc{)},$ $\subseteq$\textsc{)}.
		The operators $f_{c}$ $^{\textsc{(}}$\footnote{We use the index \textbf{c} since the closure operator gathers itemsets sharing the same common \textbf{c}onjunctive support.}$^{\textsc{)}}$
		and $O_{c}$ such as $f_{c} = g_{c} \circ h_{c}$ of
		\textsc{(}$\mathcal{P}\textsc{(}\mathcal{I}\textsc{)},$ $\subseteq$\textsc{)}
		in
		\textsc{(}$\mathcal{P}\textsc{(}\mathcal{I}\textsc{)},$ $\subseteq$\textsc{)}
		and $O_{c} = h_{c} \circ g_{c}$ of
		\textsc{(}$\mathcal{P}\textsc{(}\mathcal{T}\textsc{)},$ $\subseteq$\textsc{)}
		in
		\textsc{(}$\mathcal{P}\textsc{(}\mathcal{T}\textsc{)},$ $\subseteq$\textsc{)}
		are the \textit{closure operators of the Galois connection}.
	\end{definition}
	% % % % % % % % % % % % % %
	\begin{example} \label{examplefoggof}
		Let the extraction context illustrated by Table \ref{Base_transactions}, we then have :\\
		$h_{c}\circ g_{c}$\textsc{(}$\{2\}$\textsc{)} $=$ $\{2, 3, 5\}$ ;
		$h_{c}\circ g_{c}$\textsc{(}$\{3\}$\textsc{)} $=$ $\{5\}$ ;
		$h_{c}\circ g_{c}$\textsc{(}$\{2, 3\}$\textsc{)} $=$ $\{2, 3 ,5\}$.\\
		$g_{c}\circ h_{c}$\textsc{(}$\{\textsl{B}\}$\textsc{)} $=$
		$\{\textsl{B}, \textsl{E}\}$ ;
		$g_{c}\circ h_{c}$\textsc{(}$\{\textsl{D}\}$\textsc{)} $=$ $\{\textsl{D}\}$ ;
		$g_{c}\circ h_{c}$\textsc{(}$\{\textsl{A}, \textsl{D}\}$\textsc{)} $=$
		$\{\textsl{D}\}$.
	\end{example}
	% % % % % % % % % % % % % % % % % % % % % % % %
	\subsection{Equivalence Classes, Closed Patterns and Minimal Generators}
	The application of the closure operator $\gamma$ induces an equivalence relation in the power-set
	$\mathcal{P}\textsc{(}\mathcal{I}$\textsc{)}, partitioning it on equivalence classes \cite{AyouniLYP10,PASCAL00},
	denoted by $\gamma$-equivalence-class, defined as follows.
	\begin{definition}\label{Cls-equiv} \textbf{$\gamma$-Equivalence-Class}\\
		A $\gamma$-Equivalence-Class contains all the itemsets belonging exactly to the same transactions and sharing the
		same closure according to the $\gamma$ closure operator.
	\end{definition}
	Within a $\gamma$-Equivalence-Class, the maximal element, according to the set inclusion, is said, ``Closed Pattern'' where as the minimal elements which are incomparable according to the set inclusion, are called ``Minimal Generators''.
	They are defined in what follows.
	\begin{definition} \label{DefMferme} \textbf{Closed Pattern} \cite{PASCAL00}\\
		An  itemset $I$ $\subseteq$ $\mathcal{I}$ is a closed itemset iff, $\gamma$\textsc{(}$I$\textsc{)}=$I$.
	\end{definition}
	\begin{definition} \label{DefGM} \textbf{Minimal Generator} \cite{PASCAL00}\\
		An itemset $I1$ $\subseteq$ $\mathcal{I}$ is a minimal generator of a closed pattern $I$ if
		$\gamma$\textsc{(}$I1$\textsc{)}=$I$ and $\forall$ $I2$ $\subseteq$ $\mathcal{I}$, if
		$I2$ $\subseteq$ $I1$ and $\gamma$\textsc{(}$I2$\textsc{)}=$I$ then $I2$ = $I1$.
	\end{definition}
	The following proposition introduces an interesting property of the minimal generators set.
	\begin{proposition} \cite{titanic02}
		Let $\mathcal{GM}$ be the set of minimal generators extracted from a context $\mathcal{C}$,  the $\mathcal{GM}$ set fulfills an order ideal property on the itemset lattice.
	\end{proposition}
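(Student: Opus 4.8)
The plan is to establish the first defining clause of an order ideal for $\mathcal{GM}$ --- that every subset of a minimal generator is again a minimal generator --- since the second clause is merely its contrapositive applied to a superset and therefore follows automatically. Throughout, $\gamma$ denotes the closure operator $f_{c} = g_{c} \circ h_{c}$ of the Galois connection (Definition \ref{Clos of Galois}), and ``minimal generator'' is understood in the sense of Definition \ref{DefGM}: $I$ is a minimal generator iff no proper subset of $I$ has the same closure as $I$.

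First I would record the one algebraic fact about $\gamma$ that drives the argument: for all $X, Y \subseteq \mathcal{I}$,
\[
\gamma(X \cup Y) = \gamma\bigl(\gamma(X) \cup Y\bigr).
\]
This is immediate from the Galois connection, since $h_{c}(X \cup Y) = h_{c}(X) \cap h_{c}(Y)$ and $h_{c}(\gamma(X)) = h_{c}(X)$ (a standard consequence of the connection: $T \subseteq h_{c}(g_{c}(T))$ for all $T$, together with antitonicity of $h_{c}$), whence $h_{c}(\gamma(X) \cup Y) = h_{c}(\gamma(X)) \cap h_{c}(Y) = h_{c}(X) \cap h_{c}(Y) = h_{c}(X \cup Y)$; applying $g_{c}$ to both sides gives the identity. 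With this in hand I argue by contradiction. Let $I \in \mathcal{GM}$ and $I_{1} \subseteq I$, and suppose $I_{1} \notin \mathcal{GM}$. Then there is $I_{2} \subsetneq I_{1}$ with $\gamma(I_{2}) = \gamma(I_{1})$. Put $I_{3} := I_{2} \cup (I \setminus I_{1})$; picking $i \in I_{1} \setminus I_{2}$ shows $i \in I \setminus I_{3}$, so $I_{3} \subsetneq I$. Meanwhile the substitution identity yields
\[
\gamma(I_{3}) = \gamma\bigl(\gamma(I_{2}) \cup (I \setminus I_{1})\bigr) = \gamma\bigl(\gamma(I_{1}) \cup (I \setminus I_{1})\bigr) = \gamma\bigl(I_{1} \cup (I \setminus I_{1})\bigr) = \gamma(I).
\]
Thus $I_{3}$ is a proper subset of $I$ with the same closure, contradicting the minimality of $I$ as a generator. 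Hence $I_{1} \in \mathcal{GM}$, which is precisely the order ideal property.

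The only genuine content is the substitution identity $\gamma(X \cup Y) = \gamma(\gamma(X) \cup Y)$; once it is available, the remainder is routine manipulation of set inclusions, and the case $I_{1} = \emptyset$ never arises since $\emptyset$ is vacuously a minimal generator. An equivalent route would recast minimal generators in terms of strict increases of $h_{c}$ (equivalently, strict drops of the conjunctive support) along each single item removal, but this ultimately leans on the same closure property, so I would prefer the operator-level argument above for concision. I expect no obstacle beyond making the substitution identity precise.
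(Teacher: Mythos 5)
Your proof is correct. Note that the paper itself gives no argument for this proposition --- it is stated with a citation to the TITANIC work of Stumme et al. and left unproved --- so there is no in-paper proof to compare against. Your argument is the standard one from that literature: the substitution identity $\gamma(X \cup Y) = \gamma(\gamma(X) \cup Y)$, which you correctly derive from $h_{c}(X \cup Y) = h_{c}(X) \cap h_{c}(Y)$ and $h_{c} \circ g_{c} \circ h_{c} = h_{c}$, is exactly the lemma that powers the usual proof, and your construction of $I_{3} = I_{2} \cup (I \setminus I_{1})$ as a proper subset of $I$ with $\gamma(I_{3}) = \gamma(I)$ is the standard contradiction. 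Your observation that the second clause of the order-ideal definition is just the contrapositive of the first is also right, so nothing is missing.
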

	\begin{example}
		A \textit{conjunctive equivalence class} is a set containing all the patterns having the same conjunctive closure.
		Thus, these patterns owns the same value of conjunctive support. The minimal generators are the
		smallest elements, according to the set inclusion property, in their equivalence classes.
		Whereas, the largest element in this class corresponds to the closed pattern.
		An example of a conjunctive equivalence class is given by Figure \ref{ClSeQuivConj}.
		In this class,  \textsl{ABCE} is the closed pattern whereas \textsl{AB} and \textsl{AE} are the associated
		minimal generators. All the elements belonging to this class share exactly the same conjunctive support, equal to
		2.
	\end{example}
	%%\begin{center}
	\begin{figure}[htbp]
		\hspace{+0.5cm}
		\includegraphics [scale = 0.40]{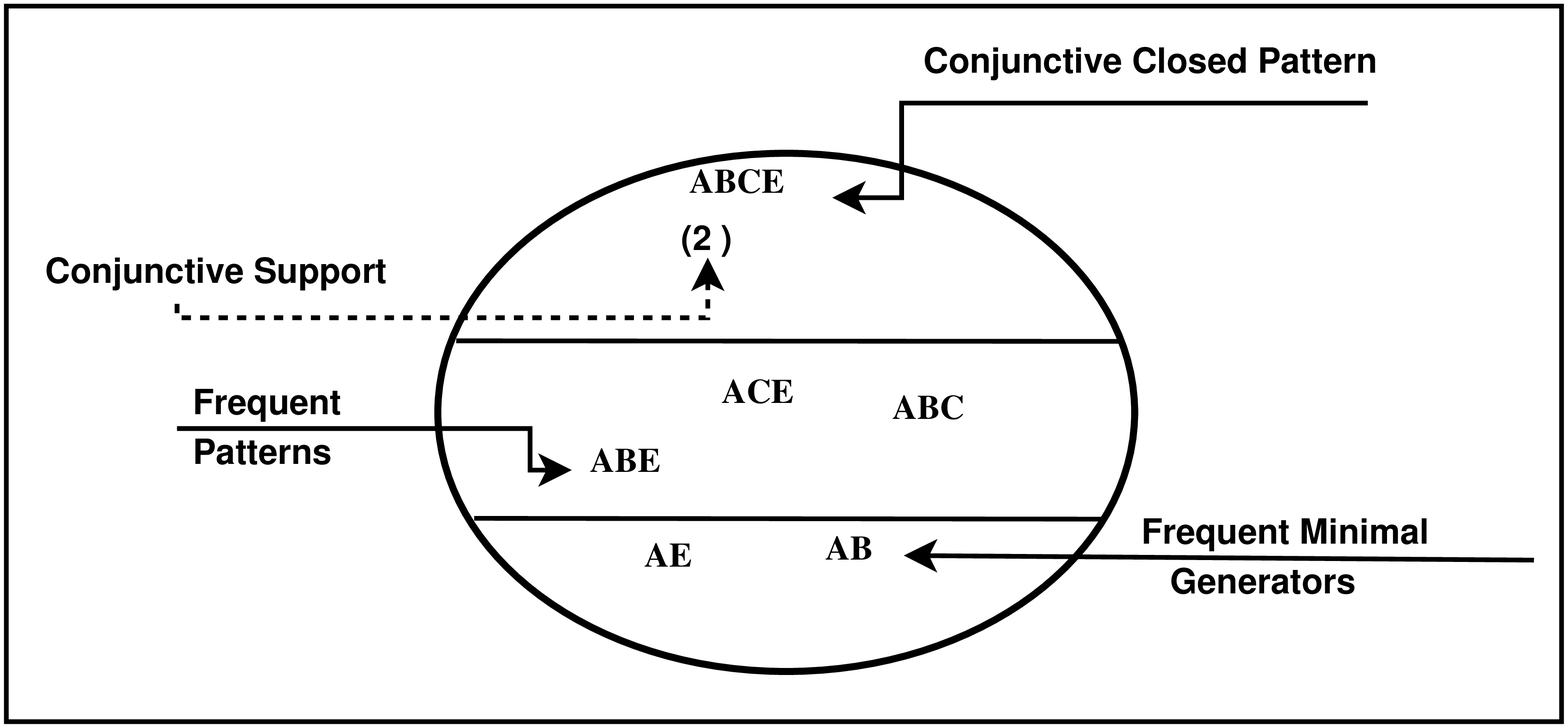}
		\caption{Characterization of a Conjunctive Equivalence Class.}\label{ClSeQuivConj}
	\end{figure}
	%%\end{center}
	% % % % % % % % % % % % % % % %
	
	At this level, we have presented the basic notions related to itemset's extraction and to condensed representations. 
	% % % % % % % % % % % % % % % % % % % %

	% % % % % % % % % % % % % %
	\section{Conclusion}\label{se4}
	Different approaches, derived from Formal Concept Analysis \textsc{(}FCA\textsc{)}, were proposed in order to reduce the size of the set of frequent itemsets. In addition, correlated pattern mining constitutes an interesting alternative to get more informative patterns with a manageable size and a high quality returned knowledge.  \\
	The next chapter will be dedicated to the presentation, going from the general to the more specific, of the state of the art approaches related to correlated patterns mining. A Comparative study of these approaches will be also conducted.

%%%%%%%%%%%%%%%%%%%%%%%%%%%%%%%%%%%%%%%%%%%%%%%%%%%%%%%%%%%%%%%%%%%%%%%%%%%%%%%%%%%%%%%%%%%%%%%%%%%%%%%%%%%%%%%%%
\chapter{Correlated Patterns Mining: Review of the Literature}\label{ch3}
\section{Introduction} \label{introChap3}
In this chapter, we focus on presenting an overview of the literature approaches,  which are related to our topic of mining correlated patterns. Our study goes from general to more specific. In this respect, we present in Section \ref{CsDM} the approaches related to constraint-based data mining, we deal with the two kinds of constraints. Then, in Section \ref{CP-DM}, we specially concentrate on correlated pattern mining. We start by
introducing the most common correlation measures, then we join with the state of the art of rare correlated patterns mining followed by frequent correlated patterns mining approaches. A synthetic summary of the studied approaches is proposed in Section \ref{seDisc}. The chapter is concluded in Section \ref{ConcChap3}.
\section{Constraint-based Itemset Mining} \label{CsDM}
Within a process of pattern extraction, it is more difficult to localize the set of patterns fulfilling a set of constraints of different natures
than to extract theories associated to a conjunction of constraints of the same nature \cite{luccheKIS05_MAJ_06}.
Indeed, the opposite nature of the constraints makes that the reduction strategies
are applicable to only a part of the constraints and not to all the constraints.
Therefore, the extraction process will be more complicated and more expensive in terms of execution costs and memory greediness.

Many approaches have paid attention to the
extraction of interesting patterns under constraints
\cite{boulicautsurvey_contraintes}.
One of the first algorithms belonging to this context is \textsc{DualMiner} \cite{Bucila03}.
The latter allows the reduction of the search space while considering both of the monotonic and the anti-monotonic constraints.
However, as highlighted by \cite{boley2009}, \textsc{DualMiner} suffers from a main drawback related to the high cost of constraints evaluation.

In \cite{lee_constraint_06}, the authors have proposed an approach of pattern extraction under constraints.
The \textsc{ExAMiner} algorithm \cite{examiner_kais05} was also proposed in order to mine frequent patterns under monotonic constraints.
It is important to mention that the effective reduction strategy adopted by \textsc{ExAMiner} could not be of use in the case of the monotonic constraint
of rarity that we treat in this work, since this latter is sensitive to the changes in the transactions of the extraction context.

Many other works have also emerged. We cite for example, the \textsc{VST} algorithm \cite{vst} which allows the extraction of all the strings satisfying the set of monotonic and
anti-monotonic constraints. Later, the \textsc{FAVST} algorithm \cite{favst} was introduced in order to improve the performance of the \textsc{VST} algorithm  by reducing the number of scans of the database.
Other approaches, belonging to this framework, have also been proposed such as the \textsc{DPC-COFI} algorithm and the \textsc{BifoldLeap} algorithm \cite{bifold}. The strategy of these approaches consists in extracting the maximal frequent itemsets which fulfill all of the constraints and from which the set of all the frequent valid itemsets will be derived.
%%%%%%%%%%Modif le 07 Aout 2016%%%%%%%%%%%%%%%%%

In \cite{miningzinc-2013}, the authors proposed the \textsc{MiningZinc} framework dedicated to 
constraint programming for itemset mining. The constraints are defined, within the \textsc{MiningZinc} system, in a declarative way close to mathematical notations. The solved tasks within the proposed system concerns closed frequent itemset mining, cost-based itemset mining, high utility itemset mining and discriminative patterns mining.  In a more generic way, in \cite{Guns2016}, the author presented a generic overview of methods devoted to bridge the gap between the two fields of constraint-based itemset mining and constraint programming. 
%%general framework for extending constraint programming framework to solve constraint-based itemset mining tasks.

%%%%%%%%%%%
\section{Correlated Pattern Mining} \label{CP-DM}
This section is dedicated to the study of the correlated pattern mining. First, we start by introducing the commonly used correlation measures, presenting their properties and comparing them.
\subsection{Correlation Measures} \label{MesCorr}
The integration of the correlation measures within the mining process allows to reduce the number of the extracted patterns while improving the quality of the retrieved knowledge. The quality is expressed by the
degree of correlation between the items composing the result itemsets. To achieve this goal,
different correlation measures were proposed in the literature, we start with the \textit{bond} measure.
\smallskip
\subsubsection{3.3.1.1  The \textit{bond} measure}
\smallskip
The \textit{bond} measure \cite{Omie03} is mathematically equivalent to \textit{Coherence} \cite{comine_Lee}, \textit{Tanimoto-coefficient} \cite{Tanimoto1958}, and \textit{Jaccard}.
In \cite{tarekds2010}, the authors propose a new expression of
\textit{bond} in Definition \ref{La mesure bond}.
% % % % % % % % % % % %
\begin{definition}\label{La mesure bond} \textbf{The \textit{bond} measure} \\
	The \textit{bond} measure of a non-empty pattern $I$ $\subseteq$ $\mathcal{I}$ is defined as follows:
	\begin{center}
		$\textit{bond}\textsc{(}\textit{I}\textsc{)} = \frac{\displaystyle
			\textit{Supp}\textsc{(}\wedge\textit{I}\textsc{)}}{\displaystyle
			\textit{Supp}\textsc{(}\vee\textit{I}\textsc{)}}$
	\end{center}
\end{definition}
% % % % % % % % % % % % % %

This measure conveys the information about the correlation
of a pattern $I$ by computing the ratio between the number of
co-occurrences of its items and the cardinality of its universe,
which is equal to the transaction set containing a non-empty subset
of $I$. It is worth mentioning that, in the previous works dedicated
to this measure, the disjunctive support has never been used to
express it.

The use of the disjunctive support allows to reformulate the expression of the \textit{bond} measure in order to bring out some pruning conditions for the
extraction of the patterns fulfilling this measure. Indeed, as shown later, the \textit{bond} measure fulfills several properties that offer interesting pruning strategies allowing to reduce the number of generated pattern during the extraction process. Note that the value of the \textit{bond} measure of the empty set is undefined since its disjunctive support is equal to $0$. However, this value is positive since $\lim_{ I\mapsto\emptyset}$ \textit{bond} \textsc{(}\textit{I}\textsc {\textsc{)}} = $\frac{\displaystyle |\mathcal{T}|}{\displaystyle 0}$ = $+\infty$. As a result, the empty set will be considered as a correlated pattern for any minimal threshold of the \textit{bond} correlation measure.

%The following proposition presents interesting properties verified by \textit{bond}.
%\begin{proposition} - \textbf{Some properties of the \textit{bond} measure} - The \textit{bond} measure is descriptive and symmetric.
%\end{proposition}

It has been proved, in \cite{tarekds2010}, that the \textit{bond} measure fulfills other interesting properties. In fact, \textit{bond} is: \textsc{(}$i$\textsc{)} \textit{Symmetric} since we have $\forall$ $I$, $J$ $\subseteq$ $\mathcal{I}$, \textit{bond}\textsc{(}$IJ$\textsc{)} = \textit{bond}\textsc{(}$JI$\textsc{)}; \textsc{(}$ii$\textsc{)} \textit{descriptive} \textit{i.e.} is not influenced by the variation of the number of the transactions of the extraction context.

In addition, it has been shown in \cite{HanDMKD2010} that it is desirable to select a descriptive measure which is not influenced by the number of transactions that contain none of pattern items. The symmetric property fulfilled by the \textit{bond} measure makes it possible not to treat all the combinations induced by the precedence order of items within a given pattern. Noteworthily, the anti-monotony property, fulfilled by the \textit{bond} measure as proven in \cite{Omie03}, is of interest. Indeed, all the subsets of a correlated pattern are also necessarily correlated. Then, we can deduce that any pattern having at least one uncorrelated proper subset is necessarily uncorrelated. It will thus be pruned without computing the value of its \textit{bond} measure. In the next definition, we introduce the relationship between the \textit{bond} measure and the cross-support property.

\begin{definition} \textbf{Cross-support property of the \textit{bond} measure} \cite{Xiong06hypercliquepattern}\\
	Thanks to the cross-support property, having a minimal threshold \textit{minbond} and an itemset $I$ $\subseteq$ $\mathcal{I}$, if $\exists$ $x$ and $y$ $\in$ $I$ such as
	$\frac{\displaystyle\textit{Supp}\textsc{(}\wedge
		x\textsc{)}}{\displaystyle\textit{Supp}\textsc{(}\wedge y\textsc{)}}
	< \textit{minbond}$ then
	$I$ is not correlated since \textit{bond}\textsc{(}$I$\textsc{)} < \textit{minbond};
\end{definition}

We continue, in what follows, with the presentation of the \textit{all-confidence} measure.
\smallskip
\subsubsection{ 3.3.1.2  The \textit{all-confidence} measure}
\smallskip
The \textit{all-confidence} measure \cite{Omie03} is defined as follows:
\begin{definition} \textbf{The \textit{all-confidence} measure} \\
	The \textit{all-confidence} measure \cite{Omie03} is defined for any non-empty set $I$ $\subseteq$ $\mathcal{I}$ as follows:
	\begin{center}
		\textit{all-conf}\textsc{(}$I$\textsc{)} =
		$\displaystyle\frac{\displaystyle\textit{Supp}\textsc{(}\wedge
			I\textsc{)}}{\displaystyle \textit{max} \{
			\textit{Supp}\textsc{(}\wedge i\textsc{)} | i\in I \} }$
	\end{center}
\end{definition}
\textit{All-confidence} conserves the anti-monotonic property \cite{Omie03} as well as the cross-support property \cite{Xiong06hypercliquepattern}.
\begin{example}
	Let us consider the extraction context given by Table
	\ref{Base_transactions} \textsc{(}\textit{cf.} page \pageref{Base_transactions}\textsc{)} . For a minimal threshold of \textit{all-confidence} equal to  \textit{0.4}.
	We have \textit{all-confidence}\textsc{(}\texttt{ABCE}\textsc{)} =
	$$\displaystyle\frac{\displaystyle
		\textit{Supp}\textsc{(}\wedge \texttt{ABCE}\textsc{)}}{\displaystyle
		\textit{max} \{ \textit{Supp}\textsc{(}\wedge \texttt{A}\textsc{)},
		\textit{Supp}\textsc{(}\wedge \texttt{B}\textsc{)},
		\textit{Supp}\textsc{(}\wedge \texttt{C}\textsc{)},
		\textit{Supp}\textsc{(}\wedge \texttt{E}\textsc{)}\}}$$
	= $$\displaystyle\frac{\displaystyle 2}
	{\displaystyle \textit{max}\{3, 4\}}$$ = \textit{0.50}.
	The \textsl{ABCE} itemset is correlated according to the \textit{all-confidence} measure.
	All the direct subsets of \textsl{ABCE} are also correlated. We have
	\textit{all-confidence}\textsc{(}\texttt{ABE}\textsc{)} =
	\textit{all-confidence}\textsc{(}\texttt{ACE}\textsc{)} =
	$\displaystyle\frac{\displaystyle 2}
	{\displaystyle 4}$ = \textit{0.50},
	\textit{all-confidence}\textsc{(}\texttt{BCE}\textsc{)} = $\displaystyle\frac{\displaystyle 3}
	{\displaystyle 4}$ = \textit{0.75}.
	
	For the itemset \texttt{AD}, we have
	$\displaystyle\frac{\displaystyle
		\textit{Supp}\textsc{(}\wedge\textsl{D}\textsc{)}}{\displaystyle
		\textit{Supp}\textsc{(}\wedge\textsl{A}\textsc{)}}$ $=$
	$\displaystyle\frac{\displaystyle 1}
	{\displaystyle 3}$
	= \textit{0.33} $<$ \textit{0.4} and we have  \textit{all-confidence}\textsc{(}\texttt{AD}\textsc{)} =
	$\displaystyle\frac{\displaystyle 1}
	{\displaystyle 3}$ = \textit{0.33}.
	The \texttt{AD} itemset does not fulfill the cross-support property, thus it is a non-correlated itemset. This example illustrates the conservation of the anti-monotonicity and the cross-support
	properties of the \textit{all-confidence} measure.
\end{example}
We continue in what follows with the {hyper-confidence} measure.
\smallskip
\subsubsection{3.3.1.3  The \textit{hyper-confidence} measure}
\smallskip
The \textit{hyper-confidence} measure denoted by \textit{h-conf}
of an itemset $I$ $\subseteq$ $\mathcal{I}$ is defined as follows.
\begin{definition} \textbf{The  \textit{hyper-confidence} measure}\\
	The \textit{hyper-confidence} measure of an itemset $I$ $=$ $\{$\textit{$i_{1}$}, \textit{$i_{2}$}, $\ldots$, \textit{$i_{m}$}$\}$
	is equal to:
	\begin{center}
		\textit{h-conf}\textsc{(}$X$\textsc{)}=min$\{$\textit{Conf}\textsc{(} $i_{1}$ $\Rightarrow$ $i_{2}$, $i_{3}$, $\ldots$, $i_{m}$ \textsc{)}, $\ldots$, \textit{Conf}\textsc{(}$i_{m}$ $\Rightarrow$ $i_{1}$, $i_{2}$, $\ldots$, $i_{m-1}$ \textsc{)}$\}$,
	\end{center}
	where \textit{Conf} stands for the \textit{Confidence} measure associated to association rules.
\end{definition}
The \textit{hyper-confidence} measure is equivalent to the \textit{all-confidence} measure, it thus fulfills
the anti-monotonicity and the cross-support properties.

We continue in what follows with the \textit{any-confidence} measure.
\smallskip
\subsubsection{3.3.1.4 The \textit{any-confidence} measure}
\smallskip
This measure is defined, for any non empty set $I$ $\subseteq$ $\mathcal{I}$ as follows:
\begin{definition} \textbf{The \textit{any-confidence} measure}\\
	\begin{center}
		\textit{any-conf}\textsc{(}$I$\textsc{)} = $\displaystyle
		\frac{\displaystyle
			\textit{Supp}\textsc{(}\wedge I\textsc{)}}{\displaystyle
			\textit{min} \{ \textit{Supp}\textsc{(}\wedge i\textsc{)} | i\in I \}}$
	\end{center}
\end{definition}
The \textit{any-confidence} measure \cite{Omie03} does not preserve nor the anti-monotonicity neither the cross-support properties.
\begin{example}
	Let us consider the extraction context given by Table \ref{Base_transactions}. For a minimal correlation threshold equal to \textit{0.80}. The \textit{any-confidence} value of \textsl{AB} is equal to,
	\textit{any-confidence}\textsc{(}\texttt{AB}\textsc{)} =
	$\displaystyle
	\frac{\displaystyle
		\textit{Supp}\textsc{(}\wedge\texttt{AB}\textsc{)}}{\displaystyle
		\textit{min}\{\textit{Supp}\textsc{(} \wedge \texttt{A}\textsc{)},
		\textit{Supp}\textsc{(}\wedge \texttt{B}\textsc{)} \}}$
	= $\displaystyle
	\frac{\displaystyle 2}
	{\displaystyle \textit{min}\{3, 4\}}$
	= \textit{0.66}. \textsl{AB} do not fulfill the minimal threshold of correlation,
	thus it is a non-correlated itemset according to the \textit{any-confidence} measure. Whereas, the
	\textsl{AD} itemset is correlated and its correlation value is equal to 1.
	We also have,
	$\displaystyle
	\frac{\displaystyle
		\textit{Supp}\textsc{(}\wedge\textsl{A}\textsc{)}}{\displaystyle
		\textit{Supp}\textsc{(}\wedge\textsl{C}\textsc{)}}$ $=$
	$\displaystyle
	\frac{\displaystyle 3}
	{\displaystyle 4}$ = \textit{0.75} $<$ \textit{0.80}, however,
	\textit{any-confidence}\textsc{(}\textsl{AD}\textsc{)} $=$
	\textit{1} $>$ \textit{0.80}.
	This example illustrates the non preservation of the anti-monotonicity as well as the cross-support properties.
\end{example}

We present in what follows the \textit{$\chi^2$} Coefficient.
\smallskip
\subsubsection{3.3.1.5 The \textit{$\chi^2$} Coefficient}
\smallskip
The \textit{$\chi^2$} coefficient is defined as follows :
\begin{definition} \textbf{The \textit{$\chi^2$} Coefficient} \cite{Brin97}\\
	The \textit{$\chi^2$} coefficient of an itemset $Z$ $=$ $xy$, with $x$ and $y$ $\in$ $\mathcal{I}$, is defined as follows:
	\begin{center}
		\textit{$\chi^2$}\textsc{(}$Z$\textsc{)} $=$ $|\mathcal{T}|$
		$\times$ $\frac{\displaystyle
			\textsc{(}\textit{Supp}\textsc{(}\wedge xy\textsc{)} -
			\textit{Supp}\textsc{(}\wedge
			x\textsc{)}\times\textit{Supp}\textsc{(}\wedge
			y\textsc{)}\textsc{)}^{2}}{\displaystyle\textit{Supp}\textsc{(}\wedge
			x\textsc{)}\times\textit{Supp}\textsc{(}\wedge
			y\textsc{)}\times\textsc{(}1 - \textit{Supp}\textsc{(}\wedge
			x\textsc{)}\textsc{)}\times\textsc{(}1 -
			\textit{Supp}\textsc{(}\wedge y\textsc{)}\textsc{)}}$
	\end{center}
\end{definition}
Some relevant properties of the $\chi^2$ coefficient are given by the following proposition.
\begin{proposition} The $\chi^2$ coefficient is a statistic and symmetric measure \cite{Brin97}.
	% since it is expressed
	%in terms of $\vert\mathcal{T}\vert$ and of $\phi$ \textsc{(}$\chi^2$ %$=$
	%$\vert\mathcal{T}\vert\times \phi^2$\textsc{)}.
\end{proposition}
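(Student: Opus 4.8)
The plan is to treat the two asserted properties separately; both follow quickly from the closed-form expression of the coefficient.

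\textbf{Symmetry.} First I would observe that every ingredient of the formula for $\chi^2(Z)$ is insensitive to the order in which the two items of $Z = xy$ are written. The conjunctive support is symmetric, $\textit{Supp}(\wedge xy) = \textit{Supp}(\wedge yx)$, exactly as recalled earlier for the \textit{bond} measure, so the cross term appearing in the numerator is unchanged under the swap $x \leftrightarrow y$. The remaining factors, namely $\textit{Supp}(\wedge x)\cdot\textit{Supp}(\wedge y)$ in the numerator and the product $\textit{Supp}(\wedge x)\cdot\textit{Supp}(\wedge y)\cdot(1-\textit{Supp}(\wedge x))\cdot(1-\textit{Supp}(\wedge y))$ in the denominator, are symmetric because ordinary multiplication is commutative. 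Hence $\chi^2(xy) = \chi^2(yx)$, which is the claim.

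\textbf{Statistic.} For the second part I would recall the statistical meaning of the expression. Associate to the pair $(x,y)$ the $2\times 2$ contingency table whose four cells record, over the $|\mathcal{T}|$ transactions, the observed counts of the joint events determined by the presence or absence of $x$ and of $y$. The quantity $\chi^2(Z)$ is precisely Pearson's statistic $\sum (O - E)^2 / E$ taken over these four cells, where each expected count $E$ is the product of the corresponding marginal frequencies scaled by $|\mathcal{T}|$ under the independence hypothesis. I would then carry out the algebraic simplification showing that this four-term sum collapses to the single fraction displayed in the definition; this uses only that the four observed counts sum to $|\mathcal{T}|$ and that the row and column marginals are fixed. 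Since the resulting number is a function of the extraction context alone, it is by definition a statistic, and in fact it coincides with the classical $\chi^2$ test statistic for independence of the two binary attributes.

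The only step that requires any genuine computation is the last one — verifying the algebraic identity between the four-cell Pearson sum and the compact fraction of the definition — and even that is a standard manipulation for $2\times 2$ tables rather than a real obstacle; everything else is immediate from commutativity of multiplication and symmetry of the conjunctive support.
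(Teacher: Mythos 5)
The paper offers no proof of this proposition at all: it is stated as a quoted fact from \cite{Brin97} and the text moves on, so your argument is necessarily doing more work than the source. Your symmetry argument is correct and complete --- the commutativity of conjunctive support and of multiplication is all that is needed, and this mirrors the style of the symmetry remark the paper makes for \textit{bond}. Your treatment of the ``statistic'' half is substantively right but justified by the wrong clause: the sentence ``since the resulting number is a function of the extraction context alone, it is by definition a statistic'' proves nothing, because every measure in the paper (including \textit{bond} and \textit{all-confidence}, which the paper explicitly classifies as \emph{descriptive}) is equally a function of the context. In this literature ``statistical'' is the antonym of ``descriptive'': it means the measure is influenced by $|\mathcal{T}|$, i.e.\ by the transactions containing neither item --- which is exactly why $\chi^{2}$ lacks the check-mark in the paper's column ``Independence of $|\mathcal{T}|$'' in Table \ref{tab_recapitulatif_prop}. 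The genuinely useful part of your argument is the identification with Pearson's $\sum (O-E)^{2}/E$ over the $2\times 2$ contingency table, since the explicit factor $|\mathcal{T}|$ that survives the collapse to the closed form is precisely what makes the measure statistical rather than descriptive; you should lead with that and drop the vacuous clause. One small caution for the algebraic verification you defer: the displayed formula only parses if $\textit{Supp}(\wedge x)$ is read as a relative frequency in $[0,1]$ (otherwise $1-\textit{Supp}(\wedge x)$ is meaningless), whereas the paper's Definition of supports is in absolute counts; you should state that normalization explicitly before carrying out the $2\times 2$ identity.
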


% % % % % % % % % % % % % % % % % % % % %
Other correlation measures are also of use in the literature, we mention for example
the \textit{cosine} measure, the \textit{lift} measure \cite{Brin97}, the $\phi$ coefficient also named the Pearson coefficient \cite{XiongKDD2004}.
\smallskip
\subsubsection{3.3.1.6 Synthesis}
\smallskip
We recapitulate the different properties of the presented measures in Table \ref{tab_recapitulatif_prop}.
The ``$\checkmark$'' symbol indicates that the measure fulfills the property.

In our previous study, we specifically focused on correlation measures which are most used in correlated patterns mining.
Withal, the \textit{cosine} and the \textit{kulczynski} measures were not studied since these two measures
are rarely used on correlated patterns mining due to the non conservation of the anti-monotonicity property \cite{HanDMKD2010}. The  \textit{lift} measure is used within the association rule evaluation.
% % % % % % % % % % % % % % % % %
\begin{table}\begin{center}\small{
			\begin{tabular}{|l||c|c|c|c|}
				\hline
				Measure  & Independence of $|\mathcal{T}|$ & Symmetry & Anti-monotonicity & Cross-support
				\\ \hline \hline
				\textit{bond}           & $\checkmark$ & $\checkmark$ & $\checkmark$ & $\checkmark$ \\ \hline
				\textit{any-confidence} & $ \checkmark$ & $\checkmark$ &         &   \\ \hline
				\textit{all-confidence} & $\checkmark$ & $\checkmark$ & $\checkmark$ & $\checkmark$ \\ \hline
				\textit{hyper-confidence} &$\checkmark$& $\checkmark$ & $\checkmark$ & $\checkmark$ \\ \hline
				$\chi^{2}$             &          & $\checkmark$ &       &  \\
				\hline
		\end{tabular}}
	\end{center}
	\caption{Summary of the properties of the studied correlation measures and coefficients.}\label{tab_recapitulatif_prop}
\end{table}
% % % % % % % % % % % % % % % % % % %
We conclude, according to this overview, that the most interesting  measures are
\textit{bond} and \textit{all-confidence}. This is justified by the fact that these two measures fulfilled
the pertinent properties of anti-monotonicity and cross-support.

We present, in what follows, the state of the art approaches dealing with correlated patterns mining.
We precisely start with rare correlated pattern mining.
% % % % % % % % % % % % % % % % %
\subsection{Rare Correlated Patterns Mining} \label{EdeAMCR}
Various approaches devoted to the  extraction of correlated patterns under constraints have been proposed. However, the recuperation of all the patterns that are
both highly correlated and infrequent is based on the naive idea to extract the set of all frequent patterns for a very low threshold \textit{minsupp} and then to filter out these patterns by a measure of correlation.

Another idea is to extract the whole set of the correlated patterns without any integration of the rarity constraint. The obtained set contains obviously all the frequent correlated as well as the rare correlated patterns. It is relevant to note that the application of these two ideas is very expensive in execution time and in memory consumption due to the explosion of the number of candidates to be evaluated.

The approach proposed in \cite{Cohen_mcr_2000}
is based on the previous principle. This approach allows to extract the items's pairs correlated according to the \textit{Similarity} measure but without computing their support. In fact, the \textit{Similarity} measure allows to evaluate the similarity between two items and corresponds to the quotient of the number of the simultaneous appearance divided by the
number of the complementary appearance. Consequently, the \textit{Similarity} measure is semantically equivalent to the  \textit{bond} measure. However, any analysis of this measure have been conducted. \\
In fact, this approach  proposes to assign to each item a signature composed by the identifier list of the transactions to which the item belongs. Then, the \textit{Similarity} is computed and it corresponds to the
number of the intersections of their signatures divided by the union of their signatures. We conclude that the frequency constraint was not integrated in order to recuperate the highly correlated itemsets with a
weak support. From these patterns, the association rules with a high confidence and a weak support are generated.

In this same context, we mention the \textsc{DiscoverMPatterns} algorithm \cite{ma-icdm2001}.
In fact, this latter is devoted to the extraction of the correlated patterns based on the \textit{all-confidence} measure. Nevertheless, a first version of the approach was
dedicated to the extraction of all the correlated patterns without any restriction of the support value in order to specifically get the rare correlated itemsets. Then, within the second version of the approach, the minimum support threshold constraint was integrated. Consequently, this constraint integration allows to extract the frequent correlated patterns.

Another principle of the resolution of the rare correlated patterns extraction consists in
extracting all the frequent patterns for a very weak minimal support threshold. Evidently, the obtained set contains a subset of the infrequent correlated patterns.
Xiong et al. relied on this idea to introduce the \textsc{Hyper-CliqueMiner} algorithm
\cite{Xiong06hypercliquepattern}. The output of this algorithm is the set of frequent correlated patterns for a very low \textit{minsupp} value. It is to note, that the good performances of this algorithm are
justified by the use of the anti-monotonic property of the correlation measure as well as the \textit{cross-support} property which allows to reduce significantly the evaluated candidates and thus to reduce the time needed.

% % % % % % % % % % % % % % % % % % % % % % % % % % % % % % % % % % % % % % % % % % % %
The approach proposed in \cite{thomo_2010} stands also within this principle. This approach allows to extract the frequent and frequently correlated 2-itemsets. It is judged as a naive approach that is based on the extraction of all the  solution set for a very low \textit{minsupp} values. Then, a post processing is performed in order to maintain only the high correlated itemsets. The \textsc{FT-Miner} algorithm \cite{ptminer} outputs the correlated infrequent itemsets according to the \textit{N-Confidence} semantically equivalent to \textit{all-Confidence}. The \textit{all-Confidence} measure was also treated in the
\textsc{Partition} algorithm \cite{Omie03}, which allows to extract the correlated patterns
according to both \textit{all-Confidence} and \textit{bond} measures. The choice of the measure to be considered depends on the user's input preferences.

The approach proposed in \cite{MCROkubo} also belongs to the same trend of approaches dealing with correlated infrequent itemsets. Indeed, it is based on the principle that the patterns which are weakly correlated according to the \textit{bond} correlation measure are generally rare in the extraction context.
The expressed constraint corresponds to a restriction of the maximum correlation value. This is a monotonic constraint since it corresponds to the opposite of
the anti-monotonic constraint of minimal correlation. In order to get rid from rare patterns that represent exceptions, and they are not informative, a minimal frequency constraint was also integrated.
The idea consists then in extracting the top$-N$ rare patterns which are the most informative ones.

The problem of integrating constraints during the process of correlated pattern mining was also studied in the works, respectively, proposed in \cite{Brin97} and in \cite{grahne_correlated_2000}.
These approaches deal with constrained correlated pattern mining, they rely on the $\chi^2$ correlation coefficient.
They exploit the various pruning opportunities offered by these constraints and benefit from the selective power of each type of constraints. However, the coefficient $\chi^2$ does not fulfill the anti-monotonic constraint as does the \textit{bond} measure.
Besides, these approaches are limited to the extraction of a small subset which is composed only by minimal valid patterns \textit{i.e.} the minimal patterns which fulfill all of the imposed constraints. Furthermore, the authors do not propose any concise representation of the extracted correlated patterns.

Also, in \cite{surana2010}, a study of different properties of interesting measures was conducted in order to suggest a set of the most adequate properties to consider while mining rare associations rules.

It is deduced that for all these approaches, the monotonic constraint of rarity was never included within the mining process in order to retrieve all the rare highly correlated patterns.

\subsection{Frequent Correlated Patterns Mining} \label{EdeAMCF}

In \cite{comine_Lee}, the authors proposed the \textsc{CoMine} approach which is dedicated to the extraction of frequent correlated patterns according to the \textit{all-confidence} and to the \textit{bond} measures.
We distinguish  two different versions of the \textsc{CoMine} approach. The first version treats the
\textit{bond} measure while the second treats the \textit{all-Confidence} measure. \textsc{CoMine} also constitute the core of the \textsc{I-IsCoMine-AP} and \textsc{I-IsCoMine-CT} algorithms \cite{IsComine2011}.

Also, the \textit{bond} measure was studied in \cite{LeBras2011}, the authors proposed an apriori-like algorithm for mining classification rules. Moreover, the authors in \cite{borgelt} proposed a generic approach for frequent correlated pattern mining. Indeed, the \textit{bond} correlation measure and eleven other correlation measures were used. All of them fulfill the anti-monotonicity property. Correlated patterns mining was then shown to be more complex and more informative than frequent pattern mining \cite{borgelt}.

Many other works have also emerged. In \cite{HanDMKD2010}, the authors provide a unified definition of existing null-invariant correlation measures and propose the \textsc{GAMiner} approach allowing the extraction of frequent high correlated patterns according to the \textit{Cosine} and to the \textit{Kulczynsky} measures.
In this same context, the \textsc{NICOMiner} algorithm was also proposed in \cite{Kimpkdd2011} and it allows the extraction of correlated patterns according to the \textit{Cosine} measure. We highlight that the \textit{Cosine} measure has the specificity of being not monotonic neither anti-monotonic.

In this same context, we also cite the \textsc{Atheris} approach \cite{skypattern2011} which allows the extraction of condensed representation of correlated patterns according to user's preferences.
In \cite{FlipPattern2012}, the authors introduced the concept of flipping correlation patterns according to the \textit{Kulczynsky} measure. However, the \textit{Kulczynsky} measure does not fulfill the interesting anti-monotonic property as the \textit{bond} measure.

The \textit{all-confidence} measure was handled within the work proposed in \cite{Karim-etal-2012}. The approach outputs the correlated patterns \textsc{(}also called the associated patterns\textsc{)}, the non correlated patterns \textsc{(}also called the independent patterns\textsc{)}. Also, in \cite{pakdd2013} the authors propose a method to extract \textit{all-confidence} frequent correlated patterns and they also discuss the impact of fixing the \textit{minsupp} threshold value over the quality of the obtained itemsets and propose to fix a minimal correlation threshold for each item.

In the next subsection, we study the approaches of extracting the condensed representations of frequent correlated patterns.
\subsection{Condensed Representations of Correlated Patterns Mining}
The problem of mining concise representations of correlated patterns was not widely studied in the literature.
We mention  the \textsc{Ccmine} \cite{ccmine_Kim} approach of mining closed correlated patterns according to the \textit{all-confidence} measure which constitute a condensed representation of frequent correlated patterns.
We also precise that the authors in \cite{tarekds2010} proposed the \textsc{CCPR-Miner} algorithm allowing the extraction of closed frequent correlated patterns according to the \textit{bond} measure.

In this context, we also cite the $\textsc{Jim}$ approach \cite{borgelt}. In fact, $\textsc{Jim}$ allows to extract the closed correlated frequent patterns which constitute a perfect cover of the whole set of frequent correlated patterns. The choice of the considered correlation measure is fixed by the user's parameters within the $\textsc{Jim}$ approach.

In fact, the $\textsc{Jim}$ approach is, on the one hand the most efficient state of the art approach extracting condensed representation of frequent correlated patterns according to the \textit{bond} measure. On the other hand, $\textsc{Jim}$ is the unique approach which 
dealt with the same kind of patterns as we treat in our mining approach, that we present in the following chapters.  In this sense, in our experimental study, we will focus on comparing our mining approach by the $\textsc{Jim}$ approach.

\section{Discussion} \label{seDisc}
Based on the previous review of the literature, we conclude that most of the approaches dealt with the
\textit{bond} and the \textit{all-confidence} measures. These latter fulfill the interesting anti-monotonic property, that allows to reduce the search space by early pruning irrelevant candidates. Therefore, the frequent correlated set of patterns results from the conjunction of both constraints of the same type: the correlation and the frequency.

In fact, the recuperation of all the patterns that are both highly correlated and infrequent is based on the naive idea to extract the set of all frequent patterns for a very low threshold \textit{minsupp} and then to filter out these patterns by a measure of correlation.  Another resolution strategy consists in extracting the whole set of the correlated patterns without any integration of the rarity constraint.
Then, a post-processing is performed in order to uniquely retrieve the rare correlated itemsets.

In other words, the monotonic constraint of rarity was never integrated within the mining process and thus the exploration of the search space of candidates that does not fulfill the rarity constraint is obviously barren.
In addition, another problem is related to the high consuming of the memory and the CPU resources due to the combinatorial explosion of the number of candidates depending on the size of the mined dataset. We highlight, that \textsc{Jim} \cite{borgelt} is the unique approach that dealt with different anti-monotonic correlation measures. However, \textsc{Jim} is limited to frequent correlated patterns and do not consider the rare correlated ones.

Table \ref{TabEdeA} recapitulates the characteristics of the different visited approaches. This table summarizes the following properties:
\begin{enumerate}
	\item \textbf{The correlation measure:} This property describes the considered correlation measure.
	\item \textbf{The kind of the extracted patterns:} This property describes the kind of patterns outputted by the mining algorithm
	\item \textbf{The nature of constraints:} This property describes the nature of the constraints included within the algorithm: anti-monotonic or monotonic.
\end{enumerate}

To the best of our knowledge, no previous work was dedicated to the extraction of concise
representations of patterns under the conjunction of constraints of distinct types. This problem is then a challenging task in data mining, which strengthens our motivation for the treatment of this problematic. Therefore, the work proposed in this thesis is the first one that puts the focus on mining concise representations of both frequent and rare correlated patterns according to the anti-monotonic \textit{bond} measure.
% % % % % % % % % % % % % % % % % % % % % % % % % % % % % % % % % % % % % % % % % %
\begin{table}[!t]\small{
		\begin{center}
			\footnotesize
			\begin{tabular}{|l||c|c|c|}
				\hline
				\multicolumn{1}{|c||}{\textbf{Extraction}}   &  \textbf{Correlation}  & \textbf{Kind of the extracted} &\textbf{Nature of}\\
				\multicolumn{1}{|c||}{\textbf{Algorithm}}    &\textbf{Measure}           & \textbf{Patterns}
				& \textbf{Constraints} \\
				\hline\hline
				The approach of & \textit{bond} & correlated & anti-monotonic  \\
				\cite{Cohen_mcr_2000} &       & 2-itemsets &    \\
				\hline	
				\textsc{DiscoverMPattern}& \textit{all-confidence} & all the correlated & anti-monotonic \\
				\cite{ma-icdm2001} &                                & itemsets &    \\
				\hline
				\textsc{Partition}  &\textit{all-confidence} & all the correlated & anti-monotonic\\
				\cite{Omie03}&\textit{bond}& itemsets                      &            \\
				\hline
				\textsc{CoMine}\textsc{(}$\alpha$\textsc{)}&\textit{all-confidence} & correlated frequent & anti-monotonic \\
				\cite{comine_Lee}&         &                      &            \\
				\hline
				\textsc{CoMine}\textsc{(}$\gamma$\textsc{)}&\textit{bond}  & correlated frequent& anti-monotonic \\
				\cite{comine_Lee}&         &                      &            \\
				\hline
				\textsc{CCMine}&\textit{all-confidence} & closed & anti-monotonic \\
				\cite{ccmine_Kim}&  & frequent correlated&      \\
				\hline
				\textsc{HypercliqueMiner} & \textit{h-confidence} & correlated frequent & anti-monotonic\\
				\cite{Xiong06hypercliquepattern} &          & and a subset of rare &            \\
				&          & correlated itemsets &                       \\	
				\hline
				The approach of & \textit{all-confidence} & correlated frequent & \\
				\cite{thomo_2010} &                     & and a subset of rare & anti-monotonic\\
				&                      & correlated itemsets &\\	
				\hline
				The approach of  & \textit{bond} & weakly correlated & monotonic \\
				\cite{MCROkubo}&               &&              \\
				\hline
				\textsc{CCPR\_Miner} &\textit{bond}&closed & anti-monotonic \\
				\cite{tarekds2010}   &             & frequent correlated    & \\
				\hline
				\textsc{Jim} &\textit{Eleven different}&       closed & anti-monotonic \\
				\cite{borgelt}   &  \textit{anti-monotonic }          & frequent correlated    & \\
				& \textit{measures}& and frequent correlated & \\
				\hline
			\end{tabular}
			\caption{Comparison between the correlated patterns mining approaches.}\label{TabEdeA}
		\end{center}
	}
\end{table}
% % % % % % % % % % % % % % % % % % % % % % % % % % % % % % % % % % % % % % % % % % %
\bigskip
\section{Conclusion} \label{ConcChap3}
In this chapter, we proposed an overview of the state of the art approaches dealing with
correlated patterns mining preceded by a presentation of the different correlation measures.
We deduced that, there is no previous work that dealt with both frequent correlated as well as rare correlated patterns according to a specified correlation metric. Thus, motivated by this issue, we propose in this thesis to benefit from the knowledge returned from both frequent correlated as well as rare correlated patterns according to the \textit{bond} measure. To tackle this challenging task, we propose
in the next chapter the characterization of both  frequent correlated patterns, rare correlated patterns and their associated concise representations.

\part{Condensed Representations of Correlated Patterns}
\chapter{Condensed Representations of Correlated Patterns}\label{ch_4}

\section{Introduction} \label{introChap4}
The main moan that can be related to frequent pattern mining approaches stands in the fact that the latter do not offer the information concerning the correlation degree among the items in the extraction context.  This stands behind our motivation to provide to the user the key information about the correlation between items as well as the frequency of their occurrence.
This aim is reachable thanks to the integration of the correlation measures within the mining process.

The correlation measure, that we treat throughout this thesis, is \textit{bond}.
Our motivations behind the choice of this measure is explicitly described in Section \ref{S2chap4}.
In Section \ref{sebd}, we focus on the correlated patterns associated to the \textit{bond} measure, we characterize this set of patterns.
Section \ref{sec_FBond} is devoted to the presentation of the closure operator associated to \textit{bond}. We introduce the associated exact condensed representations in Section \ref{section_RC} and in
Section \ref{section_RCfcp}. Section \ref{ConcChap4} concludes the chapter.
\section{Motivations behind our choice of the \textit{bond} measure} \label{S2chap4}
Based on the study of the state of the art approaches proposed in the previous chapter, we find that the almost of the existing approaches are dealing with the \textit{bond} and the \textit{all-confidence} measures.
The \textit{bond} measure fulfills the anti-monotony property which is an interesting property. Indeed, the latter
reduce the search space when pruning the non potential candidates, therefore optimizing the extraction time as well as the memory consumption.%%%%%%%%%
% % % % % % % % % % % % % % % % % % % % % % % % % % % %

It has been proved in the literature that the \textit{bond} measure presents many interesting properties. In fact, the \textit{bond} measure is:
\begin{enumerate}
	\item \textit{symmetric} since $\forall$ $I$, $J$ $\subseteq$ $\mathcal{I}$, \textit{bond}\textsc{(}$IJ$\textsc{)} = \textit{bond}\textsc{(}$JI$\textsc{)};
	\item \textit{descriptive} since it is not influenced by the number of transactions that contain none of the items composing the pattern;
	\item fulfills the \textit{cross-support} property \cite{Xiong06hypercliquepattern}. Thanks to this property, given a minimal threshold \textit{minbond} and an itemset $I$ $\subseteq$ $\mathcal{I}$, if $\exists$ $x$ and $y$ $\in$ $I$ such as
	$\frac{\displaystyle\textit{Supp}\textsc{(}\wedge
		x\textsc{)}}{\displaystyle\textit{Supp}\textsc{(}\wedge y\textsc{)}}
	< \textit{minbond}$ then
	$I$ is not correlated since \textit{bond}\textsc{(}$I$\textsc{)} $<$ \textit{minbond};
	\item induces an \textit{anti-monotonic} constraint for a fixed minimal threshold \textit{minbond}.
	In fact, $\forall$ $I$, $I_1$ $\subseteq$ $\mathcal{I}$, if $I_1$ $\subseteq$ $I$, then \textit{bond}\textsc{(}$I_1$\textsc{)} $\geq$ \textit{bond}\textsc{(}$I$\textsc{)}.
	Therefore, the set $\mathcal{CP}$ of correlated patterns forms an order ideal. Indeed, all the subsets of a correlated pattern are necessarily correlated ones.
\end{enumerate}
We present in the following an interesting relation between the value of the \textit{bond} measure and the conjunctive and disjunctive supports values for each couple of two patterns $I$ and $I_1$ such as $I$ $\subseteq$ $I_1$ \cite{tarekds2010}.
\begin{proposition} \label{PropBond} Let $I$, $I_1$ $\subseteq$ $\mathcal{I}$ and $I$ $\subseteq$ $I_1$. If  \textit{bond}\textsc{(}$I$\textsc{)} =
	\textit{bond}\textsc{(}$I_1$\textsc{)}, then
	\textit{Supp}\textsc{(}$\wedge I$\textsc{)} =
	\textit{Supp}\textsc{(}$\wedge I_1$\textsc{)} and
	\textit{Supp}\textsc{(}$\vee I$\textsc{)} =
	\textit{Supp}\textsc{(}$\vee I_1$\textsc{)}.
\end{proposition}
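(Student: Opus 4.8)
The plan is to translate the equality of \textit{bond} values into an equality between cross-products of supports, and then to squeeze the relevant quantities using the monotonicity behaviour of the conjunctive and the disjunctive supports with respect to set inclusion.

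First I would record the two monotonicity facts that follow at once from $I \subseteq I_1$. On the one hand, every transaction containing all items of $I_1$ also contains all items of $I$, so $\textit{Supp}(\wedge I) \geq \textit{Supp}(\wedge I_1)$ (the conjunctive support is anti-monotone, exactly the observation made about the frequency constraint after Definition~\ref{monotone}). On the other hand, every transaction containing at least one item of $I$ contains at least one item of $I_1$, so $\textit{Supp}(\vee I) \leq \textit{Supp}(\vee I_1)$ (the disjunctive support is monotone). Since $I$ is non-empty, $\textit{Supp}(\vee I) > 0$, so all the \textit{bond} ratios in play (as given by Definition~\ref{La mesure bond}) are genuine quotients.

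Next I would set $a = \textit{Supp}(\wedge I)$, $a_1 = \textit{Supp}(\wedge I_1)$, $b = \textit{Supp}(\vee I)$ and $b_1 = \textit{Supp}(\vee I_1)$, so that the hypothesis reads $a/b = a_1/b_1$, i.e. $a\,b_1 = a_1\,b$. From $b_1 \geq b$ we get $a\,b_1 \geq a\,b$, and from $a_1 \leq a$ we get $a_1\,b \leq a\,b$; chaining these through the equality $a\,b_1 = a_1\,b$ gives $a\,b \leq a\,b_1 = a_1\,b \leq a\,b$, so both inequalities are in fact equalities: $a\,b_1 = a\,b$ and $a_1\,b = a\,b$. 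As $I$ is correlated, $\textit{bond}(I) > 0$, hence $a > 0$; dividing $a\,b_1 = a\,b$ by $a$ yields $b_1 = b$, that is $\textit{Supp}(\vee I) = \textit{Supp}(\vee I_1)$, and dividing $a_1\,b = a\,b$ by $b > 0$ yields $a_1 = a$, that is $\textit{Supp}(\wedge I) = \textit{Supp}(\wedge I_1)$.

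There is essentially no hard step here; the only point requiring a little care is the non-degeneracy $\textit{Supp}(\wedge I) > 0$, without which the two cross-products could both vanish while $b$ and $b_1$ still differ — this is exactly why the statement is used for correlated patterns, for which $\textit{bond}(I) \geq \textit{minbond} > 0$. The remainder is just the elementary fact that if $a/b = a_1/b_1$ with $a_1 \leq a$ and $b_1 \geq b$, then necessarily $a_1 = a$ and $b_1 = b$.
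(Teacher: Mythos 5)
Your argument is correct, and it is worth noting that the paper itself gives no proof of Proposition~\ref{PropBond}: the statement is imported from the cited reference \cite{tarekds2010}, so there is no in-text argument to compare against. Your squeeze $a\,b \leq a\,b_1 = a_1\,b \leq a\,b$, driven by the anti-monotonicity of the conjunctive support and the monotonicity of the disjunctive support, is exactly the right elementary mechanism, and it supplies a self-contained justification that the thesis leaves implicit. The most valuable part of your write-up is the observation that the non-degeneracy $\textit{Supp}(\wedge I) > 0$ is genuinely needed: if $\textit{Supp}(\wedge I) = 0$ then $\textit{Supp}(\wedge I_1) = 0$ as well, both \textit{bond} values vanish, and yet the disjunctive supports of $I$ and $I_1$ can differ, so the proposition as literally stated (for arbitrary $I \subseteq I_1$) admits degenerate counterexamples. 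You correctly point out that in every use the paper makes of this result the patterns are correlated with $\textit{bond}(I) \geq \textit{minbond} > 0$, which restores the hypothesis; one could add that your auxiliary claim ``$I$ non-empty implies $\textit{Supp}(\vee I) > 0$'' likewise presumes every item occurs in at least one transaction, a convention the paper adopts implicitly when it declares $\textit{bond}(\emptyset)$ undefined. With those two caveats made explicit, the proof is complete.
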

According to the previous proposal, if  \textit{bond}\textsc{(}$I$\textsc{)} =
\textit{bond}\textsc{(}$I_1$\textsc{)}, then \textit{Supp}\textsc{(}$\neg I$\textsc{)} =
\textit{Supp}\textsc{(}$\neg I_1$\textsc{)}.
In fact, both $I$ and $I_1$
have the same conjunctive support and, according to the Morgan law, we build the following relation between the disjunctive and the negative supports of a pattern: \textit{Supp}\textsc{(}$\neg I$\textsc{)} = $|\mathcal{T}|$ - \textit{Supp}\textsc{(}$\vee I$\textsc{)}.
On the other hand, if \textit{bond}\textsc{(}$I$\textsc{)} $\neq$
\textit{bond}\textsc{(}$I_1$\textsc{)}, then
\textit{Supp}\textsc{(}$\wedge I$\textsc{)} $\neq$
\textit{Supp}\textsc{(}$\wedge I_1$\textsc{)} or
\textit{Supp}\textsc{(}$\vee I$\textsc{)} $\neq$
\textit{Supp}\textsc{(}$\vee I_1$\textsc{)} \textsc{(}\textit{i.e.} one of the two supports is different or both\textsc{)}.\\

%%%%%%%%%%%%%
In this context, we propose to study the \textit{bond} correlation measure in an integrated mining process aiming to extract both frequent and rare correlated patterns as well as their associated condensed representations. In this regard, we present in the next section the specification of the frequent correlated patterns as well as the rare correlated patterns according to the \textit{bond} measure.
\section{Characterization of the Correlated patterns according to the \textit{bond} measure}\label{sebd}
\subsection{Definitions and Properties}
The \textit{bond} measure \cite{Omie03}  is mathematically equivalent to
\textit{Coherence} \cite{comine_Lee}, \textit{Tanimoto
	coefficient} \cite{Tanimoto1958}, and \textit{Jaccard}
\cite{jaccard_1901}. It was redefined in \cite{tarekds2010} as follows:
\begin{definition}\label{Defbond} \textbf{The \textit{bond} measure}\\
	The \textit{bond} measure of a non-empty pattern $I \subseteq \mathcal{I}$ is defined as follows:
	\begin{center}
		$\textit{bond}\textsc{(}\textit{I}\textsc{)} = \frac{\displaystyle
			\textit{Supp}\textsc{(}\wedge\textit{I}\textsc{)}}{\displaystyle
			\textit{Supp}\textsc{(}\vee\textit{I}\textsc{)}}$
	\end{center}
\end{definition}
The \textit{bond} measure takes its values within the interval $[0,1]$. While considering the universe of a pattern
$\mathcal{I}$ \cite{comine_Lee}, \textit{i.e.}, the set of transactions containing a non empty subset of $I$,
the \textit{bond} measure represents the simultaneous occurrence rate of the items
of the pattern $I$ in its universe.
Thus, the higher the items of the pattern are dispersed in its universe, \textsc{(}\textit{i.e.} weakly correlated\textsc{)},
the lower the value of the \textit{bond} measure is, as \textit{Supp}\textsc{(}$\wedge I$\textsc{)} is  smaller than \textit{Supp}\textsc{(}$\vee I$\textsc{)}.
Inversely, the more the items of $I$ are dependent from each other, \textsc{(}\textit{i.e.} strongly correlated\textsc{)}, the higher the value of the
\textit{bond} measure is, since  \textit{Supp}\textsc{(}$\wedge I$\textsc{)} would be closer to \textit{Supp}\textsc{(}$\vee I$\textsc{)}.

The set of correlated patterns associated to the \textit{bond} measure is defined as follows.
\begin{definition}\label{defCP}  \textbf{Correlated patterns}\\
	Considering a correlation threshold \textit{minbond}, the set of correlated patterns, denoted $\mathcal{CP}$, is equal to:
	$\mathcal{CP}$ = $\{I$ $\subseteq \mathcal{I}|$
	\textit{bond}\textsc{(}$I$\textsc{)} $\geq$ \textit{minbond}$\}$.
\end{definition}
\begin{example}
	Let us consider the dataset given by Table \ref{Base_transactions}. For \textit{minbond} = 0.5, we have \textit{bond}\textsc{(}$AB$\textsc{)} = $\displaystyle\frac{2}{5}$ = 0.4 $<$ 0.5. The itemset $AB$ is then not a correlated one.
	Whereas, since \textit{bond}\textsc{(}$BCE$\textsc{)} = $\displaystyle\frac{3}{5}$ = 0.6 $\geq$ 0.5, the itemset $BCE$ is a correlated one.
\end{example}

In the following, we define the set composed by the  maximal correlated patterns as follows:
\begin{definition} \label{bdpos} \textbf{Maximal correlated patterns}\\
	The set of maximal correlated patterns constitutes the positive border of correlated patterns and is composed by correlated patterns having no correlated proper superset.
	This set is defined as:  $\mathcal{M}ax$$\mathcal{CP}$ = $\{$$I$ $\in$ $\mathcal{CP}$$\mid$ $\forall$ $I_1$ $\supset$ $I$: $I_1$ $\notin$ $\mathcal{CP}$$\}$, or equivalently:
	$\mathcal{M}ax$$\mathcal{CP}$ = $\{$$I$ $\in$ $\mathcal{CP}$$\mid$ $\forall$ $I_1$ $\supset$ $I$:
	\textit{bond}\textsc{(}$I_1$\textsc{)} $<$ \textit{minbond}$\}$.
\end{definition}
\begin{example}\label{example_MCMax}
	Consider the extraction context sketched by Table \ref{Base_transactions} \textsc{(Page \pageref{Base_transactions})}. For \textit{minbond} = 0.2, we have $\mathcal{M}ax$$\mathcal{CP}$  = $\{$$ACD$, $ABCE$$\}$.
\end{example}
%The next subsection is dedicated to the presentation of the closure operator associated to the \textit{bond} measure. This operator characterizes the correlated patterns through the induced equivalence classes.

As far as we integrate the frequency constraint with the correlation constraint, we can distinguish between two sets of correlated patterns, which are the "Frequent correlated patterns" set and the "Rare correlated patterns" set. These two distinct sets will be characterized separately in the remainder.
\subsection{Frequent Correlated Patterns}
\begin{definition}\label{defCFP} \textbf{The set of frequent correlated patterns}\\
	Considering the support threshold \textit{minsupp} and the
	correlation threshold \textit{minbond}, the set of frequent
	correlated patterns, denoted $\mathcal{FCP}$, is equal to:
	$\mathcal{FCP}$ = $\{$$I$ $\subseteq$ $\mathcal{I}$ $|$
	\textit{Supp}\textsc{(}$\wedge I$\textsc{)} $\geq$
	\textit{minsupp} and \textit{bond}\textsc{(}$I$\textsc{)} $\geq$ \textit{minbond}$\}$.
\end{definition}
In fact, the $\mathcal{FCP}$ set is composed by the patterns fulfilling at the same time the correlation and the frequency constraints. A pattern is said to be ``Frequent Correlated'' if its support exceeds the minimal frequency threshold \textit{minsupp} and its correlation value also exceeds the
minimal correlation threshold \textit{minbond}. The $\mathcal{FCP}$ set corresponds to the conjunction of two anti-monotonic constraints of correlation and of frequency. Thus, it induces an order ideal on the itmeset lattice.
\begin{example}\label{example_set_FCP}
	Consider the extraction context sketched by Table \ref{Base_transactions} \textsc{(Page \pageref{Base_transactions})}. For \textit{minsupp} = 4 and \textit{minbond} = 0.2,
	the $\mathcal{FCP}$ set consists of the following patterns where each triplet represents the pattern, its conjunctive support value and its \textit{bond} value:
	$\mathcal{FCP}$ = $\{$\textsc{(}$B$, 4, $\displaystyle\frac{4}{4}$\textsc{)},
	\textsc{(}$C$, 4, $\displaystyle\frac{4}{4}$\textsc{)},
	\textsc{(}$E$, 4, $\displaystyle\frac{4}{4}$\textsc{)},
	\textsc{(}$BE$, 4, $\displaystyle\frac{4}{4}$\textsc{)}$\}$.
\end{example}
% % % % % % % % % % % % % % % % % % %
\subsection{Rare Correlated Patterns}
The set of rare correlated patterns associated to the \textit{bond} measure is defined as follows.
\begin{definition}\label{defCRP} \textbf{The set of rare correlated patterns}\\
	Considering the support threshold \textit{minsupp} and the
	correlation threshold \textit{minbond}, the set of rare correlated patterns, denoted $\mathcal{RCP}$, is equal to:
	$\mathcal{RCP}$ = $\{I$ $\subseteq$ $\mathcal{I}$ $|$
	\textit{Supp}\textsc{(}$\wedge I$\textsc{)} $<$
	\textit{minsupp} and \textit{bond}\textsc{(}$I$\textsc{)} $\geq$ \textit{minbond}$\}$.
\end{definition}
\begin{example}\label{example_set_CRP}
	Consider the extraction context sketched by Table \ref{Base_transactions} \textsc{(Page \pageref{Base_transactions})}. For \textit{minsupp} = 4 and \textit{minbond} = 0.2,
	the set $\mathcal{RCP}$ consists of the following patterns where each triplet represents the pattern, its conjunctive support value and its \textit{bond} value:
	$\mathcal{RCP}$ = $\{$\textsc{(}$A$, 3, $\displaystyle\frac{3}{3}$\textsc{)},
	\textsc{(}$D$, 1, $\displaystyle\frac{1}{1}$\textsc{)},
	\textsc{(}$AB$, 2, $\displaystyle\frac{2}{5}$\textsc{)},
	\textsc{(}$AC$, 3, $\displaystyle\frac{3}{4}$\textsc{)},
	\textsc{(}$AD$, 1, $\displaystyle\frac{1}{3}$\textsc{)},
	\textsc{(}$AE$, 2, $\displaystyle\frac{2}{5}$\textsc{)},
	\textsc{(}$BC$, 3, $\displaystyle\frac{3}{5}$\textsc{)},
	\textsc{(}$CD$, 1, $\displaystyle\frac{1}{4}$\textsc{)},
	\textsc{(}$CE$, 3, $\displaystyle\frac{3}{5}$\textsc{)},
	\textsc{(}$ABC$, 2, $\displaystyle\frac{2}{5}$\textsc{)},
	\textsc{(}$ABE$, 2, $\displaystyle\frac{2}{5}$\textsc{)},
	\textsc{(}$ACD$, 1, $\displaystyle\frac{1}{4}$\textsc{)},
	\textsc{(}$ACE$, 2, $\displaystyle\frac{2}{5}$\textsc{)},
	\textsc{(}$BCE$, 3, $\displaystyle\frac{3}{5}$\textsc{)},
	\textsc{(}$ABCE$, 2, $\displaystyle\frac{2}{5}$\textsc{)}$\}$.
	This associated $\mathcal{RCP}$ set as well as the $\mathcal{FCP}$ set of the previous example are depicted by Figure \ref{figure_CRP}.
	The support shown at the top left of each frame represents the conjunctive one.
	As shown in Figure \ref{figure_CRP}, the rare correlated patterns are localized below the
	border induced by the anti-monotonic constraint of correlation and over the border
	induced by the monotonic constraint of rarity.
\end{example}
We deduce from Definition \ref{defCRP} that the $\mathcal{RCP}$ set corresponds to the intersection between the set $\mathcal{CP}$
of correlated patterns and the set $\mathcal{RP}$
of rare patterns, \textit{i.e.},  $\mathcal{RCP}$ = $\mathcal{CP}$ $\cap$ $\mathcal{RP}$.
The following proposition derives from this result.
\begin{proposition}\label{prop_set_CRP}
	Let $I$ $\in$ $\mathcal{RCP}$. We have:
	\begin{itemize}
		\item  Based on the order ideal of the set $\mathcal{CP}$ of correlated patterns, we have $\forall$ $I_1$ $\subseteq$ $I$: $I_1$ $\in$ $\mathcal{CP}$
		\item  Based on the order filter of the set $\mathcal{RP}$ of rare patterns, we have $\forall$ $I_1$ $\supseteq$ $I$: $I_1$ $\in$ $\mathcal{RP}$.
	\end{itemize}
\end{proposition}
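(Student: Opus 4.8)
The plan is to read off both items directly from the identity $\mathcal{RCP} = \mathcal{CP} \cap \mathcal{RP}$ established immediately before the statement, combined with the order-ideal structure of $\mathcal{CP}$ and the order-filter structure of $\mathcal{RP}$ recalled earlier. So first I would record that $I \in \mathcal{RCP}$ means simultaneously $I \in \mathcal{CP}$, i.e. $\textit{bond}(I) \geq \textit{minbond}$, and $I \in \mathcal{RP}$, i.e. $\textit{Supp}(\wedge I) < \textit{minsupp}$.

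For the first item, I would invoke the anti-monotonicity of the \textit{bond} measure (property 4 in Section \ref{S2chap4}): for every $I_1 \subseteq I$ one has $\textit{bond}(I_1) \geq \textit{bond}(I) \geq \textit{minbond}$, hence $I_1 \in \mathcal{CP}$. This is precisely the assertion that $\mathcal{CP}$ is an order ideal on the itemset lattice, so that membership of $I$ propagates downward to all its subsets; no computation beyond this monotonicity inequality is needed.

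For the second item, I would argue dually using the monotonicity of the rarity constraint: for every $I_1 \supseteq I$ the anti-monotonicity of the conjunctive support gives $\textit{Supp}(\wedge I_1) \leq \textit{Supp}(\wedge I) < \textit{minsupp}$, so $I_1 \in \mathcal{RP}$. Equivalently, $\mathcal{RP}$ is an order filter, and membership propagates upward to all supersets of $I$.

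The argument is essentially bookkeeping, so there is no real obstacle; the only point requiring a little care is that each item must use exactly one of the two defining constraints of $\mathcal{RCP}$ — the correlation constraint for the subsets, the rarity constraint for the supersets — since a subset of a rare correlated pattern need not remain rare and a superset need not remain correlated. In particular the conclusions cannot be strengthened to $I_1 \in \mathcal{RCP}$ in either direction, which is exactly why the representation problem treated in the sequel is nontrivial.
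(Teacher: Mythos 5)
Your proof is correct and follows essentially the same route as the paper's: both rest on the decomposition $\mathcal{RCP} = \mathcal{CP} \cap \mathcal{RP}$ together with the order-ideal property of $\mathcal{CP}$ (anti-monotonicity of \textit{bond}) and the order-filter property of $\mathcal{RP}$ (anti-monotonicity of the conjunctive support). If anything, your version spells out the two monotonicity inequalities more explicitly than the paper, which instead devotes most of its proof to the consequence that $\mathcal{RCP}$ is sandwiched between the borders $\mathcal{M}ax\mathcal{CP}$ and $\mathcal{M}in\mathcal{RP}$; your closing remark that neither conclusion can be strengthened to $I_1 \in \mathcal{RCP}$ captures the same point.
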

\begin{proof}
	The proof follows from the properties induced by the constraints of rarity and correlation. The set $\mathcal{RCP}$, whose elements fulfill
	the constraint ``being a rare correlated pattern'', results from the conjunction between two theories corresponding to both constraints of distinct types.
	So, the set $\mathcal{RCP}$ is neither an order ideal nor an order filter. The search space of this set is delimited
	by: \textsc{(}i\textsc{)} The maximal correlated elements which are also rare,  \textit{i.e.} the rare patterns among \textit{the set $\mathcal{M}$$ax$$\mathcal{CP}$} of maximal correlated patterns
	\textsc{(}\textit{cf.} Definition \ref{bdpos}\textsc{)} and;
	\textsc{(}ii\textsc{)} The minimal rare elements which are correlated, \textit{i.e.} the correlated patterns among \textit{the set $\mathcal{M}$$in$$\mathcal{RP}$} of minimal rare patterns
	\textsc{(}\textit{cf.} Definition \ref{mrp}\textsc{)}.
	Therefore, each rare correlated pattern is necessarily included between an element from each set of the two aforementioned sets.
\end{proof}
\begin{figure}[htbp]
	\begin{center}
		\includegraphics[scale = 0.4]{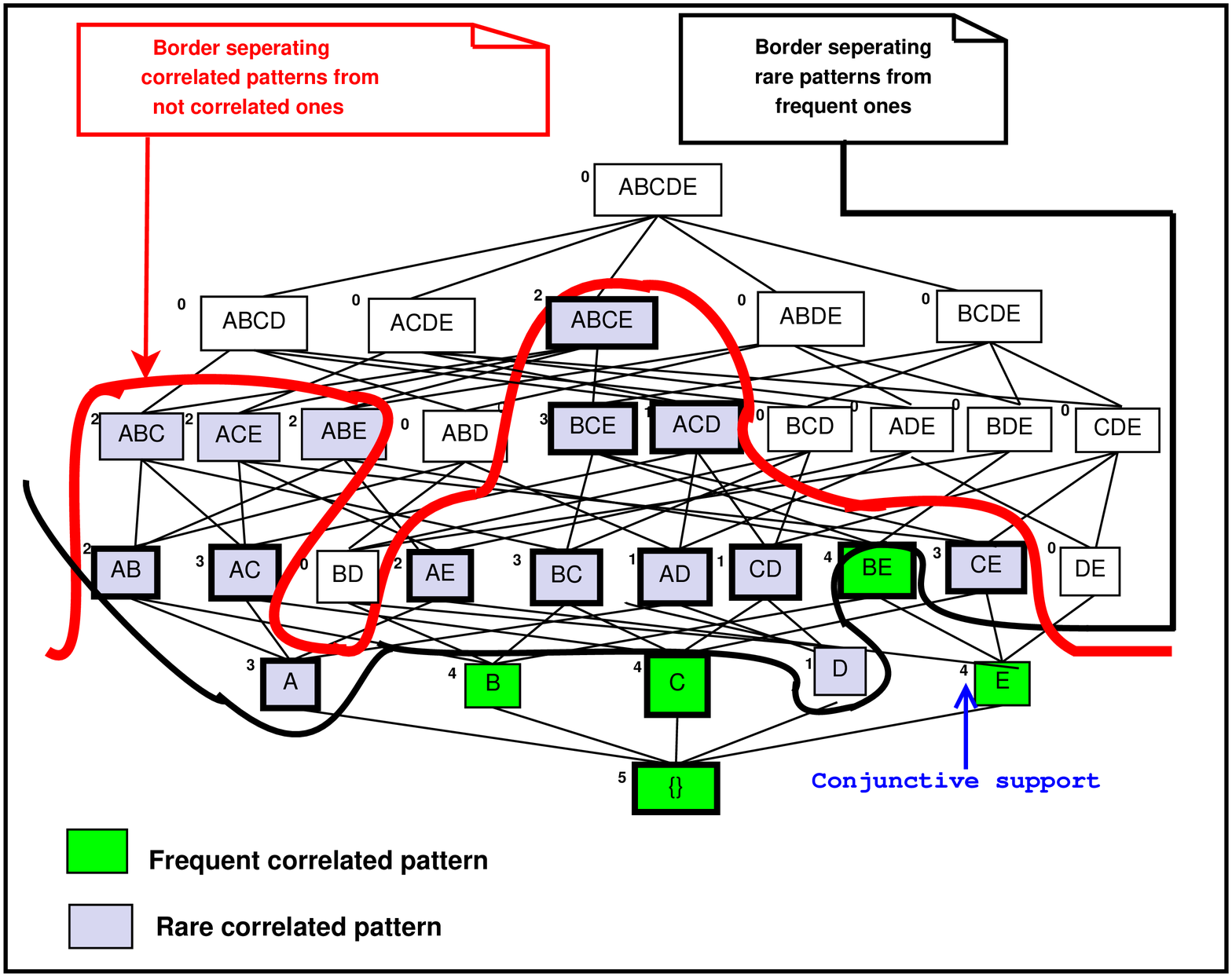}
		\caption{Localization of the correlated patterns for \textit{minsupp} = 4 and \textit{minbond} = 0.2 according to the extraction context given in Table \ref{Base_transactions}.}
		\label{figure_CRP}
	\end{center}
\end{figure}
Therefore, the localization of these elements is more difficult than the localization of theories corresponding to constraints of the same nature.
Indeed, the conjunction of anti-monotonic constraints \textsc{(}\textit{resp.} monotonic\textsc{)} is an anti-monotonic one \textsc{(}\textit{resp.} monotonic\textsc{)} \cite{luccheKIS05_MAJ_06}.
For example, the constraint "being a correlated frequent pattern" is anti-monotonic, since it results from the conjunction of two anti-monotonic constraints
namely,  "being a correlated pattern" and "being a frequent pattern". This constraint induces, then, an order ideal on the itemset lattice \cite{tarekds2010}.
However, the constraint ``being a rare and a not correlated pattern'' is monotonic, since it results from the conjunction of two monotonic constraints
namely,  "being a not correlated pattern'' and "being a rare pattern''. This constraint induces, then, an order filter on the itemset lattice.

In order to assess the size of the $\mathcal{RCP}$ set, and given the nature of the two constraints induced by the minimal thresholds of rarity and correlation respectively \textit{minsupp} and \textit{minbond},
the size of the $\mathcal{RCP}$ set of rare correlated patterns varies as shown in the following proposition.
\begin{proposition}
	\textbf{a\textsc{)}} Let \textit{minsupp}$_1$ and \textit{minsupp}$_2$ be two minimal thresholds of conjunctive support and $\mathcal{RCP}_{s1}$ and $\mathcal{RCP}_{s2}$ be the two sets of patterns associated to each threshold for the same value of \textit{minbond}.
	We have: if \textit{minsupp}$_1$ $\leq$ \textit{minsupp}$_2$, then $\mathcal{RCP}_{s1}$ $\subseteq$ $\mathcal{RCP}_{s2}$ and consequently $|\mathcal{RCP}_{s1}|$ $\leq$ $|\mathcal{RCP}_{s2}|$.\\
	\textbf{b\textsc{)}} Let \textit{minbond}$_1$ and \textit{minbond}$_2$
	be two minimal thresholds of \textit{bond} measure and let
	$\mathcal{RCP}_{b1}$ and $\mathcal{RCP}_{b2}$ be the two sets of patterns associated to each threshold for the same value of \textit{minsupp}.
	We have: if \textit{minbond}$_1$ $\leq$ \textit{minbond}$_2$, then $\mathcal{RCP}_{b2}$ $\subseteq$ $\mathcal{RCP}_{b1}$, consequently
	$|\mathcal{RCP}_{b2}|$ $\leq$ $|\mathcal{RCP}_{b1}|$.
\end{proposition}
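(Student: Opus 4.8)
The plan is to prove each set inclusion directly from Definition \ref{defCRP}, and then to deduce the two cardinality inequalities from the elementary fact that $A \subseteq B$ implies $|A| \le |B|$ when $B$ is finite; finiteness is automatic here since both $\mathcal{RCP}$ sets are subsets of the finite powerset $\mathcal{P}(\mathcal{I})$. The key observation is purely structural: by Definition \ref{defCRP}, membership in an $\mathcal{RCP}$ set is the conjunction of one ``strictly-below-threshold'' condition, $\textit{Supp}$$(\wedge I) <$ \textit{minsupp}, and one ``at-least-threshold'' condition, \textit{bond}$(I) \ge$ \textit{minbond}, and each of these is monotone in its own threshold. So no property of the \textit{bond} closure operator, nor of the positive/negative borders, is needed.

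For part \textbf{a)} I would fix a pattern $I \in \mathcal{RCP}_{s1}$. By Definition \ref{defCRP} this means $\textit{Supp}$$(\wedge I) <$ \textit{minsupp}$_1$ and \textit{bond}$(I) \ge$ \textit{minbond}. Since \textit{minsupp}$_1 \le$ \textit{minsupp}$_2$ by hypothesis, combining $<$ with $\le$ gives $\textit{Supp}$$(\wedge I) <$ \textit{minsupp}$_2$, while the correlation requirement is untouched because \textit{minbond} is held fixed. Hence $I$ satisfies both defining conditions of $\mathcal{RCP}_{s2}$, so $I \in \mathcal{RCP}_{s2}$; this proves $\mathcal{RCP}_{s1} \subseteq \mathcal{RCP}_{s2}$ and therefore $|\mathcal{RCP}_{s1}| \le |\mathcal{RCP}_{s2}|$. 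For part \textbf{b)} the argument is the dual one with the roles of the two constraints swapped: now \textit{minsupp} is fixed and the correlation threshold moves. Fixing $I \in \mathcal{RCP}_{b2}$ gives $\textit{Supp}$$(\wedge I) <$ \textit{minsupp} and \textit{bond}$(I) \ge$ \textit{minbond}$_2$; from \textit{minbond}$_1 \le$ \textit{minbond}$_2$ we get \textit{bond}$(I) \ge$ \textit{minbond}$_1$, the frequency condition being unchanged, so $I \in \mathcal{RCP}_{b1}$. Thus $\mathcal{RCP}_{b2} \subseteq \mathcal{RCP}_{b1}$ and $|\mathcal{RCP}_{b2}| \le |\mathcal{RCP}_{b1}|$.

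There is essentially no hard step; the only point worth stating explicitly, to keep the write-up self-contained, is the reversal of the inclusion between the two parts, which is exactly what one expects from the asymmetry of the conjunction: raising the support threshold enlarges the ``rare'' side of the conjunction, whereas raising the correlation threshold shrinks the ``correlated'' side. I would also recall, as already noted after Proposition \ref{prop_set_CRP}, that the $\mathcal{RCP}$ family is neither an order ideal nor an order filter, so what is being established is monotonicity in the threshold \emph{parameters}, not monotonicity along the subset/superset order on patterns; this forestalls any confusion and makes clear why nothing deeper than Definition \ref{defCRP} is invoked.
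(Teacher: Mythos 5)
Your proof is correct and follows exactly the same route as the paper's: each inclusion is read off directly from Definition \ref{defCRP} by chaining the strict support inequality through \textit{minsupp}$_1 \leq$ \textit{minsupp}$_2$ (resp.\ the \textit{bond} inequality through \textit{minbond}$_1 \leq$ \textit{minbond}$_2$) while the other threshold is held fixed. The additional remarks on finiteness and on why the inclusion reverses between the two parts are harmless elaborations, not a different argument.
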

\begin{proof}
	- The proof of \textbf{a\textsc{)}} derives from the fact that for $I$ $\subseteq$ $\mathcal{I}$, if \textit{Supp}\textsc{(}$\wedge I$\textsc{)} $<$ \textit{minsupp}$_1$, then \textit{Supp}\textsc{(}$\wedge I$\textsc{)} $<$ \textit{minsupp}$_2$. Therefore, $\forall$ $I$ $\in$ $\mathcal{RCP}_{s1}$, $I$ $\in$ $\mathcal{RCP}_{s2}$. As a result,  $\mathcal{RCP}_{s1}$ $\subseteq$ $\mathcal{RCP}_{s2}$.\\
	- The proof of \textbf{b\textsc{)}} derives from the fact that for $I$ $\subseteq$ $\mathcal{I}$, if \textit{bond}\textsc{(}$I$\textsc{)} $\geq$ \textit{minbond}$_2$, then \textit{bond}\textsc{(}$I$\textsc{)} $\geq$ \textit{minbond}$_1$. Therefore, $\forall$ $I$ $\in$ $\mathcal{RCP}_{b2}$, $I$ $\in$ $\mathcal{RCP}_{b1}$.
	As a result, $\mathcal{RCP}_{b2}$ $\subseteq$ $\mathcal{RCP}_{b1}$.\\\mbox{}
\end{proof}
We can then deduce that the size of the set  $\mathcal{RCP}$ is proportional to \textit{minsupp} and inversely proportional to \textit{minbond}. However, in the general case, we cannot decide about the size of the set  $\mathcal{RCP}$ when both thresholds vary simultaneously.
% % % % % % % % % % % % % % % %
The next section is dedicated to the presentation of the closure operator associated to the \textit{bond} measure. This operator characterizes the correlated patterns through the induced equivalence classes.

\section{The $f_{bond}$ closure operator}\label{sec_FBond}
The $f_{bond}$ closure operator associated to the \textit{bond} measure is defined as follows \cite{tarekds2010}:
\begin{definition} \label{closure_fbond1} \textbf{The operator $f_{bond}$}\\
	{\setlength\arraycolsep{5pt}
		\begin{eqnarray*}
			\large f_{bond}:
			\mathcal{P}\textsc{(}\mathcal{I}\textsc{)}&
			\rightarrow &
			\mathcal{P}\textsc{(}\mathcal{I}\textsc{)}\\
			I & \mapsto &
			f_{bond}\textsc{(}I\textsc{)} =
			I\cup\{i\in\mathcal{I}\setminus I  | \  \textit{bond}\textsc{(}I\textsc{)} = \textit{bond}\textsc{(} I\cup\{i\}\textsc{)}\}
	\end{eqnarray*}}
\end{definition}
The operator $f_{bond}$ has been shown to be a closure operator \cite{tarekds2010}. Indeed, it fulfills the extensitivity,
the isotony and the idempotency properties \cite{ganter99}.
The closure of a pattern $I$ by $f_{bond}$, \textit{i.e.} $f_{bond}\textsc{(}I$\textsc{)}, corresponds to the maximal set of items containing $I$ and sharing the same
\textit{bond} value with $I$.
\begin{example}
	Consider the extraction context sketched by Table \ref{Base_transactions} \textsc{(Page \pageref{Base_transactions})}. For
	\textit{minbond} = 0.2, we have \textit{bond}\textsc{(}$AB$\textsc{)} = $\displaystyle\frac{2}{5}$, \textit{bond}\textsc{(}$ABC$\textsc{)} = $\displaystyle\frac{2}{5}$ and \textit{bond}\textsc{(}$ABE$\textsc{)} = $\displaystyle\frac{2}{5}$. Thus, $C$ $\in$ $f_{bond}$\textsc{(}$AB$\textsc{)}, and $E$ $\in$ $f_{bond}$\textsc{(}$AB$\textsc{)}. Contrariwise, \textit{bond}\textsc{(}$ABD$\textsc{)} = $\displaystyle\frac{0}{5}$ = 0. Thus, $D$ $\notin$ $f_{bond}$\textsc{(}$AB$\textsc{)}. Consequently, we have $f_{bond}$\textsc{(}$AB$\textsc{)} = $ABCE$.
	
	Let us illustrate the different properties of the $f_{bond}$ closure operator:
	\begin{enumerate}
		\item For the Extensitivity property:  we have, for example, $f_{bond}$\textsc{(}\texttt{CD}\textsc{)} = \texttt{ACD},
		\texttt{CD} $\subseteq$ $f_{bond}$\textsc{(}\texttt{CD}\textsc{)}.
		\item For the Isotony property: we have, for example, \texttt{AB} $\supset$ \texttt{B}, $f_{bond}$\textsc{(}\texttt{AB}\textsc{)} = \texttt{ABCE} and $f_{bond}$\textsc{(}\texttt{B}\textsc{)} = \texttt{BE}.
		\item For the Idempotency property: we have, the example of the closed itemset \texttt{ABCE},
		$f_{bond}$\textsc{(}$f_{bond}$\textsc{(}\texttt{ABCE}\textsc{)}\textsc{)} = \texttt{ABCE}.
	\end{enumerate}
	
\end{example}
The closure operator $f_{bond}$ induces an equivalence relation on
the power-set of the set of items $\mathcal{I}$, splitting it into disjoint
\textit{$f_{bond}$} equivalence classes which are formally defined as follows:
\begin{definition}\label{EquivClsbond} \textbf{Equivalence class associated to the closure operator $f_{bond}$}\\
	An equivalence class associated to the $f_{bond}$ closure operator is composed by all the patterns having the same closure by the operator
	$f_{bond}$.
\end{definition}
In each class, all the elements have the same $f_{bond}$ closure and the same value of \textit{bond}. The minimal patterns of a \textit{bond} equivalence class are the smallest incomparable members, \textit{w.r.t.} set inclusion, while the $f_{bond}$ closed pattern
is the largest one. These sets are formally defined in the following:
\begin{definition}\label{ClosedCorr} \textbf{Closed correlated patterns}\\
	The set $\mathcal{CCP}$ of closed correlated patterns by $f_{bond}$ is equal to: $\mathcal{CCP}$ = $\{$$I$ $\in$ $\mathcal{CP}$$\mid$ $\nexists$ $I_{1}$ $\supset$ $I$: \textit{bond}\textsc{(}$I$\textsc{)} = \textit{bond}\textsc{(}$I_{1}\textsc{)}\}$, or equivalently: $\mathcal{CCP}$ = $\{$$I$ $\in$ $\mathcal{CP}$$\mid$ $\nexists$ $I_{1}$ $\supset$ $I$: $f_{bond}\textsc{(}I$\textsc{)} = $f_{bond}\textsc{(}I_{1}\textsc{)}\}$.
\end{definition}
\begin{definition} \label{MinCorr} \textbf{Minimal correlated patterns}\\
	The set $\mathcal{MCP}$ of minimal correlated patterns is equal to: $\mathcal{MCP}$ = $\{$$I$ $\in$ $\mathcal{CP}$$\mid$ $\nexists$ $I_{1}$ $\subset$ $I$: \textit{bond}\textsc{(}$I$\textsc{)} = \textit{bond}\textsc{(}$I_{1}\textsc{)}\}$, or equivalently: $\mathcal{MCP}$ = $\{$$I$ $\in$ $\mathcal{CP}$$\mid$ $\nexists$ $I_{1}$ $\subset$ $I$: $f_{bond}\textsc{(}I$\textsc{)} = $f_{bond}\textsc{(}I_{1}\textsc{)}\}$.
\end{definition}
The set $\mathcal{MCP}$ of minimal correlated patterns forms an order ideal.
In fact, this set is composed by the patterns which fulfill the anti-monotonic constraint ``Being minimal in the equivalence class and being correlated''.
Indeed, this constraint corresponds to the conjunction between the two following anti-monotonic constraints,  ``being minimal'' and ``being correlated''.

The following proposal presents the common properties of patterns belonging to the same $f_{bond}$ equivalence class.
\begin{proposition}\label{proprietes_CEq_f_bond}
	Let $\mathcal{C}$ be an equivalence class associated to the closure operator $f_{bond}$ and $I$, $I_1$ $\in$ $\mathcal{C}$. We have: \textbf{a\textsc{)}} $f_{bond}\textsc{(}I$\textsc{)} = $f_{bond}\textsc{(}I_1$\textsc{)}, \textbf{b\textsc{)}} \textit{bond}\textsc{(}$I$\textsc{)} = \textit{bond}\textsc{(}$I_1$\textsc{)}, \textbf{c\textsc{)}} \textit{Supp}\textsc{(}$\wedge I$\textsc{)} = \textit{Supp}\textsc{(}$\wedge I_1$\textsc{)}, \textbf{d\textsc{)}} \textit{Supp}\textsc{(}$\vee I$\textsc{)} = \textit{Supp}\textsc{(}$\vee I_1$\textsc{)}, and, \textbf{e\textsc{)}} \textit{Supp}\textsc{(}$\neg I$\textsc{)} = \textit{Supp}\textsc{(}$\neg I_1$\textsc{)}.
\end{proposition}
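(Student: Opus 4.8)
The plan is to reduce every item to the common closure $F := f_{bond}(I) = f_{bond}(I_1)$ together with Proposition \ref{PropBond} and De~Morgan's law. Item \textbf{a)} is immediate: by Definition \ref{EquivClsbond}, two patterns lie in the same $f_{bond}$ equivalence class exactly when they have the same closure, so $f_{bond}(I) = f_{bond}(I_1)$ holds by hypothesis; write $F$ for this common value.

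For \textbf{b)} I would use the characterization of $f_{bond}(I)$ recalled just after Definition \ref{closure_fbond1}, namely that $f_{bond}(I)$ is the largest superset of $I$ sharing the same \textit{bond} value as $I$; equivalently, this is extensivity together with the identity $\textit{bond}(f_{bond}(I)) = \textit{bond}(I)$, which holds because $f_{bond}$ is a closure operator \cite{tarekds2010}. Applying this to both $I$ and $I_1$ gives $\textit{bond}(I) = \textit{bond}(F) = \textit{bond}(I_1)$. For \textbf{c)} and \textbf{d)}, observe that $I \subseteq F$ and, by the previous step, $\textit{bond}(I) = \textit{bond}(F)$; Proposition \ref{PropBond} then yields $\textit{Supp}(\wedge I) = \textit{Supp}(\wedge F)$ and $\textit{Supp}(\vee I) = \textit{Supp}(\vee F)$. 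The same reasoning applied to $I_1 \subseteq F$ gives $\textit{Supp}(\wedge I_1) = \textit{Supp}(\wedge F)$ and $\textit{Supp}(\vee I_1) = \textit{Supp}(\vee F)$, whence \textbf{c)} and \textbf{d)}. Finally \textbf{e)} follows from the De~Morgan identity of Lemma \ref{lemmaidentitésinclusionexclusion}, equation~(3): $\textit{Supp}(\neg I) = |\mathcal{T}| - \textit{Supp}(\vee I) = |\mathcal{T}| - \textit{Supp}(\vee I_1) = \textit{Supp}(\neg I_1)$, using \textbf{d)}.

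The only genuine subtlety — and the step I would be most careful about — is the identity $\textit{bond}(f_{bond}(I)) = \textit{bond}(I)$ used in \textbf{b)}: the defining formula for $f_{bond}$ only adjoins, one item at a time, each element that \emph{individually} preserves the \textit{bond} value, so one must verify that adjoining all of them simultaneously still preserves it. This is precisely the content of $f_{bond}$ being a well-defined closure operator (extensive and idempotent), which is cited from \cite{tarekds2010}; should a self-contained argument be preferred, I would prove it by induction on $|f_{bond}(I) \setminus I|$, at each step invoking Proposition \ref{PropBond} to show that the added item leaves both the conjunctive and the disjunctive support unchanged, hence leaves \textit{bond} unchanged, and that the enlarged set still has the same \textit{bond} value so that the next item in the construction remains eligible.
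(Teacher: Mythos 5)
Your proof is correct and follows essentially the same route as the paper's: reduce both patterns to the common closure $F$, use the fact that $f_{bond}$ preserves the \textit{bond} value to get \textbf{b)}, invoke Proposition \ref{PropBond} for the conjunctive and disjunctive supports, and finish \textbf{e)} with De Morgan's law. Your closing remark on why $\textit{bond}(f_{bond}(I)) = \textit{bond}(I)$ requires justification (items are adjoined one at a time in Definition \ref{closure_fbond1}) is a legitimate extra care the paper glosses over by citing the closure-operator property, but it does not change the argument.
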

\begin{proof}
	\begin{description}
		\item[a\textsc{)}] Thanks to Definition \ref{EquivClsbond}, $I$ and $I_1$ share the same closure by $f_{bond}$. Let $F$ be this closure.
		\item[b\textsc{)}] Since the closure operator preserves the value of the \textit{bond} measure \textsc{(}\textit{cf.} Definition \ref{closure_fbond1}\textsc{)}, and since $I$ and $I_1$ have the same closure $F$, we have so  \textit{bond}\textsc{(}$I$\textsc{)} = \textit{bond}\textsc{(}$F$\textsc{)}, and \textit{bond}\textsc{(}$I_1$\textsc{)} = \textit{bond}\textsc{(}$F$\textsc{)}. Therefore, \textit{bond}\textsc{(}$I$\textsc{)} = \textit{bond}\textsc{(}$I_1$\textsc{)}.
		\item[c\textsc{)}, d\textsc{)}, and e\textsc{)}] As $I$ $\subseteq$ $F$ and \textit{bond}\textsc{(}$I$\textsc{)} = \textit{bond}\textsc{(}$F$\textsc{)}, according to Proposition \ref{PropBond}, both of $I$ and $F$ share the same conjunctive, disjunctive and negative supports. It is the same case for $I_1$ and $F$. Therefore, both $I$ and $I_1$ have the same conjunctive, disjunctive and negative supports.
\end{description}\end{proof}
Therefore, all the patterns belonging to the same equivalence class induced by $f_{bond}$, appear exactly in the same transactions
\textsc{(}thanks to the equality of the conjunctive support\textsc{)}.
Besides, the items associated to the patterns of the same class characterize the same transactions. In fact, each class necessarily contains
a non empty subset of every pattern of the class \textsc{(}thanks to the equality of the disjunctive support\textsc{)}.
This closure operator links the conjunctive
search space to the disjunctive one.
% % % % % % % % % % % % % % % % % % % % % % % % % % % % % %
In this respect, we begin the next section by the study of the characteristics of the \textit{rare correlated equivalence classes}, induced by the $f_{bond}$ closure operator.
\section{Condensed representations of rare correlated patterns} \label{section_RC}
Before introducing our condensed representations, we highlight that the condensed representations prove their high utility in various fields such as: bioinformatics \cite{pasquier2009} and data grids \cite{tarekJSS2015}. 
\subsection{Characterization of the rare correlated equivalence classes} \label{sub_sec_EquivCls}
The  equivalence  classes permit to retain only the non-redundant patterns.
Indeed, among all the patterns of a given equivalence class, only the patterns which are necessary for the regeneration of the whole set of rare correlated patterns, are maintained. Doing so, it considerably reduces the redundancy among the extracted knowledge.
The notion of equivalence classes also facilitates the exploration of the search space.
Indeed, the application of the $f_{bond}$ closure operator allows switching from the minimal elements of a class to its maximal element without having to sweep through the intermediate levels.

Each equivalence class, induced by the $f_{bond}$ closure operator, contains the patterns sharing the same $f_{bond}$ closure,
and thus they are characterized by the same conjunctive, disjunctive supports as well as the same \textit{bond} value \textsc{(}\textit{cf.} Proposition \ref{proprietes_CEq_f_bond}\textsc{)}.
Therefore, the elements of the same equivalence class have the same behavior towards the correlation and the rarity constraints.
In fact, for a correlated equivalence class, \textit{i.e.} a class which contains correlated patterns, all of them are rare or frequent.
It is also the same for a non-correlated equivalence class, \textit{i.e.} which contains non-correlated patterns.
Therefore we can deduce that, for an equivalence class induced by $f_{bond}$,
it is sufficient to evaluate the correlation and the rarity constraints for just one pattern in order to get information about all the other elements of this class.
In this respect, we distinguish four different types of equivalence classes: \textsc{(}\textbf{i}\textsc{)} correlated frequent classes; \textsc{(}\textbf{ii}\textsc{)} non-correlated frequent classes; \textsc{(}\textbf{iii}\textsc{)} rare correlated classes; and \textsc{(}\textbf{iv}\textsc{)} rare non-correlated classes.
The main characteristic of equivalence classes induced by the $f_{bond}$ operator is very interesting. Indeed, this is not the case for all the closure operators.
For example, the application of the conjunctive closure operator associated to the conjunctive support induces equivalence classes where the behavior of a given pattern
towards the correlation constraint is not representative of the behavior of all the patterns of this class.
For each class, each pattern must be independently tested  from the other patterns in the same class to check whether it fulfills the correlation constraint or not.
It results from the above that, the application of the $f_{bond}$ provides a more selective process to extract rare correlated patterns.
\begin{example}
	Consider the extraction context sketched by Table \ref{Base_transactions} \textsc{(}Page \pageref{Base_transactions}\textsc{)}. For \textit{minsupp} = 4 and \textit{minbond} = 0.2, Figure \ref{ExpEquivCls} shows
	the obtained rare correlated equivalence classes. We enumerate for example the class which contains the patterns $AB$, $AE$, $ABC$, $ABE$, $ACE$, and $ABCE$.
	Their respective conjunctive supports are equal to 2 and their \textit{bond} value is equal to $\displaystyle\frac{2}{5}$.
	The pattern $ABCE$ is the closed correlated one of this class.
\end{example}

The $\mathcal{RCP}$ set of rare correlated patterns is then split into disjoint equivalence classes, the rare correlated equivalence classes. In each class, the closed pattern is the largest one with respect to the inclusion set relation. On the other hand, the smallest incomparable patterns are the minimal rare correlated patterns
\textit{w.r.t.} the inclusion set relation. The set of minimal and set of closed patterns are formally defined as follows:
\begin{definition}\label{CRCP} \textbf{Closed rare correlated patterns}\\
	The $\mathcal{CRCP}$
	$^{\textsc{(}}$\footnote{$\mathcal{CRCP}$ stands for \textbf{C}losed \textbf{R}are \textbf{C}orrelated \textbf{P}atterns.}$^{\textsc{)}}$
	set of closed rare correlated patterns is equal to:
	$\mathcal{CRCP}$ = $\{$$I$ $\in$ $\mathcal{RCP}$$|$ $\forall$ $I_{1}$ $\supset$ $I$: \textit{bond}\textsc{(}$I$\textsc{)} $>$ \textit{bond}\textsc{(}$I_{1}\textsc{)}\}$.
\end{definition}
The $\mathcal{CRCP}$ set corresponds to the intersection between the rare correlated patterns set and the set of closed correlated patterns.
We have so, $\mathcal{CRCP}$  = $\mathcal{RCP}$ $\cap$ $\mathcal{CCP}$.
\begin{definition}\label{MRCP} \textbf{Minimal rare correlated patterns}\\
	The $\mathcal{MRCP}$
	$^{\textsc{(}}$\footnote{$\mathcal{MRCP}$ stands for \textbf{M}inimal  \textbf{R}are \textbf{C}orrelated \textbf{P}atterns.}$^{\textsc{)}}$
	set of minimal rare correlated patterns is equal to:
	$\mathcal{MRCP}$ = $\{$$I$ $\in$ $\mathcal{RCP}$$|$ $\forall$ $I_{1}$ $\subset$ $I$: \textit{bond}\textsc{(}$I$\textsc{)} $<$ \textit{bond}\textsc{(}$I_{1}\textsc{)}\}$.
\end{definition}
The $\mathcal{MRCP}$ set corresponds to the intersection between the set of rare correlated patterns and the set of minimal correlated patterns.
Thus, we have, $\mathcal{MRCP}$ = $\mathcal{RCP}$ $\cap$ $\mathcal{MCP}$.

\begin{figure}[htbp]
	\begin{center}
		\includegraphics[scale = 0.4]{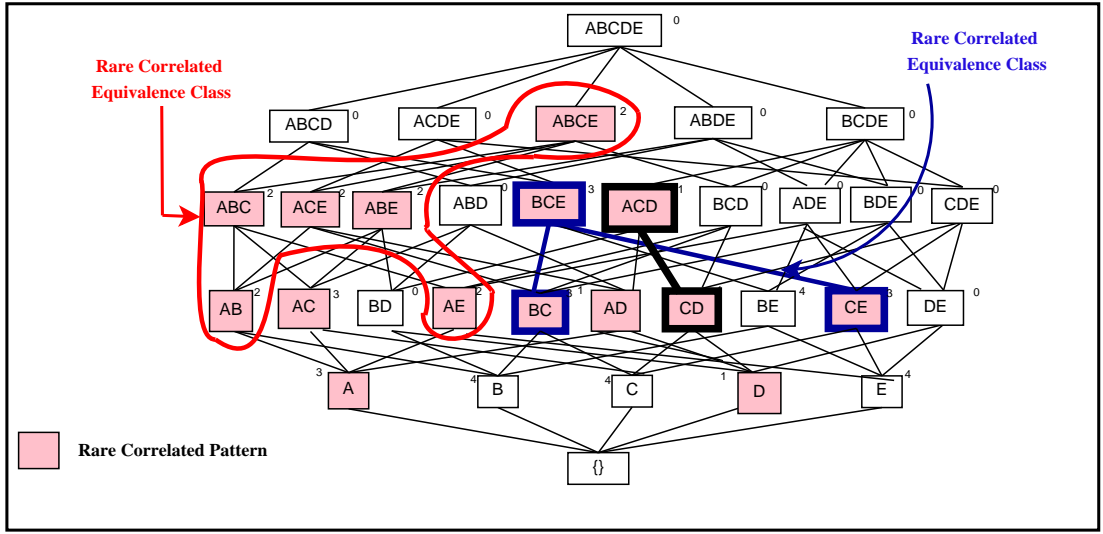}
		\caption{An example of rare correlated equivalence classes for \textit{minsupp} = 4 and \textit{minbond} = 0.2.}
		\label{ExpEquivCls}
	\end{center}
\end{figure}

\begin{example}\label{example_MCRP_and_CCRP}
	Consider the extraction context sketched by Table \ref{Base_transactions} \textsc{(Page \pageref{Base_transactions})}.
	For \textit{minsupp} = 4 and \textit{minbond} = 0.2,
	we have $\mathcal{CRCP}$ = $\{$$A$, $D$, $AC$, $AD$, $ACD$, $BCE$, $ABCE$$\}$
	and $\mathcal{MRCP}$ = $\{$$A$, $D$, $AB$, $AC$, $AD$, $AE$, $BC$, $CD$, $CE$$\}$.
\end{example}
%After detailing the properties of the rare correlated equivalence classes, we introduce, in the following, our  condensed representations.

An accurate representation of rare correlated patterns should determine, for an arbitrary pattern, whether it is rare correlated or not.
If it is a rare correlated one, then this representation must allow drifting without  information loss the values of
its support and its \textit{bond} measure. In this respect, the proposed representations in this work will be later shown to be perfect: their respective sizes never exceed that of the whole set of rare correlated patterns.
In addition, since they are information lossless, these representations allow, whenever of need, the derivation of the whole set of rare correlated patterns efficiently.

To define our concise exact representations of rare correlated patterns, we are based on the notion of equivalence classes.

The first intuitive idea when defining a concise exact representation of the rare correlated patterns
is to study whether the minimal elements or maximal ones of the equivalence classes would constitute an exact concise representation of the $\mathcal{RCP}$ set.

In this respect, it is important to remind that the $\mathcal{RCP}$ set results from the intersection of the
order ideal of correlated patterns and the order filter of rare patterns. Thus, the $\mathcal{RCP}$ set does not induce neither an order ideal nor an order filter.
In this situation, we take independently each set, to check whether the $\mathcal{CRCP}$ set or the $\mathcal{MRCP}$ set can provide a concise exact representation of the $\mathcal{RCP}$ set.

Let us analyze, in the following, each of the two sets separately:

- Let us begin with the $\mathcal{MRCP}$ set composed by the minimal elements of the rare correlated equivalence classes: In fact, due to the nature of its elements - minimal of their equivalence classes - this set allows for a given pattern  $I$ to evaluate it towards the constraint of rarity. Indeed, it is enough to find an element $J$ $\in$ $\mathcal{MRCP}$  \textit{s.t.} $J$ $\subseteq$ $I$  to decide whether $I$ is a rare pattern or not. If it is not the case, then $I$ is not a rare pattern. However, the set $\mathcal{MRCP}$ cannot determine, in the general case, whether $I$ is correlated or not
\textsc{(}this is possible only if $I$ $\in$ $\mathcal{MRCP}$\textsc{)}.
Even if it exists $J$ $\in$ $\mathcal{MRCP}$  \textit{s.t.} $J$ $\subset$ $I$, and even knowing that $J$ is correlated,
we cannot confirm the correlation of nature  of $I$ since this constraint is an anti-monotonic one \textsc{(}the fact that $J$ is correlated does not imply that $I$ is also correlated\textsc{)}.
Thus, the $\mathcal{MRCP}$ set, taken alone, cannot be an exact representation of the $\mathcal{RCP}$ set.

- Let us now treat the case of the $\mathcal{CRCP}$ of maximal elements of the rare correlated equivalence classes: Dually to the previous analysis of $\mathcal{MRCP}$, the $\mathcal{CRCP}$ set allows determining the nature of correlation for a given pattern $I$.
If it is included in just one  pattern $J$ $\in$ $\mathcal{CRCP}$, then $I$ is correlated. Otherwise, it is not a correlated one. However, due to their nature, the patterns composing the $\mathcal{CRCP}$ cannot in the general case derive the information about the status of rarity of a given pattern $I$ \textsc{(}this is possible only if $I$ $\in$ $\mathcal{CRCP}$\textsc{)}. Even if it exists $J$ $\in$ $\mathcal{CRCP}$,  \textit{s.t.} $I$ $\subset$ $J$ and even if we already know that $J$ is rare, we cannot decide whether $I$ is rare or not since the constraint of rarity is monotone \textsc{(}the fact that $J$ is rare does not imply that $I$ is also rare\textsc{)}. Thus, the $\mathcal{CRCP}$ set, taken alone, cannot be an exact representation of the $\mathcal{RCP}$ set.
Nevertheless, it is proved from the previous analysis that the union of the $\mathcal{MRCP}$ set and the $\mathcal{CRCP}$ set would constitute an accurate concise representation of the $\mathcal{RCP}$ set of rare correlated patterns.
The first alternative will be studied in the next sub-section, and will be then followed by two other optimizations in order to retain only the key elements for the lossless regeneration of the $\mathcal{RCP}$ set.

Based on this study, we introduce in the following subsections our new concise exact and approximate representations.
\subsection{The \textbf{$\mathcal{RCPR}$} concise exact representation}
The first representation, that we introduce, is defined as follows:
\begin{definition}\label{rmcr} \textbf{The $\mathcal{RCPR}$ representation}\\
	Let $\mathcal{RCPR}$ be the concise exact representation of the $\mathcal{RCP}$ set
	based on the $\mathcal{CRCP}$ set and on the $\mathcal{MRCP}$ set of the minimal rare correlated patterns.
	The $\mathcal{RCPR}$ representation is equal to: $\mathcal{RCPR}$ = $\mathcal{CRCP}$ $\cup$ $\mathcal{MRCP}$.
	The support, \textit{Supp}\textsc{(}$\wedge I$\textsc{)},
	and the \textit{bond} value, \textit{bond}\textsc{(}$I$\textsc{)} of each pattern $I$ of $\mathcal{RCPR}$ are exactly determined.
\end{definition}
\begin{example} \label{exprep1}
	Consider the extraction context sketched by Table \ref{Base_transactions} \textsc{(}Page \pageref{Base_transactions}\textsc{)}.
	For \textit{minsupp} = 4 and \textit{minbond} = 0.2,
	while considering the $\mathcal{CRCP}$ set and $\mathcal{MRCP}$ set
	\textsc{(}\textit{cf.} Example \ref{example_MCRP_and_CCRP}\textsc{)}, the $\mathcal{RCPR}$ set is equal to:
	$\{$\textsc{(}$A$, $3$, $\displaystyle\frac{3}{3}$\textsc{)},
	\textsc{(}$D$, $1$, $\displaystyle\frac{1}{1}$\textsc{)},
	\textsc{(}$AB$, $2$, $\displaystyle\frac{2}{5}$\textsc{)},
	\textsc{(}$AC$, $3$, $\displaystyle\frac{3}{4}$\textsc{)},
	\textsc{(}$AD$, $1$, $\displaystyle\frac{1}{3}$\textsc{)},
	\textsc{(}$AE$, $2$, $\displaystyle\frac{2}{5}$\textsc{)},
	\textsc{(}$BC$, $3$, $\displaystyle\frac{3}{5}$\textsc{)},
	\textsc{(}$CD$, $1$, $\displaystyle\frac{1}{4}$\textsc{)},
	\textsc{(}$CE$, $3$, $\displaystyle\frac{3}{5}$\textsc{)},
	\textsc{(}$ACD$, $1$, $\displaystyle\frac{1}{4}$\textsc{)},
	\textsc{(}$BCE$, $3$, $\displaystyle\frac{3}{5}$\textsc{)}
	and \textsc{(}$ABCE$, $2$, $\displaystyle\frac{2}{5}$\textsc{)}$\}$.
\end{example}
The following theorem proves that the $\mathcal{RCPR}$
representation is a lossless concise representation of the $\mathcal{RCP}$ set.
\begin{theorem}\label{representation1_exacte}
	The $\mathcal{RCPR}$ representation is a concise exact representation of the $\mathcal{RCP}$ set of rare correlated patterns.
\end{theorem}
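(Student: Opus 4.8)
The plan is to verify the two defining requirements of a concise exact representation in turn: first that $\mathcal{RCPR}$ never exceeds $\mathcal{RCP}$ in size (the perfect-cover/conciseness part), and second that $\mathcal{RCPR}$ is information-lossless, i.e.\ from $\mathcal{RCPR}$ alone one can decide, for an arbitrary pattern $I \subseteq \mathcal{I}$, whether $I \in \mathcal{RCP}$, and, when this is the case, recover $\textit{Supp}(\wedge I)$ and $\textit{bond}(I)$ exactly. Conciseness is immediate: since $\mathcal{CRCP} = \mathcal{RCP} \cap \mathcal{CCP}$ and $\mathcal{MRCP} = \mathcal{RCP} \cap \mathcal{MCP}$, both sets are included in $\mathcal{RCP}$, hence $\mathcal{RCPR} = \mathcal{CRCP} \cup \mathcal{MRCP} \subseteq \mathcal{RCP}$ and $|\mathcal{RCPR}| \leq |\mathcal{RCP}|$.

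The core of the argument is a membership criterion, which I would state and prove as a lemma: for every $I \subseteq \mathcal{I}$, one has $I \in \mathcal{RCP}$ if and only if there exists $J \in \mathcal{CRCP}$ with $I \subseteq J$ \emph{and} there exists $g \in \mathcal{MRCP}$ with $g \subseteq I$. For the ``if'' direction I would use only the natures of the two constraints: $J$ correlated together with $I \subseteq J$ forces $I$ correlated by the anti-monotonicity of \textit{bond}, while $g$ rare together with $g \subseteq I$ forces $I$ rare by the monotonicity of the rarity constraint; hence $I \in \mathcal{CP} \cap \mathcal{RP} = \mathcal{RCP}$. For the ``only if'' direction I would work inside the $f_{bond}$-equivalence class $\mathcal{C}$ of $I$: its closed element $f_{bond}(I)$ shares the conjunctive support and the \textit{bond} value of $I$ (Proposition~\ref{PropBond} and Proposition~\ref{proprietes_CEq_f_bond}), so it is rare, correlated and closed, whence $f_{bond}(I) \in \mathcal{RCP} \cap \mathcal{CCP} = \mathcal{CRCP}$ with $I \subseteq f_{bond}(I)$; and choosing a minimal element $g$ of the finite, non-empty family $\{H \in \mathcal{C} : H \subseteq I\}$ yields a pattern that is in fact minimal in the whole class $\mathcal{C}$, hence rare, correlated and minimal, so $g \in \mathcal{RCP} \cap \mathcal{MCP} = \mathcal{MRCP}$ with $g \subseteq I$.

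Once $I$ has been recognised as a rare correlated pattern, I would observe that $f_{bond}(I)$ is precisely the smallest element of $\{J \in \mathcal{CRCP} : I \subseteq J\}$: for any closed $J \supseteq I$, isotony and idempotency of $f_{bond}$ give $f_{bond}(I) \subseteq f_{bond}(J) = J$, and $f_{bond}(I)$ itself belongs to that family. Consequently $\textit{Supp}(\wedge I) = \textit{Supp}(\wedge f_{bond}(I))$ and $\textit{bond}(I) = \textit{bond}(f_{bond}(I))$ can be read off directly from the value stored for the minimal superset of $I$ in $\mathcal{CRCP}$. Conversely, when $I \notin \mathcal{RCP}$ the criterion correctly reports failure: if $I$ is non-correlated it can have no correlated superset in $\mathcal{CRCP}$, and if $I$ is frequent it can have no rare subset in $\mathcal{MRCP}$ (both by the monotone/anti-monotone natures recalled in Proposition~\ref{conjCs}). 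Putting these pieces together yields both the decision procedure and the exact recovery of $\textit{Supp}(\wedge \cdot)$ and $\textit{bond}(\cdot)$, which is exactly what the theorem asks for.

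I expect the main obstacle to be not a single hard step but the careful bookkeeping in the ``only if'' direction: one must check that the closed element of $I$'s class genuinely lands in $\mathcal{CRCP}$ (and not merely in $\mathcal{CCP}$) and, dually, that the minimal element chosen is minimal in the entire equivalence class rather than only among the subsets of $I$ it contains. Both facts rest on the support-- and \textit{bond}--invariance of $f_{bond}$-classes (Proposition~\ref{proprietes_CEq_f_bond}) combined with the fact that the rarity constraint and the correlation constraint pull in opposite directions, so that membership in $\mathcal{RCP}$ must be certified from \emph{two} sides simultaneously.
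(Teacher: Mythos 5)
Your proof is correct and follows essentially the same route as the paper's: membership in $\mathcal{RCP}$ is certified by a subset in $\mathcal{MRCP}$ together with a superset in $\mathcal{CRCP}$, and the support and \textit{bond} values are then read off the smallest element of $\mathcal{CRCP}$ covering $I$, which is $f_{bond}\textsc{(}I\textsc{)}$. You are in fact somewhat more careful than the paper, which merely asserts that a pattern lacking a witness on either side cannot be rare correlated, whereas you explicitly verify that every $I \in \mathcal{RCP}$ possesses both witnesses (its closure lands in $\mathcal{CRCP}$, and a minimal element of its equivalence class below $I$ lands in $\mathcal{MRCP}$).
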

\begin{proof}
	Let $I$ $\subseteq$ $\mathcal{I}$. We distinct between three different cases:
	
	\textbf{a\textsc{)}} If $I$ $\in$ $\mathcal{RCPR}$, then $I$ is a rare correlated pattern and we have
	its support and its \textit{bond} values.
	
	\textbf{b\textsc{)}} If $\nexists$ $J$ $\in$ $\mathcal{RCPR}$ as $J$ $\subseteq$ $I$ and $\nexists$ $Z$ $\in$ $\mathcal{RCPR}$ as $I$ $\subseteq$ $Z$, then $I$ $\notin$ $\mathcal{RCP}$
	since $I$ does not belong to any rare correlated equivalence class.
	
	\textbf{c\textsc{)}} If $I$ $\in$ $\mathcal{RCP}$.
	In fact, according to Proposition \ref{prop_set_CRP}, $I$
	is correlated since it is included in a correlated pattern, namely $Z$.
	It is also rare, since it contains a rare pattern, namely $J$. In this case, it is sufficient to localize
	the $f_{bond}$ closure of $I$ namely $F$. Then, the closed pattern $F$ belongs then to
	$\mathcal{RCPR}$ since $I$ is rare correlated and
	$\mathcal{RCPR}$ includes the $\mathcal{CRCP}$ set of closed rare correlated patterns.
	Therefore, $F$ = $min_{\subseteq}$\{$I_1$ $\in$ $\mathcal{RCPR} |$ $I$ $\subseteq$ $I_1$\}.
	Since the $f_{bond}$ closure operator
	conserves the \textit{bond} value and thus the conjunctive support \textsc{(}\textit{cf.} Proposition \ref{proprietes_CEq_f_bond}\textsc{)},
	we have: \textit{bond}\textsc{(}$I$\textsc{)} = \textit{bond}\textsc{(}$F$\textsc{)} and
	\textit{Supp}\textsc{(}$\wedge I$\textsc{)} = \textit{Supp}\textsc{(}$\wedge F$\textsc{)}.
\end{proof}
\begin{example}\label{example_3_cas_RMCR}
	Consider the $\mathcal{RCPR}$ representation illustrated by the previous example. Let us consider each case separately. The pattern $AD$ $\in$ $\mathcal{RCPR}$.
	Thus, we have its support equal to 1 and its \textit{bond} value equal to $\displaystyle\frac{1}{3}$.
	Even though, the pattern $BE$ is included in two patterns from the
	$\mathcal{RCPR}$ representation, namely $BCE$ and $ABCE$, $BE$ $\notin$ $\mathcal{RCP}$ since no element of $\mathcal{RCPR}$ is included in $BE$.
	Consider now the pattern $ABC$.
	It exists two patterns of $\mathcal{RCPR}$ proving that the pattern
	$ABC$ is a rare correlated one, namely $AB$ and $ABCE$, since $AB$ $\subseteq$ $ABC$ $\subseteq$ $ABCE$.
	The smallest pattern in $\mathcal{RCPR}$ covering $ABC$, \textit{i.e.} its closure, is
	$ABCE$. Then, \textit{bond}\textsc{(}$ABC$\textsc{)} = \textit{bond}\textsc{(}$ABCE$\textsc{)} = $\displaystyle\frac{2}{5}$, and \textit{Supp}\textsc{(}$\wedge ABC$\textsc{)} = \textit{Supp}\textsc{(}$
	\wedge ABCE $\textsc{)} = 2.
\end{example}
We show through the following proposition that the $\mathcal{RCPR}$ representation is a \textit{perfect cover} of the
$\mathcal{RCP}$ set of rare correlated patterns.
\begin{proposition}
	The $\mathcal{RCPR}$ representation is a \textit{perfect cover} of the $\mathcal{RCP}$ set.
\end{proposition}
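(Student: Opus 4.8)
The plan is to unfold Definition \ref{Cover}: showing that $\mathcal{RCPR}$ is a perfect cover of $\mathcal{RCP}$ amounts to establishing two facts, namely \textbf{(i)} that $\mathcal{RCPR}$ covers $\mathcal{RCP}$ without information loss, and \textbf{(ii)} that $|\mathcal{RCPR}| \leq |\mathcal{RCP}|$.

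First I would dispatch \textbf{(i)} immediately by invoking Theorem \ref{representation1_exacte}: it has already been established there that $\mathcal{RCPR}$ is a concise \emph{exact} representation of $\mathcal{RCP}$, i.e. for an arbitrary pattern $I \subseteq \mathcal{I}$ one can decide from $\mathcal{RCPR}$ alone whether $I \in \mathcal{RCP}$ and, when this is the case, recover $\textit{Supp}(\wedge I)$ and $\textit{bond}(I)$ exactly. Hence there is no information loss, which is precisely the first requirement of Definition \ref{Cover}.

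For \textbf{(ii)} I would argue that $\mathcal{RCPR} \subseteq \mathcal{RCP}$. Recall from the discussion following Definitions \ref{CRCP} and \ref{MRCP} that $\mathcal{CRCP} = \mathcal{RCP} \cap \mathcal{CCP}$ and $\mathcal{MRCP} = \mathcal{RCP} \cap \mathcal{MCP}$; in particular both $\mathcal{CRCP}$ and $\mathcal{MRCP}$ are subsets of $\mathcal{RCP}$. Since $\mathcal{RCPR} = \mathcal{CRCP} \cup \mathcal{MRCP}$ by Definition \ref{rmcr}, it follows that $\mathcal{RCPR} \subseteq \mathcal{RCP}$, and therefore $|\mathcal{RCPR}| \leq |\mathcal{RCP}|$, which is exactly the size condition of a perfect cover. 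Combining \textbf{(i)} and \textbf{(ii)} yields the claim.

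The argument is essentially bookkeeping, so there is no real obstacle; the only point worth a remark is that the inclusion $\mathcal{RCPR} \subseteq \mathcal{RCP}$ is in general strict --- it is strict as soon as some rare correlated equivalence class contains a pattern that is neither minimal nor closed --- so that $\mathcal{RCPR}$ is typically \emph{strictly} more compact than $\mathcal{RCP}$, equality occurring only in the degenerate situation where every rare correlated equivalence class is a singleton. The main (mild) care needed is to read the ``without information loss'' clause as being exactly the content of Theorem \ref{representation1_exacte}, rather than re-deriving it from scratch.
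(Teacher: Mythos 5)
Your proof is correct and follows essentially the same route as the paper: the size bound via $\mathcal{RCPR}=\mathcal{CRCP}\cup\mathcal{MRCP}\subseteq\mathcal{RCP}$, and the lossless-ness by deferring to Theorem \ref{representation1_exacte}. One small caveat on your closing aside: equality $\mathcal{RCPR}=\mathcal{RCP}$ does not require every rare correlated equivalence class to be a singleton, only that every pattern in every such class be either minimal or closed (e.g.\ a two-element class $\{M,F\}$ with $M$ minimal and $F$ its closure already gives equality on that class); this does not affect the validity of the proof itself.
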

\begin{proof}
	In fact, the size of the $\mathcal{RCPR}$ representation \textit{does never exceed} that of the $\mathcal{RCP}$ set whatever the extraction context, the \textit{minsupp} and the \textit{minbond} values.
	Indeed, it is always true that \textsc{(}$\mathcal{CRCP}$ $\cup$ $\mathcal{MRCP}$\textsc{)} $\subseteq$ $\mathcal{RCP}$.
	Furthermore, knowing the conjunctive support of a given pattern and its \textit{bond} value, we can compute the disjunctive support and thus the negative support.
	The interrogation of the representation $\mathcal{RCPR}$ can be based on the proof of Theorem \ref{representation1_exacte}.
	Thus, for a given pattern, thanks to the $\mathcal{RCPR}$ representation, we can determine whether it is rare correlated or not.
	If it is correlated rare, then its support as well as its \textit{bond} value will be derived using the mechanism described by the previous theorem.
	The regeneration process of the whole set of rare correlated patterns can also be based on this theorem.
	This process starts by the smallest rare correlated patterns namely the
	minimal rare correlated patterns
	\textsc{(}constituting the $\mathcal{MRCP}$ set\textsc{)}.
	These patterns  belong to  $\mathcal{RCPR}$ and we have therefore all their required information.
	It is then sufficient to localize for each minimal $M$ its closure $F$ which belongs to $\mathcal{RCPR}$
	\textsc{(}$F$ $\in$ $\mathcal{CRCP}$ and this set is included in $\mathcal{RCPR}$\textsc{)}.
	All the patterns which are included between $M$ and $F$ share the same support and \textit{bond} value as $M$ and $F$ since they belong to the same rare correlated equivalence class.
\end{proof}
\begin{remark}
	It is important to mention that it is necessary to maintain for each pattern $I$ of the representation, at the same time
	\textit{Supp}\textsc{(}$\wedge I$\textsc{)} as well as \textit{bond}\textsc{(}$I$\textsc{)}. On the one hand, \textit{bond}\textsc{(}$I$\textsc{)} is equal to the ratio between the conjunctive and the disjunctive support of $I$ and cannot determine the conjunctive support of $I$. On the other hand, knowing only the conjunctive support of $I$ is not sufficient to compute the \textit{bond} value. The disjunctive support can be derived by the inclusion-exclusion identities only if we know all the conjunctive supports values of all the subsets of $I$ \cite{galambos}. Nevertheless, if $I$ is a rare correlated pattern, then its subsets are not necessarily rare correlated. Therefore, we don't have the values of their conjunctive supports. Thus, we must keep track of the conjunctive support and the \textit{bond} value for each element of the $\mathcal{RCPR}$ representation.
	
	In this case, the closed and minimal patterns of the equivalence classes constitute, as shown previously, an interesting solution in order to represent with a concise and exact manner the $\mathcal{RCP}$ set. In fact, the localization of these patterns requires a limited neighborhood, \textit{i.e.}, just the strict supersets and subsets, and
	not all their subsets.
	In addition to this, the derivation of the support of the whole set of patterns from the closed and minimal non derivable can be done directly.
\end{remark}
\begin{remark}\label{remak_manage_MCRP_vs_CCRP}
	It is also interesting to mention that the fact to consider in  the $\mathcal{RCPR}$ set, the union of the two sets $\mathcal{MRCP}$ and
	$\mathcal{CRCP}$ allows avoiding redundancy, because of the duplication of some patterns,
	which may appear in the representation when we consider the $\mathcal{MRCP}$ set and the $\mathcal{CRCP}$ set separately.
	For example, if we consider the example \ref{exprep1}, we note that the elements \textsc{(}$A$, 3, $\displaystyle\frac{3}{3}$\textsc{)} ,
	\textsc{(}$D$, 1, $\displaystyle\frac{1}{1}$\textsc{)}, \textsc{(}$AC$, 3, $\displaystyle\frac{3}{4}$\textsc{)} and
	\textsc{(}$AD$, 1, $\displaystyle\frac{1}{3}$\textsc{)} belong to both sets $\mathcal{MRCP}$ and $\mathcal{CRCP}$.
	However, one advantage of saving each set separately allows the reduction of some tests of inclusion
	in the extraction of the representation.
	In fact, to make the choice between tolerating some duplication or benefiting from a potential reduction of the regeneration cost
	depends on the nature of the application where we can privilege either the optimization of the memory space or the derivation cost.
\end{remark}
We propose in the following two refined versions of the $\mathcal{RCPR}$ representation, in order to further reduce
the size of this representation.

\subsection{The $\mathcal{MM}$$ax$$\mathcal{CR}$ concise exact representation}

The first refinement is based on the fact that the $\mathcal{MRCP}$ set of minimal rare correlated patterns increased only by the maximal patterns according to the inclusion set, among the
$\mathcal{CRCP}$ set of closed rare correlated patterns is sufficient to faithfully represent the $\mathcal{RCP}$ set.
In this respect, we define the $\mathcal{M}$$ax$$\mathcal{CRCP}$ set of maximal closed rare correlated patterns as follows:
\begin{definition}\label{defMF} \textbf{The $\mathcal{M}$$ax$$\mathcal{CRCP}$ set of maximal closed rare correlated patterns}\\
	The  $\mathcal{M}$$ax$$\mathcal{CRCP}$ set is composed by the patterns which are closed correlated rare ones
	\textsc{(}\textit{cf.} Definition \ref{CRCP}, page \pageref{CRCP}\textsc{)} and at the same time they are maximal correlated \textsc{(}\textit{cf.} Definition \ref{bdpos}, page \pageref{bdpos}\textsc{)}.
	Then, we have $\mathcal{M}$$ax$$\mathcal{CRCP}$ = $\mathcal{CRCP}$ $\cap$ $\mathcal{M}ax$$\mathcal{CP}$.
\end{definition}
The $\mathcal{M}$$ax$$\mathcal{CRCP}$ set is then limited to the elements of the $\mathcal{M}ax$$\mathcal{CP}$
set which are also rare, in addition of being the largest correlated patterns.
\begin{example}\label{example_MaxCCRP}
	Consider the extraction context sketched by Table \ref{Base_transactions}\textsc{(}Page \pageref{Base_transactions}\textsc{)}. For \textit{minsupp} = 4 and \textit{minbond} = 0.2, we have
	$\mathcal{M}$$ax$$\mathcal{CRCP}$ = $\{$$ACD$, $ABCE$$\}$.
\end{example}

We define in the following the representation based on the $\mathcal{M}$$ax$$\mathcal{CCRP}$ set.
\begin{definition} \label{rep2concise exacte def} \textbf{The $\mathcal{MM}$$ax$$\mathcal{CR}$ representation}\\
	Let $\mathcal{MM}$$ax$$\mathcal{CR}$ be the representation based on the $\mathcal{MRCP}$ set
	and the $\mathcal{M}$$ax$$\mathcal{CRCP}$.
	We have,  $\mathcal{MM}$$ax$$\mathcal{CR}$ = $\mathcal{MRCP}$ $\cup$ $\mathcal{M}$$ax$$\mathcal{CRCP}$.
	For each element $I$ of the $\mathcal{MM}ax \mathcal{CR}$, the support, \textit{Supp}\textsc{(}$\wedge I$\textsc{)}, and the \textit{bond} value,
	\textit{bond}\textsc{(}$I$\textsc{)} are computed.
\end{definition}
\begin{example} \label{exprep2}
	Consider the extraction context sketched by Table \ref{Base_transactions} \textsc{(}Page \pageref{Base_transactions}\textsc{)}.
	For \textit{minsupp} = 4 and
	\textit{minbond} = 0.2, the representation $\mathcal{MM}ax \mathcal{CR}$ is equal to: $\{$\textsc{(}$A$, $3$, $\displaystyle\frac{3}{3}$\textsc{)},
	\textsc{(}$D$, $1$, $\displaystyle\frac{1}{1}$\textsc{)},
	\textsc{(}$AB$, $2$, $\displaystyle\frac{2}{5}$\textsc{)},
	\textsc{(}$AC$, $3$, $\displaystyle\frac{3}{4}$\textsc{)},
	\textsc{(}$AD$, $1$, $\displaystyle\frac{1}{3}$\textsc{)},
	\textsc{(}$AE$, $2$, $\displaystyle\frac{2}{5}$\textsc{)},
	\textsc{(}$BC$, $3$, $\displaystyle\frac{3}{5}$\textsc{)},
	\textsc{(}$CD$, $1$, $\displaystyle\frac{1}{4}$\textsc{)},
	\textsc{(}$CE$, $3$, $\displaystyle\frac{3}{5}$\textsc{)},
	\textsc{(}$ACD$, $1$, $\displaystyle\frac{1}{4}$\textsc{)},
	and \textsc{(}$ABCE$, $2$, $\displaystyle\frac{2}{5}$\textsc{)}$\}$.
	We remark that, for this example, the unique element of the $\mathcal{RCPR}$ representation
	that does not belong to the $\mathcal{MM}$$ax$$\mathcal{CR}$ representation is, the pattern $BCE$.
	In fact, the closed patterns that were removed, \textit{i.e.}, $A$, $D$, $AC$ and $AD$, are also minimal. Indeed, the  $\mathcal{MM}$$ax$$\mathcal{CR}$ representation would be more optimized than do the $\mathcal{RCPR}$ representation, if
	the sets $\mathcal{MRCP}$ and $\mathcal{CRCP}$ were saved separately \textsc{(}\textit{cf.} Remark \ref{remak_manage_MCRP_vs_CCRP}\textsc{)}. In fact, the duplicate storage of the patterns $A$, $D$, $AC$ and $AD$ is avoided.
\end{example}
Theorem \ref{representation2_exacte} proves that the $\mathcal{MM}$ax$\mathcal{CR}$ set is a
lossless representation of the $\mathcal{RCP}$ set.
\begin{theorem}\label{representation2_exacte}
	The $\mathcal{MM}$ax$\mathcal{CR}$ set
	is an exact concise representation of the $\mathcal{RCP}$ set of rare correlated patterns.
\end{theorem}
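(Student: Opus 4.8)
The plan is to prove the statement by the same three-case analysis already used for Theorem~\ref{representation1_exacte}, after collecting a few structural facts that let the trimmed set $\mathcal{MM}ax\mathcal{CR}=\mathcal{MRCP}\cup\mathcal{M}ax\mathcal{CRCP}$ play the role previously played by $\mathcal{RCPR}$. First note that $\mathcal{M}ax\mathcal{CRCP}=\mathcal{CRCP}\cap\mathcal{M}ax\mathcal{CP}\subseteq\mathcal{CRCP}$, hence $\mathcal{MM}ax\mathcal{CR}\subseteq\mathcal{RCPR}\subseteq\mathcal{RCP}$, so the size of the representation never exceeds that of $\mathcal{RCP}$; the whole difficulty is to show it stays \emph{lossless}. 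I would then record two preliminary observations. \emph{(i)} Every maximal correlated pattern is closed: if $Z\in\mathcal{M}ax\mathcal{CP}$ were not closed, some $i\notin Z$ would satisfy $\textit{bond}(Z\cup\{i\})=\textit{bond}(Z)\geq\textit{minbond}$, making $Z\cup\{i\}$ a correlated proper superset of $Z$, a contradiction; consequently $\mathcal{M}ax\mathcal{CRCP}$ is exactly the set of \emph{rare} maximal correlated patterns. \emph{(ii)} Every $I\in\mathcal{RCP}$ is contained in some element of $\mathcal{M}ax\mathcal{CRCP}$: by finiteness extend $I$ to a $\subseteq$-maximal correlated superset $X$; then $X\in\mathcal{M}ax\mathcal{CP}$, and since $I$ is rare and the rare patterns form an order filter (Proposition~\ref{prop_set_CRP}), $X$ is rare as well, hence $X\in\mathcal{M}ax\mathcal{CRCP}$.

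Next I would spell out the query mechanism. Fix an arbitrary $I\subseteq\mathcal{I}$ and put $\mathcal{G}_I=\{M\in\mathcal{MRCP}\mid M\subseteq I\}$. If $I\in\mathcal{MM}ax\mathcal{CR}$, its support and \textit{bond} value are read off directly (case~a). Otherwise the claim is that $I\in\mathcal{RCP}$ if and only if $\mathcal{G}_I\neq\emptyset$ and there exists $X\in\mathcal{M}ax\mathcal{CRCP}$ with $I\subseteq X$. For the ``if'' direction, any $M\in\mathcal{G}_I$ is rare and forces $I$ to be rare (order filter of rarity), while $X$ being correlated forces $I$ to be correlated (anti-monotonicity of \textit{bond}); for the ``only if'' direction, observation~\emph{(ii)} supplies the superset $X$, and $I$, lying in a rare correlated $f_{bond}$-equivalence class, contains at least one minimal generator of that class, which is a rare correlated pattern minimal in its class, hence an element of $\mathcal{MRCP}$. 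Thus case~b is ``$\mathcal{G}_I=\emptyset$ or no such $X$ exists'', where $I\notin\mathcal{RCP}$. In case~c, $I\in\mathcal{RCP}$ and the two values are reconstructed from $\mathcal{G}_I$ alone: every $M\in\mathcal{G}_I$ has $\textit{bond}(M)\geq\textit{bond}(I)$ by anti-monotonicity, whereas the minimal generator $M_0$ of $I$'s class with $M_0\subseteq I$ satisfies $\textit{bond}(M_0)=\textit{bond}(I)$; hence $\textit{bond}(I)=\min\{\textit{bond}(M)\mid M\in\mathcal{G}_I\}$. Any minimizer $M^{\ast}$ then has $M^{\ast}\subseteq I$ and $\textit{bond}(M^{\ast})=\textit{bond}(I)$, so by Proposition~\ref{PropBond} it shares the conjunctive support with $I$; therefore $\textit{Supp}(\wedge I)=\textit{Supp}(\wedge M^{\ast})$, and the disjunctive and negative supports follow by Proposition~\ref{proprietes_CEq_f_bond} together with the De Morgan law.

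The main obstacle I anticipate is making the two ``only if'' ingredients watertight: that every rare correlated pattern contains a minimal generator of its own $f_{bond}$-equivalence class and that \emph{all} such minimal generators lie in $\mathcal{MRCP}$ (so $\mathcal{G}_I$ is nonempty exactly when it must be), and, dually, that $\mathcal{M}ax\mathcal{CRCP}$ is at once rich enough to certify the correlation of every rare correlated pattern and discriminating enough never to certify a non-correlated one --- the delicate case being a pattern $I$ that is rare and admits a rare correlated minimal generator below it but is itself \emph{not} correlated, which has to be rejected, and is rejected precisely because no element of $\mathcal{M}ax\mathcal{CRCP}$ lies above it. Once these points are settled, the three-case discussion closes exactly as in Theorem~\ref{representation1_exacte}, and together with $\mathcal{MM}ax\mathcal{CR}\subseteq\mathcal{RCP}$ this shows that $\mathcal{MM}ax\mathcal{CR}$ is an exact --- indeed perfect, in the sense of Definition~\ref{Cover} --- concise representation of $\mathcal{RCP}$; alternatively, the same facts show that $\mathcal{CRCP}$, and hence all of $\mathcal{RCPR}$, can be regenerated from $\mathcal{MM}ax\mathcal{CR}$, so the result also follows directly from Theorem~\ref{representation1_exacte}.
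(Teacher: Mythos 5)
Your proof is correct and follows essentially the same route as the paper's: the same three-case analysis, with the \textit{bond} value and conjunctive support of a rare correlated pattern $I$ recovered from the elements of $\mathcal{MRCP}$ contained in $I$ (the paper takes the minimum of the two measures directly over those subsets, while you locate the \textit{bond} minimizer and invoke Proposition~\ref{PropBond}; both are valid). Your preliminary observations --- that maximal correlated patterns are closed and that every rare correlated pattern lies below some element of $\mathcal{M}ax\mathcal{CRCP}$ --- merely make explicit why the paper's trichotomy is exhaustive, a point the published proof leaves implicit.
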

\begin{proof}
	Let a pattern $I$ $\subseteq$ $\mathcal{I}$. We distinguish between three cases:
	
	\textbf{a\textsc{)}} If $I$ $\in$ $\mathcal{MM}$$ax$$\mathcal{CR}$, then $I$ is a rare correlated pattern and we have its support and \textit{bond} values.
	
	\textbf{b\textsc{)}} If $\nexists$ $J$ $\in$ $\mathcal{MM}$$ax$$\mathcal{CR}$ as $J$ $\subseteq$ $I$ or $\nexists$ $Z$ $\in$ $\mathcal{MM}$$ax$$\mathcal{CR}$ as $I$ $\subseteq$ $Z$, then $I$ $\notin$ $\mathcal{RCP}$ since $I$ do not belong to any rare correlated equivalence class.
	
	\textbf{c\textsc{)}} Else, $I$ $\in$ $\mathcal{RCP}$. In fact, according to Proposition \ref{prop_set_CRP}, $I$ is a correlated pattern since it is included in a correlated pattern, say $Z$.
	It is also rare since it contains a rare pattern, say $J$.
	Since $I$ is a rare correlated pattern and the representation
	$\mathcal{MM}$$ax$$\mathcal{CR}$ includes the $\mathcal{MRCP}$ set
	containing the minimal elements of the different rare correlated equivalence classes,
	this representation has at least one element of the equivalence class of $I$, particularly all the minimal patterns of its class.\\
	Since both the conjunctive support and the \textit{bond} measure decrease as far the size of the patterns is lowered, the support and bond values of $I$ are equal to the minimal values of the measures associated to its subsets belonging to the
	$\mathcal{MM}$$ax$$\mathcal{CR}$ representation.
	We deduce then that:\\
	$\bullet$ \textit{Supp}\textsc{(}$\wedge$$I$\textsc{)} = $min$$\{$\textit{Supp}\textsc{(}$\wedge$$I_1$\textsc{)}$|$ $I_1$ $\in$ $\mathcal{MM}$$ax$$\mathcal{CR}$ and $I_1$ $\subseteq$ $I$$\}$; and\\
	$\bullet$ \textit{bond}\textsc{(}$I$\textsc{)} = $min$$\{$\textit{bond}\textsc{(}$I_1$\textsc{)}$|$ $I_1$ $\in$ $\mathcal{MM}$$ax$$\mathcal{CR}$ and $I_1$ $\subseteq$ $I$$\}$.
\end{proof}
\begin{example}
	Consider the $\mathcal{MM}$$ax$$\mathcal{CR}$ presented in Example \ref{exprep2}.
	The treatment of the first and second cases is similar to the first two cases of the
	$\mathcal{RCPR}$ representation \textsc{(}\textit{cf.} Example \ref{example_3_cas_RMCR}\textsc{)}.
	The case of the pattern $ABE$ is illustrative of the third alternative.
	In fact, it exists two patterns, from the $\mathcal{MM}$$ax$$\mathcal{CR}$ representation, which makes $ABE$ a rare correlated pattern, namely $AB$ and $ABCE$
	\textsc{(}$AB$ $\subseteq$ $ABE$ $\subseteq$ $ABCE$\textsc{)}.
	The elements of the  $\mathcal{MM}$$ax$$\mathcal{CR}$ representation which are included in $ABE$ are $AB$ and $AE$.
	Consequently, \textit{Supp}\textsc{(}$\wedge$$ABE$\textsc{)} = $min$$\{$\textit{Supp}\textsc{(}$\wedge$$AB$\textsc{)},
	\textit{Supp}\textsc{(}$\wedge$$AE$\textsc{)}$\}$ = $min$$\{$2, 2$\}$ = 2, and \textit{bond}\textsc{(}$ABE$\textsc{)} =
	$min$$\{$\textit{bond}\textsc{(}$AB$\textsc{)}, \textit{bond}\textsc{(}$AE$\textsc{)}$\}$ = $min$$\{$$\displaystyle\frac{2}{5}$,
	$\displaystyle\frac{2}{5}$$\}$ = $\displaystyle\frac{2}{5}$.
\end{example}
Since the $\mathcal{MM}$$ax$$\mathcal{CR}$ representation is included in the $\mathcal{RCPR}$ representation, and the latter is a perfect cover of the
$\mathcal{RCP}$ set, then we deduce that the $\mathcal{MM}$$ax$$\mathcal{CR}$ representation is also
a perfect cover of the $\mathcal{RCP}$ set.
In the next sub-section, we present another refinement of the $\mathcal{RCPR}$ representation.

\subsection{The \textbf{$\mathcal{M}$$in$$\mathcal{MCR}$} concise exact representation}

Dually to the previous definition, it is sufficient to maintain in the $\mathcal{RCPR}$ representation, just the minimal elements,
according to the inclusion set, among the $\mathcal{MRCP}$ set.
The pruning of the other elements from the $\mathcal{MRCP}$ set will be shown to be
information lossless during the regeneration of the whole set of rare correlated patterns.
The $\mathcal{M}$$in$$\mathcal{MRCP}$ set of minimal elements among the
$\mathcal{MRCP}$, is thus defined as follows:
\begin{definition}\label{defMM} \textbf{The $\mathcal{M}$$in$$\mathcal{MRCP}$ set of the minimal elements of the $\mathcal{MRCP}$ set}\\
	The $\mathcal{M}$$in$$\mathcal{MRCP}$ set is composed by the patterns which are minimal rare correlated patterns
	\textsc{(}\textit{cf.} Definition \ref{MRCP}, page \pageref{MRCP}\textsc{)} and at the same time
	minimal rare patterns \textsc{(}\textit{cf.} Definition \ref{mrp}, page \pageref{mrp}\textsc{)}.
	Thus, $\mathcal{M}in \mathcal{MRCP}$  = $\mathcal{MRCP}$ $\cap$ $\mathcal{M}in \mathcal{RP}$.
\end{definition}
The $\mathcal{M}in \mathcal{MRCP}$ set is then limited by the
minimal rare correlated patterns which are also minimal rare
\textsc{(}In addition to being the smallest rare patterns\textsc{)}.
\begin{example} \label{expMM}
	Consider the extraction context sketched by Table \ref{Base_transactions} \textsc{(}Page \pageref{Base_transactions}\textsc{)}.
	For  \textit{minsupp} = 4 and \textit{minbond} = 0.2, we have $\mathcal{M}in$$\mathcal{MRCP}$ = $\{$$A$, $D$, $BC$, $CE\}$.
\end{example}

The following definition introduces the representation based on the $\mathcal{M}in\mathcal{MCRP}$ set.
\begin{definition}\label{rep3concise exacte def} \textbf{The $\mathcal{M}in \mathcal{MCR}$ representation}\\
	Let $\mathcal{M}in\mathcal{MCR}$ be the representation based on the
	$\mathcal{M}in \mathcal{MCRP}$ set and the $\mathcal{CRCP}$ set.
	We have  $\mathcal{M}in \mathcal{MCR}$ = $\mathcal{CRCP}$ $\cup$ $\mathcal{M}in\mathcal{MRCP}$.
	For each element $I$ of $\mathcal{M}in\mathcal{MCR}$,  its support \textit{Supp}\textsc{(}$\wedge I$\textsc{)} and its \textit{bond} value \textit{bond}\textsc{(}$I$\textsc{)} are computed.
\end{definition}
\begin{example} \label{exprep3}
	Consider the extraction context sketched by Table \ref{Base_transactions} \textsc{(}Page \pageref{Base_transactions}\textsc{)}.
	For \textit{minsupp} = 4 and \textit{minbond} = 0.2,
	we have the $\mathcal{M}$$in$$\mathcal{MCR}$ = $\{$\textsc{(}$A$, 3, $\displaystyle\frac{3}{3}$\textsc{)},
	\textsc{(}$D$, 1, $\displaystyle\frac{1}{1}$\textsc{)},
	\textsc{(}$AC$, 3, $\displaystyle\frac{3}{4}$\textsc{)},
	\textsc{(}$AD$, 1, $\displaystyle\frac{1}{3}$\textsc{)},
	\textsc{(}$BC$, 3, $\displaystyle\frac{3}{5}$\textsc{)},
	\textsc{(}$CE$, 3, $\displaystyle\frac{3}{5}$\textsc{)},
	\textsc{(}$ACD$, 1, $\displaystyle\frac{1}{4}$\textsc{)},
	\textsc{(}$BCE$, 3, $\displaystyle\frac{3}{5}$\textsc{)},
	and \textsc{(}$ABCE$, 2, $\displaystyle\frac{2}{5}$\textsc{)}$\}$.
	We remark that, this representation has three elements less than the $\mathcal{RCPR}$ representation,
	namely $AB$, $AE$ and $CD$.
\end{example}
The following theorem proves that this representation is also a lossless reduction of the $\mathcal{RCP}$ set.
\begin{theorem}\label{representation3_exacte}
	The  $\mathcal{M}$$in$$\mathcal{MCR}$ representation is a concise exact representation of the  $\mathcal{RCP}$ set of rare correlated patterns.
\end{theorem}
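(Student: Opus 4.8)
The plan is to reproduce, mutatis mutandis, the three-case dissection that establishes Theorems~\ref{representation1_exacte} and~\ref{representation2_exacte}, the only new ingredient being the control of the thinned ``lower'' component: here the minimal generators $\mathcal{MRCP}$ of the rare correlated equivalence classes have been replaced by $\mathcal{M}in\mathcal{MRCP} = \mathcal{MRCP} \cap \mathcal{M}in\mathcal{RP}$ (Definition~\ref{defMM}), while the ``upper'' component is still the full set $\mathcal{CRCP}$. I fix an arbitrary pattern $I \subseteq \mathcal{I}$ and split into three cases.

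First, if $I \in \mathcal{M}in\mathcal{MCR}$, then $I \in \mathcal{RCP}$ and both $\textit{Supp}(\wedge I)$ and $\textit{bond}(I)$ are recorded alongside $I$, so nothing is to prove. Second, if no element of $\mathcal{M}in\mathcal{MCR}$ is included in $I$, or $I$ is included in no element of $\mathcal{M}in\mathcal{MCR}$, then $I \notin \mathcal{RCP}$; this is simply the contrapositive of the covering statement proved in the third case, using that every element of $\mathcal{M}in\mathcal{MCR}$ is a rare correlated pattern, that the rare patterns form an order filter and the correlated patterns an order ideal.

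The substantive case is $I \in \mathcal{RCP}$, which I would base on one auxiliary claim: \emph{every rare correlated pattern possesses a subset in $\mathcal{M}in\mathcal{MRCP}$}. To see this, take $J$ to be $\subseteq$-minimal among the rare subsets of $I$ --- a nonempty family since $I$ itself is rare. Each proper subset of $J$ is a subset of $I$, hence frequent by minimality of $J$, so $J \in \mathcal{M}in\mathcal{RP}$ (Definition~\ref{mrp}); and $J \subseteq I$ with $\mathcal{CP}$ an order ideal gives $J$ correlated, so $J \in \mathcal{RCP}$. Moreover $J$ is a minimal generator of its $f_{bond}$ class: if some $J_1 \subsetneq J$ satisfied $\textit{bond}(J_1) = \textit{bond}(J)$, then by Proposition~\ref{PropBond} $\textit{Supp}(\wedge J_1) = \textit{Supp}(\wedge J)$, so $J_1$ would be a rare subset of $I$ strictly below $J$, contradicting minimality; since $\textit{bond}$ is anti-monotone this forces $\textit{bond}(J_1) > \textit{bond}(J)$ for every $J_1 \subsetneq J$, i.e. $J \in \mathcal{MRCP}$, whence $J \in \mathcal{MRCP} \cap \mathcal{M}in\mathcal{RP} = \mathcal{M}in\mathcal{MRCP} \subseteq \mathcal{M}in\mathcal{MCR}$. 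For the upper bound, $f_{bond}(I)$ is again rare correlated (Proposition~\ref{proprietes_CEq_f_bond} guarantees $f_{bond}$ preserves $\textit{bond}$ and the conjunctive support) and closed, hence $f_{bond}(I) \in \mathcal{CRCP} \subseteq \mathcal{M}in\mathcal{MCR}$ with $I \subseteq f_{bond}(I)$.

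Finally I would identify $f_{bond}(I)$ as the smallest element of $\mathcal{M}in\mathcal{MCR}$ containing $I$: an element of $\mathcal{M}in\mathcal{MRCP}$ containing $I$ must be equal to $I$ (a proper superset of $I$ lying inside a minimal rare pattern would force $I$ frequent, contradicting $I \in \mathcal{RCP}$), so that situation is already Case~(a); hence all supersets of $I$ inside the representation lie in $\mathcal{CRCP}$, and $f_{bond}(I)$, being the least closed pattern above $I$, is the minimum. One then reads $\textit{bond}(I) = \textit{bond}(f_{bond}(I))$ and $\textit{Supp}(\wedge I) = \textit{Supp}(\wedge f_{bond}(I))$ directly, recovering the disjunctive and negative supports via Lemma~\ref{lemmaidentitésinclusionexclusion}. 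The regeneration of the whole $\mathcal{RCP}$ set proceeds exactly as for Theorem~\ref{representation1_exacte}, each minimal element of $\mathcal{M}in\mathcal{MRCP}$ being expanded up to its $f_{bond}$ closure in $\mathcal{CRCP}$. I expect the only real obstacle to be the auxiliary claim, whose delicate point --- that the inclusion-minimal rare subset of a rare correlated pattern is also a minimal generator of its equivalence class --- is precisely where Proposition~\ref{PropBond} and the anti-monotonicity of $\textit{bond}$ are needed; everything else is a routine transcription of the earlier proofs.
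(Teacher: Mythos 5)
Your proof is correct and follows essentially the same three-case dissection as the paper's own argument, locating the $f_{bond}$ closure of $I$ as the smallest element of $\mathcal{CRCP}$ above $I$ and reading off the conjunctive support and \textit{bond} value from it. The one place you go beyond the paper is the auxiliary claim that every rare correlated pattern admits a subset in $\mathcal{M}in\mathcal{MRCP}$ (obtained from an inclusion-minimal rare subset together with Proposition \ref{PropBond} and the anti-monotonicity of \textit{bond}); the paper asserts the corresponding case \textbf{b\textsc{)}} without justification, so this lemma is a correct and worthwhile completion of that step rather than a different route.
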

\begin{proof}
	Let $I$ $\subseteq$ $\mathcal{I}$. We distinguish between three different cases:
	
	\textbf{a\textsc{)}} If $I$ $\in$ $\mathcal{M}$$in$$\mathcal{MCR}$, then $I$ is a rare correlated pattern and we know its support as well as its \textit{bond} value.
	
	\textbf{b\textsc{)}} If $\nexists$ $J$ $\in$ $\mathcal{M}$$in$$\mathcal{MCR}$ as $J$ $\subseteq$ $I$ or $\nexists$ $Z$ $\in$ $\mathcal{M}$$in$$\mathcal{MCR}$ as $I$ $\subseteq$ $Z$, then $I$ $\notin$ $\mathcal{RCP}$ since $I$ does not belong to any rare correlated equivalence class.
	
	\textbf{c\textsc{)}} Otherwise,  $I$ $\in$ $\mathcal{RCP}$.
	In fact, according to Proposition \ref{prop_set_CRP}, $I$ is correlated since it is included in a correlated pattern, namely $Z$.
	It is also rare since it includes a rare pattern, namely $J$.
	Since the $\mathcal{CRCP}$ set belongs to $\mathcal{M}$$in$$\mathcal{MCR}$,
	it is enough to localize the closed pattern associated to $I$, namely $F$, equal to:
	$F$ = $min_{\subseteq}$\{$I_1$ $\in$ $\mathcal{M}$$in$$\mathcal{MCR}$ $|$ $I$ $\subseteq$ $I_1$\}.
	Then, \textit{bond}\textsc{(}$I$\textsc{)} = \textit{bond}\textsc{(}$F$\textsc{)} and
	\textit{Supp}\textsc{(}$\wedge I$\textsc{)} = \textit{Supp}\textsc{(}$\wedge F$\textsc{)}.\\\mbox{}
\end{proof}
The treatment of these three cases is similar to those of the $\mathcal{RCPR}$ representation, \textsc{(}\textit{cf.} Example \ref{example_3_cas_RMCR} page \pageref{example_3_cas_RMCR}\textsc{)}. The $\mathcal{M}$$in$$\mathcal{MCR}$ representation also constitutes a perfect cover of the $\mathcal{RCP}$ set, since it is included in the
$\mathcal{RCPR}$ representation.

After the introduction of our exact condensed representations, we deal in the following with the approximate concise representation.
\subsection{The $\mathcal{M}$$in$$\mathcal{MM}$$ax$$\mathcal{CR}$ concise approximate representation}
The approximate concise representation, that we introduce, is defined as follows:
\begin{definition} \textbf{The $\mathcal{M}$$in$$\mathcal{MM}$$ax$$\mathcal{CR}$ representation} \label{rep4concise approxdef}\\
	Let $\mathcal{M}$$in$$\mathcal{MM}$$ax$$\mathcal{CR}$ be the representation
	based on the $\mathcal{M}ax$$\mathcal{CRCP}$ set and the $\mathcal{M}in$$\mathcal{MRCP}$ set.
	We have  $\mathcal{M}$$in$$\mathcal{MM}$$ax$$\mathcal{CR}$  = $\mathcal{M}ax$$\mathcal{CRCP}$ $\cup$ $\mathcal{M}in$$\mathcal{MRCP}$.
	For each element $I$ of $\mathcal{M}$$in$$\mathcal{MM}$$ax$$\mathcal{CR}$, the support \textit{Supp}\textsc{(}$\wedge$$I$\textsc{)} and the \textit{bond} value \textit{bond}\textsc{(}$I$\textsc{)} are computed.
\end{definition}
\begin{example} \label{expRC4}
	We have,  $\mathcal{M}in$$\mathcal{MRCP}$ = $\{$$A$, $D$, $BC$, $CE$$\}$
	\textsc{(}\textit{cf.} Example \ref{expMM}\textsc{)} and
	$\mathcal{M}ax$$\mathcal{CRCP}$ = $\{$$ACD$, $ABCE$$\}$
	\textsc{(}\textit{cf.} Example \ref{example_MaxCCRP}\textsc{)}.
	Therefore,  $\mathcal{M}$$in$$\mathcal{MM}$$ax$$\mathcal{CR}$ =
	$\{$\textsc{(}$A$, 3, $\displaystyle\frac{3}{3}$\textsc{)},
	\textsc{(}$D$, 1, $\displaystyle\frac{1}{1}$\textsc{)},
	\textsc{(}$BC$, 3, $\displaystyle\frac{3}{5}$\textsc{)},
	\textsc{(}$CE$, 3, $\displaystyle\frac{3}{5}$\textsc{)},
	\textsc{(}$ACD$, 1, $\displaystyle\frac{1}{4}$\textsc{)},
	\textsc{(}$ABCE$, 2, $\displaystyle\frac{2}{5}$\textsc{)}$\}$.
\end{example}
In the previous example, the  $\mathcal{M}$$in$$\mathcal{MM}$$ax$$\mathcal{CR}$
representation has six elements less than the
$\mathcal{RCPR}$ set, eleven elements less than the $\mathcal{MM}$$ax$$\mathcal{CR}$ set and one element less than
$\mathcal{M}$$in$$\mathcal{MCR}$.
However, this representation can not exactly derive the support and the \textit{bond} values of a given rare correlated pattern.
\begin{theorem}\label{representation4_approx}
	The $\mathcal{M}$$in$$\mathcal{MM}$$ax$$\mathcal{CR}$ is an \textbf{approximate} concise representation of the
	$\mathcal{RCP}$ set of the rare correlated patterns.
\end{theorem}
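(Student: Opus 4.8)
The plan is to mirror the three–case analysis used for the exact representations (Theorems \ref{representation1_exacte}, \ref{representation2_exacte} and \ref{representation3_exacte}), but to show that, in the generic case, the support and \textit{bond} values returned by $\mathcal{M}in\mathcal{MM}ax\mathcal{CR}$ are only bracketed between two computable bounds rather than pinned down exactly. First I would fix an arbitrary pattern $I \subseteq \mathcal{I}$ and establish the membership test. If $I \in \mathcal{M}in\mathcal{MM}ax\mathcal{CR}$, then $I$ is rare correlated and its support and \textit{bond} value are stored. Otherwise I claim: $I \in \mathcal{RCP}$ if and only if there exist $J \in \mathcal{M}in\mathcal{MM}ax\mathcal{CR}$ with $J \subseteq I$ and $Z \in \mathcal{M}in\mathcal{MM}ax\mathcal{CR}$ with $I \subseteq Z$. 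The ``if'' direction is immediate, since every element of $\mathcal{M}in\mathcal{MM}ax\mathcal{CR} = \mathcal{M}ax\mathcal{CRCP} \cup \mathcal{M}in\mathcal{MRCP} \subseteq \mathcal{RCP}$ is both rare and correlated: a rare $J \subseteq I$ forces $I$ rare by the order filter of $\mathcal{RP}$, and a correlated $Z \supseteq I$ forces $I$ correlated by the order ideal of $\mathcal{CP}$, whence $I \in \mathcal{RCP}$ by Definition \ref{defCRP}. The inclusion $\mathcal{M}in\mathcal{MM}ax\mathcal{CR} \subseteq \mathcal{RCP}$ also yields at once that the representation is a perfect cover.

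For the ``only if'' direction, assume $I \in \mathcal{RCP}$. By Proposition \ref{prop_set_CRP}, $I$ is correlated; greedily adding items that preserve correlation produces a maximal correlated pattern $Z \supseteq I$. Since $I \subseteq Z$ and $I$ is rare, $Z$ is rare by the order filter of $\mathcal{RP}$; and since $Z$ is maximal correlated it has no correlated proper superset, so no item can be added to $Z$ without lowering \textit{bond}, i.e. $f_{bond}(Z) = Z$ and $Z$ is closed. Hence $Z \in \mathcal{CRCP} \cap \mathcal{M}ax\mathcal{CP} = \mathcal{M}ax\mathcal{CRCP}$. Dually, choosing a minimal (w.r.t. $\subseteq$) rare subset $J$ of $I$ gives a minimal rare pattern ($J \in \mathcal{M}in\mathcal{RP}$), which is correlated because it is a subset of the correlated $I$; moreover $J$ is minimal in its $f_{bond}$ class, since any $J' \subsetneq J$ with $\textit{bond}(J') = \textit{bond}(J)$ would satisfy $\textit{Supp}(\wedge J') = \textit{Supp}(\wedge J) < \textit{minsupp}$ by Proposition \ref{PropBond}, making $J'$ a proper rare subset of $J$ and contradicting $J \in \mathcal{M}in\mathcal{RP}$. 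Therefore $J \in \mathcal{MRCP} \cap \mathcal{M}in\mathcal{RP} = \mathcal{M}in\mathcal{MRCP}$, and both $J$ and $Z$ lie in $\mathcal{M}in\mathcal{MM}ax\mathcal{CR}$, which completes the membership characterization.

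It then remains, for $I \in \mathcal{RCP}$ not in the representation, to derive approximations of $\textit{Supp}(\wedge I)$ and $\textit{bond}(I)$. Using the anti-monotonicity of the conjunctive support and of the \textit{bond} measure, from any $J \subseteq I \subseteq Z$ with $J \in \mathcal{M}in\mathcal{MRCP}$ and $Z \in \mathcal{M}ax\mathcal{CRCP}$ one gets $\textit{Supp}(\wedge Z) \leq \textit{Supp}(\wedge I) \leq \textit{Supp}(\wedge J)$ and $\textit{bond}(Z) \leq \textit{bond}(I) \leq \textit{bond}(J)$; the representation therefore returns, for each of these two quantities, the interval whose lower endpoint is the largest value attained on such a $Z$ and whose upper endpoint is the smallest value attained on such a $J$. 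To show that the representation is genuinely \emph{approximate} and not exact, I would invoke the running example: for \textit{minsupp} $= 4$ and \textit{minbond} $= 0.2$, the pattern $AB \in \mathcal{RCP}$ has $\textit{Supp}(\wedge AB) = 2$ and $\textit{bond}(AB) = \frac{2}{5}$, yet its only sub/super-patterns inside $\mathcal{M}in\mathcal{MM}ax\mathcal{CR}$ are $A$ and $ABCE$, which merely bracket $\textit{Supp}(\wedge AB)$ in $[2,3]$ and $\textit{bond}(AB)$ in $[\frac{2}{5},1]$, so no exact value can be recovered.

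The main obstacle I anticipate is not any single computation but getting the membership characterization exactly right: one must check that for \emph{every} $I \in \mathcal{RCP}$ there is a witness $J$ that already lies in $\mathcal{M}in\mathcal{MRCP}$ (and not merely in $\mathcal{M}in\mathcal{RP}$ or in $\mathcal{MRCP}$) and a witness $Z$ that already lies in $\mathcal{M}ax\mathcal{CRCP}$. This is precisely where the interplay between ``being minimal in the $f_{bond}$ equivalence class'' and ``being a minimal rare pattern'' — and, dually, between ``being closed'' and ``being a maximal correlated pattern'' — has to be argued carefully, leaning on Propositions \ref{PropBond} and \ref{proprietes_CEq_f_bond}. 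Once this is in place, the support and \textit{bond} bounds are a routine consequence of anti-monotonicity, and the rest of the argument is a direct adaptation of the earlier exactness proofs.
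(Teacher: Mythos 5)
Your proof is correct and follows the same overall architecture as the paper's: decide membership via a pair $J \subseteq I \subseteq Z$ drawn from the representation, observe that exact values cannot be recovered when $I$ itself is absent, and return interval approximations. Two points distinguish your route. First, you actually \emph{prove} the membership characterization: you saturate $I$ to a maximal correlated superset $Z$, note that maximality forces $f_{bond}(Z)=Z$ and that rarity is inherited upward, so $Z \in \mathcal{M}ax\mathcal{CRCP}$; dually you take a minimal rare subset $J$ and use Proposition \ref{PropBond} to rule out an equal-\textit{bond} proper subset, so $J \in \mathcal{M}in\mathcal{MRCP}$. The paper's proof merely asserts that such $J$ and $Z$ suffice, so this is a genuine strengthening, and it is exactly the delicate point you flag (witnesses must land in $\mathcal{M}in\mathcal{MRCP}$ and $\mathcal{M}ax\mathcal{CRCP}$, not just in $\mathcal{M}in\mathcal{RP}$ or $\mathcal{CRCP}$). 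Second, your \textit{bond} interval $[\max_Z \textit{bond}(Z),\ \min_J \textit{bond}(J)]$ comes directly from the anti-monotonicity of \textit{bond}, whereas the paper brackets the conjunctive and disjunctive supports separately and returns $[\textit{MinConj}/\textit{MaxDisj},\ \textit{MaxConj}/\textit{MinDisj}]$; both intervals are valid, but the paper's is contained in yours, since $\textit{bond}(Z) \leq \textit{MinConj}/\textit{MaxDisj}$ for every admissible $Z$ and symmetrically for the upper end, so the paper buys slightly tighter approximations (and also bounds $\textit{Supp}(\vee I)$ explicitly, which you leave implicit). Your concrete witness of non-exactness via $AB$, whose only neighbours in the representation are $A$ and $ABCE$, is a useful addition the paper does not supply.
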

%%%%%%%%%%%%%%%%%%%%%%%%%%%%%%%%%%%%%%%%
\begin{proof}
	For a given pattern  $I$ $\subseteq$ $\mathcal{I}$, we can determine thanks to this representation whether
	$I$ is rare correlated  or not. It suffices to find two patterns $J$ and $Z$ belonging to $\mathcal{M}$$in$$\mathcal{MM}$$ax$$\mathcal{CR}$
	such as  $J$ $\subseteq$ $I$ $\subseteq$ $Z$.
	If $J$ or $Z$ can not be found, then $I$ $\notin$ $\mathcal{RCP}$.
	However, the support and \textit{bond} values can be exactly derived only if
	$I$ $\in$ $\mathcal{M}$$in$$\mathcal{MM}$$ax$$\mathcal{CR}$.
	Otherwise, this representation can not offer an exact derivation of these values, since it may not contain any representative element of the equivalence class  of $I$
	\textsc{(}\textit{i.e.} neither the closed pattern if it does not belong to the $\mathcal{M}ax$$\mathcal{CRCP}$ set
	nor to the associated minimal if they don't belong to the
	$\mathcal{M}in$$\mathcal{MRCP}$ set\textsc{)}.
	We propose in this case an approximate process in order to get these values.
	We define, in this regard, the maximal and minimal borders of the conjunctive, the disjunctive and the \textit{bond} value of a correlated rare
	pattern $I$.
	Let,
	
	$\bullet$ \textit{R1} = $\max$$\{$Supp\textsc{(}$\wedge$$F$\textsc{)}, $F$ $\in$ $\mathcal{M}ax$$\mathcal{CRCP}$ $|$ $I$ $\subseteq$ $F$$\}$,
	
	$\bullet$ \textit{R2} =  $\min$$\{$Supp\textsc{(}$\wedge$$G$\textsc{)}, $G$ $\in$  $\mathcal{M}in$$\mathcal{MRCP}$ $|$ $G$ $\subseteq$ $I$$\}$,
	
	$\bullet$ \textit{R3} = $\min$$\{$Supp\textsc{(}$\vee$$F$\textsc{)}, $F$ $\in$  $\mathcal{M}ax$$\mathcal{CRCP}$ $|$ $I$ $\subseteq$ $F$$\}$ and
	
	$\bullet$ \textit{R4} = $\max$$\{$\textit{Supp}\textsc{(}$\vee$$G$\textsc{)}, $G$ $\in$  $\mathcal{M}in$$\mathcal{MRCP}$ $|$ $G$ $\subseteq$ $I$$\}$.
	
	We define, therefore, the minimal and maximal borders in terms of
	\textit{R1} and of \textit{R2} as follows.
	Let \textit{MinConj} be the minimal border of the conjunctive support of the pattern $I$, \textit{i.e.}, \textit{MinConj} = $\min$\textsc{(}\textit{R1}, \textit{R2}\textsc{)}. Let
	\textit{MaxConj} be the maximal border of the conjunctive support of the pattern $I$, \textit{i.e.},
	\textit{MaxConj} = $\max$\textsc{(}\textit{R1}, \textit{R2}\textsc{)}.
	
	According to the disjunctive support of a given pattern $I$, we define the maximal and minimal borders in terms of
	\textit{R3} and of \textit{R4} as follows.
	Let \textit{MinDisj} be the minimal border of the disjunctive support,
	\textit{MinDisj} = $\min$\textsc{(}\textit{R3}, \textit{R4}\textsc{)} and let
	\textit{MaxDisj} be the maximal border of the disjunctive support,
	\textit{MaxDisj} = $\max$\textsc{(}\textit{R3}, \textit{R4}\textsc{)}.
	
	Consequently, the conjunctive support of a rare correlated pattern $I$ will be included between \textit{MinConj} and \textit{MaxConj}.
	In the same way,  the disjunctive support will be included between \textit{MinDisj} and \textit{MaxDisj}.
	Formally, we have  \textit{Supp}\textsc{(}$\wedge$$I$\textsc{)} $\in$ $[$\textit{MinConj}, \textit{MaxConj}$]$ and
	\textit{Supp}\textsc{(}$\vee$$I$\textsc{)} $\in$ $[$\textit{MinDisj}, \textit{MaxDisj}$]$.
	
	Therefore, we define the minimal and maximal borders of the \textit{bond} value of a rare correlated pattern $I$
	in terms of \textit{MinConj}, \textit{MinDisj}, \textit{MaxConj} and \textit{MaxDisj} as follows.
	
	Since \textit{MinDisj} $\leq$ \textit{Supp}\textsc{(}$\vee$$I$\textsc{)} $\leq$ \textit{MaxDisj},
	then we have
	$\displaystyle\frac{1}{\textit{MaxDisj}}$
	$\leq$ $\displaystyle\frac{1}{\textit{Supp}\textsc{(}\vee I\textsc{)}}$
	$\leq$  $\displaystyle\frac{1}{\textit{MinDisj}}$.
	
	As \textit{Supp}\textsc{(}$\wedge$I\textsc{)} $>$ 0  then
	we can deduce that,
	$\displaystyle\frac{\textit{Supp}\textsc{(}\wedge I\textsc{)}}{\textit{MaxDisj}}$
	$\leq$
	$\displaystyle\frac{\textit{Supp}\textsc{(}\wedge I\textsc{)}}{\textit{Supp}\textsc{(}\vee I\textsc{)}}$  $\leq$
	$\displaystyle\frac{\textit{Supp}\textsc{(}\wedge I\textsc{)}}{\textit{MinDisj}}$.
	This is equivalent to,
	$\displaystyle\frac{\textit{Supp}\textsc{(}\wedge I\textsc{)}}{\textit{MaxDisj}}$  $\leq$
	\textit{bond}\textsc{(}$I$\textsc{)}   $\leq$
	$\displaystyle\frac{\textit{Supp}\textsc{(}\wedge I\textsc{)}}{\textit{MinDisj}}$.
	Already, \textit{MinConj} $\leq$ \textit{Supp}\textsc{(}$\wedge$ $I$\textsc{)}
	, then
	$\displaystyle\frac{\textit{MinConj}}{\textit{MaxDisj}}$  $\leq$
	\textit{bond}\textsc{(}$I$\textsc{)}.
	Besides, \textit{Supp}\textsc{(}$\wedge$$I$\textsc{)} $\leq$ \textit{MaxConj}
	then \textit{bond}\textsc{(}$I$\textsc{)} $\leq$
	$\displaystyle\frac{\textit{MaxConj}}{\textit{MinDisj}}$.
	We then have,
	$\displaystyle\frac{\textit{MinConj}}{\textit{MaxDisj}}$  $\leq$
	\textit{bond}\textsc{(}$I$\textsc{)} $\leq$
	$\displaystyle\frac{\textit{MaxConj}}{\textit{MinDisj}}$.
	
	We conclude then that, \textit{bond}\textsc{(}$I$\textsc{)} $\in$
	$[$\textit{Minbond}, \textit{Maxbond}$]$, with \textit{Minbond} = $\displaystyle\frac{\textit{MinConj}}{\textit{MaxDisj}}$
	and \textit{Maxbond} = $\displaystyle\frac{\textit{MaxConj}}{\textit{MinDisj}}$.
\end{proof}
%%%%%%%%%%%%%%%%%
\begin{example}
	With respect to Example \ref{expRC4}, we have $ABE$ is a correlated rare pattern
	since  \textsc{(}$A$  $\subset$  $ABE$ $\subset$  $ABCE$\textsc{)}.
	The conjunctive, disjunctive and the \textit{bond} value of $ABE$ are approximated as follows:\\
	$\bullet$ \textit{R1} = Supp\textsc{(}$\wedge$$ABCE$\textsc{)} = 2,\\
	$\bullet$ \textit{R2} = Supp\textsc{(}$\wedge$$A$\textsc{)} = 3,\\
	$\bullet$ \textit{R3} = Supp\textsc{(}$\vee$$ABCE$\textsc{)} = 5,\\
	$\bullet$ \textit{R4} = Supp\textsc{(}$\vee$$A$\textsc{)} = 5.\\
	Thus, we have \textit{MinConj} = $\min$\textsc{(}\textit{R1}, \textit{R2}\textsc{)} =  $\min$\textsc{(}2, 3\textsc{)} = 2, \textit{MaxConj} = $\max$\textsc{(}\textit{R1}, \textit{R2}\textsc{)} =  $\max$\textsc{(}2, 3\textsc{)} = 3,
	and $\min$\textsc{(}\textit{R3}, \textit{R4}\textsc{)} =  $\max$\textsc{(}\textit{R3}, \textit{R4}\textsc{)}
	= $\min$\textsc{(}5, 5\textsc{)} = $\max$\textsc{(}5, 5\textsc{)} = 5.
	We have, therefore, \textit{MinDisj} =  \textit{MaxDisj} = 5.
	This implies that, \textit{Minbond} = $\displaystyle\frac{\textit{MinConj}}{\textit{MaxDisj}}$ = $\displaystyle\frac{2}{5}$ and
	\textit{Maxbond} = $\displaystyle\frac{\textit{MaxConj}}{\textit{MinDisj}}$ = $\displaystyle\frac{3}{5}$.
	
	Consequently,  we have Supp\textsc{(}$\wedge$$ABE$\textsc{)} $\in$ $[$2, 3$]$,
	Supp\textsc{(}$\vee$$ABE$\textsc{)} $\in$ $[$5, 5$]$ so Supp\textsc{(}$\vee$$ABE$\textsc{)} = 5 and
	bond\textsc{(}$ABE$\textsc{)}  $\in$ $[$$\displaystyle\frac{2}{5}$, $\displaystyle\frac{3}{5}$$]$.
	
	We remark, according to the extraction context illustrated by Table \ref{Base_transactions} \textsc{(}Page \pageref{Base_transactions} \textsc{)}, that the conjunctive, disjunctive and the \textit{bond} values
	of the pattern $ABE$ corresponds respectively to 2, 5 and $\displaystyle\frac{2}{5}$.
	These values does not contradict the previously obtained approximate values. We affirm that the approximation mechanism offered by
	the approximate concise representation $\mathcal{RM}$$in$$\mathcal{MM}$$ax$$\mathcal{F}$ is valid.
\end{example}
%%%%%%%%%%%%%%%%%%%%%%%%%%%%%%%%%
After presenting the condensed representations associated to the rare correlated patterns, we focus on the next section on the condensed representation associated to the $\mathcal{FCP}$ set of frequent correlated patterns

\section{Condensed representation of frequent correlated patterns} \label{section_RCfcp}
Based on the $f_{bond}$ closure operator, a condensed representation which cover the frequent correlated patterns was proposed in \cite{tarekds2010}. This representation is based on the frequent closed correlated patterns. The proposed representation is considered more concise than the representation based on minimal correlated patterns thanks to the fact that a $f_{bond}$
equivalence class always contains only one closed pattern, but
potentially several minimal patterns.
% % % % % % % %

Before introducing the representation, let us define the two discussed sets of frequent closed correlated patterns
and of the frequent minimal correlated pattern associated to the $f_{bond}$ operator.

\begin{definition}  \textbf{Frequent closed correlated pattern}\\
	The set $\mathcal{FCCP}$ of frequent closed correlated patterns is equal to: $\mathcal{FCCP}$ = $\{$ $I$ $\in$ $\mathcal{CCP}$ $|$ \textit{Supp}\textsc{(}$I$\textsc{)} $\geq$ \textit{minsupp}$\}$.
\end{definition}

\begin{definition}\label{définitionMotif minimal} \textbf{Frequent minimal correlated pattern}\\
	Let $\textit{I}$ $\in$ $\mathcal{FCP}$. The pattern \textit{I} is said to be minimal if and only if $\forall$ $i$ $\in$ $\textit{I}$, \textit{bond}\textsc{(}\textit{I}\textsc{)} $<$ \textit{bond}\textsc{(}\textit{I}$\backslash$$\{i\}$\textsc{)} or, equivalently, $\nexists$ $I_1$ $\subset$ $I$ such that $f_{bond}\textsc{(}I\textsc{)}$ = $f_{bond}\textsc{(}I_1\textsc{)}$.
\end{definition}
\begin{example}
	Consider the extraction context sketched by Table
	\ref{Base_transactions} \textsc{(}Page \pageref{Base_transactions} \textsc{)}. For \textit{minsupp} = \textit{4} and
	\textit{minbond} = \textit{0.20},
	the $\mathcal{FCCP}$ set of frequent closed correlated pattern is equal to:
	$\{$\textsc{(} \textsc{(}\textsl{C}, 4, 4\textsc{)},
	\textsc{(}\textsl{BE}, 4, 4\textsc{)}$\}$ while
	the frequent minimal correlated pattern are the items
	\textsc{(}\textsl{B}, 4, 4\textsc{)}, \textsc{(}\textsl{C}, 4, 4\textsc{)} and \textsc{(}\textsl{E}, 4, 4\textsc{)}.
\end{example}
% % % % % % % % % % %
Now, let us define the new concise representation of frequent correlated patterns based on
the frequent closed correlated patterns associated to the \textit{bond} measure.
\begin{definition}\label{rep concise exacte def}
	The representation $\mathcal{RFCCP}$ based on the set of frequent closed
	correlated patterns associated to $f_{bond}$ is defined as
	follows:
	\vspace{-0.1cm}
	\begin{center}
		$\mathcal{RFCCP} = \{$\textsc{(}$\textit{I}$,
		\textit{Supp}\textsc{(}$\wedge$\textit{I}\textsc{)},
		\textit{Supp}\textsc{(}$\vee$\textit{I}\textsc{)}\textsc{)} $|$
		$I\in\mathcal{FCCP}$ $\}$.\end{center}
\end{definition}

\begin{example}
	Consider the extraction context sketched by Table
	\ref{Base_transactions} \textsc{(}Page \pageref{Base_transactions} \textsc{)}.
	For \textit{minsupp} = \textit{4} and
	\textit{minbond} = \textit{0.20}, the representation
	$\mathcal{RFCCP}$ of the $\mathcal{FCP}$ set is equal to:
	$\{$\textsc{(} \textsc{(}\textsl{C}, 4, 4\textsc{)},
	\textsc{(}\textsl{BE}, 4, 4\textsc{)}$\}$.
\end{example}

The next theorem proves that the proposed $\mathcal{RFCCP}$ representation is a condensed exact representation of the $\mathcal{FCP}$ set of frequent correlated patterns.
\vspace{-0.1cm}
\begin{theorem}
	The representation $\mathcal{RFCCP}$ constitutes an exact concise representation of the $\mathcal{FCP}$ set.
\end{theorem}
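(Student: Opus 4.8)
The plan is to show that $\mathcal{RFCCP}$ satisfies the two requirements of a concise exact representation of $\mathcal{FCP}$: first, for an arbitrary pattern $I \subseteq \mathcal{I}$ we can decide using only $\mathcal{RFCCP}$ whether $I \in \mathcal{FCP}$; second, if $I \in \mathcal{FCP}$ then we can recover $\textit{Supp}(\wedge I)$ and $\textit{bond}(I)$ (equivalently $\textit{Supp}(\vee I)$, hence $\textit{Supp}(\neg I)$ by De Morgan) exactly. The proof mirrors the structure of the proof of Theorem~\ref{representation1_exacte}, but is simpler because $\mathcal{FCP}$ is an order ideal, so a single family of maximal elements of the equivalence classes suffices. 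The key tool is that $f_{bond}$ is a closure operator preserving both the conjunctive and disjunctive supports within an equivalence class (Proposition~\ref{proprietes_CEq_f_bond}), together with the fact that the equivalence classes are uniformly ``frequent correlated'' or not (the observation that a correlated equivalence class has all its members rare or all frequent).

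First I would fix $I \subseteq \mathcal{I}$ and consider its closure $F = f_{bond}(I)$. I distinguish the cases: (a) if $I \in \mathcal{RFCCP}$ (i.e.\ $I \in \mathcal{FCCP}$), then by definition $I$ is a frequent correlated pattern and $\textit{Supp}(\wedge I)$, $\textit{Supp}(\vee I)$ are stored, so $\textit{bond}(I)$ is computed as their ratio. (b) If there is no $J \in \mathcal{RFCCP}$ with $I \subseteq J$, then $I$ is not frequent correlated: indeed if $I$ were in $\mathcal{FCP}$, its closure $F$ would be frequent (same conjunctive support as $I$ by Proposition~\ref{proprietes_CEq_f_bond}) and correlated (same $\textit{bond}$ value, $\geq \textit{minbond}$), hence $F \in \mathcal{FCCP} \subseteq \mathcal{RFCCP}$ with $I \subseteq F$, a contradiction. (c) Otherwise, let $F = \min_{\subseteq}\{J \in \mathcal{RFCCP} \mid I \subseteq J\}$; I would argue this $F$ is exactly $f_{bond}(I)$ whenever $I \in \mathcal{FCP}$ — the stored closed pattern containing $I$ minimally is the closure — and then by Proposition~\ref{proprietes_CEq_f_bond} conclude $\textit{Supp}(\wedge I) = \textit{Supp}(\wedge F)$ and $\textit{bond}(I) = \textit{bond}(F) = \textit{Supp}(\wedge F)/\textit{Supp}(\vee F)$, both recoverable from $\mathcal{RFCCP}$. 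I also need the converse in case (c): that the existence of such an $F$ with $I \subseteq F$ forces $I \in \mathcal{FCP}$; this uses that $\mathcal{FCP}$ is an order ideal (stated after Definition~\ref{defCFP}), so $I \subseteq F \in \mathcal{FCP}$ implies $I \in \mathcal{FCP}$.

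To close the ``perfect cover'' aspect I would remark that $\mathcal{FCCP} \subseteq \mathcal{FCP}$ always, so the size of $\mathcal{RFCCP}$ never exceeds that of $\mathcal{FCP}$, and that the regeneration of the whole $\mathcal{FCP}$ set proceeds by taking each $F \in \mathcal{FCCP}$ together with all patterns sandwiched between one of its minimal generators and $F$ — all sharing $F$'s support and $\textit{bond}$ value — so no information is lost.

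The main obstacle I anticipate is the careful justification in case (c) that $\min_{\subseteq}\{J \in \mathcal{RFCCP} \mid I \subseteq J\}$ coincides with $f_{bond}(I)$: one must rule out that two incomparable frequent closed correlated patterns both minimally contain $I$ (this follows because $f_{bond}(I)$ is the unique closed pattern of the equivalence class of $I$ and is the smallest closed set containing $I$ that shares $I$'s $\textit{bond}$ value), and one must be sure that when $I \notin \mathcal{FCP}$ but some larger stored pattern contains $I$, the order-ideal property still lets us correctly reject $I$ — which it does, since in that situation no stored pattern can contain $I$ at all, contradicting the hypothesis of case (c). Handling this cleanly, rather than the routine support bookkeeping, is where the argument needs attention.
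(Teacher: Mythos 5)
Your proof is correct, but it follows a genuinely different route from the paper's. The paper argues by recurrence on the size of $I$: it introduces the stratified sets $\mathcal{FMCP}_k$ and $\mathcal{FCCP}_k$ and shows, through three cases (closed, minimal, neither), that the closure $f_{bond}(I)$ of every frequent correlated pattern lands in $\mathcal{FCCP}$ — the ``neither'' case using a proper subset of size $n$ with the same \textit{bond} value and the induction hypothesis. That induction is really a justification that a level-wise, minimal-generator-driven construction enumerates all of $\mathcal{FCCP}$; the interrogation mechanism (deciding membership and deriving supports) is left implicit. You instead run the three-case interrogation argument directly, in the style of the proof of Theorem~1 for $\mathcal{RCPR}$: completeness of the representation follows for you from the purely definitional observation that $f_{bond}(I)$ shares $I$'s conjunctive support and \textit{bond} value, hence is itself frequent, correlated, and closed, so it belongs to $\mathcal{FCCP}$ with no induction at all; rejection of non-members and acceptance of members both come from the order-ideal property of $\mathcal{FCP}$; and the support derivation rests on the fact that $f_{bond}(I)$ is the \emph{least} closed superset of $I$ (by isotony and idempotency, $f_{bond}(I) \subseteq f_{bond}(J) = J$ for every closed $J \supseteq I$), which you correctly flag as the point needing care and which resolves your worry about two incomparable minimal closed supersets. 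Your version is shorter and makes the query procedure explicit; the paper's version additionally certifies the bottom-up mining algorithm, which is useful for the implementation chapters but is not required for the theorem as stated.
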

\vspace{-0.45cm}
\begin{proof}\textit{Thanks to a reasoning by recurrence, we will demonstrate that, for an arbitrary pattern $I\subseteq \mathcal{I}$, its $f_{bond}$ closure, $f_{bond}$\textsc{(}$I$\textsc{)}, belongs to $\mathcal{FCCP}$ if it is frequent correlated. In this regard, let $\mathcal{FMCP}_k$ be the set of frequent minimal correlated patterns of size $k$ and $\mathcal{FCCP}_k$ be the associated set of closures by $f_{bond}$. The hypothesis is verified for single items $i$ inserted in $\mathcal{FMCP}_1$, and their closures $f_{bond}\textsc{(}i\textsc{)}$ are inserted in $\mathcal{FCCP}_1$ if $\textit{Supp}\textsc{(}\wedge i\textsc{)} \geq$ \textit{minsupp} \textsc{(}since $\forall$ $i$ $\in$ $\mathcal{I}$, $\textit{bond}\textsc{(}i\textsc{)}$ $=$ $1$ $\geq$ \textit{minbond}\textsc{)}. Thus,
		$f_{bond}\textsc{(}i\textsc{)}\in\mathcal{FCCP}$. Now, suppose
		that $\forall I\subseteq \mathcal{I}$ such as $|I| = n$. We have
		$f_{bond}\textsc{(}I\textsc{)}\in\mathcal{FCCP}$ if $I$ is frequent correlated. We show that, $\forall I\subseteq \mathcal{I}$ such as $|I|$ $=$ \textsc{(}$n
		+ 1$\textsc{)}, we have $f_{bond}\textsc{(}I\textsc{)}\in\mathcal{FCCP}$ if $I$ is frequent correlated. Let $I$ be a pattern of size \textsc{(}$n + 1$\textsc{)}. Three situations are possible:\\
		\textbf{\textsc{(}a\textsc{)}} if $I \in \mathcal{FCCP}$, then necessarily
		$f_{bond}\textsc{(}I\textsc{)}\in\mathcal{FCCP}$ since $f_{bond}$ is idempotent.\\
		\textbf{\textsc{(}b\textsc{)}} if $I$ $\in \mathcal{FMCP}_{n+1}$, then
		$f_{bond}\textsc{(}I\textsc{)}\in\mathcal{FCCP}_{n+1}$ and, hence,
		$f_{bond}\textsc{(}I\textsc{)}\in\mathcal{FCCP}$.\\
		\textbf{\textsc{(}c\textsc{)}} if $I$ is neither closed nor minimal -- $I \notin \mathcal{FCCP}$ and $I \notin \mathcal{FMCP}_{n+1}$ -- then $\exists I_1 \subset I$ such as $|I_1|$ $=$ $n$ and \textit{bond}\textsc{(}$I$\textsc{)} $=$ \textit{bond}\textsc{(}$I_1$\textsc{)}. In fact, $f_{bond}$\textsc{(}$I$\textsc{)} $=$
		$f_{bond}$\textsc{(}$I_1$\textsc{)}, and $I$ is then frequent correlated. Moreover, using the hypothesis, we have
		$f_{bond}$\textsc{(}$I_1$\textsc{)} $\in$ $\mathcal{FCCP}$ and, hence,
		$f_{bond}$\textsc{(}$I$\textsc{)} $\in$ $\mathcal{FCCP}$.}
\end{proof}

%It is worth noting that maintaining both conjunctive and disjunctive supports for each pattern belonging to the representation allows to avoid the cost of the evaluation of the inclusion-exclusion identities. Indeed, this evaluation can be very expensive, in particular in the case of long correlated patterns to be derived. For example, for a pattern containing 20 items, the evaluation of an inclusion-exclusion identity will involve the computation of the supports of all its non-empty subsets, \textit{i.e.}, $2^{20}$ - 1 terms \textsc{(}\textit{cf.} Lemma \ref{lemmaidentitésinclusionexclusion}, page \pageref{lemmaidentitésinclusionexclusion}\textsc{)}.

%Thus, the regeneration of the whole frequent correlated patterns from the
%representation $\mathcal{RFCCP}$ can be carried out in a very simple and
%effective way. Indeed, in an equivalence class associated to the \textit{bond} measure, patterns present the same value of this measure and consequently the same conjunctive, disjunctive and negative supports. Then, to derive the information corresponding to a
%frequent correlated pattern, it is enough to locate the smallest
%frequent closed correlated pattern which covers it and which
%corresponds to its closure by $f_{bond}$.

% % % % % % % % % % % % % % % % % % % % % % % %
\section{Conclusion} \label{ConcChap4}
In this chapter, we studied the frequent correlated and the rare correlated patterns according to the  \textit{bond} correlation measure. We equally described the equivalence classes induced by the $f_{bond}$
closure operator associated to the \textit{bond} measure. Then, we introduced the condensed exact and approximate representations associated to rare correlated patterns and also to frequent correlated ones.
We proved their theoretical  properties of accuracy and compactness. This chapter was concluded with the
condensed representation associated to the $\mathcal{FCP}$ set of frequent correlated patterns.
In the next chapter, we propose a mining approach, called \textsc{GMJP}, allowing the extraction of both frequent correlated patterns, rare correlated patterns and their associated concise representations.

%%%%%%%%%%%%%%%%%%%%%%%%%%%%%%%%%%%%%%%%%%%%%%%%%%
\chapter{Extraction Approach of Correlated Patterns and associated Condensed Representations}\label{ch_5}
%%%%%%%%%%%%%%%%%%%%%%%%%%%%%%%%%%%%%%%%%%%
\section{Introduction}
This chapter is dedicated to the presentation of our approach called \textsc{Gmjp}. Section \ref{SeCs} is devoted to the analysis of the different
integration mechanism of the constraints of frequency and of correlation. Section \ref{section_algo} presents the description of the \textsc{Gmjp} approach, going from general to specificities. We describe the three different steps of \textsc{Gmjp}, 
then we present \textsc{Opt-Gmjp} the optimized version of the \textsc{Gmjp} algorithm in Section \ref{se-optgmjp}.
We compute the theoretical approximate time complexity of \textsc{Gmjp} in Section \ref{se-comp}. In Section \ref{sec_Regeneration}, we describe the regeneration strategy of the rare correlated patterns from the  $\mathcal{RCPR}$ representation.
We conclude the chapter in Section \ref{ConcChap5}.
% % % % % % % % % % % % % % % % % % %
% % % % % % % % % % % % % % % % % % % % % % %
\section{Integration mechanism of the constraints}\label{SeCs}
This chapter is dedicated to the presentation of the extraction approach of both frequent correlated and rare correlated itemsets as well as their associated condensed representations.
For the case of the frequent correlated patterns, the extraction process is straightforward since the set of frequent correlated patterns induces an ideal order on the itemset lattice and fulfills an anti-monotonic constraint. Whereas, for the rare correlated patterns, we have to handle two constraints of distinct types: monotonic and anti-monotonic. The evaluation order of the constraints is of paramount importance given the opposite nature of the handled constraints of rarity and of correlation. Thus, we distinguish two different possible scenarios:

$\bullet$ First Scenario:  We first apply the rarity constraint and the associated conjunctive closure operator, then we apply the correlation constraint. \\

$\bullet$ Second Scenario: We apply the correlation constraint and the associated  $f_{bond}$ closure operator, then we integrate the rarity constraint.

These two scenarios will be analyzed in what follows in order to justify our choice of the adequate scenario in our proposed extraction approach.
% % % % % % % % % % % % % % %
\subsection{First Scenario} 
% % % % % % % % % % % % % % % % % % % % % % % % % % % % % % % % % % % %
In this case, we firstly extract the rare patterns. Then, we filter the retained rare patterns by the correlation constraint. Thus, only the rare correlated patterns are retained. In this situation, in order to reduce the redundancy among the rare correlated itemsets, we apply the conjunctive closure operator associated
to the conjunctive support \cite{ganter99}. This latter splits the itemset lattice into disjoint equivalence classes where, for each pattern, this closure preserve only the conjunctive support. Consequently, the whole set of the itemsets that appear in the same transactions are merged into the same equivalence class. They share the same conjunctive support, the same conjunctive closure, but they have eventually different disjunctive supports. Thereby,  these rare equivalence classes,  \textit{i.e}, those containing just rare patterns, are evaluated by the anti-monotonic constraint of correlation.  The rare patterns belonging to the same class, are then divided into rare correlated patterns and rare non-correlated ones as shown by Figure \ref{Exp1}.

\begin {figure}
{
	\includegraphics[scale = 0.5]{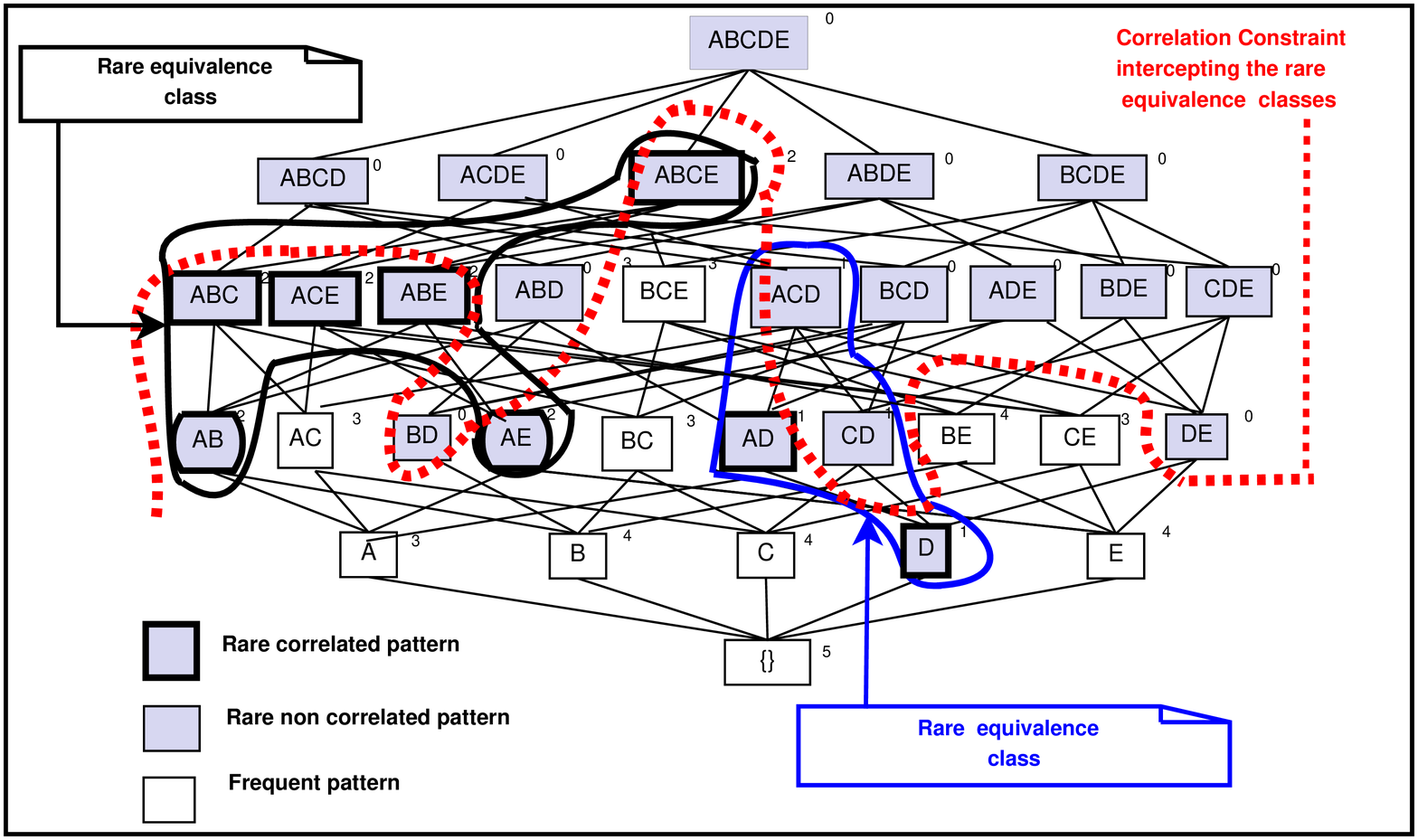}
	\hspace{-3.5cm}
	\caption{Effect of the integration of the correlation constraint for \textit{minsupp} = 3 and \textit{minbond} = 0.3.}
	\label{Exp1}}
\end{figure}

\begin{example}
Let us consider the extraction context given by Table \ref{Base_transactions} 
\textsc{(}Page \pageref{Base_transactions}\textsc{)}. For \textit{minsupp} = 3, we distinguish two rare equivalence classes $\mathcal{C}_1$ and $\mathcal{C}_2$ shown in Figure \ref{Exp1} and composed by the following elements :
\begin{itemize}
	\item $\mathcal{C}_1$ contains the itemsets \texttt{D}, \texttt{AD}, \texttt{CD} and \texttt{ACD}. $\mathcal{C}_1$ has the value of conjunctive support equal to 1 and the conjunctive closed pattern is \texttt{ACD}.
	\item $\mathcal{C}_2$ contains the itemsets \texttt{AB}, \texttt{AE}, \texttt{ABC}, \texttt{ABE}, \texttt{ACE}, and \texttt{ABCE}. $\mathcal{C}_2$ has the value of conjunctive support equal to 2 and the conjunctive closed pattern is \texttt{ABCE}.
\end{itemize}
Let us apply the correlation constraint for a minimal threshold \textit{minbond} = 0.3. For the
$\mathcal{C}_1$ equivalence class,  the patterns $\{$\textsc{(}\texttt{D}, 1, $\displaystyle\frac{1}{1}$\textsc{)}, \textsc{(}\texttt{AD}, 1, $\displaystyle\frac{1}{3}$\textsc{)}$\}$
are rare correlated whereas \textsc{(}\texttt{CD}, 1, $\displaystyle\frac{1}{4}$\textsc{)}, \textsc{(}\texttt{ACD}, 1, $\displaystyle\frac{1}{4}$\textsc{)} are rare non-correlated itemsets.
This is explained by the fact that the elements of the equivalence class $\mathcal{C}_1$ do not share the same disjunctive support, consequently they do not share the same \textit{bond} value. On the other side, all of the elements of the equivalence class $\mathcal{C}_2$ are rare correlated.
\end{example}
% % % % % % % % % % % % % % % % % %
% % % % % % % % % % % % % % % % % % % % % % % % % % % % % % % % % % % % % % % % %
\subsection{Second Scenario} 
The second scenario consists in extracting all of the correlated patterns and partitioning them into equivalence classes thanks to the $f_{bond}$ closure operator, then filtering out the obtained equivalence classes by the rarity constraint. In fact, all the itemsets belonging to the same equivalence class share obviously the same conjunctive, disjunctive supports and the same \textit{bond} measure.
Consequently, when considering the anti-monotonic constraint of correlation, we can distinguish two kinds of classes namely: correlated classes and non-correlated ones. The good question that we have to think about it is what will be the effect of applying the monotonic constraint of rarity within this classes?  In other words, how these equivalence classes will be affected by the interception of the rarity constraint?

In fact, the  $f_{bond}$ closure operator preserves the \textit{bond} value, the conjunctive, the disjunctive as well as the negative supports in the same equivalence class. Consequently, all the itemsets belonging to the same equivalence class present the same behavior regarding to the constraints of rarity and those of correlation.
In this respect,  the correlated patterns of an equivalence class, are either rare correlated or  they are frequent correlated. This property also hold for the non correlated equivalence classes. Therefore, these classes are not affected by the application of the rarity constraint. As shown by Figure \ref{Exp2}, we have
correlated frequent classes, correlated rare classes, non correlated frequent classes and
non correlated rare classes.
% % % % % % % % % % % % % % % % % % %
% % % % % % % % % % % % % % % % % % % % % % % % % % % % % % % % % %

\begin {figure}\parbox{16cm}{
\hspace{-1.0cm}
\includegraphics[scale = 0.45]{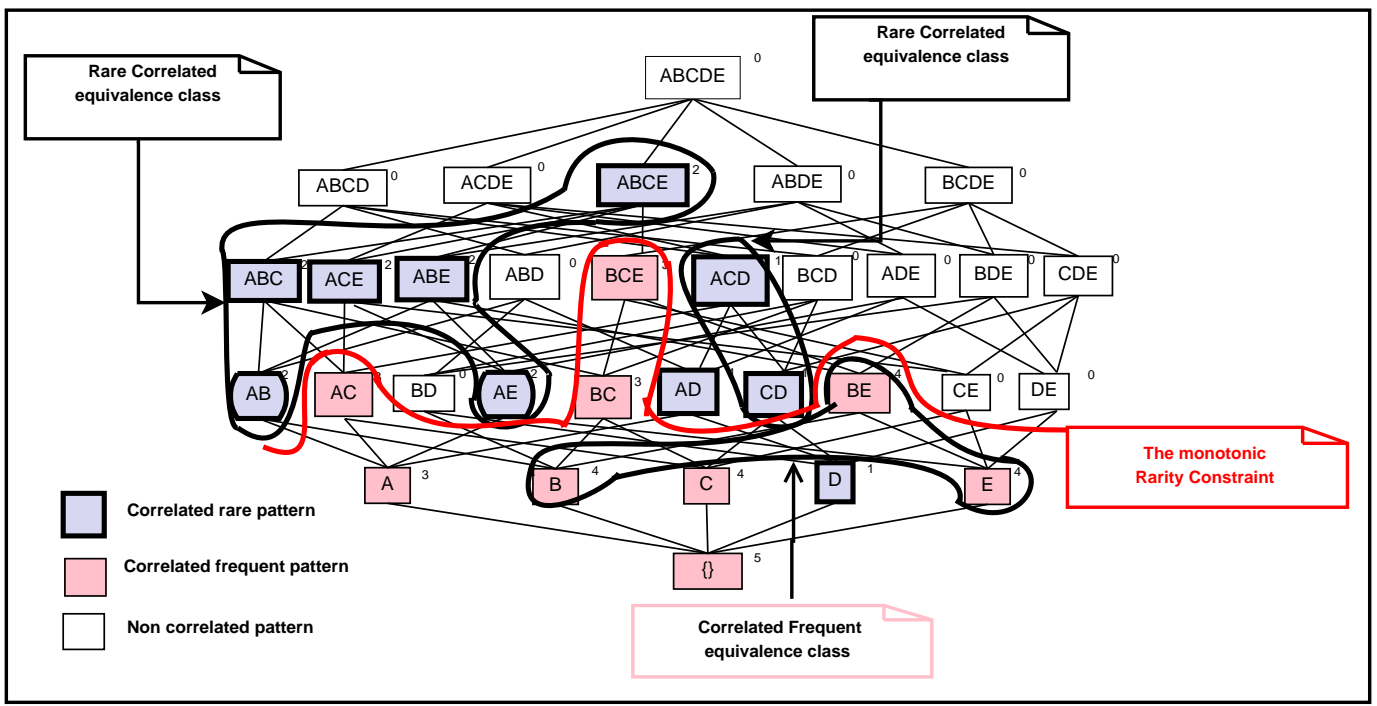}}
\hspace{+0.5cm}
\caption{Effect of the application of the rarity constraint for \textit{minsupp} = 3 and \textit{minbond} = 0.2.}
\label{Exp2}
\end{figure}

% % % % % % % % % % % % % % % % % % % % % % % % % % % % % % % % % %
\subsection{Summary}

The equivalence classes induced by the $f_{bond}$ closure operator present pertinent characteristics. In fact, this privilege is not offered by the conjunctive closure operator \cite{ganter99}. 
In fact, the state of a given pattern in an equivalence class induced by the conjunctive closure operator is not representative of the state of the other patterns of its same class. In this regard, we are motivated for the application of the second scenario within the design of our mining approach.

% % % % % % % % % % % % % % % % % % % % % % % % % % % % %
We introduce, in the next section, our new \textsc{Gmjp} approach
$^{\textsc{(}}$\footnote{\textsc{Gmjp} stands for \textbf{G}eneric \textbf{M}ining of \textbf{J}accard \textbf{P}atterns. We 
note, by sake of accuracy, that the notation of \textit{Jaccard} measure corresponds to the \textit{bond} measure.}$^{\textsc{)}}$.
\section{The \textsc{Gmjp} approach}\label{section_algo}
%%%%%%%%%%%%%%%%%%%%%%%%%%%%%%%%%%%%%%%%%%%%%%%%%%%%%%%%%%%%
%%%%%%%%%%%%%%%%
We introduce in this section the \textsc{Gmjp} approach which allows, according to the user's input parameters, the extraction of the desired output. As shown by Figure \ref{new_overview}, four different scenarios are possible for running the \textsc{Gmjp} approach: \\
$\bullet$ \textbf{First Scenario}: outputs the whole set $\mathcal{FCP}$ of frequent correlated patterns,\\
$\bullet$ \textbf{Second Scenario}: outputs the $\mathcal{RFCCP}$ concise exact representation of the $\mathcal{FCP}$ set,\\
$\bullet$ \textbf{Third Scenario}: outputs the whole set $\mathcal{RCP}$ of rare correlated patterns,\\
$\bullet$ \textbf{Fourth Scenario}: outputs the $\mathcal{RCPR}$ concise exact representation of the $\mathcal{RCP}$ set.
%%%%%%%%%%%

The \textsc{Gmjp} algorithm takes as an input a dataset $\mathcal{D}$, a minimal support threshold \textit{minsupp} and a minimal correlation threshold \textit{minbond}. We mention that \textsc{Gmjp} determines exactly the \textit{support} and the \textit{bond} values of each pattern of the desired output according to the user's parameters.
\begin{figure}[htbp]
\begin{center}
% %\hspace{+1.0cm}
\includegraphics[scale=0.35]{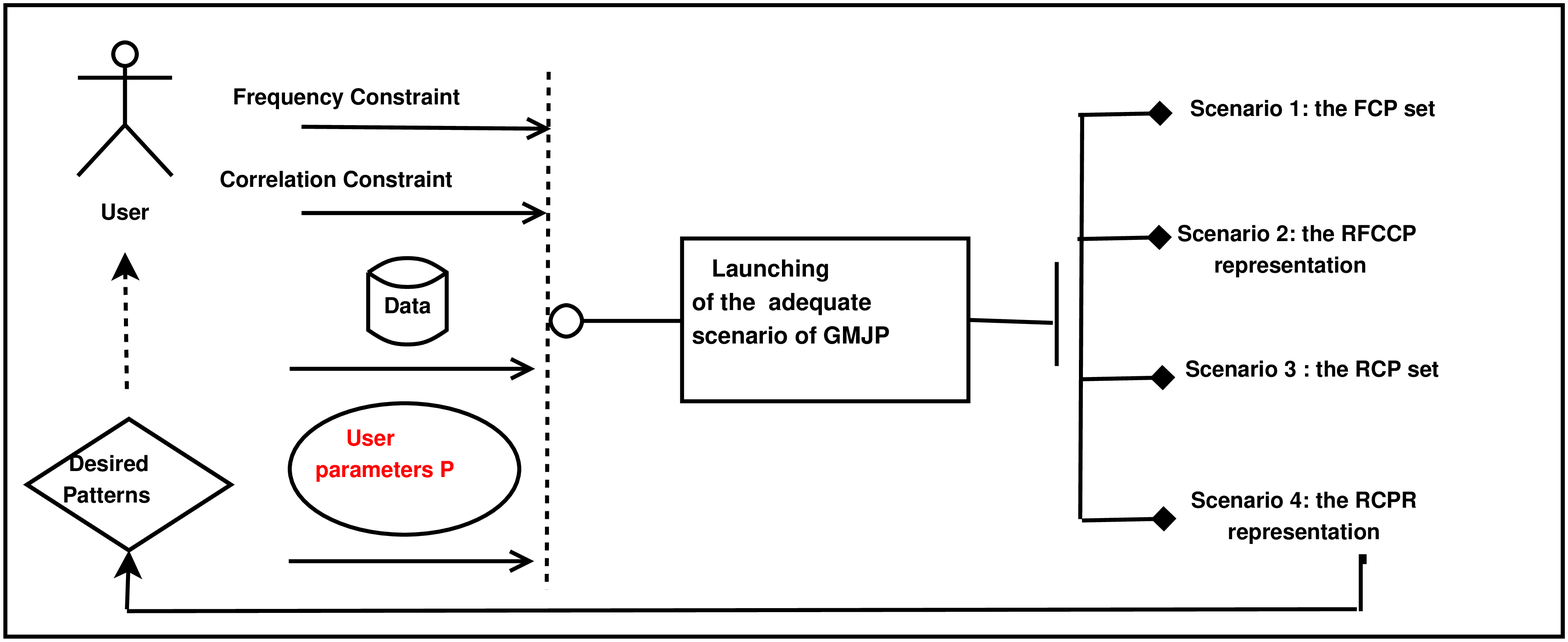}
\caption{Overview of \textsc{Gmjp} approach.}
\label{new_overview}
\end{center}
\end{figure}
%%%%%%%%%%%%%%%%%%%%%%%%%%%%%%%%%%%%%%%%%%%%%%%%%%%%%%%%%%

\subsection{Overview of the approach}

We illustrate the different steps of \textsc{Gmjp} when running the fourth script aiming to extract the $\mathcal{RCPR}$ representation. Our choice of this fourth scenario is motivated by the fact that the extraction of the $\mathcal{RCPR}$ representation corresponds to the most challenging mining task for \textsc{Gmjp}.

In fact, $\mathcal{RCPR}$ is composed by the set of rare correlated patterns which results from the intersection of two theories \cite{mannila97} induced by the constraints of correlation and rarity. So, this set is neither an order ideal nor an order filter.
Therefore, the localization of the elements of the $\mathcal{RCPR}$ representation is more difficult than the localization of theories corresponding to the conjunction of constraints of the same nature. Indeed, the conjunction of anti-monotonic constraints \textsc{(}\textit{resp.} monotonic\textsc{)} is an anti-monotonic constraint \textsc{(}\textit{resp.} monotonic\textsc{)} \cite{luccheKIS05_MAJ_06}.
For example, the constraint ``being a correlated frequent pattern'' is anti-monotonic, since it results from the conjunction of two anti-monotonic constraints namely,  ``being a correlated pattern'' and ``being a frequent pattern''. This constraint induces, then, an order ideal on the itemset lattice.

%%%%%%%%%%%%%%%%%%%%%%%%%%%%%%%%%%%%%%%%%
In fact, the \textsc{Gmjp} algorithm mainly operates in three steps as depicted by Figure \ref{figure_overview}. The pseudo-code of \textsc{Gmjp} is given by Algorithm \ref{AlgoCRPR}.
\begin{enumerate}
\item A first scan of the dataset is performed in order
to extract all the items and assigning to each item the set of transactions in which it appears.
Then, a second scan of the dataset is carried out in order to identify, for each item, the list of the co-occurrent items
\textsc{(}\textit{cf.} Line 1 Algorithm \ref{AlgoCRPR}\textsc{)}.
\item The second step consists in integrating both of the constraints rarity as well as correlation within a mining process of $\mathcal{RCPR}$. In this situation, this problem is split into independent chunks since each item is separately treated. In fact, for each item \textsc{(}\textit{cf.} Line 2 Algorithm \ref{AlgoCRPR}\textsc{)}, a set of candidates is generated \textsc{(}\textit{cf.} Line (b) Algorithm \ref{AlgoCRPR}\textsc{)}. Once obtained, these candidates are pruned using the following pruning strategies:\\
%%%%%%%%%%%%%%%%%%%%%%%%%%%%%
%%%%%%%%%%%%%%%%%%%%%%%%%%%%
\textsc{(}\textbf{\textit{a}}\textsc{)} \textbf{The pruning of the candidates which check the cross-support property} 
\textsc{(}\textit{cf.} Line (i) Algorithm \ref{AlgoCRPR}\textsc{)}\cite{tarekds2010}.
%%%%%%%%%%%%%
In fact, as defined in section \ref{S2chap4} \textsc{(}\textit{cf.} Chapter 4, page \pageref{S2chap4}\textsc{)},
the cross-support property allows to prune non-correlated candidates. 
More clearly, any pattern, containing two items fulfilling the cross-support property w.r.t.
a minimal threshold of correlation, is not correlated. Thus, this
property avoids the computation of its conjunctive and disjunctive
supports, required to evaluate its \textit{bond} value.\\

%%%%%%%%%%%%%%%%%%%%%%%
%%%Noteworthily, the anti-monotony property, fulfilled by the \textit{bond} measure as proven in \cite{Omie03}, is very interesting. Indeed, all the subsets of a correlated pattern are also necessarily correlated. Then, we can deduce that any pattern having at least one uncorrelated proper subset is necessarily uncorrelated. It will thus be pruned without computing the value of its \textit{bond} measure.
%%%%%%%%%%%%%%
\textsc{(}\textbf{\textit{b}}\textsc{)} \textbf{The pruning based on the order ideal of the correlated patterns}
\textsc{(}\textit{cf.} Line (ii) Algorithm \ref{AlgoCRPR}\textsc{)}. Recall that the set of correlated patterns induces an order ideal property.
Therefore, each correlated candidate, having a non correlated subset, will be pruned since it will not be a correlated pattern.

Then, the conjunctive, disjunctive supports and the \textit{bond} value of the retained candidates are computed
\textsc{(}\textit{cf.} Line (iii) Algorithm \ref{AlgoCRPR}\textsc{)}.
Thus, the uncorrelated candidates are also pruned. At the level $n$, the local minimal rare correlated patterns of size $n$ are determined among the retained candidates \textsc{(}\textit{cf.} Line (iv) Algorithm \ref{AlgoCRPR}\textsc{)}.  The local closed rare correlated patterns of size $n-1$ are also filtered \textsc{(}\textit{cf.} Line (v) Algorithm \ref{AlgoCRPR}\textsc{)}. This process comes to an end when there is no more candidates to be generated \textsc{(}\textit{cf.} Line (c) Algorithm \ref{AlgoCRPR}\textsc{)}.
\item The third and last step consists of filtering the global minimal rare correlated patterns
\textsc{(}\textit{cf.} Line 3 Algorithm \ref{AlgoCRPR}\textsc{)}  
and the global rare correlated patterns among the two sets of local minimal rare correlated patterns and local closed ones
\textsc{(}\textit{cf.} Line 4 Algorithm \ref{AlgoCRPR}\textsc{)}.
\end{enumerate}
In what follows, we will explain more deeply these different steps of \textsc{Gmjp}. 
%%%%%%%%%%%
\begin{center}
\begin{figure}[htbp]
%\hspace{+1.0cm}
\includegraphics [scale=0.30] {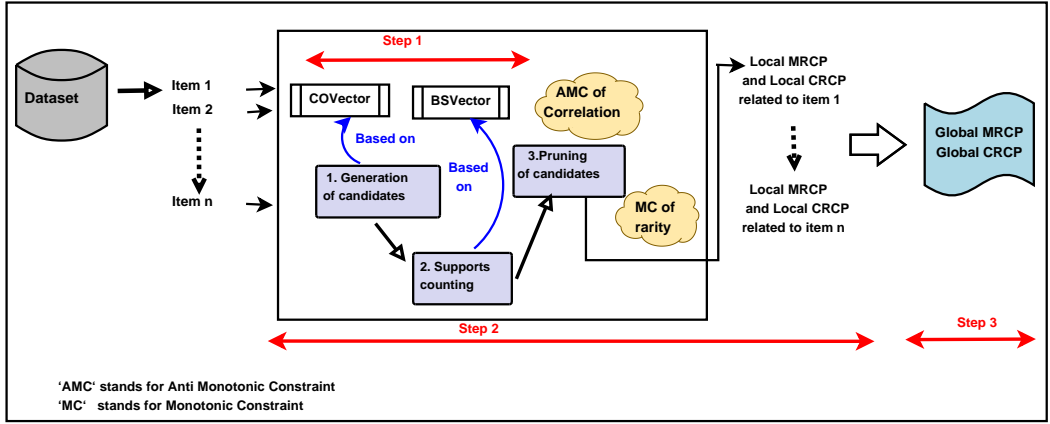}
%}
\caption{Overview of \textsc{Gmjp} when extracting the $\mathcal{RCPR}$ representation.}
\label{figure_overview}
\end{figure}
\end{center}
%\vspace{-0.3cm}
%%%%%%%%%%%%%%%%%%%%%%%%%%%%%%%%%%%%%%%%%%%%%%
\incmargin{0.5em}
\linesnumbered
\restylealgo{algoruled}
\begin{algorithm}\label{AlgoCRPR}
\small{ \small{ \caption{\textsc{Gmjp}}}
\SetVline
\setnlskip{-3pt}
\KwData{
	\begin{enumerate}
		\item An extraction context  $\mathcal{C}$.
		\item A minimal correlation threshold \textit{minbond}.
		\item A minimal conjunctive support threshold \textit{minsupp}.
		\item A specification of the desired result `$\mathcal{RCPR}$`.
	\end{enumerate}
} \KwResult{The concise exact representation $\mathcal{RCPR}$ = $\mathcal{MRCP}$ $\cup$ $\mathcal{CRCP}$.}
\small{\Begin{
		\begin{enumerate}
			\item Scan the dataset $\mathcal{C}$ twice to build the BSVector and the COVector \\
			for all the items
			\item For each item $I$ $\in$ $\mathcal{I}$
			\begin{enumerate}
				\item $n$ =2;
				\item Generate the candidates of size $n$ using the COVector of $I$ \\
				
				\item \While {\textsc{(}The number of the generated candidates is not null\textsc{)}}
				{
					\begin{enumerate}
						\item  Prune these candidates w.r.t. the cross-support property \\
						of the \textit{bond} measure
						\item Prune these candidates w.r.t. the order ideal property\\
						of correlated patterns
						\item Compute the conjunctive and disjunctive supports \\
						and the \textit{bond} value  of the maintained candidates
						\item For each candidate $C$
						
						\textbf{If} \textsc{(}IsCorrelated\textsc{(}C\textsc{)} and IsRare\textsc{(}C\textsc{)}\textsc{)} \textbf{then}
						
						{$/*$ Ckeck-Local Minimality of the candidate $C$ $*/$}
						\begin{itemize}
							\item  Update the set of Local Minimal Rare\\
							Correlated Patterns of size $n$
						\end{itemize}

						\item \hspace{-0.20cm} Find Local Closed Rare Correlated Patterns of size $n$$-$$1$
						\item \hspace{-0.20cm} $n$ = $n$$+$$1$
						\item \hspace{-0.20cm} Generate candidates of size $n$ using the \textsc{Apriori-Gen}\\
						procedure
					\end{enumerate}
				}
				
			\end{enumerate}
			\item Find all Global Minimal Rare Correlated Patterns
			\item Find all Global Closed Rare Correlated Patterns
			\item \Return{$\mathcal{RCPR}$\;}
		\end{enumerate}
	}
}
}%%small
\end{algorithm}
\subsection{First Step: The power of the bit vectors and of co-occurrent vectors}
Initially, the dataset is scanned in order to extract the items and to build, for each item, the bitset called here
``BSVector''. In fact, a bitset is a container that can store a huge number of bits while optimizing the memory consumption
\textsc{(}For example, 32 elements are stored in a memory block of 4 bytes\textsc{)}.
Each block of memory is treated in just one \texttt{CPU} operation by a 32 bits processor.
Therefore, we were motivated for these kinds of structures within the \textsc{Gmjp} algorithm in order to optimize the
conjunctive and the disjunctive supports computations.

Then, the dataset is scanned again in order to identify, for each item $I$, the list of the co-occurrent items which
corresponds to the items occurring in the same transactions as the item $I$. These latter ones are stored in a vector of integers, called here ``COVector''.
We note that one of the main challenges of the \textsc{Gmjp} algorithm is that it allows pushing two constraints of distinct types and to deliver the output with only two scans of the dataset.
We also uphold that the bitsets, when incorporated into the mining process within the \textsc{Gmjp} algorithm, sharply decrease the size of the memory required to store immediate results and significantly save execution costs.
\subsection{Second Step: Getting the Local Minimal and the Local Closed Rare Correlated Patterns without closure computations}
Worth of mention, the main thrust of the \textsc{Gmjp} algorithm is to break the search space into independent sub-spaces.
In fact, for each item $I$, a level-wise mining process is performed using the COVector containing the co-occurrent items of $I$.
At each level $n$, starting by the second level, a set of candidates is generated, then pruned according to the different pruning strategies
described previously. The minimal rare correlated patterns of size $n$, associated to the item $I$, are called \textbf{Local Minimal Rare Correlated Patterns} and they
are determined by comparing their \textit{bond} values versus those of their respective immediate subsets. Similarly, the closed rare correlated patterns of size $n-1$ associated to the item $I$ are called \textbf{Local Closed Rare Correlated Patterns}, and they
are determined by comparing their \textit{bond} values to those of their respective immediate supersets.

It is also important to mention that the implementation of the different stages of this second step \textsc{(}candidate generation,
evaluation and pruning\textsc{)} was based on simple vectors of integers. Thus, we do not require more complex data structure during the
implementation of the \textsc{Gmjp} algorithm. This feature makes \textsc{Gmjp} a practical approach
for handling both monotonic and anti-monotonic constraints even for large datasets.

One of the major challenges in the design of the \textsc{Gmjp} algorithm is how to perform subset and superset checking to efficiently identify Local Minimal and Local Closed patterns$?$ The answer is to construct and manage a multi-map hash structure,
$^{\textsc{(}}$\footnote{We used in our implementation the \textsc{C++} STL Standard Template Library multi-map.}$^{\textsc{)}}$
in order to store at each level $n$ the rare correlated patterns of size $n$. This technique has been shown to be very powerful since it makes the subset and the superset checking practical even on dense datasets.

Thus, our proposed efficient solution \textsc{(}as we prove it experimentally later\textsc{)}
is to integrate both of the monotonic constraint of rarity and the anti-monotonic constraint of
correlation within the mining process and to identify the local closed rare correlated patterns without closure computing.
%%%%%%%%%%%%%%%%%%%%%%%%%%%%%%%%%%%%%%%%%%%%%%%%%%%
\subsection{Third Step: Filtering the Global Minimal and the Global Closed Rare Correlated patterns}
After identifying the local minimal and the local closed rare correlated patterns associated to each item $I$ of the dataset $\mathcal{D}$, the third step consists
in filtering the $\mathcal{MRCP}$ set of Global Minimal Rare Correlated patterns and the $\mathcal{CRCP}$ set of Global Closed Rare Correlated patterns.
This task is performed using two distinct multi-map hash structures. In fact, for each local minimal rare correlated pattern $LM$ previously identified, we check
whether it has a direct subset \textsc{(}belonging to the whole set of local minimal patterns\textsc{)} with the same \textit{bond} value.  If it is not the case, then the local minimal pattern $LM$
is a global minimal rare pattern and it is added to the  $\mathcal{MRCP}$  set. Similarly, for each local closed rare correlated pattern $LC$ previously identified, we check whether
it has a direct superset \textsc{(}belonging to the whole set of local closed patterns\textsc{)} with the same \textit{bond} value.
If it is not the case, then the local closed pattern $LC$ is a global closed rare pattern and it is added to the  $\mathcal{CRCP}$ set of Closed rare correlated patterns.
\begin{remark}
We note that we are limited to the description of the extraction of the $\mathcal{RCPR}$ representation since the post-processing operation of the representations $\mathcal{MM}$ax$\mathcal{CR}$, $\mathcal{M}$in$\mathcal{MCR}$ from the $\mathcal{RCPR}$ representation is obviously done and we prove that the needed execution time is negligible.
\end{remark}

In what follows, we illustrate with a running example of the \textsc{Gmjp} algorithm.
%%%%%%%%%%%%%%%%%%%%%%%%%%%%%%%%%%%%%%%%%%%%%%%%%%%%%%%%%%%
\subsection{Running example}
Let us consider the extraction context $\mathcal{C}$ sketched by Table \ref{Base_transactions}
\textsc{(}Page \pageref{Base_transactions}\textsc{)}. 
First, the BSVectors and the COVectors associated to each item of this dataset are constructed, as we plot by Figure \ref{figure_COVector}.
%%%%
\begin{figure}[htbp]
\begin{center}
%\parbox{7.cm}{
\includegraphics [scale = 0.3]{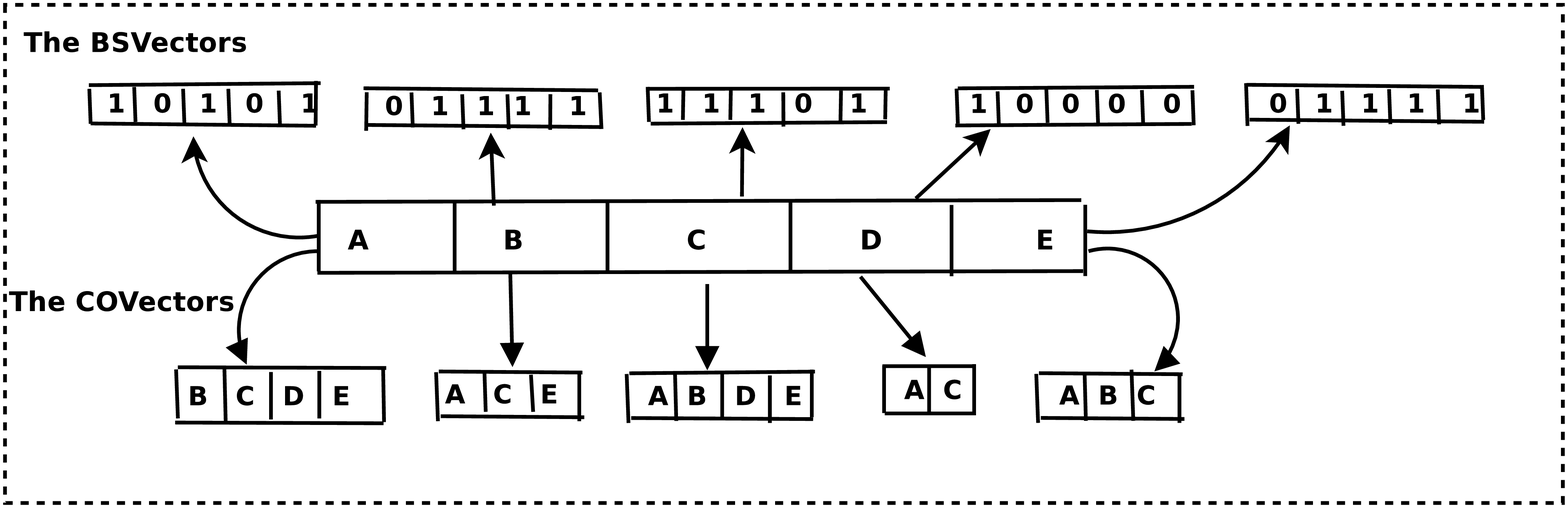}
%}
\caption{The BSVectors and the COVectors associated to the items of the extraction context $\mathcal{C}$.}
\label{figure_COVector}
\end{center}
\end{figure}
\vspace{-0.2cm}
%%%%%%%%%
These BSVectors are next used to compute the conjunctive and the disjunctive supports. We have, for example, the item $A$  which belongs to the transactions $\{1, 3, 5\}$ and the item $C$ which belongs to the transactions $\{1, 2, 3, 5\}$.
We, then, have \textit{Supp}\textsc{(}$\wedge$\textit{AC}\textsc{)} = 3 and \textit{Supp}\textsc{(}$\vee$\textit{AC}\textsc{)}\textsc{)} = 4.

%Figure \ref{figure_Step3} shows the sets of the local Minimal and the local closed rare correlated patterns as well as the
%obtained global minimal and the global closed rare correlated ones. We remark that all the local closed rare correlated patterns are also global closed ones. However, for the minimal patterns, only
%the local minimal pattern $ACD$ is not a global minimal pattern since it has a subset
%$CD$ sharing the same \textit{bond} value equal to 0.25. We also remark that the item $E$
%does not produce any local minimal nor local closed rare correlated patterns. In fact, the candidates related to the item $E$
%are $AE$\textsc{(}2;5\textsc{)}, $CE$\textsc{(}3;5\textsc{)} and  $BE$\textsc{(}4;4\textsc{)}. However, the candidates
%$AE$ and $CE$ have been already extracted through respectively the items $A$ and $C$. Regarding the candidate $BE$\textsc{(}4;4\textsc{)}, it is not extracted since it is a frequent correlated pattern. Thus, the mining process
%comes to an end at the second level for the item $E$.
%%%%%%%%%%
The local minimal and the local closed correlated rare patterns associated to each item  $I$ of the dataset $\mathcal{D}$, are extracted. A detailed example of the process of the item $A$ is given by Figure \ref{figure_ExpItemA}.
%%%
\begin{figure}[htbp]
\begin{center}
%\parbox{7.cm}{
\includegraphics[scale = 0.33]{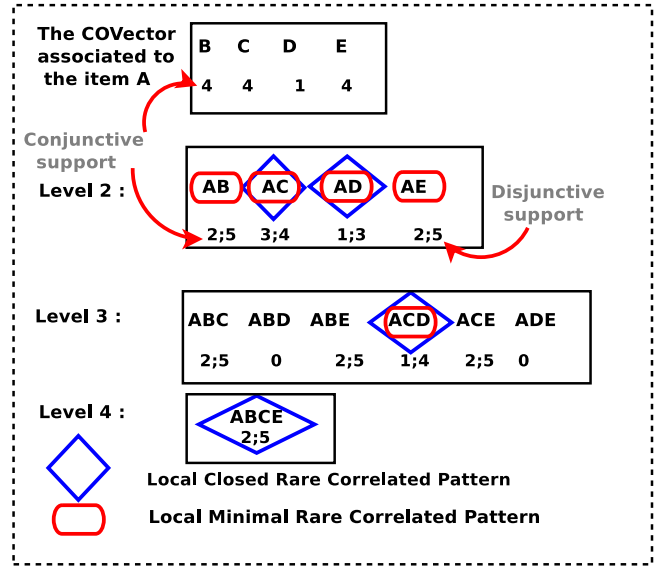}
%}
\caption{Mining Local Minimal and Local Closed Rare Correlated Patterns for the item $A$.}
\label{figure_ExpItemA}
\end{center}
\end{figure}
%\vspace{-0.3cm}
The finally obtained $\mathcal{RCPR}$ representation, for \textit{minsupp} = 4 and for \textit{minbond} = 0.20, is composed by the following global minimal and global closed correlated patterns:
$\mathcal{RCPR}$ = $\{$ \textsc{(}$A$, $3$, $\displaystyle\frac{3}{3}$\textsc{)},
\textsc{(}$D$, $1$, $\displaystyle\frac{1}{1}$\textsc{)},
\textsc{(}$AB$, $2$, $\displaystyle\frac{2}{5}$\textsc{)},
\textsc{(}$AC$, $3$, $\displaystyle\frac{3}{4}$\textsc{)},
\textsc{(}$AD$, $1$, $\displaystyle\frac{1}{3}$\textsc{)},
\textsc{(}$AE$, $2$, $\displaystyle\frac{2}{5}$\textsc{)},
\textsc{(}$BC$, $3$, $\displaystyle\frac{3}{5}$\textsc{)},
\textsc{(}$CD$, $1$, $\displaystyle\frac{1}{4}$\textsc{)},
\textsc{(}$CE$, $3$, $\displaystyle\frac{3}{5}$\textsc{)},
\textsc{(}$ACD$, $1$, $\displaystyle\frac{1}{4}$\textsc{)},
\textsc{(}$BCE$, $3$, $\displaystyle\frac{3}{5}$\textsc{)}
and \textsc{(}$ABCE$, $2$, $\displaystyle\frac{2}{5}$\textsc{)}$\}$.

%%%%%%%%%%%%%%%%%%%%%%%%%%%%%%%%%%%%%%%%%%%%%%%%%%%%%%%%%%

Last, it is important to notice that \textsc{Gmjp} is not an exclusive approach in the sense that it can be coupled with other efficient approaches to mine statistically significant patterns.

%%%%%%%%%%%%%%%%%%%%%%
In the next section, we present \textsc{Opt-Gmjp} the optimized version of the \textsc{Gmjp} algorithm.
\section{\textsc{Opt-Gmjp}: The optimized version of \textsc{Gmjp}} \label{se-optgmjp}
In this section, we present \textsc{Opt-Gmjp} the optimized version of \textsc{Gmjp} approach.
Our improvements cover the four different scenarios S1, S2, S3 and S4 of the \textsc{Gmjp} approach.
Nevertheless, we describe specifically  the optimization of the third scenario S3. In fact, the
latter deals with two constraints of distinct types namely the anti-monotonic constraint of correlation and the monotonic constraint of rarity.

We start  by presenting a generic overview of the \textsc{Opt-Gmjp} algorithm which is illustrated by Figure \ref{figure-optgmjp}.  The pseudo code of \textsc{Opt-Gmjp}, when running the third scenario S3, is given by Algorithm \ref{AlgoCori} while all of the used notations are illustrated in table \ref{notationsTab}. We note that the ``MC`` notation stands for ``Monotonic Constraint`` while the ``AMC`` notation stands for ``Anti-Monotonic Constraint``.
%%%%%%%%%%%%%%%%%%%%%%%%%%%%%%%%%
\begin{table}\begin{center}
\footnotesize{
	\begin{tabular}{|lcl|}
		\hline
		$\mathcal{I}$& :& The set of the distinct items.\\
		$C_{n}$& :&A candidate itemset  $C$ of size $n$.\\
		$\mathcal{CAND}_{n}$& :&The Candidate itemsets of size $n$.\\
		$\mathcal{RCP}$& :&The Rare correlated pattern set.\\
		\hline
\end{tabular}}
\caption{The notations used within the \textsc{OptGmjp} algorithm.} \label{notationsTab}
\end{center}
\end{table}

%%%%%%%%%%%%%%%%%%%%%%%%%%%%%%%%
\begin{table}
\parbox{.4\linewidth}{
\centering
\twlrm{
	\begin{tabular}{llllll}
		\hline\noalign{\smallskip}
		& $\texttt{A}$ & $\texttt{B}$ & $\texttt{C}$ & $\texttt{D}$ & $\texttt{E}$\\
		\noalign{\smallskip}
		\hline
		\noalign{\smallskip}
		1 & $\times$  &          &$\times$    & $\times$&         \\
		2 &           & $\times$ &$\times$    &         & $\times$ \\
		3 & $\times$  & $\times$ &$\times$    &         & $\times$ \\
		4 &           & $\times$ &            &         & $\times$ \\
		5 & $\times$  & $\times$ &$\times$    &         & $\times$  \\
		\hline
	\end{tabular}
} % %twlrm
\caption{The Initial extraction context $\mathcal{C}$.}\label{BD}
}
\hfill
% % % % % % %
\parbox{.4\linewidth}{
\centering
\hspace{-0.6cm}
\twlrm{
	\begin{tabular}{llllll}
		\hline\noalign{\smallskip}
		& T1 & T2 & T3 & T4 & T5\\
		\noalign{\smallskip}
		\hline
		\noalign{\smallskip}
		$\texttt{A}$ & 1  &  0        &1    & 0&  1       \\
		$\texttt{B}$ &       0    & 1 &1    &     1    & 1 \\
		$\texttt{C}$ & 1  & 1 & 1    &   0      & 1 \\
		$\texttt{D}$ &       1    & 0 &       0     &     0    & 0 \\
		$\texttt{E}$ & 0  & 1 & 1    &   1      & 1  \\
		\hline
	\end{tabular}
	\caption{The transformed extraction context $\mathcal{C}*$.}\label{BDnew}
} % %twlrm
} % %parbox
\end{table}
%%%%%%%%%%%%%%%%%%%%%%%%%%%%%%%%%%%%

% % % % % % % % % % % % % % % % %
% % % % % % % % % % % % % % % %
In fact, the proposed optimizations are of two types: the transformation of the initial extraction context and the reduction of the number of distinct constraints evaluation.
In fact,  the measurement of the impact of pushing the monotonic constraint
of rarity and the anti-monotonic constraint of correlation early within the \textsc{Opt-Gmjp} algorithm helps to measure the selectivity power of each type of constraint during the mining process.

\begin{enumerate}
\item \textbf{Transformation of the initial extraction context} \\
Initially, the extraction context is scanned once to build the new transformed extraction context and to construct an in-memory structure
\textsc{(}\textit{cf.} Line 1 Algorithm \ref{AlgoCori}\textsc{)}.
In fact,  we assign to each item a bitset, each column of this bitset indicates the presence or the absence
of the item in a specified transaction. For example, if the third column of this list contains 0, then the item $I$ is not present in the third transaction. The transformed extraction context associated to the initial context of Table \ref{BD} is given by Table \ref{BDnew}.

\item \textbf{Initialization of the tree-data structure }\\
Initially, the $\mathcal{RCP}$ set of rare correlated pattern is set to the empty-set \textsc{(}\textit{cf.} Line 2 Algorithm \ref{AlgoCori}\textsc{)}. Then, we compute the conjunctive support of the items.
The items are then sorted in an ascending order of their support \textsc{(}\textit{cf.} Line 3 Algorithm \ref{AlgoCori}\textsc{)}.
All the items are added to the nodes of the first level of our tree structure. Thus, they constitute the set of 1-itemsets candidates $\mathcal{CAND}_{1}$ \textsc{(}\textit{cf.} Line 4 Algorithm \ref{AlgoCori}\textsc{)}.
All the items are evaluated according to the monotone constraint of rarity: The rare items are printed to the output set, \textsc{(}\textit{cf.} Line  \textsc{(}b\textsc{)} Algorithm \ref{AlgoCori}\textsc{)},  while frequent ones are not pruned, \textit{i.e.}, they
are maintained.

\item  \textbf{Solving recursively the mining problem}\\
This step is the main optimization of our algorithm. The idea consists in dividing our mining problem into sub-problems.
For each item \texttt{I},   a sub-tree is constructed and a depth-first traversal  is therefore performed.
The candidates of size $n$ are generated by building intersection of itemsets of size $n-1$
\textsc{(}\textit{cf.} Line  \textsc{(}d\textsc{)} Algorithm \ref{AlgoCori}\textsc{)}. They are then evaluated as follows:
\begin{itemize}
\item 	 \textbf{Evaluation of the anti-monotone constraint of correlation}: if the candidate is correlated, we have to distinguish two possible cases: 
\begin{itemize}
	\item \textbf{if the candidate is rare correlated}, then it is added to the result set  \textsc{(}\textit{cf.} Line  \textsc{(}i.1.\textsc{)} Algorithm \ref{AlgoCori}\textsc{)}.
	\item \textbf{if the candidate is frequent correlated}, then it will not be pruned. In fact, a frequent pattern can have rare supersets. In this case, the candidate is maintained and we continue to develop its sub-tree.
\end{itemize}
\item \textbf{if the candidate is not correlated}, then all its supersets will not be correlated according to the monotonicity property of the non-correlation constraint. In this case, the candidate and the associated sub-tree are pruned
\textsc{(}\textit{cf.} Line  \textsc{(}i.2.\textsc{)} Algorithm \ref{AlgoCori}\textsc{)}.
\end{itemize}

This generation and evaluation process is continued, the dedicated procedure is recursively called while there is a number of candidates to be generated \textsc{(}\textit{cf.} Line  \textsc{(}g\textsc{)} Algorithm \ref{AlgoCori}\textsc{)}. 
Finally,  the used  memory is freed and the $\mathcal{RCP}$ is outputted
\textsc{(}\textit{cf.} Line  6 Algorithm \ref{AlgoCori}\textsc{)}. The used data structure is illustrated by Figure \ref{figure-exp-tree}.
%%%%%%%
\item \textbf{A running Example} 

The example shown in Figure \ref{figure-exp-tree}, illustrates how \textsc{Opt-Gmjp} works. The tree is made from five sorted items, \texttt{D},
\texttt{A}, \texttt{B}, \texttt{C}, and \texttt{E}, with their respective supports 1, 3, 4, 4 and 4 respectively.
We  begin by the node containing the item having the lowest support: which is \texttt{D}.
The anti-monotone constraint  'Amc` is having the \textit{bond} value  $\geq$ 0.20 and the
monotone constraint  'Mc` is having the conjunctive support $<$ 4.
Intersecting \texttt{D} with \texttt{A}, \texttt{B}, \texttt{C}, and \texttt{E} produces \texttt{DA}, \texttt{DB}, \texttt{DC} and \texttt{DE}.
The candidates \texttt{DB} and \texttt{DE} are pruned since they have null support. Thus, their supersets will also have null support.
The candidates \texttt{DA} and \texttt{DC} are rare correlated, thus they pass both constraints Amc and Mc. Their intersection produces
\texttt{DAC} with support 1 and bond value equal to  $\displaystyle\frac{1}{4}$, which also fulfills both constraints.
The rare correlated itemsets are added to the output set and the \texttt{D}-sub-tree is deleted  from our tree.
The \texttt{A}-sub-tree, \texttt{B}-sub-tree, \texttt{C}-sub-tree and \texttt{E}-sub-tree are successively built.
The same process is repeated recursively until there's no more candidates to be generated and evaluated.
Finally,  the obtained result set of correlated rare itemsets is composed by:
\textsc{(}$D$, $1$, $\displaystyle\frac{1}{1}$\textsc{)},
\textsc{(}$A$, $3$, $\displaystyle\frac{3}{3}$\textsc{)},
\textsc{(}$AB$, $2$, $\displaystyle\frac{2}{5}$\textsc{)},
\textsc{(}$AC$, $3$, $\displaystyle\frac{3}{4}$\textsc{)},
\textsc{(}$AE$, $2$, $\displaystyle\frac{2}{5}$\textsc{)},
\textsc{(}$BC$, $3$, $\displaystyle\frac{3}{5}$\textsc{)},
\textsc{(}$CE$, $3$, $\displaystyle\frac{3}{5}$\textsc{)},
\textsc{(}$DA$, $1$, $\displaystyle\frac{1}{3}$\textsc{)},
\textsc{(}$DC$, $1$, $\displaystyle\frac{1}{4}$\textsc{)},
\textsc{(}$ABC$, $2$, $\displaystyle\frac{2}{5}$\textsc{)},
\textsc{(}$ABE$, $2$, $\displaystyle\frac{2}{5}$\textsc{)},
\textsc{(}$ACE$, $3$, $\displaystyle\frac{3}{4}$\textsc{)},
\textsc{(}$BCE$, $3$, $\displaystyle\frac{3}{5}$\textsc{)},
\textsc{(}$DAC$, $1$, $\displaystyle\frac{1}{4}$\textsc{)},
and \textsc{(}$ABCE$, $2$, $\displaystyle\frac{2}{5}$\textsc{)}.
\end{enumerate}
% % % % % % % % % % % % % % % % % % % % % % % % % % % %
%%%%%%%%%%%%%%%%%%%%%%%%%%%%%%%%%%%%%%%%%%%%%%%%%%%%%%%%
\incmargin{0.5em}
\restylealgo{algoruled}
%\linesnumbered
\begin{algorithm}
\small{ \small{ \caption{\textsc{Opt-Gmjp}}}
\SetVline
\setnlskip{-4pt}
\KwData{
	An Extraction Context $\mathcal{C}$, a minimal correlation threshold \textit{minbond} and a minimal conjunctive support threshold \textit{minsupp}.
} \KwResult{The $\mathcal{RCP}$ set of Rare Correlated Patterns with their respective supports and \textit{bond} values.}
\small{\Begin{
		\begin{enumerate} % % % %enum1
			\item  Scan the extraction context $\mathcal{C}$  once to find $\mathcal{I}$ the set of all the distinct items and to build their associated bitsets
			\item $\mathcal{RCP}$ := $\emptyset$  
			\item Computing the conjunctive support of the items and sorting them in an ascendant order of their support value.
			\item $\mathcal{CAND}_{1}$ := $\mathcal{I}$ 
			\item \textbf{for} each item $I$ $\in$ $\mathcal{I}$ \textsc{(}strating with the item with the lowest support\textsc{)} \textbf{do}
			\begin{enumerate} % % % %enum2
				\item Computing the disjunctive support using the bitsets
				\item \textbf{if} $I$ is rare \textbf{then} 
				
				$\mathcal{RCP}$ := $\mathcal{RCP}$ $\cup$  $I$ 
				
				\textbf{end if}
				\item n := 2
				\item $\mathcal{CAND}_{n}$:= generate-Candidate ($\mathcal{CAND}_{n-1}$) $\blacklozenge$
				\item \textbf{for} each candidate $C_{n}$ $\in$  $\mathcal{CAND}_{n}$  \textbf{do}
				\begin{enumerate} % % % %enum3
					\item  \textbf{if} $C_n$ is correlated \textbf{then}

					\textbf{if} $C_n$ is rare \textbf{then}

					(i.1.)   $\mathcal{RCP}$ := $\mathcal{RCP}$ $\cup$   $C_n$ 
					
					\textbf{end if}
					
					\textbf{else}
					
					(i.2.) 	$\mathcal{CAND}_{n}$ := $\mathcal{CAND}_{n}$ $\backslash$ $C_n$
					
					\textbf{end if}
					
				\end{enumerate} % % % %enum3
				
				\textbf{end for}
				
				\item n := n + 1
				\item Loop to ($\blacklozenge$) while  \textsc{(}$\mathcal{CAND}_{n}$ $\neq$ $\emptyset$\textsc{)}
				
				\textbf{end for}
				
			\end{enumerate} % % % %enum2
			\item \textbf{\Return{$\mathcal{RCP}$}}
		\end{enumerate} % % % %rnum 1
	}
} % %begin
}%%small
\label{AlgoCori}
\end{algorithm}
% % % % % % % % % % % % % % % % % % % % % % % % % % % % %
% % % % % % % % % % % % % % % % % % % % % % % % % % % % % % % % % % % % % % % %
\begin{landscape}
\pagestyle{empty}
\begin{figure}
\hspace{-1.5cm}
%\begin{center}
\includegraphics[scale = 0.65]{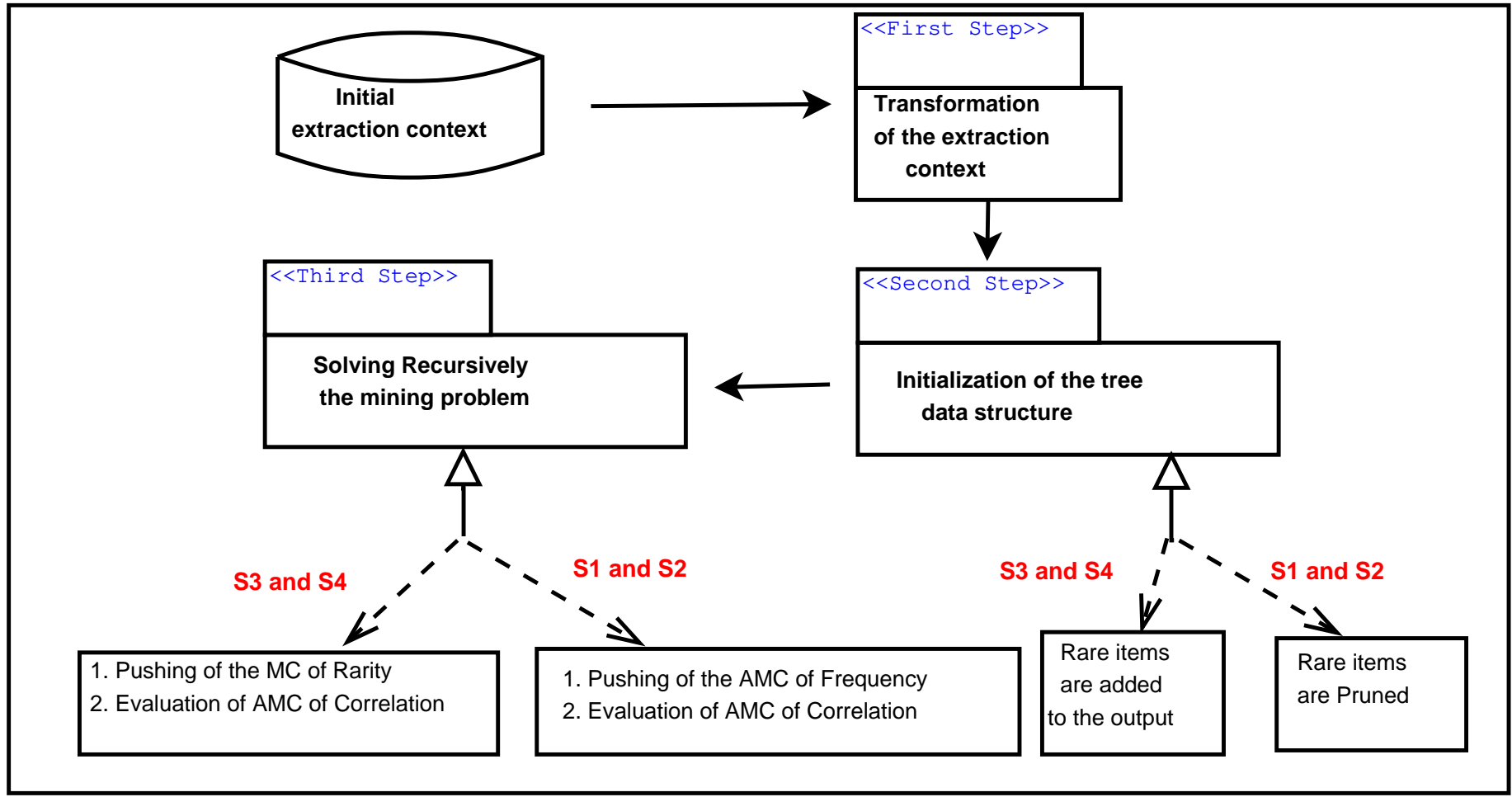}
\caption{An Overview of the \textsc{Opt-Gmjp} algorithm.}\label{figure-optgmjp}
%\end{center}
\end{figure}
%%%%%%%%%%%%%%%%%%%%%%%%%%
\begin{figure}
\hspace{-2.0cm}
%\begin{center}
\includegraphics[scale = 0.55]{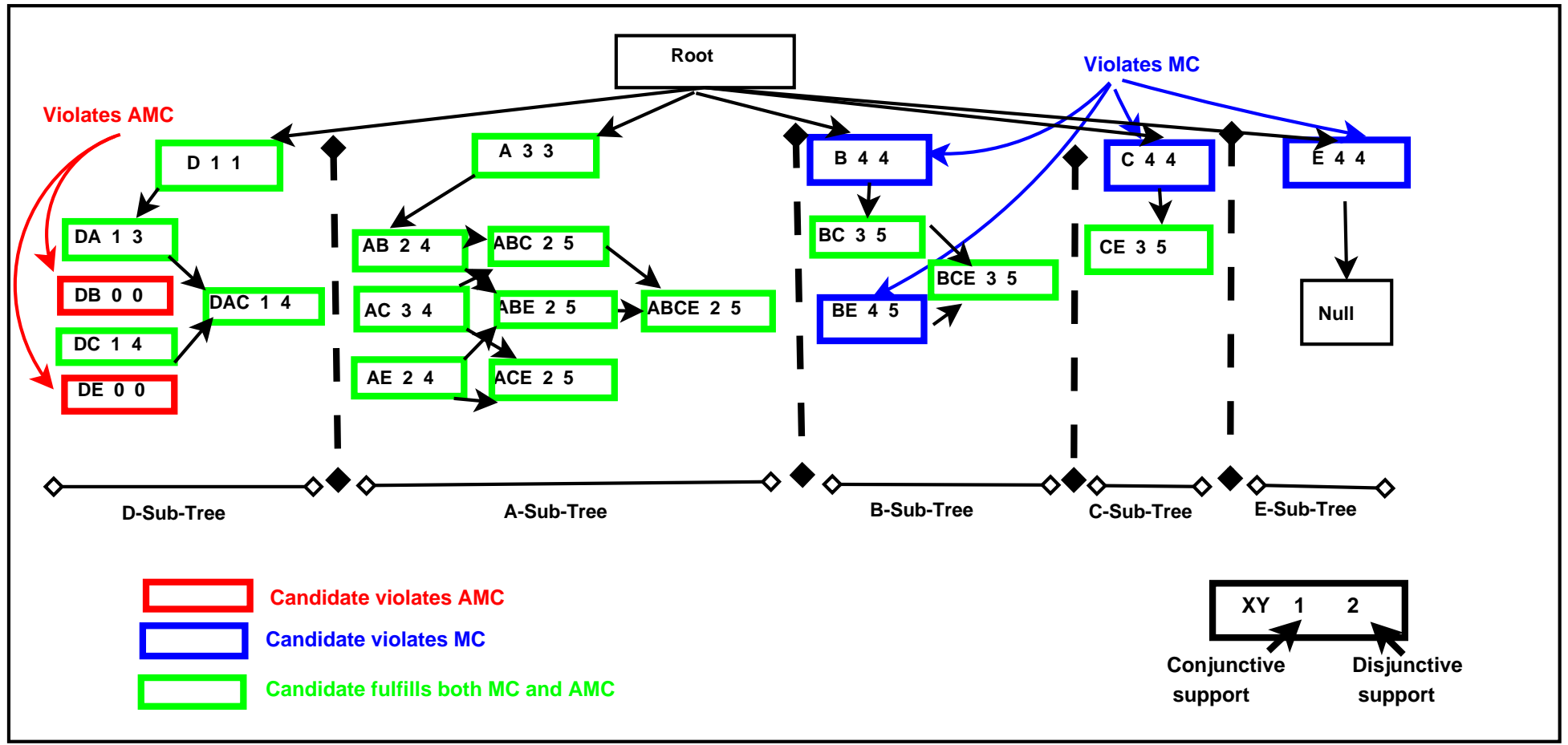}
\caption{The tree data-structure used within the \textsc{Opt-Gmjp} approach and associated to the extraction context given by Table \ref{BDnew}.}
\label{figure-exp-tree}
%\end{center}
\end{figure}
\end{landscape}
% % % % % % % % % % % % %
% % % % % % % % % % % % % % % % % % % % % % %

%%%%%%%%%%%%%%%%%%%%%%%%%%%%%%%%%%%%%%%%
In the next section, we present our analysis of the theoretical time complexity of the \textsc{Gmjp} algorithm.
\section{Theoretical Time Complexity} \label{se-comp}
Proposition \ref{comp} gives the theoretical time complexity of the \textsc{Gmjp} algorithm when running the fourth scenario dedicated to the extraction of the $\mathcal{RCPR}$ representation.

\begin{proposition} \label{comp}
The worst case time complexity of the first step is bounded by $O\textsc{(}N \times M\textsc{)}$, that of the second step is bounded by $O\textsc{(}\textsc{(}N^3 + \textsc{(}N^2 \times M\textsc{)}\textsc{)}\times 2^N\textsc{)}$, while that of the third step is bounded by  $O\textsc{(}N^2\textsc{)}$, where
$M$ = $|\mathcal{T}|$ and $N$ = $|\mathcal{I}|$. The theoretical complexity in the worst case of the \textsc{Gmjp} algorithm is bounded by the sum of those of its three steps.
\end{proposition}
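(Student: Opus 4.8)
The plan is to establish the bound phase by phase, following the three-step structure of Algorithm \ref{AlgoCRPR}, and then to conclude by additivity of the cost of sequentially composed phases.

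For the \textbf{first step} I would argue as follows. The extraction context is scanned exactly twice. The first scan builds, for each of the $N$ items, its BSVector: reading the $M$ transactions and setting, for every item met in a transaction, the corresponding bit, which in the worst case (every transaction containing every item) costs $O(N\times M)$. The second scan fills the COVectors by recording, transaction by transaction, the co-occurring items; this is again dominated by $O(N\times M)$. Adding the two scans keeps the first step within $O(N\times M)$, the announced bound.

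The \textbf{second step} is the core of the argument and, I expect, the main obstacle, since it is there that the exponential factor appears and that the per-candidate bookkeeping has to be made precise. The outer loop ranges over the $N$ items; for a fixed item, the level-wise generation driven by \textsc{Apriori-Gen} may, when no pruning triggers, enumerate the whole power-set of its co-occurrent items, i.e.\ up to $O(2^{N})$ candidates --- this is the worst-case count I would use. For each generated candidate $C$ the algorithm performs a constant number of sub-routines whose cost I would bound individually: the cross-support test, evaluated on the pairs of items of $C$, in $O(N^{2})$; the order-ideal pruning against already discarded patterns, done through the multi-map hash structure on the immediate subsets of $C$, in $O(N^{2})$; the computation of $\textit{Supp}(\wedge C)$ and $\textit{Supp}(\vee C)$ by intersecting and unioning the BSVectors, in $O(N\times M)$; the local-minimality and local-closedness checks, which compare the \textit{bond} value of $C$ with those of its immediate subsets, respectively supersets, retrieved from the hash structure, each in $O(N^{2})$; and finally the next-level generation by \textsc{Apriori-Gen}, again $O(N^{2})$ per candidate. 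The dominant per-candidate cost is therefore $O(N^{2}+N\times M)$, and multiplying by the $N\cdot 2^{N}$ candidates explored over the whole outer loop gives
\[
O\left(N\cdot 2^{N}\cdot\bigl(N^{2}+N\times M\bigr)\right)=O\left(\bigl(N^{3}+N^{2}\times M\bigr)\cdot 2^{N}\right),
\]
the claimed bound for the second step. The delicate point here is to make sure that the disjunctive support is obtained by a union of bitsets (cost $O(N\times M)$) rather than through the inclusion-exclusion identities of Lemma \ref{lemmaidentitésinclusionexclusion}, which would contribute an extra exponential factor; this is precisely why the BSVectors were introduced, and I would flag it explicitly.

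For the \textbf{third step}, I would observe that filtering the global minimal (resp.\ global closed) rare correlated patterns out of the local ones amounts, for each collected local pattern, to looking it up against a dedicated multi-map hash structure on its immediate subsets (resp.\ supersets) and comparing \textit{bond} values; since a hash access is $O(1)$ amortised and at most $O(N)$ immediate neighbours are inspected, this phase runs in $O(N^{2})$ under the worst-case sizing used above. Finally, since the three steps are executed one after another, the worst-case running time of \textsc{Gmjp} is the sum $O(N\times M)+O\bigl((N^{3}+N^{2}\times M)\cdot 2^{N}\bigr)+O(N^{2})$ of the three bounds, which is exactly the statement of Proposition \ref{comp}.
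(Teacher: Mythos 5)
Your proposal is correct and follows essentially the same route as the paper's proof: the same three-phase decomposition, the same per-candidate accounting in the second step (cross-support and order-ideal pruning in $O(N^{2})$, support computation in $O(N\times M)$, multiplied by the $N\cdot 2^{N}$ worst-case candidate count), and the same conclusion by summing the three bounds. Your explicit remark that the disjunctive support must be obtained by bitset union rather than by inclusion-exclusion is a useful clarification the paper leaves implicit, but it does not change the argument.
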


\begin{proof}
First of all, let us recall the respective roles of the distinct steps of the \textsc{Gmjp} algorithm.
\begin{itemize}
\item \textbf{The first step:} \textit{Scanning the extraction context twice in order to build the bitset vector and the co-occurring vector associated to each
	item $I$.}\\
The complexity $\mathcal{C}_{_{1}}$ of this step, is equal to, $\mathcal{C}_{_{1}}$ =  2 $\times$ $O\textsc{(}N \times M\textsc{)}$ $\approx$ $O\textsc{(}N \times M\textsc{)}$.

\item \textbf{The second step:} \textit{Extracting the local minimal and the local closed rare correlated patterns.}\\
The cost of this step is equal to those of its associated instructions which are as follows:
\begin{enumerate}
	\item The cost of the initialization of the integer $n$ carried out in line \textsc{(}a\textsc{)} \textsc{(}\textit{cf.} Algorithm \ref{AlgoCRPR}\textsc{)} is in $O\textsc{(}1\textsc{)}$.
	\item The generation of the candidates of size $n$, \textsc{(}\textit{cf.} line \textsc{(}b\textsc{)} in Algorithm \ref{AlgoCRPR}\textsc{)}, is done in $O\textsc{(}N-1\textsc{)}$ since in the worst case the
	number of generated candidates is $N-1$.

	\item The cost of the pruning of candidates \textit{w.r.t.} the cross-support property of the bond measure is done in $O\textsc{(}N^2\textsc{)}$ \textsc{(}\textit{cf.} line \textsc{(}i\textsc{)} in Algorithm \ref{AlgoCRPR}\textsc{)}.
	
	\item The cost of the pruning of candidates \textit{w.r.t.} the ideal order property of correlated patterns is done in $O\textsc{(}N^2\textsc{)}$ \textsc{(}\textit{cf.} line \textsc{(}ii\textsc{)} in Algorithm \ref{AlgoCRPR}\textsc{)}.
	
	\item  The cost of the computation of the conjunctive and the disjunctive supports of the itemset candidates is bounded by $O\textsc{(}N \times M\textsc{)}$ \textsc{(}\textit{cf.} line \textsc{(}iii\textsc{)} in Algorithm \ref{AlgoCRPR}\textsc{)}.
	
	\item The checking of the constraints of rarity and of correlation is done in
	$O\textsc{(}1\textsc{)}$, while the checking of the local minimality of the set of candidates is done in $O\textsc{(}N^2\textsc{)}$ and the updating of the
	$\mathcal{MRCP}$ set of minimal rare correlated patterns
	is done in  $O\textsc{(}1\textsc{)}$ \textsc{(}\textit{cf.} line \textsc{(}iv\textsc{)} in Algorithm \ref{AlgoCRPR}\textsc{)}.

	\item The extraction of the local closed rare correlated patterns of size $n-1$ is bounded by $O\textsc{(}N^2\textsc{)}$ \textsc{(}\textit{cf.} line \textsc{(}v\textsc{)} in Algorithm \ref{AlgoCRPR}\textsc{)}.
	
	\item The cost of increasing the integer $n$ is done in $O\textsc{(}1\textsc{)}$ \textsc{(}\textit{cf.} line \textsc{(}vi\textsc{)} in Algorithm \ref{AlgoCRPR}\textsc{)}.
	
	\item There are, in the worst case, $\textsc{(}2^{N} - N - 1\textsc{)}$ candidates to be generated using the \textsc{Apriori-Gen} procedure \textsc{(}\textit{cf.} line \textsc{(}vii\textsc{)} in Algorithm \ref{AlgoCRPR}\textsc{)}. The cost of this step is bounded by $O\textsc{(}2^{N} - N\textsc{)}$.
\end{enumerate}

Consequently, the cost $\mathcal{C}_{_{2}}$ of this second step, is approximatively equal to:

$\mathcal{C}_{_{2}}$ =
$O\textsc{(}1\textsc{)}$ $+$ $O$\textsc{(}$N$ - $1$\textsc{)} $\times$
\textsc{[}
$O$\textsc{(}$2^{N}$ - $N$\textsc{)} $\times$
\textsc{[}
$O\textsc{(}N^2\textsc{)}$ $+$  $O\textsc{(}N^2\textsc{)}$
$+$  $O\textsc{(}N \times M\textsc{)}$ $+$
$O\textsc{(}1\textsc{)}$ $+$ $O\textsc{(}N^2\textsc{)}$ $+$
$O\textsc{(}1\textsc{)}$  $+$ $O\textsc{(}N^2\textsc{)}$ $+$  $O\textsc{(}1\textsc{)}$
\textsc{]}\textsc{]}

$\approx$  $O\textsc{(}1\textsc{)}$ $+$ $O$\textsc{(}$N$ - $1$\textsc{)} $\times$
\textsc{[}
$O$\textsc{(}$2^{N}$\textsc{)} $\times$
\textsc{[}
$O\textsc{(}N^2\textsc{)}$ $+$  $O\textsc{(}N \times M\textsc{)}$
\textsc{]} \textsc{]}

$\approx$ $O$\textsc{(}$2^{N}$\textsc{)} $\times$
\textsc{(}
$O\textsc{(}N^3\textsc{)}$ $+$  $O\textsc{(}N^2 \times M\textsc{)}$\textsc{)}

$\approx$ $O$\textsc{(}\textsc{(}$N^3$ + \textsc{(}$N^2$ $\times$ $M$\textsc{)}\textsc{)} $\times$ $2^{N}$\textsc{)}.

\item \textbf{The third step:} \textit{Filtering the global minimal and the global closed patterns among the local ones}.

In fact, this step consists in checking for each local minimal \textsc{(}\textit{resp.} closed\textsc{)} pattern, whether it has a subset
\textsc{(}\textit{resp.} superset\textsc{)} with the same \textit{bond} value or not. The complexity, $\mathcal{C}_{_{3}}$, of this step is then bounded by
$O\textsc{(}N^2\textsc{)}$ \textsc{(}\textit{cf.} lines $3$ and $4$ in Algorithm \ref{AlgoCRPR}\textsc{)}.
\end{itemize}
Consequently, the complexity of the \textsc{Gmjp} algorithm is equal to:
$\mathcal{C}_{_{1}}$ $+$ $\mathcal{C}_{_{2}}$ $+$ $\mathcal{C}_{_{3}}$

$\approx$ $O{\textsc{(}}N \times M\textsc{)}$ $+$ $O$\textsc{(}\textsc{(}$N^3$ + \textsc{(}$N^2$ $\times$ $M$\textsc{)}\textsc{)} $\times$ $2^{N}$\textsc{)} $+$ $O\textsc{(}N^2\textsc{)}$

$\approx$ $O$\textsc{(}\textsc{(}$N^3$ + \textsc{(}$N^2$ $\times$ $M$\textsc{)}\textsc{)} $\times$ $2^{N}$\textsc{)}
$\approx$ $2^{N}$.
\end{proof}

It is important to mention that the complexity in the worst case of the
\textsc{Gmjp} algorithm is not reachable in practice. Indeed, there is not a context that simultaneously gives the respective worst case complexities of the three steps. Hence, the worst case complexity of \textsc{Gmjp} is roughly bounded by the sum of those of its three steps.

In the next section, we describe the query process of the $\mathcal{RCPR}$
representation and the regeneration of the $\mathcal{RCP}$ set from the $\mathcal{RCPR}$ representation.
\section{The query and the regeneration strategies} \label{sec_Regeneration}
We begin the first sub-section with the querying strategy of the $\mathcal{RCPR}$ representation.
\subsection{Querying of the $\mathcal{RCPR}$ representation} \label{sub_sec_query}
In the following, we introduce the \textsc{Regenerate} algorithm, whose pseudo code is given by Algorithm \ref{Rege1}, dedicated to the query of the $\mathcal{RCPR}$ representation. In fact, the interrogation of the $\mathcal{RCPR}$ representation allows determining the nature of a given pattern. If it is a rare correlated pattern, then, its conjunctive, disjunctive, negative supports as well as its \textit{bond} value are faithfully derived from the $\mathcal{RCPR}$ representation.
%%%%%%%%%%%%%%%%%%%%%%%%%%%%%%%%%%%%%%%%%%%%%%%%%%%%%%%

\bigskip
$\bullet$ \textbf{Description of the \textsc{Regenerate} algorithm}

\bigskip
The \textsc{Regenerate} algorithm takes as an input the number of the transactions $|\mathcal{T}|$, the $\mathcal{RCPR}$ representation and an arbitrary itemset $I$ and it proceeds in two distinct ways depending on the state of $I$: 
\begin{itemize}
\item If the itemset $I$ belong to the $\mathcal{RCPR}$ representation 
\textsc{(}\textit{cf.} Line 2 Algorithm \ref{Rege1}\textsc{)}, then $I$ is a rare correlated itemset. In this regard, we have the conjunctive support and the \textit{bond} values, thus we compute the disjunctive and the negative supports. The disjunctive support is equal to the ratio of the conjunctive support by the \textit{bond} value, \textsc{(}\textit{cf.} Line 3 Algorithm \ref{Rege1}\textsc{)}. The negative support is equal to  the number of transactions $|\mathcal{T}|$ minus the disjunctive support \textsc{(}\textit{cf.} Line 4
Algorithm \ref{Rege1}\textsc{)}. The algorithm returns the values of the different supports as well as the \textit{bond} value of the 
rare correlated itemset $I$, \textsc{(}\textit{cf.} Line 5 Algorithm \ref{Rege1}\textsc{)}.
\item If the itemset $I$ do not belong to the $\mathcal{RCPR}$ representation, then we distinguish two different cases:
\begin{itemize}
\item  If it exists two itemsets $J$ and $Z$ belonging to $\mathcal{RCPR}$ such as, $J$ $\subset$ $I$ and $Z$ $\supset$ $I$, \textsc{(}\textit{cf.} Line 7 Algorithm \ref{Rege1}\textsc{)}, then $I$ is a rare correlated pattern and we have to determine its supports and correlation values. The closed itemset $F$ associated to $I$ is the minimal itemset, according to the inclusion set, that covers $I$;  \textsc{(}\textit{cf.} Line 9 Algorithm \ref{Rege1}\textsc{)}. The conjunctive support and the \textit{bond} value of $I$ are equal to those of its closure $F$,  \textsc{(}\textit{cf.} Lines 10 and 11 Algorithm \ref{Rege1}\textsc{)}. The disjunctive and the supports
are computed in the same manner as the first case, \textsc{(}\textit{cf.} Lines 12 and 13 Algorithm \ref{Rege1}\textsc{)}. Finally the algorithm outputs the values of the different supports as well as the \textit{bond} value of the  rare correlated itemset $I$, \textsc{(}\textit{cf.} Line 14 Algorithm \ref{Rege1}\textsc{)}.

\item  If it not exists two itemsets $J$ and $Z$ belonging to $\mathcal{RCPR}$ such as, $J$ $\subset$ $I$ and $Z$ $\supset$ $I$, then 
the algorithm outputs the emptyset to indicates that the itemset $I$ is not a rare correlated pattern, \textsc{(}\textit{cf.} Line 16 Algorithm \ref{Rege1}\textsc{)}.
\end{itemize}
\end{itemize}
%%%%%%%%%%%%%%%%%%%%%%%%%%%%%%%%%%%%%%%%%%%%%%%%%%%%%%%%%
\incmargin{0.5em} 
\linesnumbered
\restylealgo{algoruled}
\begin{algorithm}[!t]
\SetVline \caption{\textsc{Regenerate} \label{Rege1}}
\setnlskip{-3pt}
\KwData{\begin{enumerate}
	\item An arbitrary pattern $I$.
	\item $\mid$$\mathcal{T}$$\mid$: The number of transactions.
	\item The $\mathcal{RCPR}$ representation = $\mathcal{MRCP}$ $\cup$ $\mathcal{CRCP}$.
\end{enumerate}}
\KwResult{The conjunctive, disjunctive, negative supports and the \textit{bond} value of the pattern $I$ if it is rare correlated, the empty set otherwise.}
\Begin{
\If{\textsc{(}$I$ $\in$ $\mathcal{RCPR}$\textsc{)} \label{Apparrmcr}}
{
	\textit{Supp}\textsc{(}$\vee$$I$\textsc{)} = $\displaystyle\frac{\displaystyle Supp\textsc{(}\wedge I\textsc{)}}
	{\displaystyle \textit{bond}\textsc{(}I\textsc{)}}$ \label{calculdisj}\;
	
	\textit{Supp}\textsc{(}$\neg$$I$\textsc{)} = $\mid$$\mathcal{T}$$\mid$ $-$ \textit{Supp}\textsc{(}$\vee$$I$\textsc{)} \label{calculneg}\;
	
	\Return{ $\{$$I$, \textit{Supp}\textsc{(}$\wedge$$I$\textsc{)},
		\textit{Supp}\textsc{(}$\vee$$I$\textsc{)},
		\textit{Supp}\textsc{(}$\neg$$I$\textsc{)},
		\textit{bond}\textsc{(}$I$\textsc{)}$\}$ \label{res1}}\;
}%si
\Else
{
	\If{\textsc{(}$\exists$ $J$, $Z$ $\in$ $\mathcal{RCPR}$ $\mid$ $J$ $\subset$ $I$ and $I$ $\subset$ $Z$\textsc{)}\label{si1}}
	{
		{$/*$ $F$ is the closed pattern associated to the candidate $I$ $*/$}	
		
		$F$:= $\min_{\subseteq}$$\{$$I_{1}$ $\in$  $\mathcal{RCPR}$ $\mid$  $I$  $\subset$ $I_{1}$$\}$ \label{ferme}\;
		
		\textit{Supp}\textsc{(}$\wedge$$I$\textsc{)} = \textit{Supp}\textsc{(}$\wedge$$F$\textsc{)}\;
		
		\textit{bond}\textsc{(}$I$\textsc{)} = \textit{bond}\textsc{(}$F$\textsc{)}\;
		
		\textit{Supp}\textsc{(}$\vee$$I$\textsc{)} = $\displaystyle\frac{\displaystyle Supp\textsc{(}\wedge I\textsc{)}}
		{\displaystyle \textit{bond}\textsc{(}I\textsc{)}}$\;
		
		\textit{Supp}\textsc{(}$\neg$$I$\textsc{)} = $\mid$$\mathcal{T}$$\mid$ $-$  \textit{Supp}\textsc{(}$\vee$$I$\textsc{)}\;
		
		\Return{ $\{$$I$,  \textit{Supp}\textsc{(}$\wedge$$I$\textsc{)},
			\textit{Supp}\textsc{(}$\vee$$I$\textsc{)},
			\textit{Supp}\textsc{(}$\neg$$I$\textsc{)},
			\textit{bond}\textsc{(}$I$\textsc{)}$\}$ \label{res2}}\;
	}
	\Else
	{\Return{$\emptyset$}\label{res3}\;}
}}

\end{algorithm}
\decmargin{1em}
%%%%%%%%%%%%%%%%%%%%%%%%%%%%%%%%%%%%%%%%%%%%%%
\begin{example}\label{exp_regeneration1}
Let us consider the $\mathcal{RCPR}$ representation given by Example \ref{exprep1}, 
\textsc{(}Page \pageref{exprep1}\textsc{)}, for \textit{minsupp} = 4  and \textit{minbond} = 0.2.
Consider the pattern $ACE$.
When comparing the pattern $ACE$ with the elements of the $\mathcal{RCPR}$ representation, we remark that
$AE$ $\subset$ $ACE$ and $ACE$ $\subset$ $ABCE$.
Then, the pattern $ACE$ is a rare correlated pattern and the associated closed pattern is $ABCE$.
Consequently,
\textit{Supp}\textsc{(}$\wedge$$ACE$\textsc{)} = \textit{Supp}\textsc{(}$\wedge$$ABCE$\textsc{)} = 2,
\textit{Supp}\textsc{(}$\vee$$ACE$\textsc{)} = \textit{Supp}\textsc{(}$\vee$$ABCE$\textsc{)} = 5,
\textit{Supp}\textsc{(}$\neg$$ACE$\textsc{)} = $|\mathcal{T}|$ - \textit{Supp}\textsc{(}$\vee$$ACE$\textsc{)} = 5 - 5 = 0 and
\textit{bond}\textsc{(}$ACE$\textsc{)} = \textit{bond}\textsc{(}$ABCE$\textsc{)} = $\displaystyle\frac{2}{5}$. Consider the pattern $BE$. In fact, $BE$ $\notin$ $\mathcal{RCPR}$
and there is no element of $\mathcal{RCPR}$ which is included in $BE$.
Therefore, the \textsc{Regenerate} algorithm returns the empty set and indicates that the pattern $BE$
is not a rare correlated pattern.
\end{example}
In what follows, we introduce the strategy of regeneration of the $\mathcal{RCP}$
set, \textit{i.e.} the set of all rare correlated patterns, from this representation.

\subsection{Regeneration of the $\mathcal{RCP}$ set from the $\mathcal{RCPR}$ representation} \label{sub_sec_regeneration}
The regeneration of the $\mathcal{RCP}$ set from the $\mathcal{RCPR}$ representation is achieved through the \textsc{RcpRegeneration} algorithm which pseudo-code is given by Algorithm \ref{CRPRegeneration}. This latter algorithm inputs the $\mathcal{RCPR}$ representation and provides the $\mathcal{RCP}$ set of rare correlated patterns. The conjunctive support and the \textit{bond} value of each pattern are exactly determined.
%%%%%%%%%%%%%%%%%%%%%%%%%%%%%%%%%%%%%%%%%%%%%%%%%%%%%%%%%%%%%%%%%%%%%%%%%%%%%%%%%%%

\bigskip
$\bullet$ \textbf{Description of the \textsc{RcpRegeneration} algorithm}

\bigskip
The \textsc{RcpRegeneration} algorithm takes as an input the number of the transactions $|\mathcal{T}|$ and the $\mathcal{RCPR}$ representation which is composed by the two sets $\mathcal{MRCP}$ and  $\mathcal{CRCP}$. The algorithm generates the $\mathcal{RCP}$ set as described in the following: 
\begin{enumerate}
\item Initially, the $\mathcal{RCP}$ set is assigned with the empty set, \textsc{(}\textit{cf.} Line 2 Algorithm \ref{CRPRegeneration}\textsc{)}. Then, all the itemsets of the  $\mathcal{RCPR}$ representation \textsc{(}The elements of the $\mathcal{MRCP}$ and the $\mathcal{CRCP}$ sets\textsc{)} are added to the $\mathcal{RCP}$ set, \textsc{(}\textit{cf.} Lines 4 and 5 Algorithm \ref{CRPRegeneration}\textsc{)}. 
\item For each minimal generator $M$ of the $\mathcal{MRCP}$ set, we determine its closure $F$ among the $\mathcal{CRCP}$ set. In fact, $F$ corresponds to the minimal itemset, according to inclusion set, that covers $M$ \textsc{(}\textit{cf.} Line 9 Algorithm \ref{CRPRegeneration}\textsc{)}.
\item At this step, we derive all the patterns that are included between the minimal generator $M$ and its closure $F$. 
Thus, we need an intermediate itemset $D$ that contains the set of items belonging to $F$ and not to $M$: 	
$D$ =  $F$$\backslash$$M$, \textsc{(}\textit{cf.} Line 11 Algorithm \ref{CRPRegeneration}\textsc{)}. Then, each
item $j$ included in the itemset $D$, is concatenated with the generator $M$ in order to form a new itemset $X$: $X$ = $M$ $\cup$
$j$, \textsc{(}\textit{cf.} Line 13 Algorithm \ref{CRPRegeneration}\textsc{)}. The conjunctive support and the \textit{bond} value of $X$ are equal to those of its closure $F$ \textsc{(}\textit{cf.} Lines 14 and 15 Algorithm \ref{CRPRegeneration}\textsc{)}. 
\item Finally, the rare correlated itemset $X$ is added to the $\mathcal{RCP}$ set after the non-redundancy checking  
\textsc{(}\textit{cf.} Line 18 Algorithm \ref{CRPRegeneration}\textsc{)}. The whole $\mathcal{RCP}$ set is outputted in Line 19 of algorithm \ref{CRPRegeneration}.
\end{enumerate}
%%%%%%%%%%%%%%%%%%%%%%%%%%%%%%%%%%%%%%%%%%%%%%%%%%%%%%%%%%%%%%%%%%%%%%%%%%%%%%%%%%%%
\incmargin{1em}
\restylealgo{algoruled}
\linesnumbered
\begin{algorithm}[htbp]\label{CRPRegeneration}\small{
\SetVline
\setnlskip{-3pt}
\KwData{\begin{enumerate}
		\item  The $\mathcal{RCPR}$ representation =  $\mathcal{MRCP}$ $\cup$ $\mathcal{CRCP}$.
		\item  The number of transactions: $|$$\mathcal{T}$$|$.
	\end{enumerate}
}
\KwResult{The $\mathcal{RCP}$ set of rare correlated patterns.}
\Begin{
	$\mathcal{RCP}$:= $\emptyset$\;
	{$/*$ The elements of $\mathcal{RCPR}$ are appended to the $\mathcal{RCP}$ set $*/$}
	
	\For{\textsc{(}$M \in \mathcal{RCPR}$\textsc{)}}
	{
		$\mathcal{RCP}$:= $\mathcal{RCP}$ $\cup$ $\{$$M$, \textit{Supp}\textsc{(}$\wedge$$M$\textsc{)},
		\textit{bond}\textsc{(}$M$\textsc{)}$\}$ \label{majMCR1}\;
	}
	{$/*$ Process of each element $M$ of $\mathcal{MRCP}$ separately $*/$}
	
	\For{\textsc{(}$M \in \mathcal{MRCP}$\textsc{)}\label{pour2}}
	{
		{$/*$ Identification of the closed pattern associated to the minimal $M$ $*/$}
		
		$F$:= 	$\min_{\subseteq}$$\{$$M_{1}$ $\in$  $\mathcal{CRCP}$ $\mid$  $M$  $\subset$ $M_{1}$$\}$\label{ferme2}\;	
		
		{$/*$ Derivation of the rare correlated patterns included between $M$ and $F$ $*/$}
		
		$D$:=  $F$ $\backslash$ $\{{i}$,  $\forall$ ${i}$ $\in$ $M$$\}$
		
		\For{\textsc{(}$j$ $|$ $j$ $\in$ $D$\textsc{)}\label{pour-New}}
		{
			\For{\textsc{(}$X$ $\mid$ $X$ = $M$ $\cup$ ${j}$ \textsc{)}\label{pour3}}
			{

				\textit{Supp}\textsc{(}$\wedge$$X$\textsc{)} = \textit{Supp}\textsc{(}$\wedge$$F$\textsc{)}\;  		          		
				\textit{bond}\textsc{(}$X$\textsc{)} = \textit{bond}\textsc{(}$F$\textsc{)} \;
				
				{$/*$ Check of the uniqueness of the elements of the $\mathcal{RCP}$ set $*/$}
				
				\If{\textsc{(}$X$ $\notin$ $\mathcal{RCP}$\textsc{)}}
				{
					$\mathcal{RCP}$:= $\mathcal{RCP}$ $\cup$ $\{$$X$, \textit{Supp}\textsc{(}$\wedge$$X$\textsc{)},
					\textit{bond}\textsc{(}$X$\textsc{)}$\}$ \label{majMCR2}\;
				}%%si
			}%%pour1
		}%%pour2
	}%%pour-new
	\Return{$\mathcal{RCP}$}\;\label{resultat}}
\caption{\textsc{RcpRegeneration}}}
\end{algorithm}
\decmargin{1em}

\begin{example}
Consider the $\mathcal{RCPR}$ representation illustrated by example \ref{exprep1} 
, \textsc{(}Page \pageref{exprep1}\textsc{)}, for \textit{minsupp} = 4 and \textit{minbond} = 0.2.
The regeneration of the $\mathcal{RCP}$
set is carried out as follows.
Firstly, the $\mathcal{RCP}$ set is initialized to the empty set. Then, the elements of the
$\mathcal{RCPR}$ representation are appended to the $\mathcal{RCP}$ set.
We have so,
$\mathcal{RCP}$ = $\{$\textsc{(}$A$, $3$, $\displaystyle\frac{3}{3}$\textsc{)},
\textsc{(}$D$, $1$, $\displaystyle\frac{1}{1}$\textsc{)},
\textsc{(}$AB$, $2$, $\displaystyle\frac{2}{5}$\textsc{)},
\textsc{(}$AC$, $3$, $\displaystyle\frac{3}{4}$\textsc{)},
\textsc{(}$AD$, $1$, $\displaystyle\frac{1}{3}$\textsc{)},
\textsc{(}$AE$, $2$, $\displaystyle\frac{2}{5}$\textsc{)},
\textsc{(}$BC$, $3$, $\displaystyle\frac{3}{5}$\textsc{)},
\textsc{(}$CD$, $1$, $\displaystyle\frac{1}{4}$\textsc{)},
\textsc{(}$CE$, $3$, $\displaystyle\frac{3}{5}$\textsc{)},
\textsc{(}$ACD$, $1$, $\displaystyle\frac{1}{4}$\textsc{)},
\textsc{(}$BCE$, $3$, $\displaystyle\frac{3}{5}$\textsc{)},
\textsc{(}$ABCE$, $2$, $\displaystyle\frac{2}{5}$\textsc{)}$\}$.
Then, the patterns $ABE$ and $ABC$ included between the minimal pattern \textsc{(}$AB$, 2, $\displaystyle\frac{2}{5}$\textsc{)} and its closure \textsc{(}$ABCE$, 2, $\displaystyle\frac{2}{5}$\textsc{)} are generated.
In addition, the pattern
$ACE$, included between the minimal pattern \textsc{(}$AE$, 2, $\displaystyle\frac{2}{5}$\textsc{)} and its closure \textsc{(}$ABCE$, 2, $\displaystyle\frac{2}{5}$\textsc{)}, is also derived.
The patterns $ABE$, $ABC$ and $ACE$ are also generated, as they share the same conjunctive support and
\textit{bond} value of the closed pattern $ABCE$,  are then inserted in the
$\mathcal{RCP}$ set. This latter is updated and contains all the rare correlated patterns:
$\mathcal{RCP}$ = $\{$\textsc{(}$A$, $3$, $\displaystyle\frac{3}{3}$\textsc{)},
\textsc{(}$D$, 1, $\displaystyle\frac{1}{1}$\textsc{)},
\textsc{(}$AB$, $2$, $\displaystyle\frac{2}{5}$\textsc{)},
\textsc{(}$AC$, $3$, $\displaystyle\frac{3}{4}$\textsc{)},
\textsc{(}$AD$, 1, $\displaystyle\frac{1}{3}$\textsc{)},
\textsc{(}$AE$, $2$, $\displaystyle\frac{2}{5}$\textsc{)},
\textsc{(}$BC$, $3$, $\displaystyle\frac{3}{5}$\textsc{)},
\textsc{(}$CD$, 1, $\displaystyle\frac{1}{4}$\textsc{)},
\textsc{(}$CE$, $3$, $\displaystyle\frac{3}{5}$\textsc{)},
\textsc{(}$ABC$, 2, $\displaystyle\frac{2}{5}$\textsc{)},
\textsc{(}$ABE$, 2, $\displaystyle\frac{2}{5}$\textsc{)},
\textsc{(}$ACD$, 1, $\displaystyle\frac{1}{4}$\textsc{)},
\textsc{(}$ACE$, 2, $\displaystyle\frac{2}{5}$\textsc{)},
\textsc{(}$ABCE$, 2, $\displaystyle\frac{2}{5}$\textsc{)}$\}$.
\end{example}
%%\newpage
% % % % % % % % % % % % % % % % % % % % % % % %
\section{Conclusion} \label{ConcChap5}
We introduced, in this chapter, the \textsc{Gmjp} extraction approach to mine correlated patterns in a generic way \textsc{(}i.e., with two types of constraints: anti-monotonic constraint of frequency and monotonic constraint of rarity\textsc{)}.
Our approach is based on the key notion of bitsets codification that supports efficient correlated patterns computation thanks to an adequate condensed representation of patterns.
In the next chapter, we present our experimental evaluation of \textsc{Gmjp} and of \textsc{Opt-Gmjp} according to both quantitative and qualitative aspects.

\part{Experiments and Classification Process}
\chapter{Experimental Validation}\label{ch_6}
\section{Introduction} \label{IntroChap6}

This chapter is devoted to the experimental evaluation of the proposed \textsc{Gmjp} algorithm.
Our evaluation is performed on two principal axes.
In Section \ref{se1Chap6}, we present the experimental environment, specifically the characteristics of the used datasets as well as the experimental protocol.  Then, we present in Section \ref{sec_XP1} the qualitative evaluation of the proposed condensed representations which is measured by the compactness rates  offered by each proposed exact and approximate concise representation.  Section \ref{sec_XP2}, is dedicated to the evaluation of the performance of both \textsc{Gmjp}, as well as the   optimized version that highlights the important measured improvements.

\section{Experimental Environment} \label{se1Chap6}

In this chapter, we aim to show, through extensive carried out experiments,
that the different proposed concise representations provide interesting compactness rates compared
to the whole set of  correlated patterns. In addition to this, we aim to prove the efficiency of the proposed \textsc{Gmjp} approach. In our experiments, we used two evaluation measures: the conjunctive support to measure the frequency \textsc{(}respectively the rarity\textsc{)} and the \textit{bond} measure to evaluate the correlation of a pattern.

All experiments were carried out on a PC equipped with a $2.4$ GHz Intel Core TM i3 processor and $4$ GB of main memory, running the Linux Ubuntu 12.04. The used datasets are described in what follows.
\subsection{Datasets}

The experiments were carried out on different dense and sparse benchmark datasets
$^{\textsc{(}}$\footnote{Available at \textsl{http://fimi.cs.helsinki.fi/data}.}$^{\textsc{)}}$.
Table \ref{Caract_bases_benchmark} summarizes the characteristics of the considered datasets. A brief description of the content of each dataset is given below:

\begin{itemize}
	\item \textbf{\textsc{Connect}}: This dataset contains all legal positions in the game of connect-4 in which neither player has won yet, and in which the next move is not forced.
	
	\item \textbf{\textsc{Mushroom}}: This dataset includes descriptions of hypothetical samples corresponding to 23 species of gilled mushrooms.
	
	\item \textbf{\textsc{Pumsb}}: This dataset contains Census Data from PUMS \textsc{(}Public Use Microdata Samples\textsc{)}. Each object represents the answers to a census questionnaire.
	
	\item \textbf{\textsc{Pumsb*}}: This dataset is obtained after deleting all frequent items for a minimum support threshold set to 80$\%$ in the original \textsc{Pumsb}.
	
	\item \textbf{\textsc{Retail}}: The \textsc{Retail} dataset contains information about Market Basket of clients in a Belgian Supermarket.
	
	\item \textbf{\textsc{Accidents}}: This dataset represents traffic accidents obtained from the National Institute of Statistics \textsc{(}NIS\textsc{)} for the Region of Flanders \textsc{(}Belgium\textsc{)}.
	
	\item \textbf{\textsc{T10I4D100K}}: This is a synthetic dataset generated using the generator from the IBM Almaden Quest Research Group. The goal of this generator is to create objects similar to those obtained in a supermarket environment.
	
	\item \textbf{\textsc{T40I10D100K}}: Identically to \textsc{T10I4D100K}, this dataset is also generated from the IBM generator. The differences between this dataset and  \textsc{T10I4D100K} are the number of items and the average size of the objects.
\end{itemize}
% % % % % % % % % % % % % % % % % % % %
\begin{table}
	\begin{center}
		\begin{tabular}{lrrrr}
			\hline
			\noalign{\smallskip}
			\textbf{Dataset}&\textbf{Property}  &\textbf{Number} & \textbf{Number of}& \textbf{Average length}\\
			&                   &\textbf{of items} &\textbf{transactions}& \textbf{of transactions}\\
			\noalign{\smallskip}\hline\noalign{\smallskip}
			\textsc{Connect} &  \textbf{Dense} &{129} & {67, 557} & {43.00}\\
			\textsc{Chess}   &  \textbf{Dense}  &{75} & {3 196} & {37.00}\\
			\textsc{Mushroom} &  \textbf{Dense} &{119} & {8, 124} & {23.00}\\
			\textsc{Pumsb}&   \textbf{Dense} &{7, 117} & {49, 046} & {74.00}\\
			\textsc{Pumsb*}&   \textbf{Dense} & {7, 117} &{49, 046} & {50.00}\\
			\textsc{Retail} &   \textbf{Sparse} &{16, 470} &{88, 162} & {10.00}\\
			\textsc{Accidents}&   \textbf{Sparse}& {468} &{340, 183} & {33.81}\\
			\textsc{T10I4D100K}&   \textbf{Sparse} & {870} & {100, 000} &{10.10} \\
			\textsc{T40I10D100K}&   \textbf{Sparse}& {942} &{100, 000} &{39.61}\\
			\noalign{\smallskip}\hline
		\end{tabular}
	\end{center}
	\caption{Characteristics of the benchmark datasets.}
	\label{Caract_bases_benchmark}
\end{table}

\subsection{Experimental Protocol}
Our objective is to prove, through extensive carried out experiments, the efficiency of the proposed \textsc{Gmjp} algorithm.

\checkmark  Our first batch of experiments focus on evaluating the compactness rates by different condensed representations of rare correlated patterns. We also build a quantitative comparison between the $\mathcal{FCP}$, the $\mathcal{RCP}$ sets and their associated condensed representations.

\checkmark  Our second batch of experiments focus on studying running times of the proposed \textsc{Gmjp} algorithm while running the four different scenarios.

\section{Evaluation of the compactness rates offered by the proposed representations} \label{sec_XP1}
The compactness rate offered by a concise representation measures the reduction of the size of the representation compared to the size of the whole set of patterns. For example, for the $\mathcal{RCPR}$
representation, the compactness rate is equal to:  1 - $\frac{|\mathcal{RCPR}|}{|\mathcal{RCP}|}$. Worth of cite, our experimental study confirms that the $\mathcal{RCPR}$ representation is a perfect cover of the $\mathcal{RCP}$ set. In fact, the obtained results show that the size of $\mathcal{RCPR}$ is always
smaller than that of the $\mathcal{RCP}$ set over the entire range of the considered support and \textit{bond} thresholds. Our study concerns both dense and sparse datasets.

\subsection{Effect of \textit{minsupp} variation}

For example, considering the \textsc{Mushroom}
dataset for \textit{minsupp} = \textit{35}$\%$ and
\textit{minbond} = \textit{0.15}: $|\mathcal{RCPR}|$ = \textit{1, 810} while $|\mathcal{RCP}|$ = \textit{100, 156},
with a reduction rate reaching approximately \textit{98}$\%$.
This is explained by the nature of the induced equivalence classes. In fact, we have in this case,
$|\mathcal{MRCP}|$ = \textit{1, 412} and  $|\mathcal{CRCP}|$ = \textit{652}.
Since the $\mathcal{RCPR}$ representation corresponds to the union without redundancy of the
$\mathcal{MRCP}$ and the $\mathcal{CRCP}$ sets, we have always $|\mathcal{RCPR}|$ $\leq$ $|\mathcal{MRCP}|$ + $|\mathcal{CRCP}|$.

In this respect, Figure \ref{Fct_minsupp} shows that all the compression rates proportionally vary to \textit{minsupp} and disproportionately with respect to \textit{minbond} values. This is due to the fact that, the size of the different representations increases as far as \textit{minsupp} increases and decreases whenever \textit{minbond} increases.

\begin{figure}[h]
	\begin{center}
		\includegraphics[scale = 1.3]{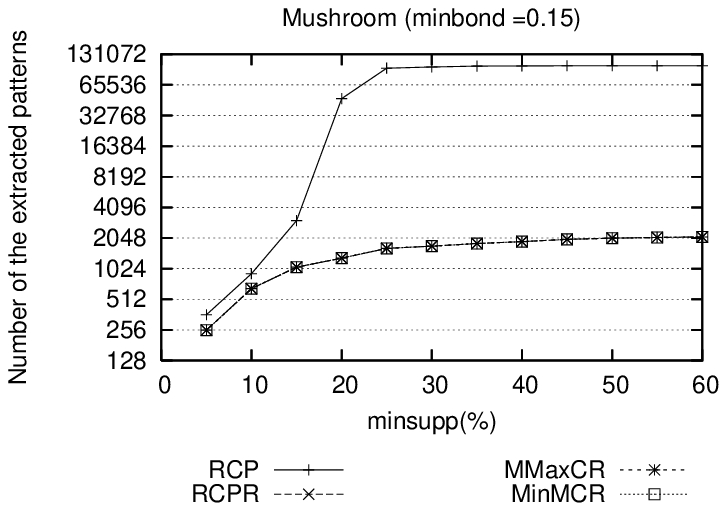}
		\includegraphics[scale = 1.3]{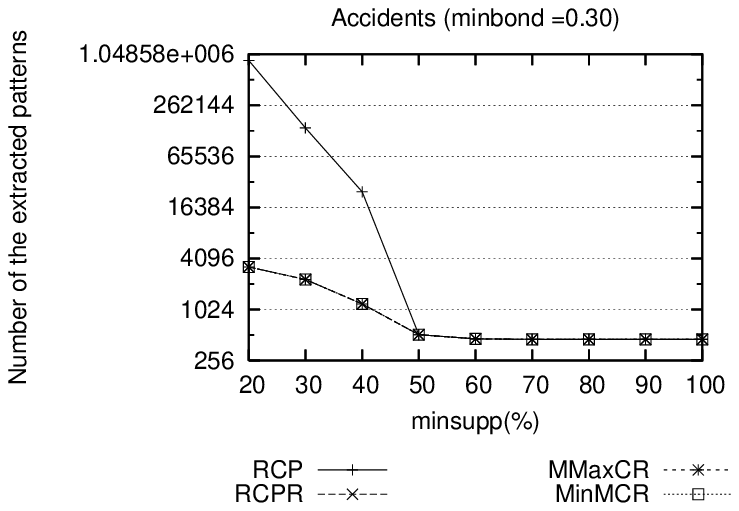}
		\includegraphics[scale = 1.3]{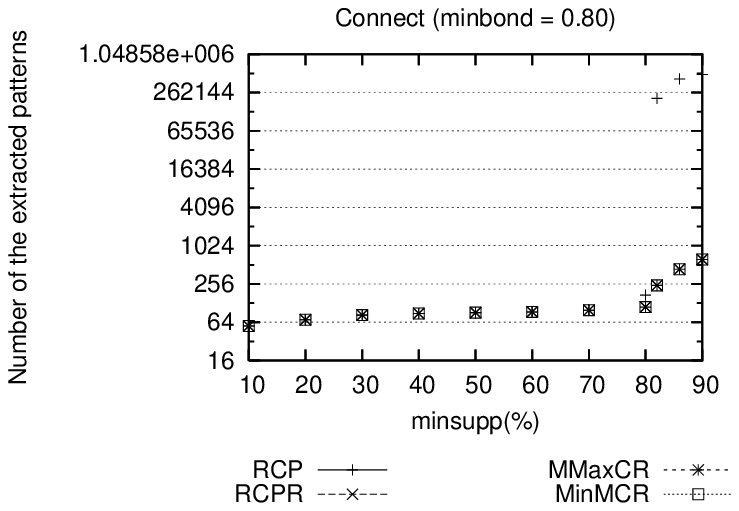}
	\end{center}
	\caption{Sizes of the different representations when \textit{minsupp} varies and \textit{minbond} is fixed.}\label{Fct_minsupp}
\end{figure}

We also find that the respective sizes of the concise exact representations
$\mathcal{MM}$$ax$$\mathcal{CR}$ and $\mathcal{M}$$in$$\mathcal{MCR}$
never exceed the size of the $\mathcal{RCPR}$ representation.
This is justified by the nature of the elements composing both representations.
In fact, the $\mathcal{MM}$$ax$$\mathcal{CR}$ is composed by the
$\mathcal{MRCP}$ set of minimal rare correlated patterns and the
$\mathcal{M}ax$$\mathcal{CRCP}$ set of maximal closed rare correlated patterns.
Nevertheless, we confirm that the size of the $\mathcal{M}ax$$\mathcal{CRCP}$ set
is always lower than that of the $\mathcal{CRCP}$ set.
According to the $\mathcal{M}$$in$$\mathcal{MCR}$ representation,
it is based on the $\mathcal{M}in$$\mathcal{MRCP}$ set of the minimal elements of the $\mathcal{MRCP}$ set and on the $\mathcal{CRCP}$ set.
In fact, we remark that the size of the $\mathcal{M}in$$\mathcal{MRCP}$ set never exceeds the size of the $\mathcal{MRCP}$ set.

In what follows, we evaluate the compactness of the representations based on the variation of
the correlation threshold \textit{minbond}.

\subsection{Effect of \textit{minbond} variation}

Let us consider the results depicted by Figure \ref{Fct_minbond}. The first intuition is that
all the compression rates vary disproportionately to \textit{minbond} values.
For the \textsc{Pumsb*} dataset, for  \textit{minsupp} = \textit{80}$\%$ and for \textit{minbond} =
\textit{0.30}, we have
$|\mathcal{RCP}|$ = \textit{65, 536} $>$
$|\mathcal{RCPR}|$  $=$  $|$$\mathcal{RMM}$$ax$$\mathcal{F}|$  $=$
$|\mathcal{M}in$$\mathcal{MRCP}|$ = \textit{2, 048}.
Now, while increasing \textit{minbond} from \textit{0.30} to \textit{0.60}, we remark that the size of the
$\mathcal{RCP}$ increase as well the $|\mathcal{RCPR}|$ representation, whereas
the size of $\mathcal{RMM}$$ax$$\mathcal{F}$ and of  $\mathcal{M}in$$\mathcal{MRCP}$ are unchanged. Thus,
these representations offer constant reduction rates in spite of \textit{minbond} variation.
For the same example, we have  $|\mathcal{RCP}|$ = \textit{130, 000} $>$
$|\mathcal{RCPR}|$ = \textit{4, 096} $>$
$|$$\mathcal{RMM}$$ax$$\mathcal{F}|$  = \textit{2, 048} $=$
$|\mathcal{M}in$$\mathcal{MRCP}|$ = \textit{2, 048}.

\begin{figure}
	\begin{center}
		%\parbox{12.0cm}{
		\includegraphics[scale = 1.3]{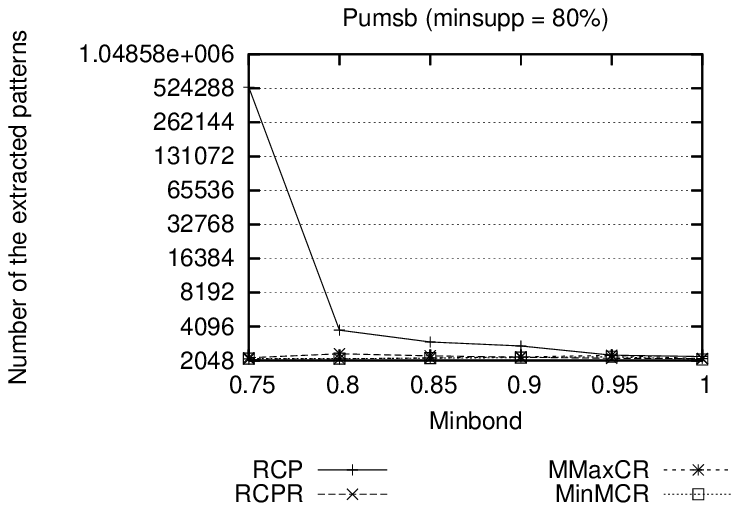}
		\includegraphics[scale = 1.3]{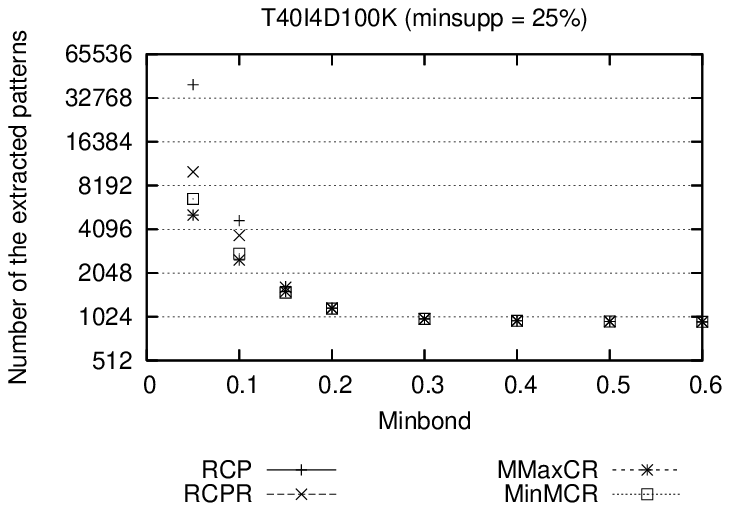}
		\includegraphics[scale = 1.3]{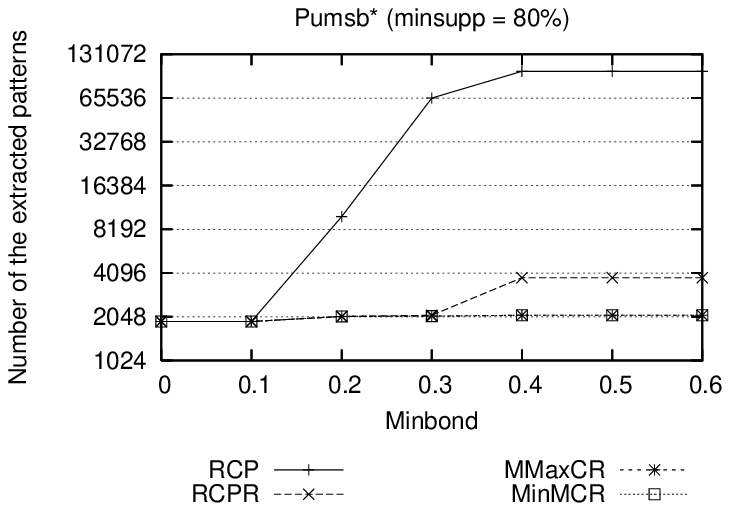}
	\end{center}
	\caption{Sizes of the different representations  when \textit{minbond} varies
		and \textit{minsupp} is fixed.}\label{Fct_minbond}
\end{figure}
% % % % % % % % % % % % % % % %

We sketch in Table \ref{labelsize1}, the experimental results associated to both the $\mathcal{FCP}$ set of frequent correlated patterns and to the $\mathcal{RCP}$ set of rare correlated patterns.
To summarize, the concise representations $\mathcal{FCCPR}$ and $\mathcal{RCPR}$ present very encouraging reduction rates over several datasets and for different ranges of \textit{minsupp} and \textit{minbond} thresholds. We note that, the `gain` corresponds to the ``reduction rate`` said also ``compactness rate``.
% % % % % % % % % % %
\begin{center}
	\begin{sidewaystable}
		\begin{tabular}{lclrrrcrrcr}
			\hline
			\noalign{\smallskip}
			\textbf{Dataset}& \textbf{\textit{minsupp}} & \textbf{\textit{minbond}}& \textbf{$\#$ $\mathcal{FCP}$} & \textbf{$\#$ $\mathcal{FCCPR}$} & \textbf{Gain of}           &   &\textbf{$\#$ $\mathcal{RCP}$} & \textbf{$\#$ $\mathcal{RCPR}$} & &\textbf{Gain of}\\
			&                           &                          &                              &                                  & \textbf{$\mathcal{FCCPR}$}  &  &\textbf{}                     &                                &  &\textbf{$\mathcal{RCPR}$} \\
			\noalign{\smallskip}\hline\noalign{\smallskip}
			\textsc{Mushroom} &{30$\%$}     & {0.15} & {2, 701} & {427}& {\textbf{84.19$\%$}}&  &{98, 566}& {1, 704} & &{\textbf{98.27$\%$}}\\\hline
			&{45$\%$}     & {0.15} & {307} & {83}  & {\textbf{72.96$\%$}}&   &{100, 960}& {1, 985} & &{\textbf{98.03$\%$}}\\\hline
			\textsc{Pumsb*}    &{40$\%$}     & {0.45} & {10, 674} & {1646} & {\textbf{84.57$\%$}}&  &{448, 318}& {3, 353}& & {\textbf{99.25$\%$}}\\\hline
			&{40$\%$}     & {0.50} & {9, 760} & {1325} & {\textbf{86.42$\%$}}&  & {82, 413}& {3, 012}& &{\textbf{96.34$\%$}}\\\hline
			\textsc{Connect}    &{10$\%$}     & {0.80} & {534, 026} & {15, 152} & {\textbf{97.16$\%$}}&  & {56}& {56} &  &{\textbf{0$\%$}}\\\hline
			&{50$\%$}     & {0.80} & {533, 991} & {15, 117} & {\textbf{97.16$\%$}}&  & {91}& {91}& &{\textbf{0$\%$}}\\\hline
			\textsc{Accidents}& {40$\%$} & {0.30}&{32, 529} &{32, 528} & {\textbf{0$\%$}}  & & {117, 805} &{1, 722}&  &{\textbf{98.53$\%$}}\\\hline
			& {60$\%$} & {0.30}&{2, 057} &{2, 047} & {\textbf{0$\%$}}&  &{148, 259} &{2, 743}&  &{\textbf{98.14$\%$}}\\
			\noalign{\smallskip}\hline
		\end{tabular}
		\caption{Compactness Rates associated to the $\mathcal{FCP}$ set \textit{vs.} the $\mathcal{RCP}$ set.}
		\label{labelsize1}
	\end{sidewaystable}
\end{center}

At this stage, we have analyzed the variation of the size of the different concise exact representations according to
both \textit{minsupp} and  \textit{minbond} variations. In the next section, we put the focus on the evaluation of the running time of the proposed \textsc{Gmjp} approach.

\section{Evaluation of the running time of \textsc{Gmjp}} \label{sec_XP2}

\subsection{Overall Performance Evaluation of \textsc{Gmjp}}
We emphasize that, according to the results given by Table \ref{labelCPUTime1}
$^\textsc{(}$\footnote{We note that `S1` stands for the First Scenario, `S2` stands for the Second Scenario,
	`S3` stands for the Third Scenario and `S4` stands for the Fourth Scenario.}$^{\textsc{)}}$
, that the execution time varies depending on the number of distinct items of the considered dataset.
This is explained by the principle of \textsc{Gmjp} which is based on the idea of processing each item separately and based on the list of the co-occurrent of each item.

For example, the computation costs are relatively high for the \textsc{T40I10D100K} dataset,
and they are low for the \textsc{Mushroom} dataset.
This is explained by the fact that, the \textsc{Mushroom} dataset only contains \textit{119} items
while the \textsc{T40I10D100K} dataset contains \textit{942} items. We  also note that the highest execution times are obtained with the \textsc{Retail} dataset, since this latter contains a high number of
distinct items, equal to {16, 470}.

% % % % % % % % % % % % % %
\begin{center}
	\begin{sidewaystable}
		\hspace{+0.7cm}
		\begin{tabular}{lrrrrrrrr}
			\hline
			\noalign{\smallskip}
			\textbf{Dataset}& \textbf{Number} & \textbf{Average}     & \textbf{Average}& \textbf{Average} & \textbf{Average} & \textbf{Average} & \textbf{Average} \\
			& \textbf{of Items} & \textbf{\textit{minsupp}} & \textbf{\textit{minbond}}& \textbf{Time S1} & \textbf{Time S2} & \textbf{Time S3} & \textbf{Time S4} \\
			\noalign{\smallskip}\hline\noalign{\smallskip}
			\textsc{Mushroom}  &{119} & {58$\%$}    & {0.30} & {7}    & {11.4} & {20}& {19.6}\\
			& & {40$\%$}       & {0.57} & {3.75} & {5.25} & {11}& {709}\\\hline
			\textsc{Accidents} &{468} & {7.8$\%$}   & {0.50} & {709} & {703} &{793}&{784.2}\\\hline
			\textsc{Retail}    &{16, 470} & {25.83$\%$}    & {0.50} & {5.83} & {13.16} &{1903}&{1902}\\\hline
			\textsc{T10I4D100K} &{870} & {5$\%$}          & {0.20} & {2} & {3} &{163}&{163}\\\hline
			\textsc{T40I10D100K} &{942} & {8.2$\%$}    & {0.50} & {148} & {182.6} &{491}&{490.4}\\
			\noalign{\smallskip}\hline
		\end{tabular}
		\caption{Performance Analysis of \textsc{Gmjp} on UCI benchmarks \textsc{(}Time in seconds\textsc{)}.}
		\label{labelCPUTime1}
	\end{sidewaystable}
\end{center}
% % % % % % % % % % % % % % % % % % % % % %
It is worth of mention that the computation time of the fourth scenario dedicated to the extraction of the $\mathcal{RCPR}$ representation are the highest ones among the other scenarios. This can be explained by the fact that the extraction of the $\mathcal{RCPR}$ representation is the most complex mining task within the \textsc{Gmjp} approach.  Thereby, we focus on the next subsection on studying the cost of the three different steps of \textsc{Gmjp} when extracting the $\mathcal{RCPR}$ representation.
% % % % % % % % % % % % % % % % % % % % % % % % % % %
\subsection{Performance Evaluation of the $\mathcal{RCPR}$ representation Mining}
We study, in this subsection, the running time of need for the extraction of the $\mathcal{RCPR}$ representation.
It is worth noting that the running times of the \textsc{Gmjp} algorithm vary according to the characteristics of the dataset. For example, the computation costs are relatively high for the \textsc{T10I10D100K} dataset, and they are low for the \textsc{Mushroom} dataset.
This is due to the difference in the characteristics of these two datasets. In fact, the \textsc{Mushroom} dataset only contains \textit{119} items and \textit{8, 124} transactions while the \textsc{T10I10D100K} dataset contains \textit{870} items and \textit{100, 000} transactions. We also note, for the different datasets, that the extraction costs are slightly  sensitive to the changes of the \textit{minsupp} and \textit{minbond} values.

We present, in Tables \ref{labelRCPR1} and \ref{labelRCPR2}, the CPU time corresponding to each step of the \textsc{Gmjp} algorithm and depending respectively on the variation of \textit{minsupp} and on the variation of \textit{minbond}.

\begin{table}
	\begin{tabular}{lrrrrrr}
		\hline
		\noalign{\smallskip}
		\textbf{Dataset}& \textbf{\textit{minbond}} & \textbf{\textit{minsupp}}  & \textbf{First Step} & \textbf{Second Step} & \textbf{Third Step}\\
		\noalign{\smallskip}\hline\noalign{\smallskip}
		\textsc{Mushroom} &{0.3}& {20$\%$} & {6} & {13} &{0}\\
		&     & {40$\%$} & {6} & {13} &{0}\\
		&     & {60$\%$} & {6} & {14} &{0}\\
		&     & {80$\%$} & {6} & {15} &{0}\\\hline
		\textsc{Retail}   &{0.5}& {5$\%$} &{276} &{1, 627} &{0}\\
		&     & {10$\%$}&{273} &{1, 627} &{0}\\
		&     & {30$\%$}&{274} &{1, 628} &{0}\\
		&     & {50$\%$}&{275} &{1, 628} &{0}\\\hline
		\textsc{Accidents}&{0.5}& {1$\%$} &{724} &{67} &{0}\\
		&     & {3$\%$}&{714} &{67} &{0}\\
		&     & {5$\%$}&{716} &{67} &{0}\\
		&     & {10$\%$}&{715} &{67} &{0}\\
		&     & {15$\%$}&{717} &{67} &{0}\\
		&     & {20$\%$}&{717} &{67} &{0}\\
		\noalign{\smallskip}\hline
	\end{tabular}
	\caption{Impact of the variation of \textit{minsupp}, for the three steps of the \textsc{Gmjp} algorithm \textsc{(}Time in seconds\textsc{)}.}
	\label{labelRCPR1}
\end{table}
% % % % % % % % % % % % % % % % % % % % % % % % % % % % % % % % % % % % %

We conclude, according to these results, that the obtained execution times are slightly sensitive to the variation of the \textit{minbond} and \textit{minsupp}
values.
\begin{table}
	\hspace{-0.5cm}
	\begin{tabular}{lrrrrrr}
		\hline
		\noalign{\smallskip}
		\textbf{Dataset}& \textbf{\textit{minsupp}} & \textbf{\textit{minbond}}& \textbf{First Step} & \textbf{Second Step} & \textbf{Third Step}\\
		\noalign{\smallskip}\hline\noalign{\smallskip}
		\textsc{Mushroom} &{50$\%$}     & {0.4} & {6} & {4} &{0}\\
		&             & {0.7} & {6} & {1} &{0}\\
		&             & {1} & {6} & {1} &{0}\\\hline
		\textsc{T10I4D100K}& {5$\%$} & {0.2}&{25} &{137} &{0}\\
		& & {0.4}&{26} &{138} &{0}\\
		& & {0.6}&{26} &{138} &{0}\\
		& & {0.8}&{25} &{138} &{0}\\
		& & {1}&{25} &{137} &{0}\\
		\noalign{\smallskip}\hline
	\end{tabular}
	\caption{Impact of the variation of \textit{minbond}, for the three steps of the \textsc{Gmjp} algorithm \textsc{(}Time in seconds\textsc{)}.}
	\label{labelRCPR2}
\end{table}

We have, for example, the CPU time needed for the execution of the first step for the \textsc{T10I4D100K} dataset is about \textit{26} seconds, while
for the \textsc{Retail} dataset, the execution of the first step needs about \textit{275} seconds. This is justified by the fact that
the \textsc{T10I4D100K} dataset contains only \textit{870} items while the \textsc{Retail} dataset contains \textit{16, 470} items.
We also remark, for the \textsc{Accidents} dataset, that the execution times are relatively high compared to the other datasets. This is justified
by its high number of transactions, equal to \textit{340, 183} transactions which induces that the first step becomes more costly.
In this regard, the first step of the building of the BSVector and the COVector of the items needs about \textit{720} seconds and it lasts
more than the second step, which needs only \textit{67} seconds.

It is also important to mention that the CPU time dedicated to the third step, allowing to filter the global minimal and the global closed patterns among the sets of the identified local minimal and local closed ones, is negligible and equal to null. This confirms the very good choice of the suitable multimap data structures during the third step.

We also note, that the execution times needed for the post-processing of the representations $\mathcal{MM}$ax$\mathcal{CR}$, $\mathcal{M}$in$\mathcal{MCR}$ from the $\mathcal{RCPR}$ representation are negligible.

% % Worth of mention, other performance evaluations are also possible.
%Since, our approach is to mine the \textsc{Jaccard} patterns in a generic way, as a result we can compare it to the \textsc{Jim} \cite{borgelt} approach. The obtained output of \textsc{Jim} needs to be post-processed to find the $\mathcal{RCP}$ set and its associated $\mathcal{RCPR}$ representation.
We have, thus, evaluated the performance of the \textsc{Gmjp} approach while running the four different execution scenarios. We focused specially on the fourth scenario dedicated to the extraction of the $\mathcal{RCPR}$ representation.  In the next section, we evaluate \textsc{Opt-Gmjp} the optimized version of the \textsc{Gmjp} approach.
% % % % % % % % % % % % % % % % % % % %
\section{Optimizations and Evaluations} \label{sec_Opt}
%In the next subsection,
%\subsection{Performance of \textsc{Cori} \textit{vs.} \textsc{Jim}}
% % % % % % % % % % % % % % % % % % % %
Our aim in the next subsections is to evaluate the impact of varying the thresholds of both the correlation and the rarity constraints.

In the remainder, we study the impact of the rarity constraint threshold variation on the execution time of the \textsc{Opt-Gmjp} version.
% % % % % % % % % % % % % % % % %
\subsection{Effect of \textit{minsupp} variation}
% % % % % % % % % % % % % % % % % % % % % % %
Table \ref{labeltimexp2} presents our results while fixing the \textit{minbond} threshold and varying the monotone constraint of rarity threshold,  \textit{minsupp}.
We consider as an example the \textsc{Mushroom} dataset, while varying \textit{minsupp}  from {20$\%$} to {80$\%$}, 
the size of the result set |$\mathcal{RCP}$| varies from 261 itemsets to 3352 itemsets while the CPU-time and the memory consumption underwent a slight variation.
Whereas, for the \textsc{T40I10D100K} dataset, the variation of \textit{minsupp}  from {2$\%$} to {15$\%$} induces an increase in the CPU-time of the second and third steps
from {60.09} to {79.49} seconds. The size of the output result increase also from {341} to {932} itemsets.

The \textsc{Chess} dataset presents a specific behavior according to \textit{minsupp}  variation.
The variation of \textit{minsupp}  from {30$\%$} to {50$\%$} induces an increase in the CPU-time of the second and third steps
from {5.604} to {300.163} seconds. The size of the $\mathcal{RCP}$ set increase in a very significant way  from {618} to {36010, 648} itemsets.
% % % % % % % % % % % % % % % % % % % % % % % %

% % % % % % % % % % % % % % % % % % % %
\subsection{Effect of \textit{minbond} variation}
% % % % % % % % % % % % % % % % % % % % % % % % %
Table \ref{labelbondxp3} presents the results obtained when varying the correlation threshold  \textit{minbond}  for a fixed
\textit{minsupp} threshold.  In this experiment, we found that for the
\textsc{Mushroom}  dataset,  the \textit{minbond} threshold was  chosen to be increasingly  selective, from {0.2} to the highest value, equal to {1}.
This variation  affects very slightly  the CPU-time and the memory consumption, while the size of the output set decreases sharply from
{54, 395}  to 126 itemsets.
Whereas, for the sparse
\textsc{T10I4D100K}  dataset, the \textit{minbond} variation act slightly  on the execution time.
It increases just by 2 seconds, and the output's size decreases from 915 to 860 itemsets.
However,  for the \textsc{Chess} dataset,
the size of $\mathcal{RCP}$ set and the CPU-time are very sensitive to  the \textit{minbond}  variation.
For example, a slight variation of \textit{minbond} from 0.40 to 0.45 induces an important decrease of the $\mathcal{RCP}$ set from
5167,  090  to  1560,  073 itemsets. The CPU-time  is also lowered from 40.124 to 0.451 seconds when \textit{minbond} decrease from 0.4 to 0.5.

% % % % % % % % % % % % % % % % %
% %\subsection{Summary}

The most interesting observation we found from the previous experiments was that the choice of very selective correlation threshold  do not affect significantly the CPU-time and the memory consumption, while it affects the size of our result set. Whereas,  the fact of pushing more selective the rarity constraint increases the execution time needed for the second and the third steps.  This confirms that monotone and anti-monotone constraints are mutually of use in the selectivity. It is also important to mention that, the first step of transforming the database is not affected by both constraints variation.

In the next sub-section, we evaluate the proposed optimization and we compare the optimized version of \textsc{Gmjp} \textit{vs.} the \textsc{Jim} approach \cite{borgelt}.
%%%%%%%%%%%%%%%%%%%%%%%%%%%%%%%%%%%%%%%%%%%%%%%%%%%%%%%%%
\begin{sidewaystable} [htbp]
	\begin{center}
		\hspace{+0.7cm}
		\begin{tabular}{lrrrrll}
			\hline
			\noalign{\smallskip}
			\textbf{Dataset}           & \textbf{\textit{minbond}}             & \textbf{\textit{minsupp}} &   \textbf{|$\mathcal{RCP}$|}      & \textbf{CPU Time}       &  \textbf{CPU Time}  &  \textbf{Avg. Memory}\\
			&                                                              &                                               &                                                           & \textbf{ Step 1}           &   \textbf{Steps 2 and 3}      & \textbf{Consumption (Ko)}\\
			\noalign{\smallskip}\hline\noalign{\smallskip}
			\textsc{Mushroom}                        &{0.30}            & {20$\%$}  &    {261}                 &           {0.20}                       & \textbf{{0.208}} & \\
			&                        & {40$\%$}  &        {2810}                           &           {0.20}        & \textbf{{0.246}} & {18, 850}\\
			&                    & {80$\%$}  & {3352}                                            &           {0.20}                       & \textbf{{0.247}} & \\\hline
			\textsc{T40I10D100K}                    & {0.50}        & {2$\%$}   &      {341}                                     &                {16}                 &\textbf{{60.09}}  & \\
			&                     & {11$\%$}     &          {889}                               &                {16}                 &\textbf{{63.028}}  & {131,516}\\
			&                    & {15$\%$}        &            {932}                          &     {16}                 &\textbf{{79.49}}  & \\\hline
			\textsc{Chess}                                &{0.30}            & {10$\%$}  &    {16}                                 &           {0.068}                       & \textbf{{0.116}} & \\
			&                       & {30$\%$}  &        {618}                           &           {0.068}                       & \textbf{{5.604}} & {13,509}\\
			&                    & {50$\%$}  & {36010, 648}                     &           {0.068}                       & \textbf{{300.163}} & \\\hline
			\noalign{\smallskip}
		\end{tabular}
		\caption{Impact of the rarity  threshold \textit{minsupp} variation.}
		\label{labeltimexp2}
	\end{center}
\end{sidewaystable}
%%%%%%%%%%%%%%%%%%%%%%%%%%%%%%%%%%%%%%%%%%%%%%%%%%%%%%%%%%%%%%
% % % % % % % % % % % % % % % % % % % % % % % % % % % % %
\begin{sidewaystable} [htbp]
	\begin{center}
		\hspace{+0.7cm}
		\begin{tabular}{lrrrlll}
			\hline
			\noalign{\smallskip}
			\textbf{Dataset}           & \textbf{\textit{minsupp}}             & \textbf{\textit{minbond}}  &  \textbf{|$\mathcal{RCP}$|}      & \textbf{CPU Time}       &  \textbf{CPU Time}  &  \textbf{Avg. Memory} \\
			&                                                              &                                                 &                                                         & \textbf{ Step 1}           &   \textbf{Steps 2 and 3}      & \textbf{Consumption (Ko)}\\
			\noalign{\smallskip}\hline\noalign{\smallskip}
			\textsc{Mushroom}   & {40$\%$}                   & {0.2} &     {54, 395}     &  {0.20}                         & \textbf{{0.977}} & {18, 590}   \\
			&                                         & {1} &        {126}          &{0.20}                         & \textbf{{0.198}} &     \\\hline
			\textsc{Chess}         & {50$\%$}                   & {0.40} &          {5167,  090}     &  {0.068}                         & \textbf{{40.124}} &  \\
			&                                     & {0.45} &        {1560, 073}          &{0.068}                         & \textbf{{12.127}} &     \\
			&                                     & {0.50} &           {162}          &{0.068}                         & \textbf{{0.451}} &    {13, 556}  \\
			&                                     & {0.60} &           {40}          &{0.068}                         & \textbf{{0.073}} &     \\
			&                                     & {1} &           {38}          &{0.068}                         & \textbf{{0.054}} &     \\\hline
			\textsc{T10I4D100K}   & {5$\%$}                 & {0.40} &         {915}      & {16}              &\textbf{{47.95}}  & {131, 572}\\
			&                                  & {1}      &           {860}     &{16}              &\textbf{{49.48}}  & \\\hline
			
			\noalign{\smallskip}
		\end{tabular}
		\caption{Impact of the correlation threshold \textit{minbond} variation.}
		\label{labelbondxp3}
	\end{center}
\end{sidewaystable}
% % % % % % % % % % % % % % % % % % % % % % % % % % % % % %
\subsection{Performance of \textsc{Opt-Gmjp} \textit{vs.} \textsc{Gmjp}}

\textbf{6.5.4.1 Comparison of \textsc{Opt-Gmjp} \textit{vs.} \textsc{Gmjp}} \\

The goal of our evaluation is to compare the scalability level of  \textsc{Opt-Gmjp} \textit{vs.} \textsc{Gmjp}.
In fact, scalability is an important criteria for constrained itemset mining approaches. Our \textsc{Opt-Gmjp} algorithm demonstrates  good scalability as far as we increase the size of the datasets according to two dimensions: the number of transactions
|$\mathcal{T}$| and the number of items |$\mathcal{I}$|.
While  \textsc{Gmjp} reached a point where it consumed about seven times more CPU-time than \textsc{Opt-Gmjp}.
Tables \ref{Cmp1} and \ref{CORI2} reported our results while varying respectively \textit{minsupp} and  \textit{minbond}.

As example,  while testing the \textsc{Mushroom}  dataset containing 8, 124 transactions,
\textsc{Opt-Gmjp} finishes in average in {0.432} seconds while \textsc{Gmjp}  needs  in average $11$ seconds.
As another experiment example, we tested also on \textsc{Accidents} with  340, 183 transactions and 468 items. \textsc{Opt-Gmjp} finishes in $47.617$ seconds
while \textsc{Gmjp}  needs  $793$ seconds.  \textsc{Gmjp}  finished the \textsc{Mushroom}  dataset with about 8K transactions  in $20$ seconds while
\textsc{Opt-Gmjp} finished, in average, in  $0.278s$, $52.591s$ and  $47.617s$ for the 8K, 100K and 340K transactions datasets,  respectively.
% % % % % % % % % % % % % % %
\begin{table} [htbp]
	\begin{center}
		
		\hspace{+0.7cm}
		\begin{tabular}{lrrrlr}
			\hline
			\noalign{\smallskip}
			\textbf{Dataset} &\textbf{\textit{minsupp}}& \textbf{\textit{minbond}}&\textbf{CPU Time}& \textbf{CPU Time}  \\
			& &                              & \textbf{\textsc{Gmjp}} & \textbf{ \textsc{Opt-Gmjp}}  \\
			\noalign{\smallskip}\hline\noalign{\smallskip}
			\textsc{Mushroom}   & {20$\%$} & {0.30}     &  {20}                       & \textbf{{0.14}}\\
			& {40$\%$} &            &     {19}                         & \textbf{{0.18}}\\
			& {60$\%$} &            &     {19}                         & \textbf{{0.18}}\\
			& {80$\%$} &            &     {21}                         & \textbf{{0.18}}\\
			\hline
			\textsc{Accidents} & {1$\%$}                   & {0.50}   &{802}         &\textbf{{22}}\\
			& {3$\%$}                   &          &{802}         &\textbf{{22}}\\
			& {5$\%$}                   &         &{790}         &\textbf{{21}}\\
			& {10$\%$}                   &    &{783}         &\textbf{{21}}\\
			& {12$\%$}                   &    &{788}         &\textbf{{22}}\\
			\hline
			\textsc{T40I10D100K}   & {2$\%$}           & {0.50}  &        {489}    &\textbf{{51}}\\
			& {5$\%$}           &   &        {494}    &\textbf{{53}}\\
			& {8$\%$}           &   &        {493}    &\textbf{{51}}\\
			& {11$\%$}           &   &        {489}    &\textbf{{51}}\\
			& {15$\%$}           &   &        {490}    &\textbf{{51}}\\
			
			\noalign{\smallskip}\hline
		\end{tabular}
		\caption{Performance comparison  of  \textsc{Opt-Gmjp} \textit{vs.} \textsc{Gmjp} while varying \textit{minsupp} \textsc{(}Time in seconds\textsc{)}.}
		\label{Cmp1}
	\end{center}
\end{table}
% % % % % % % % % % % % % % % % % % % % % % %

\begin{table} [htbp]
	\begin{center}
		\hspace{+0.7cm}
		\begin{tabular}{lrrrlr}
			\hline
			\noalign{\smallskip}
			\textbf{Dataset} &\textbf{\textit{minsupp}}& \textbf{\textit{minbond}}&\textbf{CPU Time}& \textbf{CPU Time}  \\
			& &                              & \textbf{\textsc{Gmjp}} & \textbf{ \textsc{Opt-Gmjp}}  \\
			\noalign{\smallskip}\hline\noalign{\smallskip}
			\textsc{T10I4D100K} & {5$\%$}                 & {0.20} &      {163}              &\textbf{{39}}\\
			&                 & {0.40} &      {164}              &\textbf{{40}}\\
			&                 & {0.60} &      {163}              &\textbf{{39}}\\
			&                 & {0.80} &      {163}              &\textbf{{39}}\\
			&                 & {1} &      {163}              &\textbf{{39}}\\
			\hline
			\textsc{Mushroom} & {40$\%$}            & {0.20} &      {21}              &\textbf{{0.90}}\\
			&                 & {0.40} &      {9}              &\textbf{{0.14}}\\
			&                 & {0.70} &      {7}              &\textbf{{0.13}}\\
			&                 & {1} &      {7}              &\textbf{{0.13}}\\
			\noalign{\smallskip}\hline
		\end{tabular}
		\caption{Performance comparison  of  \textsc{Opt-Gmjp} \textit{vs.} \textsc{Gmjp} while varying \textit{minbond} \textsc{(}Time in seconds\textsc{)}.}
		\label{CORI2}
	\end{center}
\end{table}
% % % % % % % % % % % % % % % % % % % % % % % % % % % % % % % % %

% % % % % % % % % % % % % % % % % % % % % % % % % % % % % % % % %
\begin{landscape}
	\pagestyle{empty}
	\font\xmplbx = cmbx7.5 scaled \magstephalf
	\begin{table}
		\hspace{-2.0cm}
		\begin{tabular}{|l|c|c|rr|rr|rr|rr|}
			\hline
			\noalign{\smallskip}
			\textbf{Dataset}& \textbf{Avg}& \textbf{Avg}& \multicolumn{2}{c|} {\textbf{S1}} & \multicolumn{2}{c|} {\textbf{S2}} & \multicolumn{2}{c|} {\textbf{S3}} & \multicolumn{2}{c|} {\textbf{S4}} \\
			
			& \textbf{\textit{minsupp}} & \textbf{\textit{minbond}}&   &  &   &  & & &  & \\
			\midrule
			% %\cline{4-5} \cline{6-7} \cline{8-9} \cline{10-11}
			&                          &                           &   \textbf{Gmjp}& \textbf{Opt-Gmjp} &  \textbf{Gmjp}& \textbf{Opt-Gmjp} & \textbf{Gmjp}& \textbf{Opt-Gmjp} & \textbf{Gmjp}& \textbf{Opt-Gmjp} \\
			\noalign{\smallskip}\hline\noalign{\smallskip}
			\textsc{Mushroom}  & {58$\%$}    & {0.30} & {7} & {0.114}   & {11.4} & {0.052}& {20} & {0.172}& {19.6} & {0.206} \\
			& {40$\%$}   & {0.57} & {3.75} & {0.096} & {5.25} & {0.058} & {11} & {0.325}& {709} & {0.525}\\\hline
			\textsc{Accidents} &{7.8$\%$}   & {0.50} & {709} &{7.094} & {703} &{7.978} &{793} &{22.034} &{784.2} &{22.430}\\\hline
			\textsc{T10I4D100K} & {5$\%$}   & {0.20} & {2} &{0.132} & {3} &{0.14}&{163} & {39.804}&{163} & {39.424}\\\hline
			\textsc{T40I10D100K} & {8.2$\%$}    & {0.50} & {148} & {4.222}& {182.6} & {7.566} &{491} & {51.798}&{490.4} & {51.740}\\
			\noalign{\smallskip}\hline
		\end{tabular}
		\caption{Summarized Comparison of the Performance of \textbf{\textsc{Gmjp}} \textit{vs.} \textbf{Optimized \textsc{Gmjp}} \textsc{(}Time in seconds\textsc{)}.}
		\label{addlabel}
	\end{table}
\end{landscape}
%%%%%%%%%%%%%%%%%%%%%%%%%%%%%%%%%%%%%%%%%

%%%%%%%%%%%%%%%%%%%%%%%%%%%%%%%%%%%%%%%%
% % % % % % % % % % % % % % % % % %
We highlight that \textsc{Opt-Gmjp} outperformed  \textsc{Gmjp} in the different evaluated bases. This is dedicated to the efficient integration of the monotone and anti-monotone constraints in an early  stages of the mining process.
We also present in Table \ref{addlabel} a summarized comparison of the performances of  \textsc{Gmjp} \textit{vs.} \textbf{Optimized \textsc{Gmjp}} over the four different scenarios S1, S2, S3 and S4. We thus conclude that the optimized version of \textsc{Gmjp} offers important reduction of the running time over all the tested benchmark datasets and for wide range of constraints threshold. In what follows, we evaluate our optimized version
\textit{vs.} the \textsc{Jim} approach \cite{borgelt}.\\

\textbf{6.5.4.2 Comparison of \textsc{Opt-Gmjp} \textit{vs.} \textsc{Jim}} \\

% % % % % % % % % % % % % % % % % % % % % % % % % % % % % % % % % % % % % % % %

The goal of these experiments is to prove the competitive performances of \textsc{Opt-Gmjp} compared to other state-of the art approaches dealing with frequent correlated itemsets. Our comparative study covers the \textsc{Jim} approach \cite{borgelt} which is implemented in the C language and is publicly available.
%%%%%%%%%%%%%%% % % % % %
\font\xmplbx = cmbx7.5 scaled \magstephalf
\begin{table}[htbp]
	% %\vspace{-0.35cm}
	\centering
	\twlrm{\begin{tabular}{ l l l || l  l || l  l }
			\hline
			\textbf{Dataset}                          & \textbf{\textit{minsupp}} &  \textbf{\textit{minbond}}      &    \textbf{Opt \textsc{Gmjp}}  &  \textbf{\textsc{Jim} } &    \textbf{Opt \textsc{Gmjp}} &     \textbf{\textsc{Jim}} \\
			&                                               &                                                       &   \textbf{S1}&                                     &                         \textbf{S2}               & \\
			\hline
			\textsc{T10I4D100K}                                 & {5$\%$}                   & {0.20}     &    \textbf{0.133}                         & 0.20&    \textbf{0.15}       &0.19\\
			&                                       & {0.40} &    \textbf{0.133}                          & 0.19&        \textbf{0.13}   & 0.18\\
			&                                       & {0.60} &    \textbf{0.132}                            & 0.18 &       \textbf{0.13}     & 0.18\\
			&                                       & {0.80} &     \textbf{0.129}                            & 0.18&       \textbf{0.13}      & 0.18\\
			&                                      & {1}        &       \textbf{0.135}                          & 0.19 &      \textbf{0.13}  & 0.19\\
			\hline
			\textsc{Mushroom}                                                                            & {20$\%$}  & {0.30} &       0.082                     &  \textbf{0.06}&      0.140     &\textbf{0.03}\\
			& {40$\%$} &               &             \textbf{0.029}                   & 0.03&       0.060    & \textbf{0.03}\\
			& {60$\%$} &                &           0.200                      & \textbf{0.02} &        0.020    & 0.02\\
			& {80$\%$}&                 &               0.200                 & \textbf{0.02}&      0.023       & \textbf{0.02}\\
			& {90$\%$}   &                &                0.210                 & \textbf{0.02} &    \textbf{ 0.019}   & 0.02\\
			\hline
			\textsc{Retail}                                                                           & {5$\%$}  & {0.50} &          \textbf{0.249}                 &  0.25&    0.46       & \textbf{0.25}\\
			& {10$\%$} &               &            0.250                   & 0.23&     0.36      &  \textbf{0.24}\\
			& {20$\%$} &                &               0.249              & 0.22 &    0.26        &  \textbf{0.22}\\
			& {40$\%$}&                 &                0.240            & 0.23&       0.25     &  \textbf{0.22}\\
			& {50$\%$}   &                &              0.240            & 0.22 &      0.36  &  \textbf{0.20}\\
			\hline
	\end{tabular}}
	\smallskip
	\caption{Performance comparison  of our Improved  \textbf{\textsc{Opt-Gmjp}} \textit{vs.}  \textsc{Jim} \cite{borgelt} \textsc{(}Time in seconds\textsc{)}.}
	\label{TabCmpJim}
\end{table}
% % % % % % % % % % % % % % %
We report in Table \ref{TabCmpJim}
$^\textsc{(}$\footnote{We note that ``S1'' stands for the First Scenario and ``S2'' stands for the Second Scenario.}$^{\textsc{)}}$
a comparison between our improved \textsc{Gmjp} approach with the  \textsc{Jim} approach \cite{borgelt}.
Our comparative study is restricted to the first Scenario S1 and second scenario S2, since the  \textsc{Jim} approach does not consider the rare correlated patterns. Therefore, we are not able to compare the third and
the fourth scenarios S3 and S4.
We highlight that our running time are competitive to those achieved by \textsc{Jim} for different ranges of frequency and correlation thresholds. Note-worthily, for the \textsc{T10I4D100K} dataset, our obtained results are even better than \textsc{Jim} for both first and second scenarios. While, for the  \textsc{Mushroom} dataset, the results of the first scenario are very close to those of \textsc{Jim}. Whereas, \textsc{Jim} outperformed our \textsc{Gmjp} in the second scenario when extracting the frequent closed correlated itemsets.
% % % % % % % % % % % %% % % % % % % % % % % % % % % % % % % %
\section{Conclusion} \label{ConcChap6}
We presented in this chapter the experimental evaluation of our \textsc{Gmjp} mining approach. The evaluation is based on two main axes, the first is related to the compactness rates of the condensed representations  while the second axe concerns the running time. We measured the global performance of \textsc{Gmjp} then we focused on the performance of the fourth execution scenario S4. The optimized version \textsc{Opt-Gmjp} presents much better performance than \textsc{Gmjp} over different benchmark datasets. The two main features which constitute the thrust of the improved version: (\textit{i}) only one scan of the database is performed to build the new transformed dataset; (\textit{ii}) it offers a resolution of the problem of handling both rarity and correlation constraints. In the next chapter, we present the classification process based on correlated patterns.

\chapter{Associative-Classification Process based on Correlated Patterns}\label{ch_7}

\section{Introduction} \label{IntroChap7}

In this chapter,  we put the focus on the classification process based on correlated patterns.
The second section is devoted to the description of the framework of the association rules. We continue in the third section with a specific kind of association called ``Generic Bases of Association Rules``.
The fourth section presents the description of the associative-classification
based on correlated patterns. We evaluate the classification accuracy of  frequent correlated patterns \textit{vs.} rare correlated patterns. In Section \ref{se-IDS}, we present the application of rare correlated patterns on the classification of intrusion detection data derived from the \textsc{KDD 99} dataset. In Section \ref{SecGene}, we propose the process of applying the $\mathcal{RCPR}$ representation on the extraction of rare correlated association rules from Micro-array gene expression data.
\section{Overview of association rules} \label{se2}
The extraction of association rules is one of the most important techniques in data mining \cite{BoukerSYN14,GasmiYNB07}. The leading approach
of generating association rules is based on the extraction of frequent patterns \cite{Agra94}.
We clarify the basic notions related to association rules through the following definitions.

\begin{definition} \textbf{Association Rule} \\
	An association rule $R$ is a relation between itemsets, in the form $R$ : $A$ $\Rightarrow$ $B$$\backslash$$A$, with $A$ and $B$ are two itemsets and $A$ $\subset$ $B$. The itemset $A$ is called  `Premise` \textsc{(}or `Antecedent`\textsc{)} whereas
	the itemset $B$$\backslash$$A$ is called `Conclusion` \textsc{(}or `Consequent`\textsc{)} of the association rule $R$.
\end{definition}
Each association rule, $R$ : $A$ $\Rightarrow$ $B$$\backslash$$A$, is characterized by:
\begin{enumerate}
	\item \textbf{The value of the Support:} Corresponding to the number of times where the
	association holds reported by the number of occurrence of the itemset $B$. The support metric assesses the frequency of the association rule.
	\item \textbf{The value of the Confidence:} Corresponding to the number of times where the
	association holds reported by the number of occurrence of the itemset $A$. The confidence expresses the reliability of the rule.
\end{enumerate}
The support and the confidence are formally defined as follows:
\begin{definition} \textbf{Support, Confidence of an association rule} \\
	Let an association rule $R$ : $A$ $\Rightarrow$ $B$$\backslash$$A$, its support, denoted by  \textit{Supp}\textsc{(}$R$\textsc{)} = \textit{Supp}\textsc{(}$B$\textsc{)}, where as the confidence, denoted by,
	\textit{Conf}\textsc{(}$R$\textsc{)} = $\displaystyle \displaystyle\frac{\textit{Supp}\textsc{(}B\textsc{)}}{\textit{Supp}\textsc{(}A\textsc{)}}$.
\end{definition}
\begin{definition} \textbf{Valid, Exact and Approximative Association Rule} \\
	An association rule $R$ is said Valid whenever:\\
	$\bullet$ The value of the confidence is greater than or equal to the minimal threshold of confidence \textit{minconf}, and \\
	$\bullet$ The value of its support is greater than or equal to the minimal threshold of support  \textit{minsupp}.
	If the confidence of the rule $R$, Conf\textsc{(}R\textsc{)}, is equal to 1 then the rule $R$ is said an Exact association rule, otherwise it is said approximative.
\end{definition}
The extraction of the association rules consists in determining the set of valid rules \textit{i.e.}, whose support and confidence are at least equal, respectively, to a minimal threshold of support \textit{minsupp} and a minimal threshold
of confidence \textit{minconf} predefined by the user. This problem is decomposed into two subproblems \cite{Agra94} as follows:

1. Extraction of frequent itemsets;

2. Generation of valid association rules based on the frequent extracted itemset set: the generated rules
are in the form $R$ : $A$ $\Rightarrow$ $B$$\backslash$$A$, with $A$ $\subset$ $B$ and Conf\textsc{(}$R$\textsc{)}$\geq$ \textit{minconf}.

% % % % % % % % % % % % % % % % % % % % % % % % % %
The association rule extraction problem suffers from the high number of the generated association rules from the frequent itemset set. In fact, the number of the extracted frequent itemsets
can be exponential in function of the number of items $|\mathcal{I}|$.
In fact, from a frequent itemset $F$, we can generate $2^{|F|}-1$ association rules. The huge number of association rules leads to a deviation regarding to the principal objective namely, the discovery of reliable knowledge and with a manageable size. To palliate this problem, many techniques derived from the Formal Concept Analysis \textsc{(}FCA\textsc{)}, were proposed. These techniques aimed to reduce, without information loss, the set of association rules. The main idea is to determine a minimal set of association rules allowing to derive the redundant association rules, this set is called ```Generic bases of association rules''.

\section{Extraction of the generic bases of association rules} \label{se3}
The approaches derived from the \textsc{FCA} allows to extract the generic bases of association rules. These generic bases allow to derive the set of redundant association rules without information loss. In fact, these bases constitute a compact set of association rules easily interpretable by final user.
Every generic base constitutes an information lossless representation of the whole set of association rules
if it fulfills the following properties \cite{marzena022}:
\begin{itemize}
	\item \textbf{Lossless:} The generic base must enable the derivation of all valid association rules,
	\item \textbf{Sound:}  The generic base must forbid the derivation of the non valid association rules, and,
	\item \textbf{Informative:} The generic base must allow to exactly retrieve the support and confidence values of all the generated rules.
\end{itemize}
The majority of the generic bases of association rules express implications between minimal generators and closed frequent itemsets \cite{marzena022,tarekcla06_paper_2_revised_version,pasquierIGIBook09}. In this thesis,
we focus on the $\mathcal{IGB}$ generic base \cite{IGB} defined in what follows.
\begin{definition} \textbf{The $\mathcal{IGB}$ Generic Base} \cite{IGB}\\
	Let $\mathcal{FCP}$ be the set of frequent closed patterns
	and let  $\mathcal{FMG}$ be the set of frequent minimal generators of all the frequent closed itemsets include or equal to
	a frequent closed itemset $F$. The $\mathcal{IGB}$ base is defined as follows:\\
	$\mathcal{IGB}$ = $\{$$R$:  $fmg$ $\Rightarrow$ \textsc{(}$F$$\backslash$$fmg$\textsc{)}
	$\lvert$ $F$ $\in$ $\mathcal{FCP}$, $fmg$ $\in$ $\mathcal{FMG}$, \textsc{(}$F$$\backslash$$fmg$\textsc{)} $\neq$ $\emptyset$,
	Conf\textsc{(}$R$\textsc{)} $\geq$ \textit{minconf}, $\nexists$ $g_{1}$ $|$ $g_{1}$ $\in$ $\mathcal{FMG}$ and
	Conf\textsc{(}$g_{1}$ $\Rightarrow$ \textsc{(}$F$$\backslash$$g_{1}$\textsc{)}\textsc{)} $\geq$ \textit{minconf}.$\}$
\end{definition}
Thus, the generic rules of the $\mathcal{IGB}$ generic base represent implications between
the minimal premises, according to the size on number of items, and the maximal conclusions.

\section{Association rule-based classification process} \label{chap7se4}
\subsection{Description} \label{subse-descrip}
We present in the following, the application of the $\mathcal{RCPR}$ and the $\mathcal{RFCCP}$ representations in the
design of an association rules based classifier. In fact, we used the $\mathcal{MRCP}$ and the $\mathcal{CRCP}$ sets, composing the $\mathcal{RCPR}$ representation,
within the generation of the generic
$^\textsc{(}$\footnote{By ``generic'', it is meant that these rules are with minimal premises and maximal conclusions, w.r.t. set-inclusion.}$^{\textsc{)}}$
rare correlated rules.
The $\mathcal{RFCCP}$ representation is used to generate generic
frequent correlated rules, of the form $Min$ $\Rightarrow$ $Closed$ $\setminus$ $Min$, with $Min$ is a minimal generator and $Closed$ is a closed pattern.
% % % % % % % % % % % % %
% % % % % % % % % % % %
The procedure allowing the extraction of the generic correlated association rules is an adapted version of the original \textsc{GEN-IGB} \cite{IGB} that we implemented as a \textsc{C++} program.

Then, from the generated set of the generic rules, only the classification rules will be retained, \textit{i.e.}, those having the label of the class in its conclusion part.
Subsequently, a dedicated associative-classifier is fed with these rules and has to perform the classification process and returns the accuracy rate for each class.

The aim of the evaluation of the classification process is the comparison of the
effectiveness of frequent correlated patterns \textit{vs.} rare correlated patterns within the classification process.  The comparison is carried out through two directions:\\
$\bullet$ Study of the impact of \textit{minbond} variation \\
$\bullet$ Study of the impact of \textit{minconf} variation.

\subsection{Effect of \textit{minbond} variation}

The accuracy rate of the classification, is equal to
$\frac{NbrCcTr}{TotalNbrTr}$, with $NbrCcTr$ stands for the number of the correctly classified transactions and $TotalNbrTr$ is equal to the whole number of the classified transactions.
The classification results reported in Table \ref{minbondcls1} corresponds to the variation of the correlation constraint for a fixed \textit{minsupp} and \textit{minconf} thresholds, with \textit{minconf} corresponds to the minimum threshold of the confidence measure \cite{Agra94}.

We remark, for the frequent correlated patterns, that as far as we increase the \textit{minbond} threshold, the number of exact and approximate association rules decreases while maintaining always an important accuracy rate. Another benefit
for the \textit{bond} correlation measure integration, is the improvement of the response time, that varies from \textit{1000} to \textit{0.01} seconds.
Whereas, for the rare correlated patterns, we highlight that the increase of the \textit{minbond} threshold induces a reduction in the accuracy rate. This is explained by a decrease in the number of the obtained classification rules.

% % % % % % % % % % % % % %
\begin{landscape}
	\thispagestyle{empty}
	\begin{table}
		\hspace{-2.3cm}
		%\hspace{-1.1cm}
		\begin{tabular}{lrrrrrrrrr}
			\hline
			\noalign{\smallskip}
			\textbf{Dataset}& \textbf{\textit{minsupp}} & \textbf{\textit{minconf}}&  \textbf{\textit{minbond}}  & \textbf{$\#$ Exact} & \textbf{$\#$ Approximate} & \textbf{$\#$ Classification} & \textbf{Accuracy} & \textbf{Response} & \textbf{Property of}\\
			&  & &   & \textbf{Rules} & \textbf{Rules} & \textbf{Rules} & \textbf{rate} & \textbf{Time \textsc{(}sec\textsc{)}} & \textbf{Patterns}\\
			\noalign{\smallskip}\hline\noalign{\smallskip}
			\textsc{Wine}     & {1$\%$}& {0.60} & {0} & {387} & {5762} &{650} & {97.75$\%$} & {1000} &{Frequent}\\
			&     &          & {0.10} & {154} &{2739}   &{340} & {95.50$\%$} &{13.02}&{Frequent}\\
			&     &          & {0.20} & {60} &{1121}   &{125} & {94.38$\%$} & {1.00}&{Frequent}\\
			&     &          & {0.30} & {20} &{319}   &{44} & {87.07$\%$} & {0.01}&{Frequent}\\\hline
			\textsc{Zoo}      & {50$\%$}& {0.70} & {0.30} & {486} & {2930} &{235} & {89.10$\%$} & {40}&{Rare}\\
			&     &            & {0.40} & {149} &{436}   &{45} & {89.10$\%$} &{3}&{Rare}\\
			&     &            & {0.50} & {38} &{88}   &{11} & {83.16$\%$} & {0.01}&{Rare}\\
			&     &            & {0.60} & {12} &{31}   &{6} & {73.26$\%$} & {0.01}&{Rare}\\\hline
			\textsc{TicTacToe}&  {10$\%$}   &{0.80}&   {0} & {0} & {16} &{16} & {69.40$\%$}&-&{Frequent}\\
			&&&{0.05}  & {0} & {16} &{16} & {69.40$\%$}&-&{Frequent}\\
			&&&{0.07}  & {0} & {8} &{8} & {63.25$\%$}&-&{Frequent}\\
			&&&{0.1}  & {0} & {1} &{1} & {60.22$\%$}&-&{Frequent}\\
			&&&{0}    &  {1, 033} & {697} &{192} & {100.00$\%$}&-&{Rare}\\
			&&&{0.05}  & {20} & {102} &{115} & {100.00$\%$}&-&{Rare}\\
			&&&{0.07}  & {8} & {66} &{69} & {97.07$\%$}&-&{Rare}\\
			&&&{0.1}  & {2} & {0} &{1} & {65.34$\%$}&-&{Rare}\\
			
			\noalign{\smallskip}\hline
		\end{tabular}
		\caption{Evaluation of the classification accuracy \textit{versus} \textit{minbond} variation for frequent and rare correlated patterns.}
		\label{minbondcls1}
	\end{table}
\end{landscape}
% % % % % % % % % % % % % %
\begin{landscape}
	\thispagestyle{empty}
	\vspace{-2.0cm}
	\begin{table}
		\hspace{-2.0cm}
		% %\twlrm{
		\begin{tabular}{lrrrrrrrr}
			\hline
			\noalign{\smallskip}
			\textbf{Dataset}& \textbf{\textit{minbond}} & \textbf{\textit{minsupp}}&  \textbf{\textit{minconf}}  & \textbf{$\#$ Exact} & \textbf{$\#$ Approximate} & \textbf{$\#$ Classification} & \textbf{Accuracy} & \textbf{Property of}\\
			&  & &   & \textbf{Rules} & \textbf{Rules} & \textbf{Rules} & \textbf{rate} & \textbf{\textit{Correlated}}\\
			&  & &   & &  &  &                                                          & \textbf{patterns}\\
			\noalign{\smallskip}\hline\noalign{\smallskip}
			\textsc{Wine}     &{0.1}& {20$\%$} & {0.60} & {7} & {274} &{25} & {76.40$\%$} & {Frequent}\\
			&     &          & {0.80} & {7} &{86}   &{10} & {86.65$\%$} &{Frequent}\\
			&     &          & {0.90} & {7} &{30}   &{4} & {84.83$\%$} & {Frequent}\\\hline
			&{0.1}& {20$\%$} &{0.60} &{91} &{1516} &{168} & {\textbf{95.50$\%$}} & {\textbf{Rare}}\\
			&     &          &{0.80} &{91} &{449} &{84} & {92.69$\%$} & {Rare}\\
			&     &          &{0.90} &{91} &{100} &{48} & {91.57$\%$} & {Rare}\\\hline\hline
			\textsc{Iris}     &{0.15}& {20$\%$} & {0.60} & {3} & {22} &{7} & {\textbf{96.00$\%$}}&{\textbf{Frequent}}\\
			&     &          & {0.95} & {3} &{6}   &{3} & {95.33$\%$}&{Frequent}\\\hline
			&{0.15}& {20$\%$} & {0.60} & {17} & {32} &{8} & {80.06$\%$}&{Rare}\\
			&     &          & {0.95} & {17} &{7}   &{5} & {80.00$\%$}&{Rare}\\\hline\hline
			&{0.30}& {20$\%$} &{0.60} &{3} &{22} &{7} & {\textbf{96.00$\%$}}&{\textbf{Frequent}}\\
			&     &           &{0.95} &{3} &{6} &{3} & {95.33$\%$}&{Frequent}\\\hline
			&{0.30}& {20$\%$} &{0.60} &{8} &{14} &{4} & {70.00$\%$}&{Rare}\\
			&     &           &{0.95} &{8} &{6} &{3} & {69.33$\%$}&{Rare}\\\hline
			\noalign{\smallskip}\hline
		\end{tabular}
		% %}
		\caption{Evaluation of the classification accuracy of frequent patterns \textit{vs} rare patterns when  \textit{minconf} varies.}
		\label{minconfCls2}
	\end{table}
\end{landscape}
%%%%%%%%%%%%%%%%%%%%%%%%%%%%%%%%%%%%%%
%%%%%%%%%%%%%%%%%%%%%%%%%%%%%%%%%%%%%%%%%%%%%%%%%%%%%%%%%%%%%%%
\subsection{Effect of \textit{minconf} variation}
We note according to the results sketched by Table \ref{minconfCls2}, that for the datasets \textsc{Wine} and \textsc{TicTacToe}, the highest values of the accuracy rate are achieved
with the rare correlated rules. Whereas, for the \textsc{Iris} dataset, the frequent correlated rules performed higher accuracy than the rare ones.
In this regard, we can conclude that for some datasets, the frequent correlated patterns have better informativity than rare ones.
Whereas, for other datasets, rare correlated patterns bring more rich knowledge.
This confirms the beneficial complementarity of our approach in inferring new knowledge from both frequent and rare \textit{correlated} patterns.

In the next section, we present the application of the rare correlated associative rules on intrusion detection data.
% % % % % % % % % % % % % % % % % % % %
\section{Classification of Intrusion Detection Data} \label{se-IDS}
The intrusion detection problem \cite{brahmiYAP10,BrahmiYAP11} is a common problem. In this context, We present, in this section, the experimental evaluation of the correlated classification association rules, previously extracted in Section \ref{chap7se4}, when applied to the \textsc{KDD 99} dataset of intrusion detection data.
\subsection{Description of the \textsc{KDD 99} Dataset} \label{DataDescrip}
In the \textsc{KDD 99} dataset $^\textsc{(}$\footnote{The \textsc{KDD 99} dataset is available at the following link: \textsl{http://kdd.ics.uci.edu/databases/kddcup99/kddcup99.html}.}$^\textsc{)}$, each line or connexion represents a data stream between two defined instants
between a source and a destination, each of them identified by an IP address under a given protocol(TCP, UDP).
Every connection is  labeled either normal or attack and has 41 discrete and continuous attributes that are divided into three groups \cite{farid2010}. The first group of attributes is the basic features of network connection, which include the duration, prototype, service, number of bytes from IP source addresses or from destination IP addresses. The second group of features is composed by the 
content features within a connection suggested by domain knowledge. The third group is  composed by traffic features computed using a two-second time window.

% % % % % % % % % % % %

% % % % % % % % % % % % % %
\textsc{KDD 99} defines 38 attacks categories partitioned into four \texttt{Attack} classes, which are \textsc{Dos}, \textsc{Probe}, \textsc{R2L} and \textsc{U2R}, and one \textsc{Normal} class. These categories are described in \cite{NahlaSac2004} and in \cite{farid2010} as follows:\\
% % % % % % % % % % % % % % % % % % % % % % % % % % % % %

% % % % % % % % % % % % % % % % % % % % % % % % % % % % % %
$\bullet$ \textbf{Denial of Service Attacks (DOS)}: in which an attacker overwhelms the victim host with a huge number of requests. Such attacks are easy to perform and can cause a shutdown of the host or a significant slow in its performance. Some examples of DOS attack:  Neptune, Smurf, Apache2 and Pod.

$\bullet$ \textbf{User of Root Attacks (U2R)}: in which an attacker or a hacker tries to get the access rights from a normal host in order, for instance, to gain the root access to the system. Some examples of U2R attack: Httptunnel, Perl, Ps, Rootkit.

$\bullet$ \textbf{Remote to User Attacks (R2L)}: in which the intruder tries to exploit the system vulnerabilities in order to control the remote machine through the network as a local user. Some examples of R2L attack: Ftp-write, Imap, Named, Xlock.

$\bullet$ \textbf{Probe}: in which an attacker attempts to gather useful information about machines and services available on the network in order to look for exploits. Some examples of Probe attack: Ipsweep, Mscan, Saint, Nmap.
% % % % % % % % % % % % % % % % % % % % % % % % % % % %
% % % % % % % % % % % % % % % % % % % % % % % % % % % % %

The \textsc{KDD 99} dataset contains 4, 940, 190 objects in the learning set.
We consider 10$\%$ of the training set in the construction step of the classifier, containing 494, 019 objects. The learning set contains 79.20$\%$ \textsc{(}respectively, 0.83$\%$, 19.65$\%$, 0.22$\%$ and 0.10$\%$\textsc{)} of \textsc{Dos} \textsc{(}respectively, \textsc{Probe}, \textsc{Normal}, \textsc{R2L} and \textsc{U2R}\textsc{)}.
% % % % % % % % % % % % % % % % %

% % % % % % % % % % % % % % % % %
\subsection{Experimentations and Discussion of Obtained Results} \label{RecapXp}
%%\vspace{-0.1cm}
Table \ref{TabNewKDD3} summarizes the obtained results, where \textsc{AR} and \textsc{DR}, respectively, denote ``Association Rule'' and ``Detection Rate'', with Detection Rate =
$\frac{NbrCcCx}{TotalNbrCx}$, with $NbrCcCx$ stands for the number of the correctly classified connections and $TotalNbrCx$ is equal to the whole number of the classified connections,  while \textit{minconf} is the minimum threshold of the confidence measure \cite{Agra94}.

In addition, by ``Construction step'', we mean that the step associated to the extraction of the $\mathcal{RCPR}$ representation while ``Classification step'' represents the step in which the classification association rules are derived starting from $\mathcal{RCPR}$ and applied for detecting intrusions.

We note that the highest value of the detection rate is achieved for the classes \textsc{Normal} and \textsc{Dos}. In fact, this is related to the high number of connections of these two classes. This confirms that our proposed approach presents interesting performances even when applied to voluminous datasets.
We also remark that the detection rate varies from an attack class to another one. In fact, for the \textsc{U2R} class, this rate is relatively low when compared to the others classes.

To sum up, according to Table \ref{TabNewKDD3}, the computational cost varies from one attack class to another one. It is also worth noting that, for all the classes, the construction step is much more time-consuming than the classification step. This can be explained by the fact that the extraction of the $\mathcal{RCPR}$ concise representation is a sophisticate problem.

Furthermore, the results shown by Table \ref{TabCmpNahla} prove that our proposed classifier is more competitive than the decision trees as well as the Bayesian networks \cite{NahlaSac2004}.
In fact, our approach presents better results for the attack classes \textsc{Dos}, \textsc{R2L} and \textsc{U2R} than these two approaches. For the \textsc{Normal} class, the obtained results using our approach are close to those
obtained with the decision trees.
The Bayesian networks based approach presents better detection rate only for the \textsc{Probe} attack class.
We thus deduce that the proposed rare correlated association rules constitute an efficient classification tool when were applied to the intrusion detection in a computer network.
%%%%%%%%%%%%%%%%%%%%%%%%%%%%%%%%%%%%%%%%
\font\xmplbx = cmbx7. scaled \magstephalf
%\begin{center}
\begin{sidewaystable}
	\begin{center}
		\begin{tabular}{|l||c|c|c||c|c|c||c|}
			\hline\noalign{\smallskip}
			\textbf{Attack}& \textbf{\textit{minsupp}}&\textbf{\textit{minbond}} &\textbf{\textit{minconf}} &\textbf{$\#$ of generic}&\textbf{$\#$ of generic } & \textbf{$\#$ of generic}&\textbf{CPU Time}\\
			\textbf{class} &  \textbf{\textsc{(}$\%$\textsc{)}} &  & &\textbf{exact}&\textbf{approximate}& \textbf{\textsc{AR}s of}&\textbf{\textsc{(}in seconds\textsc{)}} \\
			& &               &     &\textbf{\textsc{AR}s} &\textbf{\textsc{AR}s}&\textbf{classification}&  \\
			\noalign{\smallskip}
			\hline \textsc{Dos}   & {\xmplbx80}  & {\xmplbx0.95}& {\xmplbx0.90} & {\xmplbx4} & {\xmplbx31}  &{\xmplbx17}&{\xmplbx121}\\
			\textsc{Probe} & {\xmplbx60}  &{\xmplbx0.70} &{\xmplbx0.90} & {\xmplbx232} & {\xmplbx561}  &{\xmplbx15}&{\xmplbx56}\\
			\textsc{Normal}& {\xmplbx85}  &{\xmplbx0.95} &{\xmplbx0.95} & {\xmplbx0} & {\xmplbx10}  &{\xmplbx3}&{\xmplbx408}\\
			\textsc{R2L}   & {\xmplbx80}  &{\xmplbx0.90} &{\xmplbx0.70} & {\xmplbx2}& {\xmplbx368} &{\xmplbx1}&{\xmplbx1, 730}\\
			\textsc{U2R}   & {\xmplbx60}  &{\xmplbx0.75} &{\xmplbx0.75} & {\xmplbx106}& {\xmplbx3}  &{\xmplbx5}&{\xmplbx33}\\
			\hline
		\end{tabular}
	\end{center}
	\caption{Evaluation of the rare correlated association rules for the \textsc{KDD 99} dataset.}\label{TabNewKDD3}
\end{sidewaystable}
%\end{center}
% % % % % % % % % % % % %
% % % % % % % % % % % % % % % % % % % % % %
\font\xmplbx = cmbx7.5 scaled \magstephalf
\begin{sidewaystable}
	\begin{center}
		\begin{tabular}{|l||c||c|c|}
			\hline\noalign{\smallskip}
			\textbf{Attack class}&\textbf{Rare correlated}     &\textbf{Decision trees}  &\textbf{Bayesian networks}\\
			& \textbf{generic \textsc{AR}s}& \cite{NahlaSac2004}     & \cite{NahlaSac2004}\\
			\hline
			\textsc{Dos}   & \textbf{98.68} &97.24& 96.65\\
			\textsc{Probe} & 70.69 & 77.92&\textbf{88.33}\\
			\textsc{Normal}& {\textbf{100.00}} &99.50&97.68\\
			\textsc{R2L}   & {\textbf{81.52}} &0.52&8.66\\
			\textsc{U2R}   & {\textbf{38.46}} &13.60&11.84\\
			\hline
		\end{tabular}
	\end{center}
	\caption{Comparison between the proposed rare correlated association rules based classifier versus the state of the art approaches.}
	\label{TabCmpNahla}
\end{sidewaystable}

% % % % % % % % % % % % % % % % %
% % % % % % % % % % % % % % % % %
\section{Application of the $\mathcal{RCPR}$ representation on Micro-array gene expression data} \label{SecGene}
We present, in this section, the application of the $\mathcal{RCPR}$ condensed representation of rare correlated patterns on Micro-array gene expression data. In fact, the $\mathcal{RCPR}$ representation
\textsc{(}\textit{cf.} Definition \ref{rmcr} Page \pageref{rmcr}\textsc{)}, is composed by the  
$\mathcal{CRCP}$ set of Closed Rare Correlated Patterns as well as the associated $\mathcal{MRCP}$ set of Minimal Rare Correlated Patterns. From these two sets, we extract the generic rare correlated associated rules, as described in Sub-section \ref{subse-descrip} \textsc{(}\textit{cf.} Page \pageref{subse-descrip}\textsc{)}. The extracted association rules will be then analyzed in order to evaluate the relevance of the obtained biological knowledge.  
\subsection{Our Motivations}
%%%%%%%%%%%%%%%%%%%%%%%%%%%
%%%%%%%%%%%%%%%%%%%%%%%%%%%%%%%%%%%%%%%%%%%%%%%%%%%%%%%%%%%%%%%%%%%%%%%%%%%%
Since many years, gene expression technologies have offered a huge amount of micro-array data by measuring expression levels of thousands of genes under various biological experimental conditions. The micro-array datasets present specific characteristics which is the high density of data. These datasets are in the form of \textsc{(}N x M\textsc{)} matrix with N represents the rows \textsc{(}the conditions or the experiments\textsc{)} and M represents the columns \textsc{(}the genes\textsc{)}. In this regard, the key task in the interpretation of biological knowledge is to identify the differentially expressed genes.  In this respect, we are based on rare correlated patterns in order to identify up and down regulated genes.

%%%BS: ajouter les références dans related work %%%%%%%%%%%
Several related works \cite{top-interval,mcrMiner,pasquier2009,goodrule,Article3} were focused on the extraction of frequent patterns and the generation of frequent association rules in order to analyze micro-array data. Our motivation behind the choice of biological data is based on the review of the existing literature that confirms that there is no previous work that addresses the issue of analysis of gene expressions from rare correlated patterns.
Our proposed association-rules based process can be classified as an expression-based interpretation approach for biological associations.
In fact, we are based on gene expression profiles varying under hundreds of biological conditions. 
%%%%%%%%%%%%%%%%%%%%%%%%%%%%%%%%%%%%%%%%%%%%%%%%%%%%%%%%%%%%%%%%%%%%%%%%

In what follows, we provide the description of the used micro-array dataset.
\subsection{Description of the Micro-array gene expression data} 
For the application of our approach, we used the breast cancer 2 \textsc{GSE1379} dataset 
$^\textsc{(}$\footnote{The breast cancer dataset is publicly available and downloaded from 
	http://www.ncbi.nlm.nih.gov/geo/query/acc.cgi?acc=GSE1379 .This dataset is submitted on May 2004 and updated on March 2012.}$^{\textsc{)}}$.
The original data is composed by 60 samples and 22, 575 genes.
We present in Table \ref{sample1} a sample of the \textsc{GSE1379} dataset containing only 5 genes on columns and 5 samples on rows.
%%%%%%%%%%%%%%%%%%%%%%%%%%%%
\begin{table}[h]
	\begin{center}
		\footnotesize{
			\begin{tabular}{|c|c|c|c|c|c|c|c|}
				\hline & \texttt{id-G1}  & \texttt{id-G2}  & \texttt{id-G3}  & \texttt{id-G4} & \texttt{id-G5}& ...  \\
				\hline\hline
				GSM22449 & -1.3361553&	0.3867403&	-2.0288643&	-1.9541923	&-2.0088713      & ...   \\
				\hline
				GSM22450 &    -1.3361553 &	0.3867403& 	-2.0288643& 	-1.9541923 &	-2.0088713	& ... \\
				\hline
				GSM22451 & -1.3333233	 & -2.0482593 &	-2.0577023	& -1.6493243 &	-2.0727303   & ...\\
				\hline
				GSM22452 &    -1.6211983 &	-1.3905463 &	-1.2612803	& -1.4602183	 & -1.4401533  & ...\\
				\hline
				GSM22453 & -0.0878543 &	-0.1720993 & 	-0.2629703	& -0.0816163  & -1.6493243 &  ... \\
				\hline
				...& ...  & ... &...    &    ...     & ...& \\
				\hline	
		\end{tabular}}
	\end{center}
	\caption{An example of Micro-array data.}\label{sample1}
\end{table}
%%%%%%%%%%%%%%%%%%%%%%%%%%%%
Table \ref{GenName} illustrates examples of some relevant genes of the \textsc{GSE1379} dataset enriched with their description.

\begin{sidewaystable}
	\begin{center}
		\footnotesize{
			\begin{tabular}{|l|l|l|}
				\hline 
				\textbf{Gene-Id} &  \textbf{Gene-name}  & \textbf{Description} \\
				\hline
				
				4048	& HOXB13	 & Homeobox B13: Sequence-specific transcription factor which is part of a developmental regulatory system.  \\
				\hline
				4753 &	CHDH&	Choline dehydrogenase.  \\
				\hline
				13983 &	ESR1 &	Estrogen receptor 1: Nuclear hormone receptor. It is involved in the regulation of eukaryotic gene expression. \\
				\hline
				14944 &	CKAP4&	Cytoskeleton-associated protein 4. \\
				\hline
				16227&	ABCC11	&ATP-binding cassette, sub-family C \textsc{(}CFTR/MRP\textsc{)}, member 11.\\  	
				\hline
				19980 &	IL17BR	& Interleukin 17B: Receptor for the pro-inflammatory cytokines IL17B and IL17E.  \\
				\hline
				20975 & 	ZNF197	& Zinc finger protein 197. \\
				\hline
		\end{tabular}}
	\end{center}
	\caption{Description of a sample of genes of the \textsc{GSE1379} dataset.}\label{GenName}
\end{sidewaystable}

%%%%%%%%%%%%%%%%%%%%%%%%%%%%%%%%%%
\subsection{The Discretization process}
The discretization aimed to transform the continuous data into discrete data. We performed a discretization process
based on the R.Basic package of the R statistical framework $^\textsc{(}$\footnote{The R Project for Statistical Computing is downloaded from https://www.r-project.org.}$^{\textsc{)}}$. First of all, we apply the Z-Normalization 
\cite{zNormalization} over the whole dataset in order to 
transform the initial data distribution to a normal distribution.
The second step consists in determining the over-expressed cutoff $O_{c}$ and the under-expressed cutoff $U_{c}$.
In fact, according to the Z-Normal distribution table, when considering a confidence level $1-\alpha$ equal to 95$\%$, we have: 

$\bullet$ The over-expressed cutoff $O_{c}$ = Z($\alpha$/2) = 1.96 

$\bullet$ The under-expressed cutoff $U_{c}$ = -Z($\alpha$/2) = -1.96

Thus, we have for each sample $i$ and for each gene $j$, $V_{ij}$ corresponds to the value of the gene expression 
$j$ within the sample $i$. The $V_{ij}$ expression is evaluated as follows:
\begin{itemize}
	\item if $V_{ij}$ $\leq$  $U_{c}$  then $V_{ij}$ is under-expressed $\downarrow$
	\item  if $V_{ij}$  $\geq$  $O_{c}$ then $V_{ij}$ is over-expressed  $\uparrow$
	\item if $U_{c}$ < $V_{ij}$ > $O_{c}$ then $V_{ij}$ is unexpressed
\end{itemize}

We present in Table \ref{DistData}, a sample of the discretized data, where the over-expressed genes are 
referenced by the value of 1 whereas the under-expressed genes are referenced by the value of 0. The `$-$` symbol represent unexpressed gene expressions which are not relevant for our analysis.
%%%%%%%%%%%%%%%%%%%%%%%%%%%%%%%%%%%%%%%%
\begin{table}[h]
	\begin{center}
		\footnotesize{
			\begin{tabular}{|c|c|c|c|c|c|c|c|}
				\hline & \texttt{id-G1}  & \texttt{id-G2}  & \texttt{id-G3}  & \texttt{id-G4} & \texttt{id-G5}& ...  \\
				\hline\hline
				GSM22449 & 1&	$-$ &	1&	0	&0      & ...   \\
				\hline
				GSM22450 &    0 &	1& 	1 & 	0 &	1	& ... \\
				\hline
				GSM22451 & 0	 & $-$ &	0	& 1 &	1  & ...\\
				\hline
				GSM22452 &    1 &	$-$ &	1	& 0	 & 1  & ...\\
				\hline
				GSM22453 & $-$ &	0 & 	0	& 0  & 0 &  ... \\
				\hline
				...& ...  & ... &...    &    ...     & ...& \\
				\hline	
		\end{tabular}}
	\end{center}
	\caption{An example of the discretized Micro-array data.}\label{DistData}
\end{table}
%%%%%%%%%%%%%%%%%%%%%%%%%%%%%%%%%%%%%%%%%%%

After the discretization process, we apply a substitution function in order to transform the discretized gene expression values in the adequate input format for the mining process.
Consequently, we apply our substitution function $\theta$ as follows:
\begin{itemize}
	\item if $V_{ij}$ is over-expressed $\uparrow$  then  $V_{ij}$ $\leftarrow$ $Id_{j}$, with 
	$Id_{j}$ corresponds to the unique identifier of gene $j$
	\item if $V_{ij}$ is under-expressed  $\downarrow$   then  $V_{ij}$ $\leftarrow$ `$Id_{j}$ + $\arrowvert$M$\arrowvert$`, with $Id_{j}$ is the unique identifier of gene $j$ and $\arrowvert$M$\arrowvert$ corresponds to the number of the distinct genes, $\arrowvert$M$\arrowvert$ = 22, 575 in our tested dataset.
\end{itemize}

We present in Table \ref{SubsData}, a sample of the final substituted data. This sample is in the 
adequate input format of our mining algorithm \textsc{Opt-Gmjp}.    

\begin{table}[h]
	\begin{center}
		\footnotesize{
			\begin{tabular}{|c|c|c|c|c|c|c|c|}
				\hline & \texttt{G1}  & \texttt{G2}  & \texttt{G3}  & \texttt{G4} & \texttt{G5}& ...  \\
				\hline\hline
				S1 &        1&	 &	3&	22 579	&22 580      & ...   \\
				\hline
				S2 &    22 576 &	22 577& 	3 & 	22 579 &	5	& ... \\
				\hline
				S3 & 22 576	 &  &	22 578	& 4 &	5  & ...\\
				\hline
				S4 &    1 &	 &	3	& 	22 579  & 5 & ...\\
				\hline
				S5 &  &	22 577 & 	22 578	& 22 579  & 22 580 &  ... \\
				\hline
				...& ...  & ... &...    &    ...     & ... &\\
				\hline	
		\end{tabular}}
	\end{center}
	\caption{Discretized values substituted by their identifiers.}\label{SubsData}
\end{table}

\subsection{Experimental results}
We conducted several experiments over the \textsc{GSE1379} dataset in order to extract the most relevant exact and approximate association rules. The \textsc{GSE1379} dataset was preprocessed with the 
\textsc{GEO2R} tool  in order to identify genes that are differentially expressed across experimental conditions. The Results obtained by the \textsc{GEO2R} tool are presented as a table of genes ordered by significance. Thus, we maintain the 550 most relevant genes from 22, 575 initial genes. 
For these experiments, the \textsc{Opt-Gmjp} algorithm was applied to the \textsc{GSE1379} with $|\mathcal{T}|$ = 60 and with $|\mathcal{I}|$ = 1, 100 distinct items values.
%%%%%%%%%%%%%%%%%%%%%%%%%%%%%%%%%%%%%%
\begin{table}
	\hspace{-1.5cm}
	\begin{tabular}{|rrrr||rrrc|}
		\hline
		\noalign{\smallskip}
		\textbf{\textit{minsupp}} & \textbf{\textit{minbond}}&  $|\mathcal{MRCP}|$ &$|\mathcal{CRCP}|$ &   \textbf{\textit{minconf}}  & \textbf{$\#$ Exact} & \textbf{$\#$ Approximate} & \textbf{CPU} \\
		&  & &   & & \textbf{Rules} & \textbf{Rules} &  \textbf{Time \textsc{(}sec\textsc{)}} \\
		\noalign{\smallskip}\hline\noalign{\smallskip}
		{33$\%$}& {0.30}  & {120} & {146} & {0.70}  &{19} & {3} & {0.0405} \\
		&          &    &          & {0.50}  &{19} & {17} &{0.0405}\\
		&          &    &          & {0.30}   &{19} & {20} & {0.0405}\\\hline

		{50$\%$}  & {0.30}  & {157} & {244} & {0.70} &{26} & {77} & {0.0754}\\
		&         &       &       & {0.50}      &{26} & {128} &{0.0754}\\
		&          &       &      & {0.30}   &{26} & {134} &{0.0754}\\\hline 
		{50$\%$}  & {0.50}  & {79} & {72} & {0.30} &{7} & {6} & {0.0595}\\\hline   
		{50$\%$}  & {0.70}  & {59} & {56} & {0.30} &{3} & {0} & {0.0463}\\  
		\noalign{\smallskip}\hline
	\end{tabular}
	\caption{Execution Times and number of extracted association rules.}
	\label{geneAssRule1}
\end{table}

These experiments were conducted in order to assess the scalability of our \textsc{Opt-Gmjp} algorithm when applied to very dense 
biological dataset and to evaluate the impact of varying the \textit{minsupp}, the \textit{minbond} and the \textit{minconf} thresholds on the number of the extracted association rules. We report in Table \ref{geneAssRule1}  the execution times as well as the number of the approximate and exact extracted association rules. We can draw theses conclusions:
\begin{itemize}
	\item The sizes of the $\mathcal{MRCP}$ set of minimal correlated rare patterns as well as that of the $\mathcal{CRCP}$ set of closed rare correlated patterns  depends only on the variation of \textit{minsupp} and  \textit{minbond} thresholds. We deduce that, $|\mathcal{MRCP}|$ and $|\mathcal{CRCP}|$  decrease when increasing \textit{minbond} from {0.30} to {0.70}.
	\item The execution times are not affected by the variation of \textsc{minconf} threshold. In fact, the reported execution times corresponds to the CPU-time needed for extracting the $\mathcal{RCPR}$ representation. The CPU-time needed for the derivation of the association rules is negligible in all the performed experiments.
	\item  The number of the extracted association rules decreases while increasing the \textit{minconf} threshold. For example, for  \textit{minsupp} =  {50$\%$} , \textit{minbond} = {0.30} and \textit{minconf} = 0.30, we have  $|Approximate-Rules|$ = 134, while for \textit{minconf} = 0.70, $|Approximate-Rules|$ = 77. It's obviously that the number of the exact rules is insensitive to the variation of the \textit{minconf} threshold since the confidence of exact rules is equal to {100$\%$}.
	%%%%%%%%%%%%%%%%%%%%
	\item The increase of the \textit{minbond} threshold value from 0.30 to 0.70, induce a decrease in the size of the $\mathcal{MRCP}$ and $\mathcal{CRCP}$ sets.
	This reflects that the used dataset do not present important correlation degree among the items. The items are dispersed in the universe due to the low-level of co-expression of the mined genes.    
\end{itemize} 
%%%%%%%%%%%%%%%%%%%%%%%%%%%%%%%%%%%%%%%%%%%%%%%%%%%
\subsection{Biological significance of Extracted  Association rules}
Table \ref{ExpAR} shows different examples of association rules extracted by a dedicated procedure previously described in sub-section \ref{subse-descrip}. In Table \ref{ExpAR}, supports are expressed in number of transactions while confidence are given in percentages. 
The association rules show groups of genes that are over-expressed or under-expressed in a set of conditions.

To determine the functional relationship among the obtained gene sets, we used the \textsc{STRING 10} 
$^\textsc{(}$\footnote{\textsc{STRING} stands for the \textbf{S}earch \textbf{T}ool for the \textbf{R}etrieval of \textbf{IN}teracting \textbf{G}enes/Proteins and is publicly available at http://string-db.org.}$^{\textsc{)}}$
resource \cite{string}
which is a database of known and predicted protein-protein interaction. 

In this regard, the gene sets obtained within the association rules were uploaded into \textsc{STRING} and the following prediction methods were employed: co-expression, co-occurrence with a medium confidence score equal to 40$\%$.
This analysis shows the interactions among the gene sets as shown in Figures  \ref{fig-string1} and \ref{fig-string2}.
This finding support the hypothesis that the returned gene sets thank to our rare correlated association rules, show an important degree of biological interrelatedness.

%Figure \ref{fig-string1} shows a detailed network view. 
In figure \ref{fig-string2}, we highlight just the most relevant genes reported in the biological literature and related to the analysis of breast cancer \cite{breast1}. These genes are: HOXB13,  ABCC11, CHDH, ESR1 and IL17BR \cite{Article3}.

%%%%%%%%%%%%%%%%%%%%%%%%%%%%%%%%%%%%%
%\begin{center}
%\begin{figure}[htbp]
%	\centering
%	\includegraphics[scale =0.35]{string-detaille1.eps}
%	\caption{The \textsc{STRING} detailed network view. Colored lines between genes indicate various types of interaction evidence.} \label{fig-string1}
%	\end{figure}
%\end{center}
%%%%%%%%%%%%%%%%%%%%%%%%%%%%%%%%%%%%%%%%%%%%%%%%%%%%%%%%%%
\begin{center}
	\begin{figure}[htbp]
		\centering
		\includegraphics[scale =0.35]{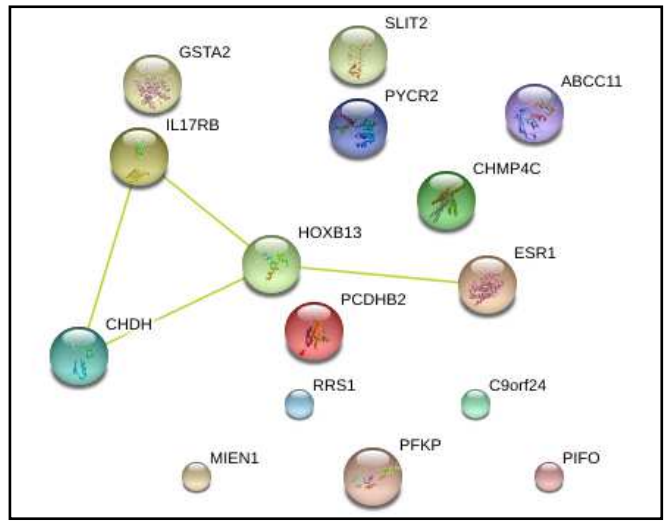}
		\caption{The \textsc{STRING} compact network view.} \label{fig-string2}
	\end{figure}
\end{center}

%%%%%%%%%%%%%%%%%%%%%%%%%%%%%%%%%%%%%%%%%%%%%%%%%%%%%%%%%%%%%%
\begin{table}[h]
	%\begin{center}
	\hspace{-0.5cm}
	\footnotesize{
		\begin{tabular}{|l|l|l|l|l|}
			\hline 
			\textbf{Rule} &  \textbf{Antecedent}  & \textbf{Conclusion} & \textbf{Support} & \textbf{Confidence} \\
			\hline
			0 &ESR1 $\uparrow$ &  GSTA2 $\uparrow$	& 1 &	100$\%$ \\
			\hline
			1& 	RRS1 $\downarrow$ and    ABCC11 $\uparrow$ & CRISPLD2 $\uparrow$ and    CHDH $\uparrow$ &	24	& 96$\%$ \\
			\hline
			2	& RRS1 $\downarrow$ and   HOXB13 $\downarrow$ & CRISPLD2 $\uparrow$  &	18	& 94$\%$  \\
			\hline
			3&	BLOC1S6 $\downarrow$ and  HOXB13 $\downarrow$ &  CRISPLD2 $\uparrow$   and  CHDH $\uparrow$ &	24	 &92$\%$ \\
			\hline
			4&	HOXB13 $\downarrow$  and ABCC11 $\uparrow$ &  BLOC1S6 $\downarrow$ and  CHDH $\uparrow$ &	17	& 77$\%$	\\
			\hline
			5 & INSIG1 $\downarrow$ &  CRISPLD2 $\uparrow$,  IRAK3 $\uparrow$  and  ABCC11 $\uparrow$ & 19&	 70$\%$ \\
			\hline	
			6 & IL17BR $\uparrow$ &  PFKP $\downarrow$ &	9	& 52$\%$ 	\\
			\hline
			7 &	HOXB13 $\uparrow$ &    NDUFAF2 $\downarrow$	&1	 & 50$\%$\\
			\hline
			8&	IL17BR $\uparrow$ &  C9orf24 $\uparrow$	& 8	&  47$\%$	\\  	
			\hline
	\end{tabular}}
	%%\end{center}
	\caption{Association rules: Expression levels $\Rightarrow$ Expression levels.}\label{ExpAR}
\end{table}
%%%%%%%%%%%%%%%%%%%%%%%%%%%%%%%%%%%%%

According to Table \ref{ExpAR}, Rule 0 reflects that the estrogen receptor 1 which is a Nuclear hormone receptor and is expressed by ESR1 when it is over-expressed in this experiment induces an over-expression of the glutathione S-transferase alpha 2 traduced by gene GSTA2.
Rule 3 highlights that if the HOXB13 and the BLOC1S6 genes are down-expressed then the CHDH and the CRISPLD2 genes are over-expressed.
In fact, the HOXB13 gene refers to Sequence-specific transcription factor which is part of a developmental regulatory system that provides cells with specific positional identities on the anterior-posterior axis. While, the CRISPLD2 gene is cysteine-rich secretory protein LCCL domain and the CHDH gene expresses the choline dehydrogenase. Rule 6 present an interesting relation between the 
IL17BR gene, reflecting the interleukin 17 receptor B and playing a role in controlling the growth and  differentiation of hematopoietic cells, and the PFKP  gene corresponding to phospho-fructokinase, platelet. The PFKP  gene catalyzes the phosphorylation of fructose-6-phosphate (F6P) by ATP to generate fructose-1, 6-bisphosphate (FBP) and ADP.

%%%%%%%%%%%%%%%%%%%%%%%%%%%%%%%%%%%%%%%%%%%%%%%%%%%%%%%%%
Almost of the obtained association rules highlights important relationship of the HOXB13 and the IL17BR genes. In fact, the analysis of these two genes 
expression may be useful for identifying patients appropriate for alternative therapeutic regimens in early-stage breast cancer \cite{Article3}.
In summary, we conclude that the diverse obtained rare correlated association-rules reveals a variety of relationship between up and down gene-expression which proves that breast cancer is an interesting biologically heterogeneous research field. Thus to deduce that rare correlated patterns present good results  when applied to the context of biological data since they are able to reveal hidden and surprising relations among genes properties. 

%%%%%%%%%%%%%%%%%%%%%%%%%%%%%%%%%%%%%%%%
\section{Conclusion} \label{ConcChap7}
This chapter was dedicated to the description of the associative classification process based on the correlated patterns. For this purpose, we started by presenting the framework of association rules extraction, we clarify the properties of the generic bases of association rules. We continued with the detailed description and presentation of the application of both frequent correlated and rare correlated patterns within the classification of some UCI benchmark datasets. 
We equally present the application of rare correlated patterns in the classification of intrusion detection data from the \textsc{KDD 99} dataset. The effectiveness of the proposed classification method has been experimentally proved.
The chapter was concluded with the application of the $\mathcal{RCPR}$ representation on the extraction of biologically relevant associations among Micro-array gene expression data. A better classification accuracy may be achieved while 
thinking about missing-values treatment \cite{OthmanY06}.

% % % % % % % % % % % % % % % %

%\if standalone
%\bibliography{Ref-new.bib}
%\fi

%\end{document} 

%%%%%%%%%%%%%%%%%%%%%%%%%%%%%%%%%%%%%%%%%%
\part{Conclusion}
\chapter{Conclusion and Perspectives}\label{ch_conc}
\markboth{Conclusion and Perspectives}{Conclusion and Perspectives}
\section{Conclusion}

In this thesis, we were mainly interested to two complementary classes of patterns namely rare correlated patterns and frequent correlated patterns according to the \textit{bond} correlation measure.
In fact, the $\mathcal{FCP}$ set of frequent correlated patterns result from the intersection of the set of frequent patterns and the set of correlated patterns. The $\mathcal{FCP}$ set is then the result of the conjunction  
of two anti-monotonic constraints of frequency and of correlation. Consequently this $\mathcal{FCP}$ set induces an order ideal on the itemset lattice. Nevertheless, the $\mathcal{RCP}$ set of rare correlated patterns result from the conjunction of two constraints of distinct types namely the monotonic constraint of rarity and the anti-monotonic constraint of correlation. Thus, the localization of the  $\mathcal{RCP}$ set is more difficult
and the extraction process is more costly. This characteristic constitute one of the challenges to deal with through this thesis.

This thesis report was partitioned into four different parts. The first part was dedicated to the review of correlated patterns mining. In this regard, we started the first chapter of this part by introducing the basic notions related to the itemset search space, to itemset extraction. We defined the two distinct categories of constraints. We introduced equally the environment of Formal Concept Analysis FCA which offer the basis for the proposition of our approaches, specifically the notions of Closure Operator, Minimal Generator, Closed Pattern, Equivalence class and Condensed representation of a set of patterns. Thereafter, we studied in the second chapter of this first part,  the state of the art approaches dealing with correlated patterns mining. Our study covers the frequent correlated patterns mining, the rare correlated patterns as well as the approaches focusing on condensed representations of correlated itemsets. 

The second part was dedicated to the presentation of our approaches. The first chapter of this part was devoted to the characterization of both frequent correlated and rare correlated patterns and the introduction of their associated condensed representations. We deeply defined the properties of the $f_{bond}$
closure operator associated to the \textit{bond} measure and we describe the structural specificities of the induces equivalence classes. In fact, the condensed representations associated to the $\mathcal{RCP}$ set of rare correlated are composed by the union of the closed correlated rare patterns and their associated minimal generators. Nevertheless, for the case of frequent correlated patterns, the  closed correlated frequent patterns constitute a condensed concise representation of the $\mathcal{FCP}$ set.  In the second chapter, we focused on the presentation of our \textsc{Gmjp} extraction approach. In fact, 
\textsc{Gmjp} is the first approach to mine \textit{bond} correlated patterns in a generic way \textsc{(}i.e., with two types of constraints: anti-monotonic constraint of frequency and monotonic constraint of rarity\textsc{)}. Our mining approach was based on the key notion of bitsets codification that supports efficient correlated patterns computation thanks to an adequate condensed representation of patterns.
The deeply description of the whole steps of \textsc{Gmjp} as well as the theoretical complexity approximation and a running example were equally detailed. This fifth chapter was concluded by the algorithms of interrogation and of regeneration of the condensed representation associated to rare correlated patterns.

The third part of this report was dedicated to the experimental validation of our \textsc{Gmjp} 
as well as the presentation and evaluation of the associative-classification process.
In the first chapter of this third part we focused on the experimental evaluation of \textsc{Gmjp}.
The evaluation process was based on two main axes, the first is related to the compactness rates of the condensed representations  while the second axe concerns the running time. We equally proposed an optimized version of \textsc{Gmjp} which present much better performance than \textsc{Gmjp} over different benchmark datasets. The two main keys which constitute the thrust of the improved version: (\textit{i}) only one scan of the database is performed to build the new transformed dataset; (\textit{ii}) it offers a resolution of the problem of handling both monotonic and anti-monotonic constraints within a unique mining process. 
In fact, opposite constraint mining is classified as an NP-Hard problem \cite{boley2009}. But, our goal was optimally achieved without relying on the border's extraction. This constitute a strong added-value to \textsc{Gmjp}, since many approaches are based on border's identification in order to extract such difficult set of patterns.

In the second chapter of this third part, we presented the classification process based on correlated patterns.
Since the classification process that we proposed was based on associative rules, thus we started the chapter by presenting the framework of association rules extraction, we clarified the properties of the generic bases of association rules. We continued with the detailed presentation of the application of both frequent correlated and rare correlated patterns within the classification of some UCI benchmark datasets. In addition, we reported in this chapter  the application of rare correlated patterns in the classification of intrusion detection data from the \textsc{KDD 99} dataset. The obtained results showed the usefulness of our proposed classification method over four different intrusion classes.  We concluded the chapter with the application of rare correlated associative rules on Micro-array gene expression data. The obtained rules helped to identify potential relations among up and down regulated gene expressions related to Breast Cancer.

The fourth and final part concluded the thesis report.
% % % % % % % % % % % % % % % % % % % % % % % % % % % % % % % % % % %

\section{Perspectives}
The obtained results in this thesis opens many perspectives from which we quote:

\bigskip

\checkmark The extraction of generalized association rules starting from rare correlated patterns also from frequent correlated patterns.  In addition, we plan to extend our approach to other correlation measures \cite{Kimpkdd2011,borgelt,surana2010,Omie03} through classifying them into classes of measures sharing the same properties. An important direction is to propose a generic way allowing the extraction of the sets of frequent correlated patterns and rare correlated patterns as well as their associated concise representations. Pieces of new knowledge in the form of exact or approximate correlated generalized association rules can then be derived.

\checkmark The extension of the extraction of correlated patterns to the extraction of both frequent and rare sequential correlated patterns. A promoting area for applying sequential patterns is: opinion mining. In fact, Opinion Mining is an important research area \cite{Ohana2011} which is based on the extraction of opinions and the sentiment analysis from text data \textsc{(}Text Mining\textsc{)}. Opinion Mining is a fruitful field since it is concerned with many real life application fields such as: Financial analysis, market estimation,  customer behavior detection.
In fact, the evaluation of new products and services nearby customers is based on the comments and advices of web visitors. Consequently, the derivation of association rules and their application to opinion mining \cite{Jindal2010} is a potentially interesting research axe.

\checkmark Another Fruitful perspective consists in addressing the issue of correlated patterns mining from big datasets. In fact, big data mining is a new challenging task since computational requirements are difficult to 
provide. An interesting solution is to exploit parallel frameworks, such as MapReduce \cite{mapred-2012} that offer the opportunity to make powerful computing and storage. Consequently, mining condensed representations of correlated patterns from big real life datasets thank to the MapReduce environment is an up to date challenging mining task.

% % % % % % % % % % % % % % %
\section{Publication List}

$\bullet$ \textbf{\large{Journals}} \\

[1] \underline{Bouasker Souad}, Hamrouni Tarek, Ben Yahia Sadok. Motifs Corrélés rares: Caractérisation et nouvelles  représentations concises exactes . Appeared in the Revue of New Information Technologies RNTI: Quality of Data and Knowledge: Measure and Evaluate the Quality of Data and Knowledge, \textsc{(}MQDC 2012\textsc{)}, pages 89-116.\textbf{[indexed DBLP]}.

\rule{\linewidth}{.5pt}\\

[2] \underline{Bouasker Souad}, Hamrouni Tarek, Ben Yahia Sadok. Efficient Mining of New Concise representations of Rare Correlated Patterns. Appeared in the IDA journal  'Intelligent Data Analysis' 2015, Volume 19, pages 359-390. \textbf{\textsc{(}Impact factor in 2014 = 0.50\textsc{)}}.
\bigskip

$\bullet$  \textbf{\large{International Conferences}} \\

[3] \underline{Bouasker Souad}, Hamrouni Tarek, Ben Yahia Sadok. Algorithmes d'extraction et d'interrogation d'une repr\'esentation concise exacte des motifs corr\'el\'es rares. In proceedings of the 12th international francophone
conference on extraction and managing knowledges \textsc{(}EGC 2012\textsc{)}, Bordeaux, France, pages 225-230. \textbf{[indexed DBLP], rank C}.

\rule{\linewidth}{.5pt}\\

[4] \underline{Bouasker Souad}, Hamrouni Tarek, Ben Yahia Sadok. New Exact Concise Representation of Rare Correlated Patterns: Application to Intrusion Detection. In proceedings of the 16th  Pacific Asia conference \textsc{(}PAKDD 2012\textsc{)}, Kuala Lumpur, Malaysia, pages  61-72. \textbf{[indexed IEEE, DBLP], rank A}.

\rule{\linewidth}{.5pt}\\

[5] \underline{Bouasker Souad}, Ben Yahia Sadok. Inferring New Knowledge from Concise Representations of both Frequent and Rare Jaccard Itemsets. In proceedings of the 24th  International conference of Database and Expert Systems Applications \textsc{(}DEXA 2013\textsc{)}, Prague, Check Republic, pages 109-123. \textbf{[indexed IEEE, DBLP], rank B}.

\rule{\linewidth}{.5pt}\\

[6] \underline{Bouasker Souad}, Ben Yahia Sadok. Key Correlation Mining by Simultaneous Monotone and Anti-monotone Constraints Checking. In proceedings of the 30th ACM Symposium on Applied Computing \textsc{(}SAC 2015\textsc{)},  Salamanca, Spain. \textbf{[indexed ACM, IEEE, DBLP], rank B}.

%%%%%%%%%%%%%%%%%%%%%%%%%%%%%%%%%%%%%%%%%%%%%%%%%%%%%%%%%%%%%%%%%%%%%%%%%%%%%%%%%%%%%%%%%%%%%%%%%%%%%%%%%%%%%%%%%%%%%%%%
\newpage
\addcontentsline{toc}{chapter}{Bibliography}
\bibliographystyle{apalike}
%%\bibliography{

%%}%%bibliography

\end{document}